\documentclass[a4paper,cleveref, autoref,english,thm-restate]{lipics-v2021}
%This is a template for producing LIPIcs articles. 
%See lipics-v2021-authors-guidelines.pdf for further information.
%for A4 paper format use option "a4paper", for US-letter use option "letterpaper"
%for british hyphenation rules use option "UKenglish", for american hyphenation rules use option "USenglish"
%for section-numbered lemmas etc., use "numberwithinsect"
%for enabling cleveref support, use "cleveref"
%for enabling autoref support, use "autoref"
%for anonymousing the authors (e.g. for double-blind review), add "anonymous"
%for enabling thm-restate support, use "thm-restate"
%for enabling a two-column layout for the author/affilation part (only applicable for > 6 authors), use "authorcolumns"
%for producing a PDF according the PDF/A standard, add "pdfa"

\pdfoutput=1 %uncomment to ensure pdflatex processing (mandatatory e.g. to submit to arXiv)
\hideLIPIcs  %uncomment to remove references to LIPIcs series (logo, DOI, ...), e.g. when preparing a pre-final version to be uploaded to arXiv or another public repository
\nolinenumbers

%\graphicspath{{./graphics/}}%helpful if your graphic files are in another directory

\bibliographystyle{plainurl}% the mandatory bibstyle

\title{Approximating the Geometric Knapsack Problem \\ in Near-Linear Time and Dynamically} %TODO Please add

\titlerunning{Approximating the geometric knapsack problem in near-linear time and
	dynamically} %TODO optional, please use if title is longer than one line

\author{Moritz Buchem}{Technische Universität München, Munich, Germany}{}{}{}%TODO mandatory, please use full name; only 1 author per \author macro; first two parameters are mandatory, other parameters can be empty. Please provide at least the name of the affiliation and the country. The full address is optional. Use additional curly braces to indicate the correct name splitting when the last name consists of multiple name parts.

\author{Paul Deuker}{Technische Universität München, Munich, Germany}{}{}{}
\author{Andreas Wiese}{Technische Universität München, Munich, Germany}{}{}{}

\authorrunning{M. Buchem, P. Deuker, A. Wiese} %TODO mandatory. First: Use abbreviated first/middle names. Second (only in severe cases): Use first author plus 'et al.'

\Copyright{Moritz Buchem, Paul Deuker, Andreas Wiese} %TODO mandatory, please use full first names. LIPIcs license is "CC-BY";  http://creativecommons.org/licenses/by/3.0/

\ccsdesc[500]{Theory of computation~Packing and covering problems} %TODO mandatory: Please choose ACM 2012 classifications from https://dl.acm.org/ccs/ccs_flat.cfm 

\keywords{Geometric packing, approximation algorithms, dynamic algorithms} %TODO mandatory; please add comma-separated list of keywords

\category{} %optional, e.g. invited paper

%\relatedversion{A full version of this paper is available at \moritz{todo...}} %optional, e.g. full version hosted on arXiv, HAL, or other respository/website
%\relatedversiondetails[linktext={opt. text shown instead of the URL}, cite=DBLP:books/mk/GrayR93]{Classification (e.g. Full Version, Extended Version, Previous Version}{URL to related version} %linktext and cite are optional

%\supplement{}%optional, e.g. related research data, source code, ... hosted on a repository like zenodo, figshare, GitHub, ...
%\supplementdetails[linktext={opt. text shown instead of the URL}, cite=DBLP:books/mk/GrayR93, subcategory={Description, Subcategory}, swhid={Software Heritage Identifier}]{General Classification (e.g. Software, Dataset, Model, ...)}{URL to related version} %linktext, cite, and subcategory are optional

%\funding{(Optional) general funding statement \dots}%optional, to capture a funding statement, which applies to all authors. Please enter author specific funding statements as fifth argument of the \author macro.

%\acknowledgements{I want to thank \dots}%optional

%\nolinenumbers %uncomment to disable line numbering

%Editor-only macros:: begin (do not touch as author)%%%%%%%%%%%%%%%%%%%%%%%%%%%%%%%%%%
\EventEditors{Wolfgang Mulzer and Jeff M. Phillips}
\EventNoEds{2}
\EventLongTitle{40th International Symposium on Computational Geometry (SoCG 2024)}
\EventShortTitle{SoCG 2024}
\EventAcronym{SoCG}
\EventYear{2024}
\EventDate{June 11-14, 2024}
\EventLocation{Athens, Greece}
\EventLogo{socg-logo}
\SeriesVolume{293}
\ArticleNo{XX}     % <-- This will be filled in by the typesetters
%%%%%%%%%%%%%%%%%%%%%%%%%%%%%%%%%%%%%%%%%%%%%%%%%%%%%%

\begin{document}
	
	\global\long\def\OPT{\mathrm{OPT}}%
	\global\long\def\I{\mathcal{I}}%
	\global\long\def\H{\mathcal{H}}%
	
	\global\long\def\N{\mathcal{N}^{*}}%
	
	\global\long\def\V{\mathcal{V}}%
	\global\long\def\S{\mathcal{S}}%
	\global\long\def\L{\mathcal{L}}%
	
	\global\long\def\B{\mathcal{B}}%
	
	\global\long\def\Q{\mathcal{Q}}%
	
	\global\long\def\ALG{\mathrm{ALG}}%
	
	\global\long\def\S{\mathcal{S}}%
	\global\long\def\T{\mathcal{T}}%

	\global\long\def\P{\mathcal{P}}%
	\global\long\def\el{\epsilon_{\mathrm{large}}}%
	\global\long\def\es{\epsilon_{\mathrm{small}}}%

\maketitle

%TODO mandatory: add short abstract of the document
\begin{abstract}
One important goal in algorithm design is determining
the best running time for solving a problem (approximately). For some problems, we
know the optimal running time, assuming certain conditional lower
bounds. In this paper, we study the $d$-dimensional geometric knapsack
problem in which we are far from this level of understanding. We are given a set of weighted
$d$-dimensional geometric items like squares, rectangles, or hypercubes
and a knapsack which is a square or a (hyper-)cube. Our goal is to
select a subset of the given items that fit non-overlappingly inside
the knapsack, maximizing the total profit of the packed items. We make a significant step towards determining the best running time for solving these problems approximately by presenting approximation algorithms whose running times are near-linear, i.e.,
$O(n\cdot\mathrm{poly}(\log n))$, for any constant $d$ and any parameter~$\epsilon>0$ (the exponent of $\log n$ depends on $d$ and $1/\epsilon$)

In the case of (hyper)-cubes, we present a $(1+\epsilon)$-approximation algorithm. This improves drastically upon the currently best known algorithm which is a $(1+\epsilon)$-approximation algorithm with a running time of $n^{O_{\epsilon,d}(1)}$ where the exponent of $n$ depends exponentially on $1/\epsilon$
and $d$. In particular, our algorithm is an efficient polynomial time approximation scheme (EPTAS). Moreover, we present a $(2+\epsilon)$-approximation algorithm for rectangles in the setting without rotations and a $(\frac{17}{9}+\epsilon)\approx 1.89$-approximation algorithm if we allow rotations by 90 degrees.
The best known polynomial time algorithms for this setting have approximation ratios of $\frac{17}{9}+\epsilon$
and $1.5+\epsilon$, respectively, and running times in which the exponent of $n$ depends exponentially on $1/\epsilon$.
In addition, we give dynamic algorithms with polylogarithmic query and update times, having the same approximation guarantees as our other algorithms above.

Key to our results is a new family of structured packings which we call \emph{easily guessable packings}. They are flexible enough to guarantee the existence of profitable solutions while providing enough structure so that we can compute these solutions very quickly.

\end{abstract}

\section{Introduction}

\textsc{Knapsack} is a fundamental problem in combinatorial
optimization. We are given a knapsack with a specified capacity $W$ and
a set of $n$ items, each of them characterized by its size $s_i$ and its profit $p_i$.
The goal is to compute a set of items that fits into the knapsack,
maximizing the total profit. \textsc{Knapsack} is very well understood:
% after long line of research {[}???{]}
there is an FPTAS for the
problem with a running time of only $\tilde{O}(n+(1/\epsilon)^{2.2})$
\cite{deng2023approximating} with an asymptotically almost matching conditional lower bound
of $(n+1/\epsilon)^{2-o(1)}$~\cite{cygan2019problems,knnemann_et_al:LIPIcs:2017:7468}.
Even more, there are dynamic algorithms for
\textsc{Knapsack} that can maintain $(1+\epsilon)$-approximate
solutions in polylogarithmic update time whenever an item is inserted to or removed
from the input~\cite{eberle_et_al:LIPIcs.FSTTCS.2021.18,henzinger_et_al:LIPIcs.STACS.2023.36}.
It is an important goal in algorithm design to determine
the best possible running time to solve (or approximate) a problem. Besides \textsc{Knapsack}, there are
also many other problems for which we have (almost) matching upper and lower bounds, e.g.,
computing the Fréchet distance~\cite{BringmannFrechet}, Least Common Subsequence~\cite{AbboudLCS,bringmann2015quadratic}, Negative Triangles~\cite{williams2010subcubic}, or
Graph Diameter~\cite{roditty2013fast}.
%
% see https://www.mpi-inf.mpg.de/fileadmin/inf/d1/teaching/summer16/polycomp/polycomp13.pdf

% \todo{Sollten wir hier betonen, dass $d$ konstant ist?}
A natural generalization of \textsc{Knapsack} is the \emph{$d$-dimensional
	geometric knapsack }problem in which the items are geometric objects
like squares, rectangles, or hypercubes. Like in \textsc{Knapsack,}
we want to select a subset of the given items, but now we require
in addition that they are placed non-overlappingly inside the knapsack,
which we assume to be a square or a (hyper-)cube. The problem is motivated
by many practical applications like placing advertisements on a board
or a website, cutting pieces out of raw material like wood or metal,
or loading cargo into a ship or a truck. Formally, we assume that
we are given an integer $N$ such that our knapsack is an axis-parallel
square (if $d=2$) or a (hyper-)cube (if $d\ge3$) where each edge
has length $N$. Also, we are given a set of items $\I$ where each
item $i\in\I$ is a $d$-dimensional (hyper-)cube or a $d$-dimensional
(hyper-)cuboid with axis-parallel edges and a given profit.

Unfortunately, our understanding of the \emph{$d$-dimensional geometric
	knapsack }problem falls short in comparison to our understanding of \textsc{Knapsack}.
There is a polynomial time $(1+\epsilon)$-approximation algorithm
for each $\epsilon>0$ if all input items are (hyper-)cubes, due to Jansen, Khan, Lira, and Sreenivas~\cite{jansen2022ptas}.
In the running time of this algorithm, the exponent
of $n$ depends exponentially on $d$ and $1/\epsilon$.
However, there is no (conditional) lower bound known that justifies this. From
all we know, it might still be possible to obtain a better running
time of the form $f(\epsilon,d)n^{O(1)}$, for which the exponent
of $n$ depends neither on $\epsilon$ nor on $d$, but it is only
a small constant like 2 or even~1. Note that there are problems for
which we know conditional running time lower bounds that rule this
out, e.g., lower bounds of $\Omega(n^{2})$ or $\Omega(n^{3})$, based
on assumptions like the (Strong) Exponential Time Hypothesis or the
3-SUM conjecture (see, e.g., \cite{AbboudLCS,AbboundBringmannFischer2023,bringman2018multivariate, BringmannFrechet,bringmann2015quadratic} and references therein).
For example, for the Graph Diameter problem there is a lower bound of $\Omega(m^{2})$, with $m$ being the number of edges of the given graph, for computing
a better approximation ratio than $3/2$~\cite{roditty2013fast}.
However, no such lower bounds are known for $d$-dimensional geometric knapsack.

For the special case of squares, i.e., $d=2$, there is a $(1+\epsilon)$-approximation
known with a running time of the form $f(\epsilon)n^{O(1)}$ due to Heydrich and Wiese~\cite{heydrich2019faster}.
However, even in this result the running time is much slower than
linear time since the algorithm uses an initial guessing step with
$\Omega(n)$ options and for each of these option solves several linear
programs of size $\Omega(n)$ each. Furthermore, there is no dynamic
algorithm known for \emph{$d$}-dimensional geometric knapsack, not
even for the special case of squares (i.e., if $d=2$).

If we allow more general shapes than squares, cubes, and hypercubes,
we understand the problem even less. For two-dimensional axis-parallel
rectangles, the best known polynomial time approximation algorithm is
due to Gálvez, Grandoni, Ingala, Heydrich, Khan, and Wiese~\cite{galvez2021approximating}, having an approximation
ratio of~$1.89+\epsilon$. If it is allowed to rotate rectangles by 90 degrees,
then a $(1.5+\epsilon)$-approximation algorithm is known~\cite{galvez2021approximating}. The problem
is not known to be $\mathsf{APX}$-hard, so it may even admit a PTAS.
Also in the mentioned results, the exponent of $n$ in the running
time depends exponentially on $1/\epsilon$. However, we do not know any lower
bound of the needed running time to solve the problem.
Thus, it might well be possible that we can achieve these
or similar results in a running time that is much faster, e.g., $O(n\cdot\mathrm{poly}(\log n))$.
Also, it is open whether a dynamic algorithm exists for the problem.

\subsection{Our contribution}

Our first result is a $(1+\epsilon)$-approximation algorithm for\emph{
}the\emph{ $d$}-dimensional geometric knapsack problem for squares,
cubes, and hypercubes with a running time that is near-linear, i.e.,
$O(n\cdot\mathrm{poly}(\log n))$ for any constant $d$ and $\epsilon>0$, where the exponent of $\log n$ depends on exponentially on $d$ as well as $1/\epsilon$. In particular, this drastically improves the exponent of $n$ in the
running time in~\cite{jansen2022ptas} from a value that is exponential in the dimension
$d$ and $1/\epsilon$ to only 1, which is in particular completely
independent of $d$ and $1/\epsilon$. This even implies that our algorithm is an \emph{efficient} polynomial time approximation scheme (EPTAS) \footnote{Note that
there exists a function $f$ such that
for any $n$ and $k$ we have that
		$(\log n)^k \le f(k) \cdot n^{O(1)}$.}. Thus, up to polylogarithmic
factors, we obtain the fastest possible running time of a PTAS for
any fixed $d$ and $\epsilon$. Note that, for constant $d$, this yields a distinction
to problems for which there are lower bounds of $\Omega(n^{2})$
or $\Omega(n^{3})$ based on (S)ETH or other hypotheses, e.g., \cite{AbboudLCS,AbboundBringmannFischer2023,bringman2018multivariate, BringmannFrechet,bringmann2015quadratic, roditty2013fast}.

For the case of rectangles, we present a $(2+\epsilon)$-approximation
algorithm with a running time of $O(n\cdot\mathrm{poly}(\log n))$
for any constant $\epsilon>0$. If it is allowed to rotate the rectangles
by 90 degrees, we obtain an approximation ratio of $\frac{17}{9}+\epsilon$
with the same running time bound. Thus, our algorithms are much faster
than the best known polynomial time algorithms for the problem~\cite{galvez2021approximating};
in their running times, the exponent of $n$ depends exponentially
on $1/\epsilon$ while our exponent is only 1. Although we need
much less running time, our approximation factors are not much higher than their
ratios of $1.89+\epsilon$ and $1.5+\epsilon$, respectively.

Moreover, we present the first dynamic algorithms for $d$-dimensional
geometric knapsack with hypercubes and  rectangles with and without rotations by 90 degrees. These algorithms maintain solutions with the same approximation guarantees as stated above,
with poly-logarithmic worst-case query and update times. In comparison, note that there are problems
% like Single-Source-Reachability (SSR)
for which there are polynomial conditional lower bounds for
the update and query time for dynamic algorithms~\cite{henzinger2015unifying}.
%
% . To achieve these algorithms, we use data structures that allow for poly-logarithmic update times, i.e., an item can be inserted or removed from the instance in time $O(\mathrm{poly}(\log n))$. We then use the ideas behind the algorithmic results describe above to allow for querying an approximation of the optimal solution value and whether an item is contained in a (fixed) solution with the corresponding approximation factor. Both of these operations can also be executed in time $O(\mathrm{poly}(\log n))$. Note that in comparison, there are
We remark that our algorithms maintain \emph{implicit} solutions in the sense that after each update operation, the answers to all query operations refer to the same fixed approximate solution. Note that after adding or removing a single item (e.g., a very large but very profitable item), it can be necessary to change $\Omega(n)$ items in the current solution in order to maintain a bounded approximation guarantee. Therefore, it is impossible to maintain explicit solutions with our update and query times.
%
% they only indicate how many items of a certain type (depending on the item dimensions and profits) are packed rather than the correct identity of the packed items. This is necessary since an explicit solution may change a lot when inserting or removing an item and, therefore, changing this solution may take time $\Omega(n)$.
%for $d\ge2$; our algorithms maintain solutions with the same
%approximation guarantees as above.
%: $1+\epsilon$ for $d$-dimensional
%squares, cubes, and hypercubes, $2+\epsilon$ for rectangles without
%rotations, and $1.???+\epsilon$ for rectangles with rotations by
%90 degrees.
%In these algorithms, an update operation, i.e., inserting
%or deleting an item, takes time $O(\mathrm{poly}(\log n))$ for any
%constant $d$ and $\epsilon>0$ in the worst-case. Other operations
%are querying an approximation of the optimal solution value and whether
%a certain item is contained in a (fixed) solution $\ALG$ with the
%corresponding approximation factor. Each of these operations takes
%also time $O(\mathrm{poly}(\log n))$ in the worst case. The answers
%to the
%query operations
%% latter queries
%are consistent between two consecutive update
%operations, i.e., they correspond to the same solution $\ALG$. Furthermore,
%we can output $\ALG$ completely in time $O(|\ALG|\mathrm{poly}(\log n))$.
%Note that in comparison, there are problems
%% like Single-Source-Reachability (SSR)
%for which there are polynomial conditional lower bounds for
%the update and query time of any dynamic algorithm~\cite{henzinger2015unifying}.
\subsection{Techniques}
The known algorithms for the\emph{ $d$}-dimensional geometric knapsack
problem for squares, cubes, hypercubes, and rectangles are based on
the existence of structured packings into $O_{\epsilon,d}(1)$ boxes
(i.e., a constant number of boxes for each fixed $\epsilon$ and $d$).
In these algorithms, one first guesses these boxes (i.e., enumerates
all possibilities) which already yields a running time bound of
$n^{O_{\epsilon,d}(1)}$. It is not clear how to make use of these
structured packings without guessing the boxes first.

Instead, we use a different type of structured packings which we call \emph{easily guessable packings}. They are
% In order to overcome this obstacle, we pinpoint the properties needed to be able to guess the boxes in time $O(\mathrm{poly}(\log n))$ and show the existence of a new type of structured packings which offer us a balance between properties that are
\begin{enumerate}[(i)]
	\item flexible enough so that they allow for very profitable solutions, and
	\item structured enough so that we can compute these solutions very fast.
\end{enumerate}
In these packings, each box is specified by some parameters (e.g. height and width) and we can guess \emph{all but at most one} parameter for each box in time $O(\mathrm{poly}(\log n))$. In contrast, in the previous
results, $n^{\Omega(1)}$ time is needed already for one single parameter.
However, for each box, there may still be one parameter whose value
we have not guessed yet. To determine them, we use an important property of our
easily guessable packings. There is a partition of the input items
and a partition of the boxes for which we have not yet guessed all parameters. For each resulting subset of items,
all items of this set can be assigned only to boxes in one specific set of boxes in the partition.
Moreover, all boxes in the latter set have a certain (identical) value for the remaining
(not yet guessed) parameter.
%
% such that items from each set of the partition can be only assigned
% to boxes which have a certain (identical) value for the remaining
% (not yet guessed) parameter.    %\todo{Partition into $O_{\epsilon,d}(1)$ sets?}
This allows us to adapt the indirect guessing
framework from~\cite{heydrich2019faster} to guess the remaining parameter approximately for each box
step by step, while losing only a factor of $1+\epsilon$ in our profit, compared
to guessing it exactly in time $n^{\Omega(1)}$.

Note that for the case of rectangles without rotations there is a threshold of 2 since there are no structured packings with a better ratio using only constantly many
boxes~\cite{galvez2021approximating}. Thus, it is unclear how to improve our the approximation factor for this case to, e.g., $2-\delta$ for~$\delta>0$.

%Therefore, we present a new type of structured packings
%which we can use \emph{without} guessing
%% for which
%% it is \emph{not} necessary to guess
%all boxes at the very
%beginning in time $n^{\Omega_{\epsilon,d}(1)}$. We call them \emph{easily guessable packings}. In these
%packings, each box is specified by some parameters (e.g., height and
%width) and we can guess \emph{all but at most one} parameter easily
%in time $O(\mathrm{poly}(\log n))$. In contrast, in the previous
%results, $n^{\Omega(1)}$ time is needed already for one single parameter.
%However, for each box, there may still be one parameter whose value
%we have not guessed yet. Here we use an important property of our
%easily guessable packings: there is a partition of the input items
%such that items from each set of the partition can be only assigned
%to boxes which have a certain (identical) value for the remaining
%(not yet guessed) parameter.    %\todo{Partition into $O_{\epsilon,d}(1)$ sets?}
%This allows us to use the indirect guessing
%framework from~\cite{heydrich2019faster} to guess the remaining parameter approximately for each box
%step by step, while losing only a factor of $1+\epsilon$ compared
%to guessing it exactly in time $n^{\Omega(1)}$.
%
%In particular, in our easily guessable packings we identify a balance between properties
%that are (i) strong enough such that we can obtain algorithms with near-linear running times, and that are (ii)
%flexible enough to
%yield
%% allow profitable packings whose
%approximation ratios identical or similar to those of the
%best known polynomial time algorithms.

Please note that due to space limitations, most proofs were moved to the appendix.

% \aw{crucial} properties of structured packings that
% are sufficient to obtain algorithms with near-linear running times.
% Also, we show that profitable packings exist with these additional
% properties. In the case of hypercubes, we obtained even exactly the
% same approximation ratio as the previously known result, and for rectangles,
% we obtain similar approximation ratios as the best known polynomial
% time algorithms.

\subsection{Other related work}

Prior to the results for two-dimensional geometric knapsack for rectangles
mentioned above, a polynomial time $(2+\epsilon)$-approximation algorithm
was presented by Jansen and Zhang~\cite{jansen2004rectangle} in which the
exponent of $n$ in the running time is exponential in $1/\epsilon$.
For the special case of unweighted rectangles, the same authors gave
a faster $(2+\epsilon)$-approximation algorithm with a running time
of $O(n^{1/\epsilon+1})$~\cite{jansen2004maximizing}. If
one allows pseudo-polynomial running time instead of polynomial running
time, there is also a $(4/3+\epsilon)$-approximation algorithm in the setting of weighted rectangles known
due to Gálvez, Grandoni, Khan, Ramírez-Romero, and Wiese~\cite{grandoni2021improved}, having a running
time of $(nN)^{O_{\epsilon}(1)}$. Also here, the exponent of $nN$
is exponential in $1/\epsilon$. In addition, there is a $(1+\epsilon)$-approximation
algorithm by Adamaszek and Wiese~\cite{adamaszek2014quasi} with
quasi-polynomial running time for any constant $\epsilon>0$, assuming
quasi-polynomially bounded input data. If the input objects are triangles
that can be freely rotated, there is a polynomial time $O(1)$-approximation
algorithm due to Merino and Wiese~\cite{merino_et_al2020}.

Another way to generalize \textsc{Knapsack} is to allow several knapsacks
into which the items can be packed, possibly with different capacities.
This generalization still admits $(1+\epsilon)$-approximation algorithms
with a running time of $n^{O_{\epsilon}(1)}$~\cite{chekuri2005polynomial,kellerer1999polynomial},
and even with a running time of the form $f(\epsilon)n^{O(1)}$ for
some function $f$~\cite{jansen2010parameterized,jansen2012fast}.
Furthermore, there is a dynamic algorithm known for the problem
with polylogarithmic update time which also achieves
an approximation ratio of $1+\epsilon$~\cite{eberle_et_al:LIPIcs.FSTTCS.2021.18}.
% Note that the number of knapsacks is not assumed to be a constant
% in these results.

\section{\label{sec:hypercubes}Algorithms for $d$-dimensional hypercubes}

In this section we present our PTAS and dynamic algorithm for the geometric knapsack problem with $d$-dimensional hypercubes. Let $\epsilon>0$ and assume w.l.o.g.~that $1/\epsilon\in\mathbb{N}$
and $\epsilon<1/2^{d+2}$; we assume that $d\in \mathbb{N}$ is a constant. We are given a set of $n$ items $\I$ where each item $i \in \I$ is a hypercube with
side length $s_i \in \mathbb{N}$ in each dimension and profit $p_i \in \mathbb{N}$, and an integer $N$ such that the knapsack has side length $N$ (in
each dimension). In the following, we present a simplified version of our algorithm with a running time of $O(n\cdot\mathrm{poly}(\log N))$. In Appendix~\ref{app:cubes} we describe how to improve
its running time to $O(n\cdot\mathrm{poly}(\log n))$ using more involved technical ideas.

In our algorithms, we use a special data structure to store our items~$\I$. This data structure will allow us to quickly update the input and obtain important
information about the items in $\I$. It is defined in the following lemma, which can be proven using standard data structures for range counting/reporting for points in two dimensions (corresponding to side length and profit of each item)~\cite{lee1984computational}. The details can be found in Appendix~\ref{app:data_struc}.

%First, we describe a data structure in which we store
%our items $\I$. The data structure will allow us later to compute
%our solution quickly. Also, it can be updated quickly if an item is
%inserted into $\I$ or deleted from $\I$, which will be important
%later when we describe our dynamic algorithm. The data structure can
%be implemented via data structures for range counting/reporting
%for points in two dimensions (corresponding to side length and profit of each input item)~\cite{lee1984computational}. % and allows the following set of operations.
\begin{lemma}%[\aw{Implied by}~\cite{nekrich2009orthogonal}]
	\label{lem:data-structure}There is a data structure for the items
	$\I$ that allows the following operations:
	\begin{itemize}
		\item Insertion and deletion of an item in time $O(\log^2 n)$.
		\item Given four values $a,b,c,d \in \mathbb{N}$, return the cardinality of the set $\I(a,b,c,d):= \{i \in I: a\leq s_i \leq b, c\leq p_i \leq d\}$ in time $O(\log n)$.
		\item Given four values $a,b,c,d \in \mathbb{N}$,  return the set $\I(a,b,c,d):= \{i \in I: a\leq s_i \leq b, c\leq p_i \leq d\}$ time $O(\log n +|\I(a,b,c,d)|)$. 
		%\item Given four values $a,b,c,d \in \mathbb{N}$, the total profit of all items in the set $\I(a,b,c,d):= \{i \in I: a\leq s_i \leq b, c\leq p_i \leq d\}$ can be reported in time $O(\log n)$.
		%\item Given three values $m, t_1, t_2 \in \mathbb{N}$, the $m$ most profitable squares with $s_i \in (t_1,t_2]$ can be reported in time $O(\log^3 n)$.
		%\item Given two values $p\in\mathbb{R}$ and $j \in \mathbb{N}$, the $j$-th smallest item with profit $p$ can be reported in time $O(\log^3 n)$.
	\end{itemize}
\end{lemma}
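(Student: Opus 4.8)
The plan is to build the data structure by reducing each item $i \in \I$ to a point $(s_i, p_i)$ in the plane and then invoking a standard orthogonal range searching structure. Concretely, I would store the points in a two-dimensional range tree (a balanced binary search tree on the first coordinate $s_i$, where each internal node carries a secondary balanced binary search tree on the second coordinate $p_i$ of the points in its subtree), augmented at every node with subtree sizes so that counting queries can be answered without enumerating the reported set. With fractional cascading this gives $O(\log n)$ query time for counting and $O(\log n + k)$ for reporting $k$ elements, and $O(\log^2 n)$ time per insertion/deletion (the secondary trees on the $O(\log n)$ ancestors of the inserted/deleted leaf each need an update of $O(\log n)$); the references \cite{lee1984computational} cover exactly this.

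The key steps I would carry out are: first, observe that a query $\I(a,b,c,d)$ is precisely the set of stored points lying in the axis-parallel rectangle $[a,b]\times[c,d]$, so the problem is verbatim two-dimensional orthogonal range counting/reporting. Second, recall (or cite) the range tree construction: the primary tree splits the query interval $[a,b]$ into $O(\log n)$ canonical node-disjoint subtrees, and within each we run a one-dimensional range query on the secondary structure to restrict to $[c,d]$. For counting, each such one-dimensional query returns a count in $O(\log n)$ time (or $O(1)$ with fractional cascading after the first), for a total of $O(\log n)$; for reporting, we traverse and output, costing $O(\log n + k)$. Third, handle updates: when an item is inserted or deleted, its leaf lies on $O(\log n)$ root-to-leaf paths' worth of secondary trees (one per ancestor), and each secondary tree update (insert/delete plus rebalancing plus maintaining subtree-size counters) takes $O(\log n)$, giving $O(\log^2 n)$; I would remark that weight-balanced trees or a standard amortization-to-worst-case de-amortization keeps the bound worst-case if that is desired, or simply note amortized bounds suffice for our application. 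Since the excerpt explicitly defers the details to Appendix~\ref{app:data_struc}, the body of the proof can be a short paragraph citing \cite{lee1984computational} and sketching this reduction.

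The main obstacle is a minor bookkeeping point rather than a conceptual one: ensuring the counting operation runs in $O(\log n)$ rather than $O(\log^2 n)$ — a naive range tree gives $O(\log^2 n)$ for both counting and reporting-of-nothing, so one must either augment with subtree-size counters in the secondary trees (so a one-dimensional count is $O(\log n)$ and the overall count is $O(\log^2 n)$, which is not yet good enough) and then apply fractional cascading to bring the secondary queries down to $O(1)$ amortized after the first, yielding the claimed $O(\log n)$. Maintaining fractional cascading under insertions and deletions is the delicate part; the standard fix is to use the dynamic fractional cascading of Mehlhorn and Näher, or alternatively to accept the slightly weaker $O(\log^2 n)$ counting bound, which is in fact sufficient everywhere it is used in our algorithms. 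I would state the lemma with the fractional-cascading-based bounds for cleanliness but note in the appendix that the weaker bound already suffices. Coordinate compression (mapping the at most $n$ distinct values of $s_i$ and of $p_i$ to $\{1,\dots,n\}$) is a routine preprocessing step that makes all of this independent of the magnitude of $N$ and the profits; under insertions one maintains the compressed coordinates lazily or simply works with the real values in the comparison-based trees, which is fine.
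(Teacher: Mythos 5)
Your proposal matches the paper's own argument: the paper likewise maps each item to the point $(s_i,p_i)$ and invokes standard dynamic two-dimensional orthogonal range counting/reporting structures (citing the survey of Lee and Preparata) with exactly the bounds $O(\log^2 n)$ for updates, $O(\log n)$ for counting, and $O(\log n + k)$ for reporting. Your additional sketch of the range-tree/fractional-cascading machinery and the remark that an $O(\log^2 n)$ counting bound would also suffice are consistent elaborations of the same route, so the proof is correct and essentially identical in approach.
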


In the following, we will refer to this data structure as our \emph{item
	data structure}. Observe that we can insert all given items $\I$
into it in time $O(n\log^2 n)$.
Additionally, we use balanced binary search trees (see e.g.~\cite{guibas1978dichromatic}) to store the set of item side lengths and profits; an item can be inserted, deleted, and queried in time $O(\log n)$ in each of these search trees.% allowing for $O(\log n)$ insertion, deletion and lookups.}

%
% \moritz{Furthermore, we use balanced binary search trees (see e.g.~\cite{guibas1978dichromatic}) to store the set of side lengths allowing for $O(\log n)$ insertion, deletion and lookups.}\awr{can we include this directly in the item data structure?}

\subsection{Easily guessable packing of hypercubes}\label{sec:hypercubes_strucpack}

Our algorithm is based on a structured packing into $O_{\epsilon,d}(1)$ boxes, i.e., $d$-dimensional hypercuboids, such that the profit of
the packed items is at least $(1-O(\epsilon))\OPT$, where $\OPT$ denotes the optimal solution of the given instance. These boxes are packed non-overlappingly inside the knapsack and each item is packed into one of these boxes (see Figure~\ref{fig:packboxes}). Intuitively, in our algorithm we will guess the sizes of these boxes and compute an assignment of items into the boxes. Our boxes are designed such that we can make these guesses quickly and such that it is easy to place all items that were assigned to each box.
% and their placement in the knapsack. Then, we will
% we wish to find this set of boxes and an assignment of items to boxes. We then use the fact that the considered boxes can be packed easily once the set of items to be packed into each box is fixed. \mor{figure needs some adjustments}
\begin{figure}[h!]
	\begin{minipage}{.45\textwidth}
		\centering
		\includegraphics[scale = 0.63, page = 9]{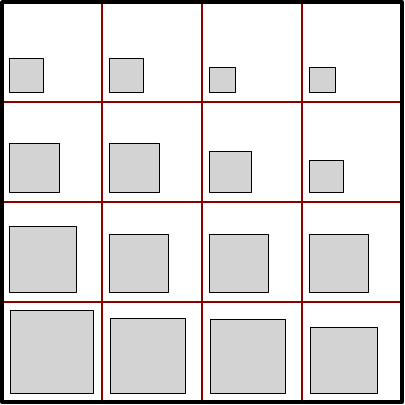}
		\subcaption{Packing of items into single boxes}
		\label{fig:packboxes_boxes}
	\end{minipage}%
	\begin{minipage}{.45\textwidth}
		\centering
		\includegraphics[width=0.8\linewidth,page = 7]{figures/Squares2D.pdf}
		\subcaption{Packing of boxes into knapsack} 
		\label{fig:packboxes_full}
	\end{minipage}
	\caption{Visualization of packing using boxes}
	\label{fig:packboxes}
\end{figure}

In~\cite{jansen2022ptas}, a structured
packing was presented that lead to a PTAS with a running time of $n^{O_{\epsilon,d}(1)}$ for our problem. They use two specific types of boxes. In the formal definition of those, we need the volume and surface of a $d$-dimensional box $B$ which we define as $\mathrm{VOL}_d(B) := \prod_{d'=1}^d \ell_{d'}(B)$ and $\mathrm{SURF}_d(B) := 2\sum_{d'=1}^d\mathrm{VOL}_d(B)/\ell_{d'}(B)$, respectively.
\begin{definition}[$\mathcal{V}$-boxes and $\mathcal{N}$-boxes~\cite{jansen2022ptas}]
	\label{def:jansen_boxes}
	Let $B$ be a $d$-dimensional hypercuboid, $\I$ be a set of items packed in it, and let $\hat{s}$ be an upper bound on the side lengths of the items in $\I$. We say that $B$ is
	\begin{itemize}
	 \item a $\mathcal{V}$-box if its side lengths can be written as $n_1\hat{s},n_2\hat{s},\dots,n_d\hat{s}$ where $n_1,n_2,\dots,n_d \in \mathbb{N}_{+}$ and the volume of the packed items is at most $\mathrm{VOL}_d(B)-\hat{s}\frac{\mathrm{SURF}_d(B)}{2}$,
	 \item an $\mathcal{N}$-box if its side lengths can be written as $n_1\hat{s},n_2\hat{s},\dots,n_d\hat{s}$ where $n_1,n_2,\dots,n_d \in \mathbb{N}_{+}$ and the number of packed items is at most $\prod_{i=1}^d n_i$.
	\end{itemize}
\end{definition}

These boxes and the structured packing using them, however, is not enough for our purposes since each parameter must be guessed. It is unclear how to do this faster than
in time~$n^{\Omega(1)}$, already for one single parameter. Therefore, we refine the packing presented in~\cite{jansen2022ptas} by adapting the type of boxes to include additional technical properties.

%each box is either an
%$\N$-box \footnote{The asterisk in the term ``$\N$-box'' refers to the fact that we require an additional
%	property compared to the $\mathcal{N}$-boxes in~\cite{jansen2022ptas}.}
%or an $\S$-box.

%However, for our purposes we need certain additional
%technical properties. Therefore, we refine the packing presented in~\cite{jansen2022ptas}. 
Our first type of boxes are $\N$-boxes which are a refinement of the aforementioned $\mathcal{N}$-boxes. For each $\N$-box $B$ we specify two additional parameters $s_{\min}(B)$ and $s_{\max}(B)$ and require for each item $\in \I$ packed in $B$ that $s_{\min}(B)\le s_i \le  s_{\max}(B)$. These values $s_{\min}(B)$ and $s_{\max}(B)$ will help us in our computation later. Intuitively, the box $B$ is partitioned into a $d$-dimensional grid with spacing $s_{\max}$, such that each item in $B$ is placed inside one of these grid cells (similar as the $\N$-boxes), see Figure~\ref{fig:origpack_Nmain}.
%
% value $s_{\max}(B)$ plays a similar role as the
%
% It has two parameters
%
% Like the $\mathcal{N}$-boxes, an $\N$-box $B$ is
%
% as we also require a lower bound on the side lengths of items packed into such a box. Observe that this property may seem weak but as we will later show it is necessary for our purposes of reducing the running time of the guessing step of our algorithm to $O(\mathrm{poly}(\log n))$.
% %\footnote{The asterisk in the term ``$\N$-box'' refers to the fact that we require an additional property compared to the $\mathcal{N}$-boxes in~\cite{jansen2022ptas}.}
% Intuitively, an $\N$-box is a $d$-dimensional hypercuboid  $B$ that is
% partitioned into a $d$-dimensional grid with spacing $s_{\max}$,
% where $s_{\max}$ is an upper bound for the side length of the largest
% item that is placed inside the box. Moreover, the number of items packed into $B$ is at most the number of cells of the grid described above. Hence, packing these items into $B$ is easily done by placing one item in each cell.
\begin{definition}[$\N$-box]
Let $B$ be a $d$-dimensional hypercuboid with given values $s_{\min}(B)\ge~0$ and
$s_{\max}(B)\ge~0$
and suppose we are given
a packing of items $\I'\subseteq\I$ inside $B$.
% Let $s_{\max}(B)$
% and $s_{\min}(B)$ be such that $s_{\max}(B)\ge\max_{i\in\I'}s_{i}$
% and $s_{\min}(B)\le\min_{i\in\I'}s_{i}$.
We say that $B$ is an $\N$\emph{-box}
if $s_{\min}(B)\le\min_{i\in\I'}s_{i}$ and
$s_{\max}(B)\ge\max_{i\in\I'}s_{i}$
and
if its side lengths can be written as $n_{1}(B)s_{\max}(B),\dots,n_{d}(B)s_{\max}(B)$
with $n_{1}(B),\dots,n_{d}(B)\in\mathbb{N}$ such that~$|\I'|\le\prod_{i=1}^{d}n_{i}(B)$.
\end{definition}
Our second type of boxes are $\S$-boxes. Intuitively, an $\S$-box $B$ contains only items that are very small in each dimension compared to $B$. For technical reasons, we require that the packed items do not use the full volume of $B$, even
if we increase each item size to the next larger power of $1+\epsilon$. This will allow us to compute our solution efficiently since we may
round up item sizes and profits to powers of $1+\epsilon$.

Throughout this paper, for any  $x > 0$ we define $\lceil x\rceil_{1+\epsilon}$
% (and $\lfloor x\rfloor_{1+\epsilon}$)
to be the smallest power of $1+\epsilon$ that is larger
than~$x$ rounded down. Similarly, we define $\lfloor x\rfloor_{1+\epsilon}$ to be the largest power of $1+\epsilon$ that is smaller than $x$ rounded up.
% See Figure~\ref{fig:origpack_Smain} for a visualization of an $\S$-box in two dimensions which is packed using NFDH.

%Intuitively, a $\S$-box $B$ contains only items that are very small
%in each dimension compared to $B$. For technical reasons, we require
%that the the packed items do not use the full volume of $B$, even
%if we round up the item sizes to the next larger power of $1+\epsilon$.
%This will allow us later to select more efficiently the items that
%we pack into our $\S$-boxes, e.g., we can treat items identically
%if their sizes are almost the same. Throughout this paper,
%for any  $x > 0$ we define $\lceil x\rceil_{1+\epsilon}$
%% (and $\lfloor x\rfloor_{1+\epsilon}$)
%to be the smallest power of $1+\epsilon$ that is larger
%than~$x$. Similarly, we define $\lfloor x\rfloor_{1+\epsilon}$ to be the largest power of $1+\epsilon$ that is smaller than $x$.

\begin{definition}[$\S$-boxes]
Let $B$ be a $d$-dimensional hypercuboid with side length $\ell_{d'}(B)\in\mathbb{N}_{0}$
for each $d'\in[d]$ and suppose we are given a packing of items $\I'\subseteq\I$
inside $B$. We say that $B$ is a $\S$\emph{-box} if for each item
$i\in\I'$ we have $s_{i}\leq\epsilon\min_{d'\in[d]}\ell_{d'}(B)$
and additionally $\sum_{i\in\I'}\lceil s_{i}\rceil_{1+\epsilon}^{d}\leq(1-2d\cdot\epsilon)\mathrm{VOL}_d(B)$.
\end{definition}
\begin{figure}
	\begin{minipage}{.4\textwidth}
		\centering
		\includegraphics[width=0.9\linewidth, page = 1]{figures/Squares2D.pdf}
		\subcaption{Sorted packing of an $\N$-box: each item is packed into a single grid cell.}
		\label{fig:origpack_Nmain}
	\end{minipage}%
	\hspace*{1cm}
	\begin{minipage}{.4\textwidth}
		\centering
		\includegraphics[width=0.9\linewidth,page = 2]{figures/Squares2D.pdf}
		\subcaption{NFDH packing of an $\S$-box: items are packed greedily into shelves.}
		\label{fig:origpack_Smain}
	\end{minipage}
	\caption{Visualization of an $\N$- and an $\S$-box for $d=2$}
	\label{fig:origpackmain}
\end{figure}

%For each $\S$-box $B$ we require that the items in $\I'$ packed inside it do not use the
%full volume of $B$ and that each item in $\I'$ is small compared to $B$. 
For an $\S$-box $B$, we can show that all items in
%\emph{any} set of the items 
$\I'$
% satisfying the above properties
can be packed inside $B$ using the Next-Fit-Decreasing-Height (NFDH) algorithm. NFDH is a greedy algorithm that orders
the items non-increasingly by their sizes and then packs them greedily
in this order into shelves within the box $B$ (see Figure~\ref{fig:origpack_Smain} and
Appendix~\ref{app:cubes} for details).
%\awr{sentence in comments that could go to caption of NFDH-figure}
%In Figure~\ref{fig:origpack_Smain} one can see how the first item packed into each shelf fixes the height of the shelf such that the shelf is then treated as a one dimensional knapsack which is packed greedily.
% We make use of the fact that NFDH always finds a feasible packing of an $\S$-box $B$ with given item set $\I'$.

%Moreover, since we required that the items in $\I'$ do not use the
%full volume of a $\S$-box $B$ and they are small compared to $B$,
%the Next-Fit-Decreasing-Height (NFDH) algorithm always finds a packing
%of such a set $\I'$ into $B$. NFDH is a greedy algorithm that orders
%the items non-increasingly by their sizes and then packs them greedily
%in this order into shelves within the box $B$ (see
%Figure~\ref{fig:origpack_S} and
%Appendix~\ref{app:cubes} for details). \mor{More detail on NFDH?}

%We will need later that this is also true
%if the items have a side length of up to $2\epsilon\ell_{\min}(B)$ (instead of $\epsilon\ell_{\min}(B)$ as in the previous definition)
%% if the items are slightly larger than required in the previous definition
%and their total rounded volume is slightly larger as well.
\begin{lemma}[implied by Lemma 4 in~\cite{harren2009approximation}]
\label{lem:NFDH}Let $B$ be a box and let $\I'$ be a set of items
such that $s_{i}\leq2\epsilon\ell_{\min}(B)$ for each $i\in\I'$
and $\sum_{i\in\I'}\lceil s_{i}\rceil_{1+\epsilon}^{d}\leq\mathrm{VOL}_{d}(B)-2(d-1)\cdot\epsilon\mathrm{VOL}_{d}(B)$.
%\awr{notation $\lceil s_{i}\rceil_{1+\epsilon}^{d}$ introduced?}
Then, NFDH finds a packing of $\I'$ into $B$.
%\marginpar{Note that this is a slightly different condition than the definition
	%before!}
\end{lemma}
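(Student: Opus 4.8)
The plan is to reduce the statement to Lemma~4 of~\cite{harren2009approximation}, which gives a volume‑based sufficient condition for NFDH to pack a set of $d$‑dimensional cubes into a box. Recall that the $d$‑dimensional NFDH procedure works recursively: it sorts the items of $\I'$ non‑increasingly by side length, opens a \emph{shelf} -- a $(d-1)$‑dimensional sub‑box of $B$ whose thickness in one fixed coordinate direction equals the side length of the currently largest unpacked item -- fills this shelf greedily by recursing on the $(d-1)$‑dimensional shelf, then stacks the next shelf on top of it, and so on until either all items of $\I'$ are packed or the stacked shelves exceed the extent of $B$ in the fixed direction. The analysis of~\cite{harren2009approximation} bounds, at each recursion level, the fraction of a shelf's cross‑sectional volume that may be left unused; since this wasted volume is proportional to (an upper bound on) the item side lengths times the relevant face area of $B$, summing over the levels shows that NFDH succeeds whenever the total volume of the cubes to be packed is at most $\mathrm{VOL}_d(B)$ minus a loss term of order $a\cdot\mathrm{VOL}_d(B)/\ell_{\min}(B)$, where $a$ is any upper bound on the side lengths of the packed items.

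Applying this, I would set $a:=2\epsilon\,\ell_{\min}(B)$, which by hypothesis upper‑bounds $s_i$ for every $i\in\I'$. Then the loss term becomes at most $2(d-1)\epsilon\,\mathrm{VOL}_d(B)$; the factor $d-1$ rather than $d$ appears because the innermost, one‑dimensional packing wastes nothing, so only $d-1$ of the recursion levels contribute waste. Moreover, the volume NFDH actually has to accommodate satisfies $\sum_{i\in\I'}s_i^{\,d}\le\sum_{i\in\I'}\lceil s_i\rceil_{1+\epsilon}^{\,d}\le\mathrm{VOL}_d(B)-2(d-1)\epsilon\,\mathrm{VOL}_d(B)$ by the assumption of the lemma, so rounding item sizes up to powers of $1+\epsilon$ only strengthens the premise of~\cite{harren2009approximation} and needs no separate treatment. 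Hence NFDH produces a feasible packing of $\I'$ into $B$.

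The only genuine work is reconciling the constants: one must verify that the slack bought by $s_i\le 2\epsilon\,\ell_{\min}(B)$ is exactly what Lemma~4 of~\cite{harren2009approximation} needs to absorb a loss of $2(d-1)\epsilon\,\mathrm{VOL}_d(B)$, and that replacing each individual side length $\ell_{d'}(B)$ by their minimum $\ell_{\min}(B)$ in the hypothesis is harmless (it only makes the assumption more restrictive). If the cited lemma is phrased for a normalized box, one first rescales $B$ so that every side length is at least $1$, which leaves the NFDH packing combinatorially unchanged. I expect this constant‑chasing, together with matching the exact form of the recursion in~\cite{harren2009approximation}, to be the main -- and essentially the only -- obstacle.
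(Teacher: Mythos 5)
Your high-level route is the same as the paper's (the paper gives no proof of this lemma beyond the citation: it invokes Harren's volume-based NFDH guarantee, which is exactly the condition appearing in the definition of $\mathcal{V}$-boxes, namely total volume at most $\mathrm{VOL}_d(B)-\hat{s}\cdot\mathrm{SURF}_d(B)/2$ for an upper bound $\hat{s}$ on the side lengths). The gap is in the one step you declare to be mere ``constant-chasing''. Your claim that only $d-1$ recursion levels contribute waste because ``the innermost, one-dimensional packing wastes nothing'' is false in the recursive setting: a standalone $1$-dimensional next-fit indeed needs no slack, but inside a shelf each row can end with up to $\hat{s}$ of unusable extent (the item that does not fit is moved to the next shelf), and the same loss occurs in every one of the $d$ directions, including the stacking direction. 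The resulting guarantee is the $d$-term bound $\hat{s}\cdot\mathrm{SURF}_d(B)/2=\hat{s}\sum_{d'}\mathrm{VOL}_d(B)/\ell_{d'}(B)$, which with $\hat{s}=2\epsilon\,\ell_{\min}(B)$ only gives a loss of up to $2d\,\epsilon\,\mathrm{VOL}_d(B)$, not $2(d-1)\epsilon\,\mathrm{VOL}_d(B)$. The rounding does not rescue this, because $\lceil s_i\rceil_{1+\epsilon}$ can coincide with $s_i$ (or be arbitrarily close to it), so the hypothesis on rounded volumes gives no guaranteed extra slack over the true volumes.

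That this is a genuine obstruction and not bookkeeping can be seen from equal squares in $d=2$: take $\epsilon=1/20$, a square box of side $\ell=10.8\,s$, and $101$ squares of side $s$ chosen (by scaling) to be essentially a power of $1+\epsilon$. Then $s\le 2\epsilon\ell_{\min}(B)$ and $101\,\lceil s\rceil_{1+\epsilon}^{2}\le(1-2\epsilon)\ell^{2}$, yet NFDH places only $10$ items per shelf and only $10$ shelves, so it packs $100$ items and fails on the last one. So the reduction with the parameters you propose cannot be completed; one needs either the stronger side bound $s_i\le\epsilon\,\ell_{\min}(B)$ (as in the definition of $\S$-boxes, where $\hat{s}\cdot\mathrm{SURF}_d(B)/2\le d\,\epsilon\,\mathrm{VOL}_d(B)\le 2(d-1)\epsilon\,\mathrm{VOL}_d(B)$ for $d\ge 2$) or a $2d\,\epsilon$ loss term in the volume hypothesis, and your write-up would have to confront this mismatch rather than defer it.
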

We now prove that there is always a $(1+O(\epsilon))$-approximate \emph{easily guessable packing} using $O_{\epsilon,d}(1)$ boxes, each of them being an $\N$- or an $\S$-box. In order to do so we refine the structured packing in~\cite{jansen2022ptas} which uses $\mathcal{N}$- and $\V$-boxes.
The key additional property is that there are $O_\epsilon(1)$ values $k_{1},k_{2},\dots,k_{r}$ such that each $\N$-box $B$ contains
only items whose respective sizes are within the interval $(k_{j-1},k_{j}]$
for some $j$. In particular, we have that $s_{\min}(B)=k_{j-1}$ and $s_{\max}(B)=k_{j}$ and the values $k_{1},k_{2},\dots,k_{r}$ yield a partition
of the items in $\I$ of the form $(k_{j-1},k_{j}]$. This will allows us to apply an indirect guessing framework later to guess the values $k_{1},k_{2},\dots,k_{r}$ quickly.
All remaining parameters of the boxes  can be easily guessed in time $O(\log N)$ each (and there are only $O_{\epsilon,d}(1)$ parameters in total).
% options for each of them, and there are $O_\epsilon(1)$ such parameters in total.

%
% Note, however, that it is not straight-forward to guess the values $k_{1},k_{2},\dots,k_{r}$.
% On the other hand,
%
%
% The main task is to establish property ii) below and to ensure
% that each $\S$-box $B$ contains only items that are very small compared
% to $B$. Intuitively, property ii) states that there are constantly
% many values $k_{1},k_{2},\dots,k_{r}$ such that each $\N$-box contains
% only items whose respective sizes are within the interval $(k_{j},k_{j+1}]$
% for some $j$. This property allows us to guess all but one parameter for each of the boxes. The remaining parameter will then be computed (approximately) using a modification of the indirect guessing framework~\cite{heydrich2019faster}.
%We refine a structured packing in~\cite{jansen2022ptas}, mainly to establish property ii) below and to ensure
%that each $\S$-box $B$ contains only items that are very small compared
%to $B$. Intuitively, property ii) states that there are constantly
%many values $k_{1},k_{2},\dots,k_{r}$ such that each $\N$-box contains
%only items whose respective sizes are within the interval $(k_{j},k_{j+1}]$
%for some $j$.
\begin{lemma}[Near-optimal packing of hypercubes]
\label{lem:struc_hypercubes} For any instance $\I$ of the $d$-dimensional
hypercube knapsack problem and any $\epsilon<1/2^{d+2}$, there exists
a packing with the following properties: 
\begin{enumerate}[i)]
	\item It consists of $\N$- and $\S$-boxes whose total number is bounded
	by a value $C_{\mathrm{boxes}}(\epsilon,d)$ depending only on $\epsilon$
	and $d$. 
	\item There exist values $k_{1},k_{2},\dots,k_{r}\in\mathbb{Z}_{\geq0}$
	with $r\in O_{\epsilon,d}(1)$ such that if $B$ is an $\N$-box there
	exists a value $j_{B}\in\{1,2,\dots,r\}$ such that $s_{\max}(B)=k_{j_{B}}$
	and $s_{\min}(B)=k_{j_{B}-1}$ with $k_{0}:=0$. 
	\item For each $\S$-box $B$ and each $d' \in [d]$ we have that $\ell_{d'}(B) = \lfloor x\rfloor_{1+\epsilon}$ for some $x\in [N]$.
	\item For each $\N$-box $B$ and each $d' \in [d]$ we have that $n_{d'}(B) = \lfloor n'\rfloor_{1+\epsilon}$ for some $n'\in [n]$ if $n_{d'}(B) > 1/\epsilon$.

	\item The total profit of the packing is at least $(1-2^{O(d)}\epsilon)\OPT$.
	% \todo{check constant here! rephrase $O_{d}(1)\epsilon$?}\awr{how about $2^{O(d)}$ ?},
	% where $\OPT$ is the profit of an optimal packing for instance
	% $\I$.
\end{enumerate}
\end{lemma}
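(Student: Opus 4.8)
The plan is to start from the structured packing of~\cite{jansen2022ptas} into $\mathcal{N}$- and $\mathcal{V}$-boxes (which already gives profit $(1-2^{O(d)}\epsilon)\OPT$ with $O_{\epsilon,d}(1)$ boxes) and to post-process it so that every box becomes an $\N$-box or an $\S$-box satisfying the extra structural constraints (ii)--(iv). The item sizes will be partitioned into a constant number of ranges; I would do this by fixing $\epsilon_{\mathrm{large}}$ and $\epsilon_{\mathrm{small}}$ with $\epsilon_{\mathrm{small}} \ll \epsilon_{\mathrm{large}} \ll \epsilon$ via a standard averaging/pigeonhole argument over $O_\epsilon(1)$ candidate scales, so that only a profit fraction $O(\epsilon)\OPT$ of items has size in the "medium" range $(\epsilon_{\mathrm{small}} N, \epsilon_{\mathrm{large}} N]$; these medium items are simply discarded, and each discarded item's profit is at most an $\epsilon$-fraction of the total, which is why we only lose $O(\epsilon)\OPT$. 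The surviving items are either "large" ($> \epsilon_{\mathrm{large}} N$), of which there are only $O_{\epsilon,d}(1)$ many and which I place in their own degenerate $\N$-boxes, or "small" ($\le \epsilon_{\mathrm{small}} N$), which I will route into $\S$-boxes.

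Next I would refine the box structure. For the $\mathcal{V}$-boxes of~\cite{jansen2022ptas}: a $\mathcal{V}$-box has a volume slack of $\hat s \cdot \mathrm{SURF}_d(B)/2$, and its items all have size at most $\hat s$; to turn it into an $\S$-box I first round each item size up to the next power of $1+\epsilon$, which inflates the total volume of packed items by at most a $(1+\epsilon)^d = 1 + O(d\epsilon)$ factor, and then round each side length $\ell_{d'}(B)$ down to the nearest power of $1+\epsilon$ (property (iii)), which shrinks $\mathrm{VOL}_d(B)$ by at most a $(1+\epsilon)^d$ factor. I need to argue the remaining volume still accommodates the packed items with the $\S$-box slack $1 - 2d\epsilon$; this follows because (a) the $\mathcal{V}$-box slack is quantitatively larger than the rounding losses once $\hat s \le \epsilon_{\mathrm{small}} \ell_{\min}(B)$ (which I can enforce by only calling boxes with sufficiently small items $\S$-boxes, and subdividing or re-classifying the rest), and (b) Lemma~\ref{lem:NFDH} then certifies that NFDH actually realizes the packing. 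Symmetrically, for each $\mathcal{N}$-box $B$ of~\cite{jansen2022ptas}, all of whose items lie in a size interval $(\hat s/(1+\epsilon_{\mathrm{large}}), \hat s]$ coming from the size-classification, I set $s_{\min}(B) = k_{j-1}$ and $s_{\max}(B) = k_j$ where the $k_j$ are the (constantly many) size-class boundaries $\{ \lceil \epsilon_{\mathrm{large}} N / (1+\epsilon_{\mathrm{large}})^{\ell} \rceil : \ell = 0, 1, \dots, O_{\epsilon,d}(1) \}$ together with the large-item sizes; this directly gives property (ii). For property (iv), whenever a multiplicity $n_{d'}(B)$ exceeds $1/\epsilon$ I round it down to the nearest power of $1+\epsilon$, enlarging the corresponding grid cell count loss by at most a $(1+\epsilon)$ factor per dimension and hence removing at most an $O(d\epsilon)$ fraction of the grid cells (equivalently, of the items that fit), which only costs $O(d\epsilon)$ of that box's profit; since $n_{d'}(B)\le 1/\epsilon$ otherwise, the rounded value is exactly $\lfloor n'\rfloor_{1+\epsilon}$ for some integer $n' \le n$ because $n_{d'}(B)s_{\max}(B) \le N$ forces $n_{d'}(B) \le N \le n^{O(1)}$ — and in the $O(n\cdot\mathrm{poly}(\log n))$ version one ensures $n_{d'}(B) \le n$ outright.

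Finally I would collect the profit losses: $O(\epsilon)\OPT$ from discarding medium items, $O(d\epsilon)\OPT$ from the $\mathcal{V}\to\S$ volume rounding (only items that no longer fit are dropped, a $(1+\epsilon)^{O(d)}$-fraction), $O(d\epsilon)\OPT$ from the $\mathcal{N}$-box multiplicity rounding, plus the $2^{O(d)}\epsilon\,\OPT$ already lost in~\cite{jansen2022ptas}; summing these and absorbing constants yields total profit at least $(1 - 2^{O(d)}\epsilon)\OPT$, which is property (v), while the box count stays $C_{\mathrm{boxes}}(\epsilon,d) = O_{\epsilon,d}(1)$ since every operation above replaces a box by $O(1)$ boxes, establishing (i).

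I expect the main obstacle to be the $\mathcal{V}$-to-$\S$-box conversion: reconciling three competing slack/rounding budgets — the original $\mathcal{V}$-box volume slack proportional to $\hat s \cdot \mathrm{SURF}_d(B)$, the $(1+\epsilon)^d$ blow-up from rounding item sizes up, and the $(1+\epsilon)^d$ shrinkage from rounding box side lengths down — so that the net result still meets the $(1-2d\epsilon)$-volume condition in the $\S$-box definition and the hypothesis of Lemma~\ref{lem:NFDH}. Getting the relationship between $\epsilon$, $\epsilon_{\mathrm{small}}$, $\epsilon_{\mathrm{large}}$, $\hat s$, and $\ell_{\min}(B)$ tight enough — while simultaneously ensuring the size-class boundaries $k_1, \dots, k_r$ are consistent across all $\N$-boxes so that property (ii) holds with a single global sequence — is the delicate bookkeeping that the rest of the proof must handle carefully.
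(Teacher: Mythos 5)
Your high-level plan (start from the packing of~\cite{jansen2022ptas}, convert its $\mathcal{V}$-boxes, fix $s_{\min}/s_{\max}$ values for the $\N$-boxes, then round side lengths and multiplicities to powers of $1+\epsilon$) matches the paper, but the two steps you yourself flag as delicate are exactly where your argument has genuine gaps, and the mechanisms you sketch do not close them. First, the $\mathcal{V}$-to-$\S$ conversion: a $\mathcal{V}$-box of~\cite{jansen2022ptas} only guarantees the volume slack $\hat s\cdot\mathrm{SURF}_d(B)/2$; its items need not satisfy $s_i\le\epsilon\,\ell_{\min}(B)$, and your global large/medium/small classification relative to $N$ does not help, because an item that is small relative to $N$ can still be comparable to the side length of a small $\mathcal{V}$-box. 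Such ``medium relative to the box'' items cannot be discarded (they may carry essentially all of the profit of that box and hence of $\OPT$), so ``subdividing or re-classifying the rest'' is the entire difficulty, not a remark. The paper resolves it with a dedicated inductive decomposition of the NFDH packing inside each $\mathcal{V}$-box: repeatedly peel off the shelves whose items are large relative to the current core box, stopping a peeling iteration only when the peeled group carries more than an $\epsilon$-fraction of the box's profit (so there are at most $1/\epsilon$ iterations), turn each peeled group --- whose item sizes lie within a factor $1/\epsilon$ of each other --- into $O(1/\epsilon^{d})$ many $\N$-boxes, and turn the final core (where all items are small relative to it) into an $\S$-box via an LP/rank-lemma selection; this is the content of Lemmas~\ref{lem:vstarbox_selection}--\ref{lem:Cubes_transform} and is absent from your proposal.

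Second, property ii): your proposed global boundaries $\lceil\el N/(1+\el)^{\ell}\rceil$ for $\ell=0,\dots,O_{\epsilon,d}(1)$ only cover a constant-factor range of sizes below $\el N$, while items packed in $\N$-boxes (both the original $\mathcal{N}$-boxes of~\cite{jansen2022ptas} and those produced by the $\mathcal{V}$-box conversion) can have arbitrary sizes; covering them with a fixed geometric ladder would need $\Theta(\log N)$ values, contradicting $r\in O_{\epsilon,d}(1)$. Moreover, the premise that each $\mathcal{N}$-box of~\cite{jansen2022ptas} already contains items from a single size class $(\hat s/(1+\el),\hat s]$ is not part of their guarantee. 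The paper instead takes the $k_j$ to be the $O_{\epsilon,d}(1)$ distinct values $s_{\max}(B)$ of the $\N$-boxes it has constructed, and then splits each $\N$-box (shelf by shelf, dimension by dimension, increasing the box count by at most a factor $r^{d}$) so that each resulting box only contains items with sizes in one interval $(k_{j-1},k_j]$. A minor additional slip: your justification of property iv) via $N\le n^{O(1)}$ is unwarranted ($N$ may be exponential in $n$); one instead caps $n_{d'}(B)$ at the number of items, but this is easily repaired. The rounding steps for properties iii) and iv) and the profit bookkeeping in your last paragraph are consistent with the paper.
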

\begin{proof}[Proof sketch.]
We start with the structured packing due to Jansen, Khan, Lira and Sreenivas~\cite{jansen2022ptas} which consists of $O_{\epsilon,d}(1)$ many $\mathcal{N}$- and $\V$-boxes and items with a total profit of at least~$(1-2^{d+2}\epsilon)\OPT$. We modify this packing as follows. First, we consider the $\V$-boxes and split each of them into at most $O_{\epsilon,d}(1)$ many $\N$- and $\S$-boxes. We lose at most a factor of $1+\epsilon$ of our profit in this step.
For each resulting $\N$-box $B$, we define
% Repeating this argument for all $\V$-boxes and defining
% values $s_{\min}(B)$ and
$s_{\max}(B)$ to be the maximum size of an item packed into $B$.
% such that for each item $i$ packed in $B$ we have that $s_{\min}(B) \le s_i \le s_{\max}(B)$.
This yields a packing satisfying property i).
% for each box $B$ which is an original $\N$-box, we construct a packing satisfying property i).
In order to satisfy property ii), we first introduce a value $k_j$ for each distinct value $s_{\max}(B)$ of the $\N$-boxes constructed so far. We order the resulting values $k_1,...,k_r$ increasingly.
Then, we split the $\N$-boxes into smaller boxes such that for each resulting box $B$, the range of item sizes of $B$ is contained in $(k_{j-1},k_{j}]$ for some $j$; hence, we define $s_{\min}(B):=k_{j-1}$. We ensure that the number of boxes increases by at most a factor of $O_{\epsilon,d}(1)$ in this step.

% In all operations above we always ensure that for every original box the profit that remains packed into the resulting new boxes is at least a factor $(1+O(\epsilon))$ of the original profit.
Next, we modify the packing to satisfy properties iii) and iv).
For each $\N$-box $B$ and each dimension $d'\in [d]$, we round $n_{d'}(B)$ down to $\lfloor n_{d'}(B) \rfloor_{1+\epsilon}$. This decreases the maximum number of items we can pack into $B$ by at most a factor of $(1+\epsilon)^d=1+O(\epsilon)$; hence, also our profit reduces by at most this factor. Finally, for each $\S$-box $B$ we round down each side length
$\ell_{d'}(B)$ to $\lfloor \ell_{d'}(B)\rfloor_{1+\epsilon}$.
% to a power of $1+\epsilon$.
Since all items in $B$ are small compared to each side length of~$B$, our profit reduces by at most a factor of $1+\epsilon$.
%
% To this end, observe that for an $\N$-box $B$ rounding  $n_{d'}(B)$ to $\lfloor n_{d'}(B) \rfloor_{1+\epsilon}$ implies that the number of items that we can pack into $B$ decreases by at most a factor of $(1+\epsilon)$. Thus, doing this in all dimensions leads to a packing satisfying property iv) while only losing a factor of  $(1+O(\epsilon))$ of the profit. Finally, we use an LP-argument to show that we can round down the side lengths of every $\S$-box to integer powers of $(1+\epsilon)$ and only lose a factor $(1+O(\epsilon))$ of the profit. This leads to our final packing.
\end{proof}

\subsection{\label{subsec:Computing-packing}Computing a packing}
In this section we discuss how to compute our packing. Intuitively, we try to guess the packing due to Lemma~\ref{lem:struc_hypercubes}. We assume that all input items $\I$ are stored in our item data structure (see to Lemma~\ref{lem:data-structure}) and we discard all items with profit less than $\epsilon \frac{p_{\max}}{n}$ where $p_{\max} :=\max_{i\in \I} p_i$. The total profit of the discarded items is at most $n\cdot \epsilon \frac{p_{\max}}{n} \le \epsilon \OPT$.

Let $\B$ denote the set of boxes due to Lemma~\ref{lem:struc_hypercubes}. In our algorithm, we first guess all parameters of the boxes in $\B$ apart from the values $k_1,\dots,k_r$. We then modify the indirect guessing framework introduced by Heydrich and Wiese~\cite{heydrich2019faster} to approximately compute the values $k_1,\dots,k_r$. Our computed values might be imprecise in the sense that they yield a solution whose profit could be by a factor $1+\epsilon$ lower than the solution due to Lemma~\ref{lem:struc_hypercubes}. However, our resulting boxes are guaranteed to fit inside of the knapsack.

% consists of two main steps where we make use of the nature of our easily guessable packings which allow us to guess all but one parameter for each box. In particular, we start by basic quantities of the boxes $\B$ which only exclude the values $k_1,\dots,k_r$. Having guessed these basic quantities, we then modify the indirect guessing framework introduced by Heydrich and Wiese~\cite{heydrich2019faster} to approximately compute values $k_1,\dots,k_r$. In this step we also find partial packings of items into the guessed boxes. Thus, the final packing is obtained by combining these partial packings. We would like to remark again that the computed packings are implicit as they treat items the same if they differ by at most a factor of $(1+\epsilon)$ in both profit and size. This is necessary for our running time bounds. These implicit packings, however, can be used to find explicit packings.

%Assume that all input items $\I$ are stored in our item data structure
%(see to Lemma~\ref{lem:data-structure}).
%Let $\B$ denote the set of boxes due to Lemma~\ref{lem:struc_hypercubes}.
%First, we guess certain quantities of the boxes in $\B$
%% the packing due to Lemma~\ref{lem:struc_hypercubes}
%i.e., we try all combinations of all possibilities for these quantities.
%For each box $B\in\B$ denote
%by $\I(B)$ the items packed into $B$. In the remainder, we discard all items with profit less than $\epsilon \frac{p_{\max}}{n}$, where $p_{\max} :=\max_{i\in \I} p_i$,
%losing a total profit of at most $\epsilon \OPT$.

\subsubsection{Guessing basic quantities}
We start by guessing the number of $\N$- and $\S$-boxes, respectively. Then, for each $\N$-box $B\in\B$ we guess
\begin{itemize}
	\item the value $j_{B}$ indicating that $s_{\max}(B)=k_{j_{B}}$ and $s_{\min}(B)=k_{j_{B}-1}$; note that for $j_{B}$ there are only $O_{\epsilon,d}(1)$ possibilities,
	\item the value $n_{d'}(B)$ for each dimension $d' \in [d]$; for each value $n_{d'}(B)$ there are only $O(\log n)$ possibilities.
\end{itemize}
For each $\S$-box $B\in\B$, we guess $\ell_{d'}(B)$ for each $d'\in[d]$. Note that also here, for each of these values there are only $O(\log N)$ possibilities.
We denote by the \emph{basic quantities} all these guessed values for all boxes in $\B$.

\begin{lemma}\label{lem:hypcub_guessing_basic}
	In time $(\log N)^{O_{\epsilon,d}(1)}$ we can guess all basic quantities.
% 	\begin{itemize}
% 	 \item for each $\N$-box $B$ the value $j_{B}$ and the value $n_{d'}(B)$ for dimension $d' \in [d]$ and
% 	 \item for each $\S$-box $B$  the value $n_{d'}(B)$ for dimension $d' \in [d]$.
% 	\end{itemize}
\end{lemma}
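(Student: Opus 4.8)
The plan is to bound the number of possible choices for each basic quantity and take the product over all $O_{\epsilon,d}(1)$ boxes. First I would recall the structure of the basic quantities: we guess the number of $\N$-boxes and $\S$-boxes (by property i) of Lemma~\ref{lem:struc_hypercubes}, both are at most $C_{\mathrm{boxes}}(\epsilon,d) = O_{\epsilon,d}(1)$, so there are only $O_{\epsilon,d}(1)$ possibilities for this count). Then, for each $\N$-box $B$ we guess the index $j_B \in \{1,\dots,r\}$ with $r \in O_{\epsilon,d}(1)$, giving $O_{\epsilon,d}(1)$ possibilities, and for each of the $d$ dimensions the value $n_{d'}(B)$; by property iv) of Lemma~\ref{lem:struc_hypercubes}, if $n_{d'}(B) > 1/\epsilon$ then $n_{d'}(B) = \lfloor n' \rfloor_{1+\epsilon}$ for some $n' \in [n]$, so it is a power of $1+\epsilon$ in the range $[1,n]$, of which there are $O(\log_{1+\epsilon} n) = O_\epsilon(\log n)$ many; and if $n_{d'}(B) \le 1/\epsilon$ there are only $O(1/\epsilon)$ possibilities. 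In all cases there are $O_\epsilon(\log n)$ choices per value $n_{d'}(B)$. Similarly, for each $\S$-box $B$ and each dimension $d'$, by property iii) the side length $\ell_{d'}(B) = \lfloor x \rfloor_{1+\epsilon}$ for some $x \in [N]$, hence a power of $1+\epsilon$ in $[1,N]$, of which there are $O(\log_{1+\epsilon} N) = O_\epsilon(\log N)$ many.

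Next I would multiply these counts. For a single $\N$-box we have $O_{\epsilon,d}(1) \cdot (O_\epsilon(\log n))^d = (\log n)^{O_{\epsilon,d}(1)}$ choices, and for a single $\S$-box we have $(O_\epsilon(\log N))^d = (\log N)^{O_{\epsilon,d}(1)}$ choices. Since there are at most $C_{\mathrm{boxes}}(\epsilon,d)$ boxes in total, and $n \le \mathrm{poly}(N)$ so that $\log n = O(\log N)$ (we may assume $N \ge n$ after discarding items, or simply note $\log n \le O(\log N)$ under the standard encoding assumption), the total number of combinations of basic quantities is at most
\[
O_{\epsilon,d}(1) \cdot \left((\log N)^{O_{\epsilon,d}(1)}\right)^{C_{\mathrm{boxes}}(\epsilon,d)} = (\log N)^{O_{\epsilon,d}(1)}.
\]
To actually \emph{enumerate} all these guesses, I would iterate over the $O_{\epsilon,d}(1)$ possible values for the number of $\N$- and $\S$-boxes, and then over all tuples of the per-box values described above; each individual value is drawn from an explicitly describable set (the powers of $1+\epsilon$ up to $N$, the integers up to $1/\epsilon$, or the index set $\{1,\dots,r\}$) that can be generated in time linear in its size, so the total enumeration time is proportional to the number of combinations, which is $(\log N)^{O_{\epsilon,d}(1)}$ as required.

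The main obstacle — and really the only subtle point — is making sure the bookkeeping of the exponents is honest: each of the $d$ factors per box contributes an additive constant (depending on $\epsilon,d$) to the exponent of $\log N$, and multiplying over the $C_{\mathrm{boxes}}(\epsilon,d)$ boxes multiplies that exponent by a constant depending on $\epsilon,d$, so the final exponent is still a constant depending only on $\epsilon$ and $d$. One should also be slightly careful that property iv) only guarantees the power-of-$(1+\epsilon)$ structure when $n_{d'}(B) > 1/\epsilon$; the complementary case is handled by the trivial $O(1/\epsilon)$ bound, and in either subcase the number of options is $O_\epsilon(\log n)$, so the argument goes through uniformly. Beyond that, the proof is just a product of geometric-series counts and requires no further ideas.
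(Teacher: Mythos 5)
Your proposal is correct and is essentially the paper's own (largely implicit) argument: the paper justifies the lemma exactly by the counts you list — $O_{\epsilon,d}(1)$ options for the numbers of boxes and for each $j_B$, $O_\epsilon(\log n)$ options for each $n_{d'}(B)$ via property iv), and $O_\epsilon(\log N)$ options for each $\ell_{d'}(B)$ via property iii) — multiplied over the $C_{\mathrm{boxes}}(\epsilon,d)$ boxes. The only loose step is your justification that $\log n = O(\log N)$ (one cannot in general assume $n \le \mathrm{poly}(N)$, since many items may share a side length, nor simply discard items to force $N \ge n$); this is easily repaired without that assumption, because for any $\N$-box $B$ we have $n_{d'}(B)\cdot s_{\max}(B) \le N$ and $s_{\max}(B)\ge 1$, so $n_{d'}(B)\le N$ and the candidates can be restricted to powers of $1+\epsilon$ at most $N$ (plus the $O(1/\epsilon)$ small values), giving $O_\epsilon(\log N)$ options per dimension directly.
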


We do not know the values $k_{1},k_{2},\dots,k_{r}$. However, recall that they yield a partition of $\I$ with a set $\I_{j}:=\{i\in\I:s_{i}\in(k_{j-1},k_{j}]\}$
for each $j\in[r]$. We next guess additional quantities which provide us with helpful information for the next step of our algorithm, the indirect guessing framework. First, we guess approximately the profit that each set $\I_{j}$ contributes
to $\OPT$. Formally, for each $j\in[r]$ we guess $\hat{p}(j):=\left\lfloor p(\I_{j}\cap\OPT)\right\rfloor _{1+\epsilon}$
if $p(\I_{j}\cap\OPT)\ge\frac{\epsilon}{r}\OPT$ and $\hat{p}(j):=0$
otherwise.  Since $\OPT\in[p_{\max},n\cdot p_{\max})$, we have that $\hat{p}_{j}\in\{0\}\cup[\frac{\epsilon}{r}p_{\max},n\cdot p_{\max})$; hence, there are
only $O(\log N)$ possibilities for $\hat{p}_{j}$.

\begin{lemma}\label{lem:hypcub_guessing_pjs}
	The value $\hat{p}_j$ for each $j\in[r]$ can be guessed in time $(\log n)^{O_{\epsilon,d}(1)}$ and they satisfy $\sum_{j=1}^{r}\hat{p}(j)\ge(1-O(\epsilon))\OPT$.
\end{lemma}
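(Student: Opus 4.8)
The plan is to prove the two claims of Lemma~\ref{lem:hypcub_guessing_pjs} separately. For the running time bound, I would argue as follows. By definition, each $\hat p(j)$ is either $0$ or a power of $1+\epsilon$ lying in the interval $[\frac{\epsilon}{r}p_{\max}, n\cdot p_{\max})$, since $\OPT\in[p_{\max}, n\cdot p_{\max})$ and $p(\I_j\cap\OPT)\le\OPT$. The number of powers of $1+\epsilon$ in an interval $[a,b)$ with $a>0$ is $O(\log_{1+\epsilon}(b/a)) = O(\frac{1}{\epsilon}\log(b/a))$, so here it is $O(\frac{1}{\epsilon}\log(n r/\epsilon)) = (\log n)^{O_{\epsilon}(1)}$ many possibilities for a single $\hat p(j)$ (using that $r\in O_{\epsilon,d}(1)$ and $\epsilon$ is a constant). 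We need to guess $r$ such values, giving $(\log n)^{O_{\epsilon,d}(1)}$ combinations in total; since each guess can be checked and stored in $O(\mathrm{poly}(\log n))$ time, the total guessing time is $(\log n)^{O_{\epsilon,d}(1)}$ as claimed.

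For the profit bound, I would start from the fact that $\{\I_j\}_{j\in[r]}$ partitions $\I$, hence $\sum_{j=1}^{r} p(\I_j\cap\OPT) = p(\OPT) = \OPT$ (after the initial discarding step the remaining optimum is still at least $(1-\epsilon)\OPT$, but for the purposes of this lemma we may take $\OPT$ to already denote this value, or simply absorb the loss into the $O(\epsilon)$ term). I would then bound the total loss incurred by the definition of $\hat p(j)$, which has two sources. First, for each $j$ with $p(\I_j\cap\OPT) \ge \frac{\epsilon}{r}\OPT$, rounding down to the nearest power of $1+\epsilon$ loses at most a factor $1+\epsilon$, i.e. $\hat p(j) \ge \frac{1}{1+\epsilon} p(\I_j\cap\OPT) \ge (1-\epsilon) p(\I_j\cap\OPT)$. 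Second, for each $j$ with $p(\I_j\cap\OPT) < \frac{\epsilon}{r}\OPT$ we set $\hat p(j) = 0$, losing at most $\frac{\epsilon}{r}\OPT$ per such index, hence at most $r\cdot\frac{\epsilon}{r}\OPT = \epsilon\OPT$ in total over all (at most $r$) such indices. Combining, $\sum_{j=1}^{r}\hat p(j) \ge (1-\epsilon)\sum_{j=1}^{r} p(\I_j\cap\OPT) - \epsilon\OPT = (1-\epsilon)\OPT - \epsilon\OPT \ge (1-O(\epsilon))\OPT$.

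I do not expect any serious obstacle here; this lemma is essentially a bookkeeping step. The only mild subtlety is being careful that the $r$ small-index contributions really sum to at most $\epsilon\OPT$ and not something that scales with $r$ — but this is exactly why the threshold was chosen to be $\frac{\epsilon}{r}\OPT$ rather than an absolute constant fraction of $\OPT$, so the cancellation is by design. A second point to double check is that $r\in O_{\epsilon,d}(1)$ (guaranteed by Lemma~\ref{lem:struc_hypercubes}(ii)) so that both $(\log n)$-to-the-$r$ in the running time and the factor $r$ in the profit loss stay within the claimed bounds; since $r$ depends only on $\epsilon$ and $d$, both are fine.
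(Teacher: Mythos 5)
Your proposal is correct and follows essentially the same reasoning as the paper, which proves this lemma only implicitly via the discussion in Section~\ref{sec:hypercubes}: the range $\hat p(j)\in\{0\}\cup[\frac{\epsilon}{r}p_{\max},n\cdot p_{\max})$ gives $O_\epsilon(\log n)$ possibilities per value and hence $(\log n)^{O_{\epsilon,d}(1)}$ guesses overall, and the profit bound follows from rounding down by a factor $1+\epsilon$ plus the at most $r\cdot\frac{\epsilon}{r}\OPT=\epsilon\OPT$ loss from the indices set to zero. Your write-up simply spells out these bookkeeping steps in more detail than the paper does.
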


%where for convenience we define $k_{0}:=0$. We
%guess approximately the profit that each set $\I_{j}$ contributes
%to $\OPT$. Formally, for each $j\in[r]$ we guess $\hat{p}(j):=\left\lfloor p(\I_{j}\cap\OPT)\right\rfloor _{1+\epsilon}$
%if $p(\I_{j}\cap\OPT)\ge\frac{\epsilon}{r}\OPT$ and $\hat{p}(j):=0$
%otherwise. Since $\OPT\in[p_{\max},n\cdot p_{\max})$, we have that $\hat{p}_{j}\in\{0\}\cup[\frac{\epsilon}{r}p_{\max},n\cdot p_{\max})$. This implies that for each $\hat{p}(j)$ there are at most $O_{\epsilon,d}(\log n)$ possibilities.
%% Observe that for $\hat{p}(j)$ there are at most $O_{\epsilon,d}(\log n)$
%%possibilities since $\OPT\in[p_{\max},n\cdot p_{\max})$ and hence
%%$\hat{p}_{j}\in\{0\}\cup[\frac{\epsilon}{r}p_{\max},n\cdot p_{\max})$.
%Also, one can show that $\sum_{j=1}^{r}\hat{p}(j)\ge(1-O(\epsilon))\OPT$.

Observe that each $\N$-box $B\in\B$ can contain only items from
$\I_{j_{B}}$. However, each $\S$-box $B\in\B$ might contain items
from more than one set $\I_{j}$. For each $\S$-box $B\in\B$ and
each set $\I_{j}$, we guess approximately the fraction of the volume
$B$ that is occupied by items from $\I_{j}$. Formally, for each
such pair we define the value $a_{B,j}:=\frac{\sum_{i\in\I(B)\cap\I_{j}}\lceil s_{i}\rceil_{1+\epsilon}^{d}}{\mathrm{VOL}_d(B)}$
and guess the value $\hat{a}_{B,j}:=\left\lceil \frac{a_{B,j}}{\epsilon/r}\right\rceil \epsilon/r$,
i.e., the value $a_{B,j}$ rounded up to the next larger integral
multiple of $\epsilon/r$. Note that for each value $\hat{a}_{B,j}$
there are only $O_{\epsilon,d}(1)$ possibilities, and that there
are only $O_{\epsilon,d}(1)$ such values that we need to guess.
\begin{lemma}\label{lem:hypcub_guessing_abjs}
	For each $\S$-box $B \in \B$ and each $j\in[r]$ the value $\hat{a}_{B,j}$
	can be guessed in time~$(\log n)^{O_{\epsilon,d}(1)}$. Moreover, for each $\S$-box $B \in \B$ we have that $\sum_{j=1}^r \hat{a}_{B,j} \mathrm{VOL}_d(B) \leq (1-2d\cdot\epsilon)\mathrm{VOL}_d(B) + \epsilon \mathrm{VOL}_d(B)$.
\end{lemma}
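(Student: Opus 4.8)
The plan is to treat the two assertions separately. For the first assertion (the guessability), I would observe that each value $\hat{a}_{B,j}$ is, by definition, an integral multiple of $\epsilon/r$ lying in $[0,1+\epsilon/r]$, so there are only $O(r/\epsilon)=O_{\epsilon,d}(1)$ possible values for it; moreover there are at most $C_{\mathrm{boxes}}(\epsilon,d)\cdot r = O_{\epsilon,d}(1)$ pairs $(B,j)$ with $B$ an $\S$-box and $j\in[r]$, since the number of boxes is bounded by $C_{\mathrm{boxes}}(\epsilon,d)$ by part~i) of Lemma~\ref{lem:struc_hypercubes} and $r\in O_{\epsilon,d}(1)$ by part~ii). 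Hence the total number of combinations of all the values $\{\hat a_{B,j}\}$ is $O_{\epsilon,d}(1)$, and we can enumerate all of them in constant time; charging the cost against reading each box description contributes at most the claimed $(\log n)^{O_{\epsilon,d}(1)}$ overhead. (Here I am being a little loose: as in the previous lemmas, "guessing in time $(\log n)^{O_{\epsilon,d}(1)}$" means that among the $(\log n)^{O_{\epsilon,d}(1)}$ global branches of the enumeration, one branch has the correct value.)

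For the second assertion, fix an $\S$-box $B\in\B$ with its packed item set $\I(B)=\I'$. Since $B$ is an $\S$-box, its definition gives $\sum_{i\in\I'}\lceil s_i\rceil_{1+\epsilon}^d \le (1-2d\cdot\epsilon)\mathrm{VOL}_d(B)$. The sets $\I_1,\dots,\I_r$ partition $\I$, so the items in $\I'$ are partitioned into the groups $\I'\cap\I_j$, and therefore
\[
\sum_{j=1}^r a_{B,j}\,\mathrm{VOL}_d(B) \;=\; \sum_{j=1}^r \sum_{i\in\I(B)\cap\I_j}\lceil s_i\rceil_{1+\epsilon}^d \;=\; \sum_{i\in\I'}\lceil s_i\rceil_{1+\epsilon}^d \;\le\; (1-2d\cdot\epsilon)\,\mathrm{VOL}_d(B).
\]
Now $\hat a_{B,j}$ exceeds $a_{B,j}$ by at most $\epsilon/r$ (it is $a_{B,j}$ rounded up to the next multiple of $\epsilon/r$), so summing over the $r$ choices of $j$ yields $\sum_{j=1}^r \hat a_{B,j} \le \sum_{j=1}^r a_{B,j} + r\cdot(\epsilon/r) = \sum_{j=1}^r a_{B,j} + \epsilon$. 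Multiplying through by $\mathrm{VOL}_d(B)$ and combining with the displayed inequality gives $\sum_{j=1}^r \hat a_{B,j}\,\mathrm{VOL}_d(B) \le (1-2d\cdot\epsilon)\mathrm{VOL}_d(B) + \epsilon\,\mathrm{VOL}_d(B)$, which is exactly the claim.

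I do not expect any genuine obstacle here: the whole statement is essentially bookkeeping, and the only point requiring a moment's care is keeping the "guess in time $T$" terminology consistent with the enumeration framework set up for the earlier basic-quantity lemmas, and making sure the per-$j$ rounding error $\epsilon/r$ is summed over exactly $r$ terms so that it telescopes to an additive $\epsilon$ (rather than, say, a multiplicative factor). The slight subtlety that $\hat a_{B,j}$ can be as large as $1+\epsilon/r$ while $a_{B,j}\le 1$ is harmless and does not affect the count of $O_{\epsilon,d}(1)$ possibilities.
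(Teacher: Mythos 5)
Your proposal is correct and follows essentially the same argument as the paper, which only sketches this lemma by pointing back to the discussion in Section~\ref{sec:hypercubes}: constantly many values with $O_{\epsilon,d}(1)$ possibilities each for the guessing, and the $\S$-box volume bound plus a per-$j$ rounding error of $\epsilon/r$ summing to an additive $\epsilon$ for the inequality. The only nitpick is that your middle equality $\sum_{j}\sum_{i\in\I(B)\cap\I_j}\lceil s_i\rceil_{1+\epsilon}^d=\sum_{i\in\I'}\lceil s_i\rceil_{1+\epsilon}^d$ should more safely be stated as ``$\le$'' (in case some packed item falls outside the classes $\I_1,\dots,\I_r$), which is all the argument needs.
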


%We remark that for the correct guesses
%we have that $\sum_{j=1}^r \hat{a}_{B,j} \mathrm{VOL}_d(B) \leq (1-2d\cdot\epsilon)\mathrm{VOL}_d(B) + \epsilon \mathrm{VOL}_d(B)$
%for each $\S$-box $B$ (the additive term $\epsilon \mathrm{VOL}_d(B)$ stems from rounding the values $a_{B,j}$).
%
% the following will hold for each : $\sum_{j=1}^r \hat{a}_{B,j} \mathrm{VOL}_d(B) \leq (1-2d\cdot\epsilon)\mathrm{VOL}_d(B) + \epsilon \mathrm{VOL}_d(B)$. The second additive term is due to the rounded guesses.}
\subsubsection{Indirect guessing}
The next step of our algorithm is to determine the values $k_{1},k_{2},\dots,k_{r}$. Unfortunately, we cannot guess them directly in polylogarithmic time,
since there are $N$ options for each of them. In contrast to the other guessed quantities, it is not sufficient to allow only powers of $1+\epsilon$ for each of them. If we choose a value for some $k_j$ that is only a little bit too large, then our boxes might not fit into the knapsack anymore, since for each $\N$-box $B$ we have that $s_{\max}(B)=k_{j_B}$ and the side length of $B$ in each dimension $d'$ equals $n_{d'}(B) s_{\max}(B)$. On the other hand, if we define some value $k_j$ only a little bit too small, then we might not be able to assign items with enough profit to $B$, e.g., if $B$ contains only one item which does not fit anymore if we choose $k_j$ too small. Hence, we need a different approach to determine the values $k_{1},k_{2},\dots,k_{r}$. To this end, we modify the \emph{indirect guessing framework} introduced in~\cite{heydrich2019faster} to fit our purposes.

The main idea is to compute values $\tilde{k}_0, \tilde{k}_{1},\tilde{k}_{2},\dots,\tilde{k}_{r}$
that we use instead of the values $k_{0},k_{1},k_{2},\dots,k_{r}$.
They yield a partition of $\I$ into sets $\tilde{\I}_{j}:=\{i\in\I:s_{i}\in(\tilde{k}_{j-1},\tilde{k}_{j}]\}$.
Intuitively, for each $j$ we want to pack items from $\tilde{\I}_{j}$
into the space that is used by items in $\I_{j}$ in the packing from
Lemma~\ref{lem:struc_hypercubes}. We will choose the values $\tilde{k}_{1},\tilde{k}_{2},\dots,\tilde{k}_{r}$ such that in this way, we obtain almost the same profit as the packing of Lemma~\ref{lem:struc_hypercubes}.
% To estimate this profit we will make use of the guessed quantities $\hat{p}(j)$ for each $j \in [r]$.
Furthermore, we ensure that $\tilde{k}_{j}\le k_{j}$ for each $j\in[r]$. This implies that the side lengths of the resulting $\N$-boxes are at most the side lengths of the $\N$-boxes due to Lemma~\ref{lem:struc_hypercubes} and, therefore, the guessed $\N$-boxes can be feasibly packed into the knapsack.
%We will choose the values $\tilde{k}_{1},\tilde{k}_{2},\dots,\tilde{k}_{r}$
%such that in this way, we obtain almost the same profit. On the other
%hand, we will ensure that $\tilde{k}_{j}\le k_{j}$ for each $j\in[r]$.
%This crucial for the $\N$-boxes since their sizes will be determined
%by the respective values $\tilde{k}_{j}$. 

Formally, we perform $r$ iterations, one for each value $k_j$. We define $\tilde{k}_{0}:=0$. Suppose
inductively that we have determined $\ell$ values $\tilde{k}_{1},\tilde{k}_{2},\dots,\tilde{k}_{\ell}$
already for some $\ell\in\{0,1,...,r-1\}$ such that $\tilde{k}_{\ell}\le k_{\ell}$.
We want to compute $\tilde{k}_{\ell+1}$ such that  $\tilde{k}_{\ell+1}\le k_{\ell+1}$. To this end, we do a binary search on the set $S:=\{s_{i}:i\in\I\wedge \tilde{k}_{\ell}<s_{i}\}$, using our item data structure. Our first candidate value is the median of $S$
which we can find in time $O(\log n)$ via our binary search tree for the item sizes. For each candidate value $s\in S$, we estimate
up to a factor of $1+\epsilon$ the possible profit due to items in $\tilde{\I}_{\ell+1}$
if we define $\tilde{k}_{\ell+1}:=s$.
We will describe later how we compute such an estimation. The objective is to find the smallest value $s \in S$ such that the estimated profit is at least $(1+\epsilon)^{-1}\hat{p}(j)$. Hence, if for a specific guess $s$ our obtained profit due to $s$ is at least $(1+\epsilon)^{-1}\hat{p}(j)$, we restrict our set $S$ to $\{s_{i}:i\in\I\wedge \tilde{k}_{\ell} \leq s_{i} \leq s\}$ and continue with the next iteration of the binary search.
If, however, the estimated profit due to $s$ is strictly less than $(1+\epsilon)^{-1}\hat{p}(j)$, this implies that $\tilde{k}_{\ell+1} > s$ since otherwise the set of items $\tilde{\I}_{\ell+1}$ is a subset of the items considered for guess $s$ and cannot yield a larger profit. Hence, we restrict our set $S$ to $\{s_{i}:i\in\I\wedge s_{i}>s\}$ and continue with the binary search. We stop when $|S|=1$.

%We work in $r$ iterations. We define $\tilde{k}_{0}:=0$. Suppose
%inductively that we have determined $\ell$ values $\tilde{k}_{1},\tilde{k}_{2},\dots,\tilde{k}_{\ell}$
%already for some $\ell\in\{0,1,...,r-1\}$ such that $\tilde{k}_{\ell}\le k_{\ell}$.
%We want to compute $\tilde{k}_{\ell+1}$. We can assume w.l.o.g.~that
%$k_{\ell+1}$ equals $s_{i}$ for some item $i\in\I$. We do a binary
%search on the set $S:=\{s_{i}:i\in\I\wedge s_{i}>\tilde{k}_{\ell}\}$,
%using our item data structure. For each candidate value $s\in S$,
%we estimate the possible profit due to items in $\tilde{\I}_{\ell+1}$
%if we define $\tilde{k}_{\ell+1}:=s$. We want to find such a value
%$s$ such that the obtained profit from the set $\tilde{\I}_{\ell+1}$
%equals essentially $\hat{p}(\ell+1)$. In the following, we denote by $\B_{\N}$ and $\B_{\S}$, the set of $\N$- and $\S$-boxes in $\B$, respectively.

We denote by $\B_{\N}$ and $\B_{\S}$ the set of $\N$- and $\S$-boxes in $\B$ due to Lemma~\ref{lem:struc_hypercubes}, respectively. Additionally, we denote by $\B_{\N}(\ell+1)$ the set of $\N$-boxes $B$ with $j_{B} = \ell+1$ and
by $\B_{\S}(\ell+1)$ the set of $\S$-boxes $B$ with $\hat{a}_{B,\ell+1}>0$. Note that those are the only boxes that are relevant for the current iteration in which we want to determine $k_{\ell+1}$ (approximately).
We also define $\B(\ell+1) := \B_{\S}(\ell+1) \cup \B_{\N}(\ell+1)$.

We describe now how we estimate the obtained profit for one specific candidate
choice of $s\in S$.
We try to pack items from $\tilde{\I}_{\ell+1}(s):=\left\{ i\in\I:s_{i}\in(\tilde{k}_{\ell},s]\right\} $
into
\begin{itemize}
\item the $\N$-boxes $\B_{\N}(\ell+1)$ and
\item the $\S$-boxes $\B_{\S}(\ell+1)$, where for each $\S$-box $\B_{\S}(\ell+1)$, we use a volume
of at most $\hat{a}_{B,\ell+1}\cdot\mathrm{VOL}(B)$ and ensure that we pack
only items $i\in\tilde{\I}_{\ell+1}(s)$ for which $s_{i}\leq\epsilon\ell_{\min}(B)$.
\end{itemize}
We solve this subproblem approximately via the following integer program
$(\mathrm{IP}(s))$. Intuitively,
we group items such that all items in the same group have the same size and profit, up to a factor of $1+\epsilon$.
Formally, we define
\begin{itemize}
 \item a size class $\Q_{t}=\{i\in \tilde{\I}_{\ell+1}(s):s_{i}\in[(1+\epsilon)^{t},(1+\epsilon)^{t+1})\}$ for each $t\in  \mathcal{T}_Q =  \{\lfloor \log_{1+\epsilon}(\tilde{k}_\ell)\rfloor,\dots,\lceil \log_{1+\epsilon}(s)\rceil\}$; we denote by $\hat{s}(t):=(1+\epsilon)^{t+1}$ the corresponding ``rounded'' size,
 \item a profit class $\P_{t'}=\{i\in \tilde{\I}_{\ell+1}(s):p_{i}\in[(1+\epsilon)^{t'},(1+\epsilon)^{t'+1})\}$ for each  $t' \in \mathcal{T}_P =  \{\lfloor \log_{1+\epsilon} \epsilon \frac{p_{\max}}{n}\rfloor,\dots, \lceil \log_{1+\epsilon} p_{\max}\rceil\}$; we denote by $\hat{p}(t'):=(1+\epsilon)^{t'+1}$ the corresponding ``rounded'' profit and
 \item a set of pairs $\mathcal{T}:=\{(t,t') : t\in \mathcal{T}_Q \wedge t'\in \mathcal{T}_P\}$.
\end{itemize}
%  and a profit class $\P_{t'}=\{i\in \tilde{\I}_{\ell+1}(s):p_{i}\in[(1+\epsilon)^{t'},(1+\epsilon)^{t'+1})\}$ for each  $t' \in \mathcal{T'} =  \{\lfloor \log_{1+\epsilon} \epsilon \frac{p_{\max}}{n}\rfloor,\dots, \lceil \log_{1+\epsilon} p_{\max}\rceil\}$.
% % with $p_{\max}$ being the largest item profit.
% Furthermore, we denote by $\hat{s}(t):=(1+\epsilon)^{t+1}$ and $\hat{p}(t'):=(1+\epsilon)^{t'+1}$ the rounded size and profit of a class, respectively. Finally, let $n_{t,t'}$ denote the number of items of size class $\S_t$ and profit class $\P_{t'}$ for each pair $(t,t')$.
Our subproblem is now equivalent (up to a factor of $1+\epsilon$ in the profit) to choosing how many items from each group are packed into which box, while maximizing the total profit of these items, such that
\begin{itemize}
	\item for each $\N$-box $B \in \B_{\N}(\ell+1)$ the number of items packed into $B$ is at most the number of grid cells denoted by $n(B) := \prod_{d'=1}^d \hat{n}_{d'}(B)$,
	\item for each $\S$-box $B \in \B_{\S}(\ell+1)$ the total volume of items packed into $B$ does not exceed the designated volume for items in $\I_{j}$ reserved in $B$ and
	\item for each pair $(t,t')$ the number of items packed into all boxes is at most the number of available items in the corresponding group.
\end{itemize}

\begin{alignat*}{3}
(\mathrm{IP}(s))\quad& \text{max} 	& \displaystyle \sum_{(t,t') \in \mathcal{T}}\sum_{B \in \B(\ell+1)} x_{t,t',B} p(t')			& 			& \quad & \\
& \text{s.t.} & \displaystyle\sum_{(t,t') \in \mathcal{T}} x_{t,t',B} 	& \leq n(B)	& 		& \forall B \in \B_{\N}(\ell+1) \\
& & \displaystyle\sum_{(t,t') \in \mathcal{T}} x_{t,t',B}s(t)^d	& \leq a_{B,\ell+1}\mathrm{VOL}_d(B)	& 		& \forall B \in \B_{\S}(\ell+1) \\
&				& \displaystyle\sum_{B \in \B(\ell+1)} x_{t,t',B}								& \leq n_{t,t'}	& 		& \forall (t,t') \in \mathcal{T} \\
&				& x_{t,t',B}								& \in \mathbb{N}_{0}& 		&\forall (t,t') \in \mathcal{T}, B \in \B(\ell+1)
\end{alignat*}

We cannot solve $(\mathrm{IP}(s))$ directly; however, we show that we can solve it
approximately, losing only a factor of $1+\epsilon$. We describe now how to do this in time $(\log_{1+\epsilon}(N))^{O_\epsilon(1)}$. We start by guessing the
$|\B(\ell+1)|\cdot |\T|/\epsilon = O_{\epsilon,d}(1)$
 most profitable items in an optimal solution of $(\mathrm{IP}(s))$. To do this, we guess the profit type and size type of each of these items as well as which box they are packed in. For a single item this yields a total amount of $O_{\epsilon,d}((\log_{1+\epsilon}(N))^2)$ many possibilities, and hence at most $(\log_{1+\epsilon}(N))^{O_\epsilon(1)}$ possibilities overall. We adjust $(\mathrm{IP}(s))$ accordingly (i.e., reduce the right-hand sides of our constraints) and solve the LP relaxation of the remaining problem in time $\left( \log_{1+\epsilon}(N)\right)^{O(1)}$, yielding a solution $x^*$. We round it by simply defining $\bar{x}_{t,t',B}:=\lfloor x^*_{t,t',B}\rfloor$ for each $(t,t')\in \T$ and each $B\in \B(\ell+1)$. This yields a solution consisting of the guessed items together with $\bar{x}$, where $\bar{x}$ represents the remaining items in our solution.
 Since we guessed the  $|\B(\ell+1)|\cdot |\T|/\epsilon$ most profitable items before but there are only $|\B(\ell+1)|\cdot |\T|$ variables,
 we solve $(\mathrm{IP}(s))$ up to a factor of $1+\epsilon$.

\begin{lemma}\label{lem:cubes_IP_sol}
There is an algorithm with a running time of $(\log_{1+\epsilon}(N))^{O(1)}$
that computes a $(1+\epsilon)$-approximate solution for \textup{$(\mathrm{IP}(s))$} for each $s$;
we denote by $q(s)$ the value of this solution. For two values $s,s'$
with $s\le s'$ we have that $q(s)\le q(s')$.
\end{lemma}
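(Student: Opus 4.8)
The plan is to implement exactly the procedure described in the paragraph preceding the statement and verify both the running-time bound and the monotonicity claim. For the running time: there are $O_{\epsilon,d}(1)$ items to guess (the $|\B(\ell+1)|\cdot|\T|/\epsilon$ most profitable ones), and for each such item we enumerate its size class $t\in\mathcal{T}_Q$, its profit class $t'\in\mathcal{T}_P$, and the box $B\in\B(\ell+1)$ it is assigned to. Since $|\mathcal{T}_Q|,|\mathcal{T}_P| = O(\log_{1+\epsilon} N)$ and $|\B(\ell+1)| = O_{\epsilon,d}(1)$, each item has $O_{\epsilon,d}((\log_{1+\epsilon}N)^2)$ options, so the total number of guesses is $(\log_{1+\epsilon}N)^{O_\epsilon(1)}$. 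For each guess we form the residual linear program (the LP relaxation of $(\mathrm{IP}(s))$ with right-hand sides reduced by the guessed items, discarding any guess that violates a constraint), which has $O_{\epsilon,d}(1)$ variables and $O_{\epsilon,d}(1)$ constraints and hence is solvable in time $(\log_{1+\epsilon}N)^{O(1)}$ (the input numbers have bit-length $O(\log N)$); we then floor each fractional variable. Multiplying the number of guesses by the cost per guess gives the claimed $(\log_{1+\epsilon}(N))^{O(1)}$ bound, and we return the best solution found over all guesses.

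For the approximation guarantee, consider an optimal integral solution $x^{\mathrm{opt}}$ of $(\mathrm{IP}(s))$. In the iteration where we correctly guess its $|\B(\ell+1)|\cdot|\T|/\epsilon$ most profitable items, the remaining items of $x^{\mathrm{opt}}$ form a feasible integral point of the residual IP, so the residual LP optimum $x^*$ has value at least that of these remaining items. Rounding $x^*$ down loses at most one unit per variable, i.e., at most $|\B(\ell+1)|\cdot|\T|$ items in total, and each such item has profit at most that of the cheapest guessed item, which in turn is at most an $\epsilon/(|\B(\ell+1)|\cdot|\T|)$ fraction of the total profit of the guessed items. Hence the loss from rounding is at most $\epsilon$ times the guessed profit, so the value of our computed solution is at least $(1-\epsilon)$ times the optimum of $(\mathrm{IP}(s))$; rescaling $\epsilon$ by a constant factor yields a $(1+\epsilon)$-approximation. (The additional factor $1+\epsilon$ for rounding sizes and profits to powers of $1+\epsilon$ is absorbed the same way.) The main subtlety to get right here is that the guessed items are chosen to be the most profitable ones \emph{in the optimal solution} being approximated — this is what makes every non-guessed item cheap relative to the guessed batch and bounds the rounding loss.

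For monotonicity, fix $s \le s'$. Every feasible solution of $(\mathrm{IP}(s))$ is feasible for $(\mathrm{IP}(s'))$: enlarging $s$ to $s'$ only adds new size classes and new items to the existing groups, so it only increases the bounds $n_{t,t'}$ and never removes a variable or tightens a box capacity (the box parameters $n(B)$ and $a_{B,\ell+1}\mathrm{VOL}_d(B)$ do not depend on $s$). Therefore $\mathrm{OPT}(\mathrm{IP}(s)) \le \mathrm{OPT}(\mathrm{IP}(s'))$. It remains to argue the \emph{computed} values satisfy $q(s)\le q(s')$; the cleanest way is to define $q(s)$ not as the raw output but to note that the algorithm can always be run so that whatever solution it outputs for $s$ is also a candidate solution considered when processing $s'$ (its guessed-item set and its $\bar x$ are feasible for $(\mathrm{IP}(s'))$), and since for $s'$ we take the maximum over all candidates we get $q(s')\ge q(s)$. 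This monotonicity is precisely what will later justify the binary search over candidate split points $s$.
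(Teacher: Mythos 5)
There is a genuine gap in your approximation/running-time argument: it silently assumes $|\T|=O_{\epsilon,d}(1)$, which is false. The set $\T$ contains a pair for every size class in $\mathcal{T}_Q$ and every profit class in $\mathcal{T}_P$, and already $|\mathcal{T}_P|=\Theta(\log_{1+\epsilon}(n/\epsilon))$ (while $|\mathcal{T}_Q|$ can be $\Theta(\log_{1+\epsilon}N)$). Consequently your two key claims cannot hold simultaneously. If you literally guess the $|\B(\ell+1)|\cdot|\T|/\epsilon$ most profitable items of the optimum, that is $\Theta(\mathrm{polylog})$ many items, and since each of them requires jointly enumerating its size class, profit class and box ($\Theta((\log_{1+\epsilon}N)^{2})$ options per item), the total number of guesses is of order $(\log N)^{\Theta(\log N\cdot\log n)}$, far exceeding the claimed $(\log_{1+\epsilon}N)^{O(1)}$. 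If instead you guess only $O_{\epsilon,d}(1)$ items (as the running time forces you to), then your charging argument for the rounding loss collapses: flooring the LP solution can lose one item per variable, i.e.\ up to $|\B(\ell+1)|\cdot|\T|=\Theta(\mathrm{polylog})$ items, and these are only guaranteed to be cheaper than the cheapest of constantly many guessed items, so the loss is not bounded by $\epsilon$ times the guessed profit. Your statement that the residual LP has ``$O_{\epsilon,d}(1)$ variables and $O_{\epsilon,d}(1)$ constraints'' reflects the same misconception; the LP is in fact of polylogarithmic size (still solvable in the required time, but not small enough for a per-variable charging scheme). Note that the main-text sketch you followed contains the same loose claim ``$|\B(\ell+1)|\cdot|\T|/\epsilon=O_{\epsilon,d}(1)$''; the actual proof does something different.

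The paper closes exactly this gap as follows: it guesses only $2C_{\mathrm{boxes}}(\epsilon,d)/\epsilon=O_{\epsilon,d}(1)$ most profitable items and then takes an optimal \emph{vertex} (extreme point) solution of the residual LP. Because the group constraints $\sum_{B}x_{t,t',B}\le n_{t,t'}$ have integral right-hand sides, any tight group constraint containing a fractional variable must contain at least two fractional variables, and counting these against the at most $C_{\mathrm{boxes}}(\epsilon,d)$ box constraints shows (via the rank lemma) that the number of fractional nonzero variables is at most $2C_{\mathrm{boxes}}(\epsilon,d)$, \emph{independently of} $|\T|$. Hence only $O_{\epsilon,d}(1)$ variables are floored, each corresponding to items no more profitable than any guessed item, so the loss is at most $\epsilon$ times the guessed profit, while the number of guesses stays polylogarithmic. (The appendix proof additionally prunes very small items and size classes via Lemma~\ref{lem:hypcub_reducevar}, but that refinement is only needed for the sharper $\log n$ bound.) To repair your proof you must add this vertex-solution/rank-lemma step and insist on an extreme point LP solution, rather than bounding the loss by the total number of variables. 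Your monotonicity argument is fine and essentially the paper's: the solution computed for $s$ is feasible for $(\mathrm{IP}(s'))$, and keeping it as an explicit candidate when processing $s'$ gives $q(s)\le q(s')$.
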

%Since we solve $(\mathrm{IP}(s))$ up to a factor of $1-\epsilon$,
%we have that $q(k_{\ell+1})\ge (1-\epsilon)\hat{p}(\ell+1)$
%since $\tilde{k}_{\ell}\le k_{\ell}$ and, therefore, $\tilde{\I}_{\ell+1}(k_{\ell})=\left\{ i\in\I:s_{i}\in[\tilde{k}_{\ell},k_{\ell})\right\} \subseteq\left\{ i\in\I:s_{i}\in[k_{\ell},k_{\ell})\right\} =\I_{\ell+1}$.
At the end of our binary search, we define $\tilde{k}_{\ell+1}$ to be the smallest value $s\in S$ for which $q(s)\ge (1-\epsilon)\hat{p}(\ell+1)$. Let $x^*_{\ell+1}$ denote the computed solution to $(\mathrm{IP}(s))$ corresponding
to $s=\tilde{k}_{\ell+1}$. Based on the inductive assumption $\tilde{k}_{\ell}\le k_{\ell}$ and our choice of $\tilde{k}_{\ell+1}$, we can show $\tilde{k}_{\ell+1}\le k_{\ell+1}$. This is crucial as $\tilde{k}_{\ell+1}$ determines the side lengths of an $\N$-box $B$ with $j_B = \ell+1$. Thus, in order to be able to pack our guessed boxes into the knapsack we must guarantee that these side lengths are not larger than the side lengths of the corresponding box in the packing underlying Lemma~\ref{lem:struc_hypercubes}.
\begin{lemma}\label{lem:cubes_induc_kr}
We have that $\tilde{k}_{\ell+1}\le k_{\ell+1}$.
\end{lemma}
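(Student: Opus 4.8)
The plan is to argue by contradiction: suppose $\tilde{k}_{\ell+1} > k_{\ell+1}$. Recall that $\tilde{k}_{\ell+1}$ is defined as the \emph{smallest} value $s \in S$ with $q(s) \ge (1-\epsilon)\hat{p}(\ell+1)$, where $S = \{s_i : i \in \I \wedge \tilde{k}_\ell < s_i\}$. Since $\tilde{k}_\ell \le k_\ell \le k_{\ell+1}$ by the inductive hypothesis and the monotonicity of the $k_j$'s, there is at least one candidate value $s^\star \in S$ with $\tilde{k}_\ell < s^\star \le k_{\ell+1}$ (namely, the largest item size in $S$ that is at most $k_{\ell+1}$; if no such value existed, the binary search would never have been pushed past $k_{\ell+1}$). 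The assumption $\tilde{k}_{\ell+1} > k_{\ell+1} \ge s^\star$ then forces $q(s^\star) < (1-\epsilon)\hat{p}(\ell+1)$. I would derive a contradiction by exhibiting a feasible solution to $\mathrm{IP}(s^\star)$ with profit at least $(1-\epsilon)^2\hat{p}(\ell+1)$, which combined with Lemma~\ref{lem:cubes_IP_sol} (the computed $q(s^\star)$ is within a $1+\epsilon$ factor of the true optimum) gives $q(s^\star) \ge (1-\epsilon)\hat{p}(\ell+1)$ — contradiction, provided the loss constants are chosen so the arithmetic closes.

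The feasible solution I would build comes directly from the packing of Lemma~\ref{lem:struc_hypercubes}. Consider the items $\I_{\ell+1} \cap \OPT$ that are packed (in that structured packing) into the boxes of $\B(\ell+1)$, i.e., the $\N$-boxes $B$ with $j_B = \ell+1$ and the $\S$-boxes using positive volume for class $\I_{\ell+1}$. Every such item $i$ has $s_i \in (k_\ell, k_{\ell+1}]$. Since $\tilde{k}_\ell \le k_\ell$ and $s^\star$ was chosen so that $k_\ell < s^\star \le k_{\ell+1}$ — here I need the slightly more careful claim that $s^\star$ can be taken to satisfy $s_i \le s^\star$ for all relevant items, i.e. $s^\star = \max\{s_i : i \in \I, s_i \le k_{\ell+1}\}$, which is $\ge$ every item size occurring in these boxes — all these items lie in $\tilde{\I}_{\ell+1}(s^\star) = \{i \in \I : s_i \in (\tilde{k}_\ell, s^\star]\}$. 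Assigning each such item to the box it occupies in the Lemma~\ref{lem:struc_hypercubes} packing respects the $\N$-box cardinality constraints ($n(B)$ grid cells) and the $\S$-box volume reservations (by the definition of $a_{B,\ell+1}$ and $\hat a_{B,\ell+1} \ge a_{B,\ell+1}$, using that $\mathrm{IP}(s^\star)$ is written with $a_{B,\ell+1}$, not $\hat a$), so it is feasible for $\mathrm{IP}(s^\star)$. Its total profit, after the rounding of sizes and profits to powers of $1+\epsilon$ that passes between the real packing and the IP, is at least $(1-O(\epsilon))\,p(\I_{\ell+1} \cap \OPT) \ge (1-O(\epsilon))\hat p(\ell+1)$ by the definition $\hat p(\ell+1) = \lfloor p(\I_{\ell+1}\cap\OPT)\rfloor_{1+\epsilon}$ (and $\hat p(\ell+1) = 0$ is trivial). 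Hence the true optimum of $\mathrm{IP}(s^\star)$ is at least this much, and $q(s^\star) \ge (1+\epsilon)^{-1}(1-O(\epsilon))\hat p(\ell+1) \ge (1-\epsilon)\hat p(\ell+1)$ for $\epsilon$ small enough, the desired contradiction.

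The main obstacle is the bookkeeping around which candidate value $s^\star$ to pick and making sure it is actually examined by the binary search and satisfies $k_\ell < s^\star \le k_{\ell+1}$ while simultaneously dominating all item sizes appearing in the boxes $\B(\ell+1)$ of the reference packing. The subtlety is that $s^\star$ must be an \emph{item size} (an element of $S$), not an arbitrary real in $(k_\ell, k_{\ell+1}]$; one has to check that the largest item size not exceeding $k_{\ell+1}$ is $>\tilde k_\ell$ (true since $\tilde k_\ell \le k_\ell < $ that size, as $\I_{\ell+1}$ is nonempty whenever $\hat p(\ell+1)>0$) and that it is at least the maximum side length of any item packed into a box of $\B(\ell+1)$ in the Lemma~\ref{lem:struc_hypercubes} packing (true since all those items have size $\le k_{\ell+1}$). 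A secondary point is to confirm that the binary search, when it has not yet terminated, always keeps such an $s^\star$ in its live set $S$ — this follows from the invariant that at every stage the interval of live candidates contains $[\tilde k_\ell, \tilde k_{\ell+1}]$ by the way the search discards the half not containing the target, so in particular it contains $k_{\ell+1}$ up until the step that would set $\tilde k_{\ell+1} > k_{\ell+1}$, which is exactly the step we are ruling out. Once these monotonicity and membership facts are pinned down, the profit comparison is routine given Lemmas~\ref{lem:struc_hypercubes} and~\ref{lem:cubes_IP_sol}.
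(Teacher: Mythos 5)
Your proposal is correct and follows essentially the same route as the paper's proof: the key step in both is that $\tilde{k}_{\ell}\le k_{\ell}$ implies $\tilde{\I}_{\ell+1}(s)\supseteq \I_{\ell+1}$ for the candidate $s=k_{\ell+1}$ (your $s^\star$), so the packing of Lemma~\ref{lem:struc_hypercubes} yields a feasible solution to $(\mathrm{IP}(s))$ of value at least $(1-\epsilon)\hat{p}(\ell+1)$, and minimality of $\tilde{k}_{\ell+1}$ then gives $\tilde{k}_{\ell+1}\le k_{\ell+1}$. Your contradiction framing and the extra care that the candidate is an actual item size surviving in $S$ are just a more explicit rendering of the paper's direct two-line argument.
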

After completing the $r$ rounds of our indirect guessing framework, we have obtained values $\tilde{k}_{1},\tilde{k}_{2},\dots,\tilde{k}_{r}$ and $r$
integral solutions to $(\mathrm{IP}(\tilde{k}_{1})),...,(\mathrm{IP}(\tilde{k}_{r}))$.
We can construct a feasible solution $\ALG$ by combining these solutions:
for each $j\in[r]$ and each $\N$-box $B\in\B$ we define $s_{\max}(B):=\tilde{k}_{j_{B}}$
and we assign the items of each set $\tilde{\I}_{j}$ into
the boxes in $\B$ according to our solution to $\mathrm{IP}(\tilde{k}_{j}))$.

For each $j \in \{1,\dots,r\}$, each size class $\Q_t$, and each profit class $\P_{t'}$, there is a certain number of items from the set
$\tilde{\I}_{j} \cap \Q_t \cap \P_{t'}$ that our solution contains; we denote this number by $z_{j,t,t'}$. It is irrelevant which exact items from
this set we pick (up to a factor of $1+\epsilon$ in the profit). Thus, we simply order the items from the set $\tilde{\I}_{j} \cap \Q_t \cap \P_{t'}$ in non-decreasing order of side lengths, select the first $z_{j,t,t'}$ items, and assign them to the corresponding boxes.
% To do so, we first compute a value $z_{j,t,t'}$ for each $j \in \{1,\dots,r\}$, each size class $\S_t$, and each profit class $\P_{t'}$ denoting how many items from the set $\tilde{\I}_{j} \cap \S_t \cap \P_{t'}$ are contained in our solution. As it is irrelevant (up to a factor of $1+\epsilon$) which items we pick, we may pick
% items from the set $\tilde{\I}_{j} \cap \S_t \cap \P_{t'}$ in non-decreasing order of side lengths. For an item $i\in\tilde{\I}_{j}\cap\S_{t}\cap\P_{t'}$. We then assign the correct number of items from each pair $(t,t')$ to each box $B$.
Finally, we pack each $\S$-box $B$ using the NFDH-algorithm (Lemma~\ref{lem:NFDH}) and each $\N$-box $B$ by placing each item into a single cell of the $d$-dimensional grid with side length $s_{\max}(B)$.
% As the translation of the implicit packing into an explicit packing can be done in time $O(n\cdot(\log^2 n))$, we obtain our polynomial time approximation scheme.

%The $\S$-boxes are then packed using the NFDH-algorithm (Lemma~\ref{lem:NFDH}).
%For each $\N$-box $B$, we place a $d$-dimensional grid with side
%length $s_{\max}(B)$ and pack one item into each grid cell.
This yields an algorithm with a running time of $n\cdot(\log^2 n)+(\log N)^{O_{\epsilon,d}(1)}$. In Appendix~\ref{app:cubes} we explain how we can improve it to $n\cdot(\log^2 n)+(\log n)^{O_{\epsilon,d}(1)}$. Key to this is to guess the basic quantities and to solve each integer program $(\mathrm{IP}(s))$ in time $(\log n)^{O_{\epsilon,d}(1)}$, using additional technical improvements such as restricting the range of the considered item sizes and profits
when solving $(\mathrm{IP}(s))$. We would like to remark that the exponent of $\log n$ depends on the value $C_{\mathrm{boxes}}(\epsilon,d)$ in Lemma~\ref{lem:struc_hypercubes} which in turn depends on the (not precisely specified) number of boxes used in the structured packing in~\cite{jansen2022ptas}.

\begin{theorem}\label{thm:cubes_ptas}
There is a $(1+\epsilon)$-approximation algorithm for the $d$-dimensional
hypercube knapsack problem with a running time of $n\cdot(\log^2 n)+(\log n)^{O_{\epsilon,d}(1)}$.
\end{theorem}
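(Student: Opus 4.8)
The plan is to assemble the pieces developed in Section~\ref{subsec:Computing-packing} and verify that the computed solution $\ALG$ is feasible, has profit at least $(1-O(\epsilon))\OPT$, and is produced within the claimed running time; rescaling $\epsilon$ by a constant factor (permitted since we assumed $\epsilon<1/2^{d+2}$ and $1/\epsilon\in\mathbb{N}$) then yields a genuine $(1+\epsilon)$-approximation. Correctness of the running time is the routine part: inserting all items into the item data structure costs $O(n\log^2 n)$ by Lemma~\ref{lem:data-structure}; guessing the basic quantities costs $(\log n)^{O_{\epsilon,d}(1)}$ by Lemma~\ref{lem:hypcub_guessing_basic}; guessing the $\hat{p}(j)$ and $\hat{a}_{B,j}$ values costs $(\log n)^{O_{\epsilon,d}(1)}$ by Lemmas~\ref{lem:hypcub_guessing_pjs} and~\ref{lem:hypcub_guessing_abjs}; the indirect guessing framework performs $r=O_{\epsilon,d}(1)$ rounds, each a binary search with $O(\log n)$ candidate values $s$, and for each candidate we solve $(\mathrm{IP}(s))$ approximately in time $(\log n)^{O_{\epsilon,d}(1)}$ by Lemma~\ref{lem:cubes_IP_sol} (using the improved bound from Appendix~\ref{app:cubes}); and the final assembly — selecting the first $z_{j,t,t'}$ items of each set $\tilde{\I}_j\cap\Q_t\cap\P_{t'}$ via the item data structure and packing the $\N$- and $\S$-boxes — adds only $O(n\cdot\mathrm{poly}(\log n))$. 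Multiplying the number of guesses by the per-guess work keeps everything within $n\cdot(\log^2 n)+(\log n)^{O_{\epsilon,d}(1)}$.

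For feasibility I would argue as follows. Each $\S$-box $B$ is packed using NFDH; by Lemma~\ref{lem:hypcub_guessing_abjs} the total rounded volume assigned to $B$ across all $j$ is at most $(1-2d\epsilon)\mathrm{VOL}_d(B)+\epsilon\mathrm{VOL}_d(B)\le\mathrm{VOL}_d(B)-2(d-1)\epsilon\mathrm{VOL}_d(B)$ for $\epsilon$ small enough, and since $(\mathrm{IP}(s))$ only packs items with $s_i\le\epsilon\ell_{\min}(B)\le 2\epsilon\ell_{\min}(B)$ into $B$, Lemma~\ref{lem:NFDH} guarantees a valid packing. Each $\N$-box $B$ with $j_B=\ell+1$ receives at most $n(B)=\prod_{d'}\hat n_{d'}(B)$ items by the first IP constraint, so placing one item per grid cell of side length $s_{\max}(B)=\tilde k_{\ell+1}$ works, because every assigned item has size in $(\tilde k_\ell,\tilde k_{\ell+1}]\le s_{\max}(B)$. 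Finally the boxes themselves fit into the knapsack: by Lemma~\ref{lem:cubes_induc_kr} we have $\tilde k_{\ell+1}\le k_{\ell+1}$ for every $\ell$, so each $\N$-box's side length $n_{d'}(B)\tilde k_{j_B}$ is at most $n_{d'}(B)k_{j_B}$, the side length of the corresponding box in the packing of Lemma~\ref{lem:struc_hypercubes}; the $\S$-box dimensions are guessed exactly as in that packing. Hence the box arrangement from Lemma~\ref{lem:struc_hypercubes} can be reused, and since each box shrinks (weakly) in every dimension, the whole configuration still fits.

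For the profit bound I would combine Lemma~\ref{lem:hypcub_guessing_pjs} with the stopping criterion of the indirect guessing. The discarded low-profit items cost at most $\epsilon\OPT$. For each round $\ell+1$, we chose $\tilde k_{\ell+1}$ to be the smallest $s\in S$ with $q(s)\ge(1-\epsilon)\hat p(\ell+1)$; such an $s$ exists because $s=k_{\ell+1}$ is a valid candidate (by Lemma~\ref{lem:cubes_induc_kr} it is not excluded by the binary search, and the true optimal packing restricted to $\I_{\ell+1}$ certifies that $(\mathrm{IP}(k_{\ell+1}))$ has value at least roughly $\hat p(\ell+1)$, hence $q(k_{\ell+1})\ge(1-\epsilon)\hat p(\ell+1)$ since $q$ is a $(1+\epsilon)$-approximation and monotone by Lemma~\ref{lem:cubes_IP_sol}). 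Therefore the integral solution to $(\mathrm{IP}(\tilde k_{\ell+1}))$ has value at least $(1-\epsilon)\hat p(\ell+1)$, and these solutions use disjoint item sets $\tilde\I_{\ell+1}$ (disjoint because the $\tilde k_j$ are increasing) and pairwise-compatible box capacities (the $\N$-boxes in $\B_{\N}(\ell+1)$ are disjoint across $\ell$ by definition of $j_B$, and for each $\S$-box the reserved volumes $\hat a_{B,j}\mathrm{VOL}_d(B)$ sum to a feasible total). Summing over $j$ and using $\sum_j\hat p(j)\ge(1-O(\epsilon))\OPT$ gives total profit at least $(1-\epsilon)(1-O(\epsilon))\OPT-\epsilon\OPT=(1-O(\epsilon))\OPT$. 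The main obstacle is the bookkeeping in this last step: one must carefully check that the solutions to the different $(\mathrm{IP}(\tilde k_j))$ combine into one globally feasible packing — in particular that the shared $\S$-boxes are not overloaded when all rounds' contributions are put together — and that the loss accumulated across the $r$ rounds, the LP rounding in Lemma~\ref{lem:cubes_IP_sol}, the size/profit discretization into classes $\Q_t,\P_{t'}$, and the structural rounding in Lemma~\ref{lem:struc_hypercubes} stays bounded by a single $O(\epsilon)$ factor (using $r=O_{\epsilon,d}(1)$ so the per-round $\epsilon/r$ slack in the $\hat a_{B,j}$ and $\hat p(j)$ guesses telescopes correctly).
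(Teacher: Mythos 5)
Your proposal is correct and follows essentially the same route as the paper's own proof: the same three-stage assembly (guessing via Lemmas~\ref{lem:hypcub_guessing_basic}--\ref{lem:hypcub_guessing_abjs}, the indirect guessing rounds with the existence of a valid candidate certified exactly as in Lemma~\ref{lem:cubes_induc_kr}, and the final construction packing $\N$-boxes cell-by-cell and $\S$-boxes via Lemma~\ref{lem:NFDH}), with the same feasibility and profit accounting. The only cosmetic point is that the final assembly should be pinned to $O(n\log n)$ (item selection and NFDH), not a generic $O(n\cdot\mathrm{poly}(\log n))$, so that the stated bound $n\cdot(\log^2 n)+(\log n)^{O_{\epsilon,d}(1)}$ follows literally, as in the paper.
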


\subsection{Dynamic algorithm}
The algorithmic techniques above can be combined with our item data structure to derive a dynamic algorithm for the $d$-dimensional hypercube knapsack problem. Our algorithm supports the following operations:
\begin{enumerate}[(i)]
	\item Insertion and deletion of an item into our data structure,
	\item Output a $(1+\epsilon)$-estimate of the value of the optimal solution, 
	\item Output a $(1+\epsilon)$-approximate solution $\ALG$ and
	\item Query where a given item is contained in $\ALG$.
\end{enumerate}
For operation (i) we simply add or delete an item from our item data structure (see Lemma~\ref{lem:data-structure}) and our balanced binary search trees, which takes time $O(\log^2 n)$.
In order to execute operation (ii), we run our algorithm described above, except for computing the explicit packing of the items in the end. Instead, we simply return the total profit of
our solutions to the integer programs $(\mathrm{IP}(s))$ that correspond the solution that we output at the end.
 This takes time $(\log n)^{O_{\epsilon,d}(1)}$ in total. If a $(1+\epsilon)$-approximate solution $\ALG$ is queried (operation (iii)), we also compute the exact set of items and their packing as described previously. Since their total number is $|\ALG|$, we can compute and output $\ALG$ in time $O(|\ALG|\cdot(\log n))+(\log n)^{O_{\epsilon,d}(1)}$.

 Finally, if it is queried whether a given item $i \in \I$ is in contained in $\ALG$ (operation (iv)), we determine the value  $j \in \{1,\dots,r\}$, the size class $\Q_t$, and the profit class $\P_{t'}$ for which $i\in \tilde{\I}_{j} \cap \Q_t \cap \P_{t'}$. Recall that $z_{j,t,t'}$ denotes the total number of items from this set we select and we select the
 $z_{j,t,t'}$ items in this set of shortest side length. Hence, we output ``$i\in\ALG$'' if $i$ is among the shortest $z_{j,t,t'}$ items in $\tilde{\I}_{j} \cap \Q_t \cap \P_{t'}$, and ``$i\notin\ALG$'' otherwise. This ensures that we give consistent answers between consecutive updates of the set $\I$.
 % between two consecutive updates of the set $\I$ we give consistent answers to queries.
 
%
%  , we do not need to compute the whole solution $\ALG$. However, as one may expect the solution to change a lot when an item is inserted or deleted we make use of how we construct the explicit solution. Recall that we use the indirect guessing framework to compute a value $z_{j,t,t'}$ for each $j \in \{1,\dots,r\}$, each size class $\S_t$, and each profit class $\P_{t'}$ denoting how many items from the set $\tilde{\I}_{j} \cap \S_t \cap \P_{t'}$ are contained in $\ALG$. As it is irrelevant (up to a factor of $1+\epsilon$) which items we pick, we may pick items from the set $\tilde{\I}_{j} \cap \S_t \cap \P_{t'}$ in non-decreasing order of side lengths. For an item $i\in\tilde{\I}_{j}\cap\S_{t}\cap\P_{t'}$,
% we output ``$i\in\ALG$'' if $i$ is among the first $z_{j,t,t'}$
% items, and ``$i\notin\ALG$'' otherwise. This ensures that we give
% consistent answers between two consecutive updates of the set $\I$.
\begin{theorem}\label{thm:dyn_alg_cubes}
There is a dynamic algorithm for the $d$-dimensional hypercube knapsack
problem that supports the following operations:
\begin{enumerate}[(i)]
\item insertion or deletion of an item in time $O(\log^2 n)$,
\item output a $(1+\epsilon)$-estimate of the value of the optimal solution in time $(\log n)^{O_{\epsilon,d}(1)}$,
\item output a $(1+\epsilon)$-approximate solution $\ALG$ in time $O(|\ALG|\cdot(\log n))+(\log n)^{O_{\epsilon,d}(1)}$,
\item query whether an item is contained in $\ALG$ in time $(\log n)^{O_{\epsilon,d}(1)}$.
\end{enumerate}
\end{theorem}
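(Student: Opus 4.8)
The plan is to reuse the static algorithm behind Theorem~\ref{thm:cubes_ptas} almost verbatim, exploiting the fact that, once the items are stored in the item data structure of Lemma~\ref{lem:data-structure} and in the two balanced binary search trees for sizes and profits, every step of that algorithm \emph{other than} the initial construction of these structures accesses the input only through the $O(\log n)$-time counting/reporting primitives. In the dynamic setting those structures are maintained incrementally, so the remaining work costs only $(\log n)^{O_{\epsilon,d}(1)}$. Operation~(i) is then immediate: inserting or deleting an item amounts to one insertion/deletion in the item data structure and in each of the two search trees, which by Lemma~\ref{lem:data-structure} and the search-tree bounds takes time $O(\log^2 n)$.

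For operation~(ii) I would run the improved static algorithm of Appendix~\ref{app:cubes} but stop before materialising the packing, and return $\sum_{j=1}^{r} q(\tilde{k}_j)$, where $q(\tilde{k}_j)$ is the value of the computed $(1+\epsilon)$-approximate solution to $(\mathrm{IP}(\tilde{k}_j))$. Discarding items of profit below $\epsilon p_{\max}/n$ loses at most $\epsilon\OPT$; guessing the basic quantities (Lemma~\ref{lem:hypcub_guessing_basic}), the values $\hat{p}(j)$ (Lemma~\ref{lem:hypcub_guessing_pjs}) and $\hat{a}_{B,j}$ (Lemma~\ref{lem:hypcub_guessing_abjs}), and running the $r=O_{\epsilon,d}(1)$ rounds of indirect guessing---each a binary search over at most $n$ candidate sizes, with each candidate handled up to a factor $1+\epsilon$ in time $(\log n)^{O_{\epsilon,d}(1)}$ by Lemma~\ref{lem:cubes_IP_sol}---all runs in $(\log n)^{O_{\epsilon,d}(1)}$ time and only queries the data structure and the search trees. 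The returned value is a $(1+\epsilon)$-estimate of $\OPT$ by exactly the calculation in Theorem~\ref{thm:cubes_ptas}, using Lemmas~\ref{lem:hypcub_guessing_pjs}, \ref{lem:cubes_IP_sol} and~\ref{lem:cubes_induc_kr} (where $\tilde{k}_j\le k_j$ guarantees that the resulting boxes still fit into the knapsack).

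For operation~(iii) I would additionally extract the explicit solution: running~(ii) yields the values $\tilde{k}_1,\dots,\tilde{k}_r$ and, for each $j\in[r]$ and each pair $(t,t')$, the number $z_{j,t,t'}$ of items of $\tilde{\I}_j\cap\Q_t\cap\P_{t'}$ in $\ALG$. For each of the $O_{\epsilon,d}(1)$ triples $(j,t,t')$ I would use the reporting operation of Lemma~\ref{lem:data-structure} on the appropriate size window $(\tilde{k}_{j-1},\tilde{k}_j]\cap\Q_t$ and profit window of $\P_{t'}$ to list its items, pick the $z_{j,t,t'}$ of smallest side length, and place them: each $\N$-box $B$ with $j_B=j$ into the cells of its grid of side length $s_{\max}(B)=\tilde{k}_j$, and each $\S$-box via NFDH (Lemma~\ref{lem:NFDH}). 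Each reporting call costs $O(\log n+z_{j,t,t'})$, giving total time $O(|\ALG|\cdot\log n)+(\log n)^{O_{\epsilon,d}(1)}$. For operation~(iv), given an item $i$, I would recompute $\tilde{k}_1,\dots,\tilde{k}_r$ and the $z_{j,t,t'}$ by one run as in~(ii), locate the unique $j$ with $s_i\in(\tilde{k}_{j-1},\tilde{k}_j]$ and the classes $\Q_t,\P_{t'}$ with $i\in\Q_t\cap\P_{t'}$, and use the counting operation of Lemma~\ref{lem:data-structure} to compute the rank of $i$ among the items of $\tilde{\I}_j\cap\Q_t\cap\P_{t'}$ ordered by side length (ties broken by a fixed total order on items, e.g.\ by identifier), answering ``$i\in\ALG$'' iff this rank is less than $z_{j,t,t'}$; the running time is $(\log n)^{O_{\epsilon,d}(1)}$.

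The main obstacle is the verification underlying the first paragraph: one must check, step by step, that the improved static algorithm of Appendix~\ref{app:cubes}---in particular the restriction of the relevant ranges of item sizes and profits when solving $(\mathrm{IP}(s))$---never iterates over $\Omega(n)$ items and interacts with the input exclusively through the $O(\log n)$-time range-counting/reporting primitives and the two search trees; this is precisely what the easily guessable packing together with the indirect-guessing framework is engineered to allow. A secondary, easier point is consistency: since the static algorithm, and hence the solution $\ALG$ it defines, is a deterministic function of the current contents of the data structure, and since the selection rule (``shortest side length, then fixed tie-break'') is canonical, all queries issued between two consecutive updates refer to one and the same implicit solution; one may cache $\tilde{k}_1,\dots,\tilde{k}_r$ and the $z_{j,t,t'}$ after the first query following an update to avoid recomputation.
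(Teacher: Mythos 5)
Your proposal follows essentially the same route as the paper's proof: maintain the item data structure and search trees under updates (operation (i)), rerun the polylogarithmic-time guessing and indirect-guessing pipeline to get the values $\tilde{k}_1,\dots,\tilde{k}_r$ and the counts $z_{j,t,t'}$ and return the combined IP/LP solution value for (ii), materialise the packing via range reporting plus grid/NFDH placement for (iii), and answer membership queries by a rank (counting) query within $\tilde{\I}_j\cap\Q_t\cap\P_{t'}$ under the shortest-side-length selection rule for (iv), with consistency following from the determinism of the implicit solution. The argument and the running-time accounting match the paper's; no correction is needed.
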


\section{Algorithms for rectangles}\label{sec:rec_main}
In this section we give an overview of our algorithms for two-dimensional knapsack for rectangles (see Appendix~\ref{app:rec} for details).
First, we classify items into four groups: we say that an item is \emph{large} if it is large compared to edge length of the knapsack in both dimensions and \emph{small} if it is small in both dimensions.

An item of relatively large width (height) and relatively small height (width) is referred to as \emph{vertical} (\emph{horizontal}). We construct \emph{easily guessable packing} using four types of boxes.
The first type are \emph{$\L$-boxes} which contain only one large item each. Also, we use \emph{$\H$-boxes} inside which horizontal items are stacked on top of each other (see Figure~\ref{fig:pack_VH}), and correspondingly \emph{$\V$-boxes} for vertical items.
% inside which contain horizontal and vertical the items are intuitively horizontally and vertically stacked, respectively. An $\H$-box contains only horizontal items while a $\V$-box contains only vertical items.
Finally, we use $\S$-boxes which are defined in the same manner as in the case of (hyper-)cubes. 
\begin{figure}
	\centering
	\includegraphics[width=.9\linewidth,page = 3]{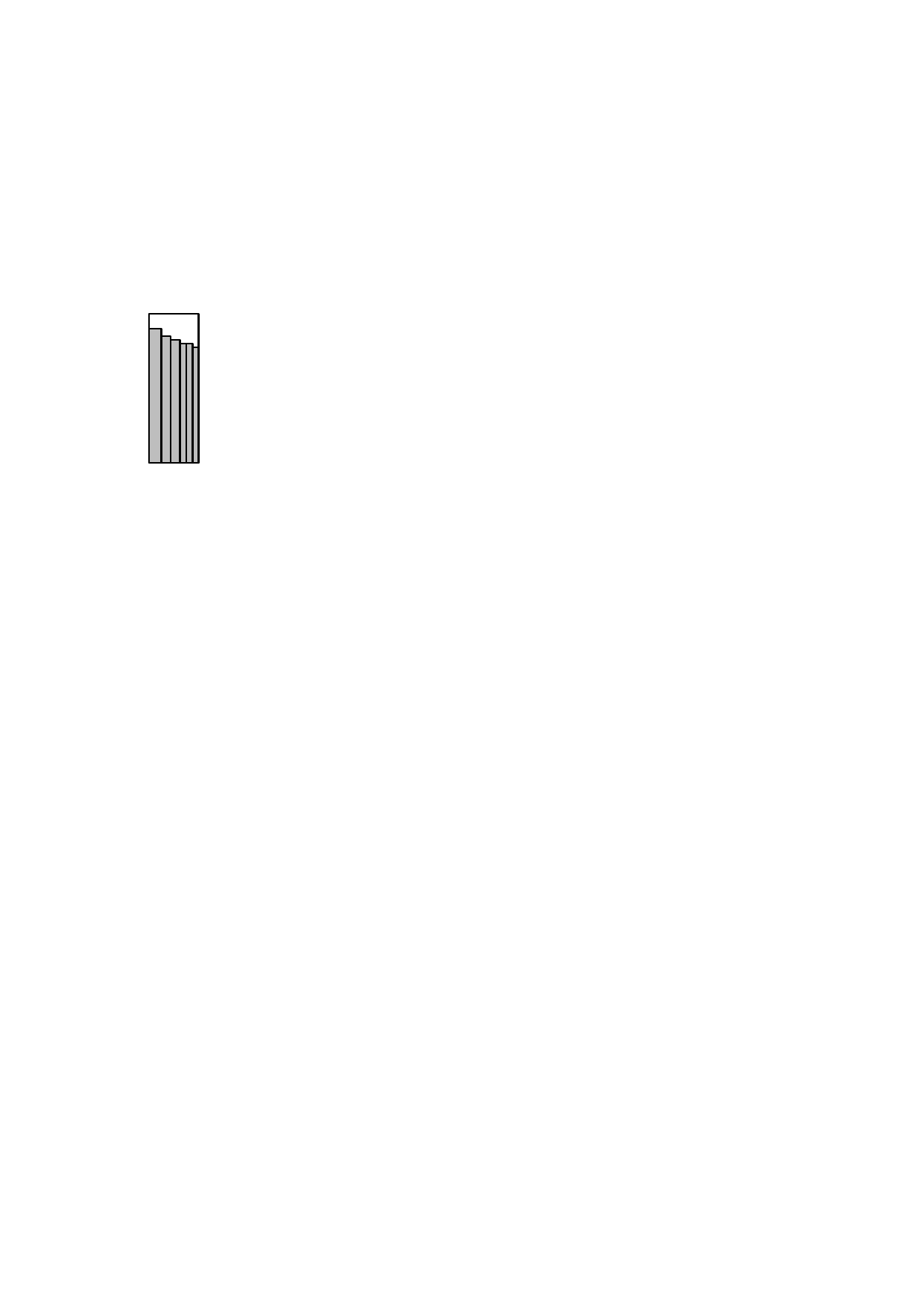}
	\caption{Visualization of packing of $\L$-, $\S$-,$\H$- and $\V$-box, respectively.}
	\label{fig:pack_VH}
\end{figure}
We prove that there always exists a $(2+\epsilon)$-approximate packing using these types of boxes.
\begin{lemma}[Informal]
	There exists an easily guessable packing for packing rectangles into a two-dimensional knapsack with a profit of at least $(1/2-\epsilon)\OPT$.
\end{lemma}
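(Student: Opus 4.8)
The plan is to derive the claimed packing from the optimal solution $\OPT$ by first carrying out the classical reduction that underlies the $(2+\epsilon)$-approximation for two-dimensional knapsack, and then refining the resulting boxes to be \emph{easily guessable} in the same spirit as Lemma~\ref{lem:struc_hypercubes}.

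First I would fix, via a standard shifting argument over a constant-length geometric sequence of scales, two parameters $\es \ll \el \ll \epsilon$ so that the \emph{medium} items of $\OPT$ --- those whose width or height lies in $(\es N,\el N]$ --- carry total profit at most $\epsilon\OPT$, and discard them. The surviving items of $\OPT$ split into $\OPT_{\L}$ (both sides $>\el N$), $\OPT_{\H}$ (width $>\el N$, height $\le\es N$), $\OPT_{\V}$ (the symmetric class) and $\OPT_{\S}$ (both sides $\le\es N$); note $|\OPT_{\L}|\le 1/\el^{2}=O_{\epsilon}(1)$. Assuming w.l.o.g.\ that $p(\OPT_{\H})\ge p(\OPT_{\V})$ --- otherwise exchange the two coordinate axes, obtaining the symmetric packing that uses $\V$-boxes in place of $\H$-boxes --- and using $p(\OPT_{\H})+p(\OPT_{\V})\le\OPT$ together with $p(\OPT_{\V})\le p(\OPT_{\H})$, we get $p(\OPT_{\V})\le\tfrac12\OPT$. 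Hence $\OPT':=\OPT_{\L}\cup\OPT_{\H}\cup\OPT_{\S}$ has profit $p(\OPT')=\OPT-p(\OPT_{\V})-p(\mathrm{medium})\ge(\tfrac12-\epsilon)\OPT$, and it suffices to pack a $(1-O(\epsilon))$-fraction of $p(\OPT')$ into a packing of the prescribed form.

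For this I would invoke the container-type structural results for two-dimensional knapsack (Gálvez et al.~\cite{galvez2021approximating}, building on Jansen and Zhang~\cite{jansen2004rectangle}): starting from the packing of $\OPT'$ induced by $\OPT$ and using that only $O_{\epsilon}(1)$ large items occur, one obtains --- without resource augmentation, losing only an $O(\epsilon)$-fraction of the profit --- a packing of a subset of $\OPT'$ into $O_{\epsilon}(1)$ axis-parallel boxes where each box either holds a single large item (an $\L$-box), or is a stack of horizontal items that all have the same rounded width (an $\H$-box, as in Figure~\ref{fig:pack_VH}), or is an $\S$-box containing only items tiny relative to it whose rounded areas leave the volume slack required by Lemma~\ref{lem:NFDH}; the leftover small items are repacked into the $\S$-boxes by \textsc{NFDH} via Lemma~\ref{lem:NFDH}. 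Finally I would make the packing easily guessable exactly as in the proof of Lemma~\ref{lem:struc_hypercubes}: round the side lengths of the $\S$-boxes, and the heights of the $\H$-boxes whenever they exceed $1/\epsilon$, down to powers of $1+\epsilon$, and introduce breakpoints $k_{1},\dots,k_{r}$ with $r=O_{\epsilon}(1)$ for the distinct rounded widths of the $\H$-boxes, so that the boxes are grouped by this single shared parameter while every remaining parameter of every box has only $O(\mathrm{poly}(\log n))$ possible values; each rounding costs only a factor $1+O(\epsilon)$ in profit. Re-scaling $\epsilon$ gives profit at least $(\tfrac12-\epsilon)\OPT$.

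The main obstacle is the container step: converting the arbitrary geometric packing of $\OPT'$ into one using only $O_{\epsilon}(1)$ axis-parallel boxes of the three prescribed types, with no resource augmentation and only an $O(\epsilon)$-profit loss. Here one adapts the corridor/container machinery of~\cite{galvez2021approximating}, and must additionally ensure that the produced boxes already satisfy the side constraints needed downstream --- item sizes lying in the prescribed ranges, and for the $\S$-boxes enough area slack both for Lemma~\ref{lem:NFDH} and for the subsequent indirect-guessing step. (For the variant with rotations one replaces the dichotomy above by the $\L$-packing structure of~\cite{galvez2021approximating}, but the easily-guessable refinement is identical.)
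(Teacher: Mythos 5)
There is a genuine gap at the step you yourself flag as ``the main obstacle,'' and it is not a detail that the cited machinery supplies. After you discard $\OPT_{\V}$ (the lighter of the horizontal/vertical classes), you still have to convert the original geometric packing of $\OPT_{\L}\cup\OPT_{\H}\cup\OPT_{\S}$ into $O_{\epsilon}(1)$ boxes \emph{with no resource augmentation} while losing only $O(\epsilon)$ profit. The only container tool available here is the resource augmentation packing lemma (Lemma~\ref{lem:rec_rs_augment}), which genuinely needs an empty horizontal strip of height $\Omega(\epsilon N)$, and your reduction creates no such free space: deleting the vertical items frees only scattered tall slots (and in the worst case $p(\OPT_{\V})=0$, so nothing is freed at all). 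Nor can you manufacture the strip by shifting at this point: a large item has height $>\el N$ and may cross \emph{every} candidate horizontal strip (and a horizontal item crosses every candidate vertical strip), so the items crossing the chosen strip cannot be discarded for an $O(\epsilon)$ loss. The alternative machinery you point to in~\cite{galvez2021approximating} that avoids augmentation yields L-packings handled by a DP, which is exactly what this paper says is \emph{not} of the constant-box, easily guessable form. So the factor $2$ cannot be ``spent'' on dropping $\OPT_{\V}$ and then recovered for free in the container step.

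This is precisely where the paper's proof of Lemma~\ref{lem:struc_rectangles} takes a different route: it first chooses, by shifting, a horizontal strip $S$ of height $\epsilon N$ whose cheaply removable items (horizontal/small items meeting $S$, large/vertical items with a corner in $S$) cost $O(\epsilon)\OPT$, and then makes the dichotomy between $\I_{1}$ (large/vertical items crossing $S$) and $\I_{2}$ (everything else). If $p(\I_{1})\ge p(\I_{2})$, the crossing items behave one-dimensionally and are rearranged into a single $\V$-box plus $O_{\epsilon}(1)$ $\L$-boxes of height $N$; otherwise $S$ is genuinely empty and Lemma~\ref{lem:rec_rs_augment} can be invoked on $\I_{2}$, with the freed strip also absorbing the small items and the height rounding to the universal set $U(\epsilon)$. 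The factor $2$ thus comes from this crossing/non-crossing dichotomy, not from discarding the lighter of $\OPT_{\H},\OPT_{\V}$. (A secondary mismatch: the target structure requires box heights from the universal set $U(\epsilon)$ and width breakpoints $k_{1},\dots,k_{r}$ shared by $\H$- and $\L$-boxes, plus the grouping of $\V$-boxes by height; rounding heights up to $U(\epsilon)$ again needs reserved free space, which your construction does not provide.) To repair your argument you would essentially have to reintroduce the paper's strip argument, at which point dropping $\OPT_{\V}$ up front is no longer what buys the factor~$2$.
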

In these packings, we can guess the height of each box (of each type) in $O_\epsilon(1)$ time and the width of each $\S$- and $\V$-box in $O(\mathrm{poly}(\log n))$ time (and additionally some other basic quantities). Then, we apply the indirect guessing framework in order to determine the widths of the $\L$- and $\H$-boxes.
%
% only $\L$-, $\H$-, $\V$-, and $\S$-boxes such that for each box we can guess its height and additional basic quantities  in $\mathrm{poly}(\log n)$ time. Informally, we prove the following result (see Appendix~\ref{app:rec_nr} for the formal results and technical details).
% Based on this structural result, we follow our framework of guessing basic quantities and modifying the indirect guessing framework to suit our problem and design an approximation algorithm as well as a dynamic algorithm with the corresponding approximation guarantees.

\begin{restatable}{theorem}{thmrecnr}
	\label{thm:rec_nr}
	There is a $(2+\epsilon)$-approximation algorithm for the geometric
	knapsack problem for rectangles with a running time of $n\cdot(\log n)^{4}+(\log n)^{O_{\epsilon}(1)}$.
	Also, there is a dynamic $(2+\epsilon)$-approximation algorithm for
	the problem which supports the following operations:
	\begin{itemize}
		\item insert or delete an item in time $O(\log^4 n)$,
		\item output a $(2+\epsilon)$-estimate of the
		value of the optimal solution, or query whether an item is contained
		in $\ALG$, in time $(\log n)^{O_{\epsilon}(1)}$, and
		\item output a $(2+\epsilon)$-approximate solution $\ALG$ in time $O(|\ALG|\cdot(\log n))+(\log n)^{O_{\epsilon}(1)}$.
	\end{itemize}
\end{restatable}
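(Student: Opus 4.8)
The plan is to mirror, for rectangles, the structure already developed in Section~\ref{sec:hypercubes} for hypercubes: establish an easily guessable packing into $O_\epsilon(1)$ boxes of four types ($\L$-, $\H$-, $\V$-, and $\S$-boxes) whose profit is at least $(1/2-\epsilon)\OPT$, guess the ``basic quantities'' in polylogarithmic time, and then run the indirect guessing framework to determine the remaining parameters (the widths of the $\L$- and $\H$-boxes). First I would fix $\epsilon>0$ with $1/\epsilon\in\mathbb N$, discard all items of profit below $\epsilon p_{\max}/n$ (losing at most $\epsilon\OPT$), and classify items into large/horizontal/vertical/small relative to thresholds $\el$ and $\es$ chosen by the standard pigeonhole argument over a constant number of density scales, so that the profit lost by ``medium'' items is $O(\epsilon)\OPT$. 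The structural lemma (the Informal Lemma above) gives the $(1/2-\epsilon)$-approximate packing; its proof adapts the corresponding result of Gálvez et al.~\cite{galvez2021approximating} into the $\L/\H/\V/\S$-box format, with the key extra properties being that each $\H$- and $\V$-box contains items whose heights/widths lie in one interval $(k_{j-1},k_j]$ from a list of $O_\epsilon(1)$ values, and that the heights of all boxes are powers of $1+\epsilon$ (or integers from $[n]$ rounded, as in properties iii)--iv) of Lemma~\ref{lem:struc_hypercubes}).

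Next I would guess the basic quantities: the number of boxes of each type, the height of each box (only $O(\log n)$ powers of $1+\epsilon$, or $O_\epsilon(1)$ options for the $k_j$-indexed heights), the widths of the $\S$- and $\V$-boxes (here the $\V$-boxes play the role analogous to $\N$-boxes having a cheaply-guessable dimension), and the approximate profit contributions $\hat p(j)$ of each size-interval class and the approximate volume fractions $\hat a_{B,j}$ for $\S$-boxes, exactly as in Lemmas~\ref{lem:hypcub_guessing_pjs} and~\ref{lem:hypcub_guessing_abjs}. This takes $(\log n)^{O_\epsilon(1)}$ time. Then, for the widths of the $\L$- and $\H$-boxes, I would run $r$ rounds of indirect guessing: in round $\ell+1$, binary-search over candidate values $s$ from the set of relevant item widths stored in the item data structure (Lemma~\ref{lem:data-structure}), and for each $s$ solve an integer program $\mathrm{IP}(s)$ — of the same flavor as the hypercube case — that packs horizontal items of height-class $(\tilde k_\ell,s]$ into the $\H$-boxes (an $\H$-box of width $w$ and height $s$ holds a stack of total height $\le w$... more precisely, horizontal items are stacked so the binding constraint is total height $\le$ box width) and at most one large item into each $\L$-box, subject to per-group availability constraints, maximizing profit. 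One guesses the $O_\epsilon(1)$ most profitable items, solves the LP relaxation, and floors, losing only a $1+\epsilon$ factor, exactly as for Lemma~\ref{lem:cubes_IP_sol}. Monotonicity $q(s)\le q(s')$ for $s\le s'$ and the inductive inequality $\tilde k_{\ell+1}\le k_{\ell+1}$ follow as before, guaranteeing the guessed boxes fit into the knapsack; combining the $\mathrm{IP}$ solutions and packing each $\S$-box by NFDH (Lemma~\ref{lem:NFDH}), each $\H$/$\V$-box by stacking, and each $\L$-box with its single item yields a feasible solution of profit $(1-O(\epsilon))(1/2-\epsilon)\OPT\ge(1/2-O(\epsilon))\OPT$; rescaling $\epsilon$ gives the $(2+\epsilon)$ ratio. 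The running time is $n\cdot(\log n)^4$ to build and query the data structures (the exponent $4$ coming from the two-dimensional range structure combined with the binary searches) plus $(\log n)^{O_\epsilon(1)}$ for all guessing and LP solves.

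The dynamic algorithm then follows by the same recipe as Theorem~\ref{thm:dyn_alg_cubes}: insertions and deletions update the item data structure and the balanced search trees in $O(\log^4 n)$; to output a $(2+\epsilon)$-estimate of $\OPT$ we run the whole computation above but return only the total $\mathrm{IP}$ profit, in $(\log n)^{O_\epsilon(1)}$ time; to output $\ALG$ we additionally materialize, for each size-class $\Q_t$, profit-class $\P_{t'}$, and index $j$, the $z_{j,t,t'}$ items of smallest side length and their placements, in $O(|\ALG|\log n)+(\log n)^{O_\epsilon(1)}$; and a membership query for item $i$ checks whether $i$ is among the $z_{j,t,t'}$ shortest items in its class, ensuring consistent implicit answers between updates. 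The main obstacle I anticipate is the structural lemma: one must show that the $(1.89+\epsilon)$-type packing of~\cite{galvez2021approximating} can be reorganized into exactly the four easily guessable box types while (a) retaining profit $\ge(1/2-\epsilon)\OPT$ — the $2+\epsilon$ threshold for the non-rotational case is known to be tight for constantly-many-box packings, so there is no slack to waste — and (b) installing the partition-into-$O_\epsilon(1)$-intervals property and the power-of-$(1+\epsilon)$ rounding of box dimensions without breaking feasibility of the stacking/NFDH packings inside the boxes. A secondary subtlety is getting the horizontal/vertical item interaction right in $\mathrm{IP}(s)$ so that widths of $\L$- and $\H$-boxes (the not-yet-guessed parameters) are the only ones handled by indirect guessing while the $\V$-box widths, which share a box dimension with the $\L$-boxes spatially, are still guessable directly — this requires the packing to separate the horizontal-type and vertical-type regions cleanly, which is exactly what the box decomposition in Figure~\ref{fig:pack_VH} is designed to do.
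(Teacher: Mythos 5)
Your high-level plan coincides with the paper's (a structural lemma packing into $\L$-, $\H$-, $\V$-, $\S$-boxes with profit $(1/2-O(\epsilon))\OPT$, guessing of basic quantities, indirect guessing for the widths of the $\H$- and $\L$-boxes, NFDH/stacking inside boxes, and a dynamic variant as for hypercubes), but it omits exactly the parts that are the technical crux for rectangles beyond the hypercube case. Most importantly, vertical and small items are never packed: your $\mathrm{IP}(s)$ places only horizontal and large items, and you dispose of the $\V$- and $\S$-box widths by asserting they are ``cheaply guessable,'' analogous to the second dimension of $\N$-boxes. That assertion is unjustified: these widths range over $[1,N]$, so direct guessing (even restricted to powers of $1+\epsilon$) costs $\Theta(\log N)$ options per box, which does not yield the claimed $(\log n)^{O_{\epsilon}(1)}$ term. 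The paper needs a dedicated procedure here: a two-case analysis for the $\S$-box widths depending on whether some $\S$-box contains a horizontal or large item; a routine that guesses the most profitable vertical items via profit classes and binary-searches over item widths to pin a reference width $w^{*}$, after which all remaining relevant widths lie in a range of the form $[\frac{\epsilon}{n}w^{*}, n w^{*}]$ and can be guessed with $O(\log_{1+\epsilon} n)$ options; and then separate LPs (one per height interval $(h^{(j-1)},h^{(j)}]$ of the $\V$-boxes, exploiting property iv) of the structural lemma, plus one for the small items) to assign vertical and small items to $\V$- and $\S$-boxes, with the rank lemma controlling the rounding loss. Without some substitute for these steps, neither the running time nor the $(2+\epsilon)$ guarantee of the theorem follows from your outline.

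A second, smaller problem is that your description of the indirect-guessing subproblem is internally inconsistent: you binary-search over item widths but call the resulting interval a ``height-class,'' and you state the capacity constraint of an $\H$-box as ``total height $\le$ box width.'' In the correct formulation the box heights are guessed directly from the universal set $U(\epsilon)$ (hence $O_{\epsilon}(1)$ options each, rather than powers of $1+\epsilon$), the stacking constraint of an $\H$-box $B$ is that the total rounded height of the packed items is at most $h_{B}$, and the binary search determines the width threshold $k_{j_{B}}$, which bounds the widths of the items admissible to $B$ and, via the invariant $\tilde{k}_{j}\le k_{j}$, guarantees that the guessed boxes still fit into the knapsack. Pinning down which single parameter per box is handled indirectly is the essence of the easily guessable packing, so as written your IP does not correspond to a feasible geometric packing of the $\H$-boxes.
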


A natural open question is to improve our approximation ratio to $2-\delta$ for some constant~$\delta>0$. This seems difficult since there is provably no corresponding structured packing with only $O_\epsilon(1)$ boxes \cite{galvez2021approximating}. The known polynomial time $(17/9+\epsilon)$-approximation uses an L-shaped container which is packed by a DP with a running time of %$O(n^2N^2)$ or, alternatively,
$n^{\Omega_\epsilon(1)}$~\cite{galvez2021approximating}. It is not clear how to improve this to near-linear running time.
% Also, $\Omega_\epsilon(1)$ ``large'' rectangles are guessed which cannot be handled by our indirect guessing framework.
However, 
if we are allowed to rotate the rectangles by 90 degrees, then it is possible to construct an
% we construct even an
easily guessable packing with $O_\epsilon(1)$ boxes and an approximation ratio of only $17/9+\epsilon$. We use here that we have more freedom to modify the optimal packing by rotating some of its items.

\begin{lemma}[Informal]
	If we are allowed to rotate the input rectangles, there exists an easily guessable packing into a two-dimensional knapsack with a profit of at least $(9/17-\epsilon)\OPT$.
\end{lemma}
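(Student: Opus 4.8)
Mirroring the proof of Lemma~\ref{lem:struc_hypercubes}, the plan is to start from a known container packing for two-dimensional knapsack with rotations --- the one underlying the $17/9$-type structural results of~\cite{galvez2021approximating} --- and transform it through a constant number of steps, each losing only a factor $1+\epsilon$ of profit and multiplying the number of boxes by at most $O_\epsilon(1)$, into an easily guessable packing built from $\L$-, $\H$-, $\V$- and $\S$-boxes. First, fix an optimal packing $\OPT$ and partition its items, via the standard shifting argument over a geometric sequence of width and height thresholds, into \emph{large}, \emph{horizontal}, \emph{vertical}, \emph{small}, and \emph{intermediate} items; for a suitable choice of thresholds the intermediate items carry at most $\epsilon\,\OPT$ of profit and are discarded.

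\textbf{An $O_\epsilon(1)$-box packing with ratio $17/9$.} The core step is to show that, using $90$-degree rotations, the surviving large, horizontal, vertical and small items can be rearranged into $O_\epsilon(1)$ containers carrying profit at least $(9/17 - O(\epsilon))\OPT$: one container per large item, containers in which (possibly rotated) horizontal items are stacked, analogous containers for vertical items, and containers holding only small items. Here I would adapt the analysis of~\cite{galvez2021approximating}: in the non-rotational case, matching ratio $17/9$ requires an L-shaped region to host profitable long-horizontal and long-vertical items \emph{simultaneously}, and with only $O_\epsilon(1)$ rectangular boxes one provably cannot beat $2$; but when rotations are allowed we may rotate a carefully chosen subset of the skewed items so that the items the L-region handled can instead be stacked into constantly many $\H$- and $\V$-boxes, and a weighting between this packing and the plain $2$-approximate box packing (which keeps whichever of horizontal/vertical is more profitable) bottoms out at exactly $17/9$. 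This is the single place where the rotational setting genuinely helps and where we cannot simply discard half of the skewed items.

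\textbf{From containers to easily guessable boxes.} A large-item container becomes an $\L$-box directly, and a container of stacked horizontal (resp.\ vertical) items becomes an $\H$-box (resp.\ $\V$-box) of the same dimensions. For a container of small items we shrink it slightly and apply Lemma~\ref{lem:NFDH}: each small item is at most an $\epsilon$-fraction of the container in each dimension, and after rounding each item's width and height up to the next power of $1+\epsilon$ the occupied area is still at most a $(1-O(\epsilon))$-fraction of the container, so NFDH repacks the items and the container is a valid $\S$-box, at a cost of only $1+\epsilon$. It then remains to make the packing easily guessable, exactly as in Lemma~\ref{lem:struc_hypercubes}: round the width of every $\S$- and $\V$-box down to the previous power of $1+\epsilon$ and every box height down to one of $O_\epsilon(1)$ admissible values --- losing a factor $1+\epsilon$ since the items inside are thin relative to the box in the relevant dimension --- so that these parameters take only $O(\log N)$ (or $O(\log n)$ after the standard $N\to n$ reduction) and $O_\epsilon(1)$ values, respectively; introduce breakpoints $k_1<k_2<\dots<k_r$ with $r=O_\epsilon(1)$ equal to the distinct widths appearing among the $\L$- and $\H$-boxes; and split and re-index the $\L$- and $\H$-boxes so that each such box $B$ has width exactly $k_{j_B}$ and contains only items of width in $(k_{j_B-1},k_{j_B}]$. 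After this the only unguessed parameter is the width $k_{j_B}$ of each $\L$- and $\H$-box, each item class $\tilde{\I}_j$ can be assigned only to boxes of width $k_j$, and composing all the losses the total profit is at least $(9/17-O(\epsilon))\OPT$; rescaling $\epsilon$ gives the statement.

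\textbf{Main obstacle.} The genuinely hard part is the second step: showing that $90$-degree rotations suffice to reach ratio $17/9$ using only $O_\epsilon(1)$ rectangular boxes, by replacing the L-shaped region of~\cite{galvez2021approximating} with constantly many $\H$- and $\V$-boxes and choosing which skewed items to rotate and which to keep so that the averaging argument gives precisely $17/9$ (and, consistently with the stated threshold, no better). One also has to check that the rotated, rearranged packing remains compatible with the breakpoint structure of the third step --- i.e.\ that every class of items in a given width range is assigned only to boxes that all share the single unguessed width parameter --- but once the $O_\epsilon(1)$-box packing is in hand, this refinement is routine, by the same reasoning as in Lemma~\ref{lem:struc_hypercubes}.
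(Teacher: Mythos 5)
Your proposal reduces the lemma to one unproved claim and then stops exactly there: the assertion that, with $90$-degree rotations, the skewed items that the L-shaped region of~\cite{galvez2021approximating} accommodates can be rotated and restacked into $O_\epsilon(1)$ many $\H$- and $\V$-boxes so that an averaging argument ``bottoms out at exactly $17/9$.'' This is not a routine adaptation that can be cited or deferred --- it \emph{is} the content of the lemma, and you yourself flag it as the main obstacle without supplying the construction, the choice of which items to rotate, or the weighting that yields the constant. As written, the proposal therefore has a genuine gap at its core step. It is also doubtful that the sketched route goes through in the form stated: the $17/9$ bound in the non-rotational setting of~\cite{galvez2021approximating} hinges on packing the L-region by a dedicated DP, and converting that region into constantly many rectangular $\H$-/$\V$-boxes after rotating a subset of items is precisely the kind of statement that needs a quantitative argument, not an appeal to analogy.

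The paper's proof of Lemma~\ref{lem:struc_rectangles-rotation} takes a different and more elementary route that you would need to reproduce (or replace by a complete argument). It first splits off the case of a \emph{massive} item: if none exists, it builds three candidate packings --- (A) a thin ring around the knapsack whose items are repacked via Steinberg's theorem into a strip while the inner items lose a factor $2$ through the resource contraction lemma (Lemma~\ref{lem:rec_resource_contr}) followed by resource augmentation (Lemma~\ref{lem:rec_rs_augment}); (B) a four-strip decomposition with crossing and corner items where one strip is sacrificed; (C) keeping three of the four corner items --- and balances them. If a massive item exists, it balances a packing that discards the massive item against one that keeps it plus one adjacent strip, which is what forces the special full-width $\L$-box $B^{*}$ and the ``at most three items'' special case in the lemma statement. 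Your proposal omits all of these structural features ($B^{*}$, the three-item case, the height-interval property of the $\V$-boxes needed because horizontal items may end up in either $\H$- or $\V$-boxes), and these are exactly what make the packing ``easily guessable'' for the subsequent algorithm; the final conversion to powers of $1+\epsilon$ and breakpoints $k_1,\dots,k_r$, which you do describe, is the easy part. To repair the proposal you would either have to carry out the rotation-of-the-L-region argument in full, or follow the candidate-packing/resource-contraction strategy of the paper.
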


On the other hand, it becomes harder to compute a solution that corresonds to our easily guessable packing. The reason is that a horizontal or vertical item can now be assigned to an $\H$- \emph{or} to a $\V$-box. This is particularly problematic since these two types of boxes are not treated symmetrically. Like before, we can guess the height of each box in time $O_\epsilon(1)$. However, for $\V$-boxes this yields a different kind of restriction than for $\H$-boxes.

% If we are allowed to rotate these items, we even present a $17/9+\epsilon$-approximation algorithm and a corresponding dynamic algorithm. Here, we first prove that an \emph{easily guessable packing} exists that allows for this approximation guarantee and is suitable for our algorithmic framework of first guessing basic quantities and then applying an indirect guessing framework. Again, we first guess the height of each box and some additional parameters and use the indirect guessing framework to compute the width of each box. The indirect guessing framework differes from the setting without rotation as we may now treat horizontal and vertical items as the same.

\begin{restatable}{theorem}{theoremrecrot}
	\label{thm:rectangles_rot}
	There is a $(17/9+\epsilon)$-approximation algorithm for the geometric
	knapsack problem for rectangles with rotations with a running time
	of $n\cdot(\log n)^{4}+(\log n)^{O_{\epsilon}(1)}$. Also, there
	is a dynamic $(17/9+\epsilon)$-approximation algorithm for the problem
	which supports the following operations:
	\begin{itemize}
		\item insert or delete an item in time $O(\log^4 n)$,
		\item output a $(17/9+\epsilon)$-estimate of the value of the optimal solution, or query whether an item is contained
		in $\ALG$, in time $(\log n)^{O_{\epsilon}(1)}$,
		\item output a $(17/9+\epsilon)$-approximate solution $\ALG$ in time $|\ALG|\cdot(\log n)^{3}+(\log n)^{O_{\epsilon}(1)}$.
	\end{itemize}
\end{restatable}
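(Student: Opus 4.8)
The plan is to closely follow the structure of the proof of Theorem~\ref{thm:rec_nr}, adapting each step to the rotational setting. First I would establish the existence of an easily guessable packing with profit at least $(9/17-\epsilon)\OPT$, using $\L$-, $\H$-, $\V$-, and $\S$-boxes whose total number is $O_\epsilon(1)$. Here the key difference from the non-rotational case is that we have the extra freedom to rotate items of the optimal solution; this is what allows the improvement from the ratio $2$ to $17/9$. Concretely, starting from the structured packing underlying the known polynomial-time $(17/9+\epsilon)$-approximation of~\cite{galvez2021approximating}, I would show how to convert its constantly many containers into our four box types, rotating horizontal/vertical items as needed so that each $\H$-box receives only horizontal items and each $\V$-box only vertical items, while losing only a $1+\epsilon$ factor. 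As in Lemma~\ref{lem:struc_hypercubes}, I would additionally ensure the box parameters are structured: heights rounded so that there are only $O_\epsilon(1)$ distinct height-classes $k_1,\dots,k_r$ (giving a partition of the items by height into sets $\I_j$), widths of $\S$- and $\V$-boxes rounded to powers of $1+\epsilon$, and the numbers of shelves in $\H$-boxes likewise rounded. The $\tilde\I_j$ sets and the indirect-guessing target profits $\hat p(j)$ are defined exactly as before.

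Next I would describe the computation. The basic quantities---number of boxes of each type, the height-class index $j_B$ for each box, and the $O(\mathrm{poly}(\log n))$-guessable widths of $\S$- and $\V$-boxes---are guessed in time $(\log n)^{O_\epsilon(1)}$, mirroring Lemmas~\ref{lem:hypcub_guessing_basic}--\ref{lem:hypcub_guessing_abjs}. The remaining unknowns are the widths of the $\L$- and $\H$-boxes, which I would determine via the indirect guessing framework of~\cite{heydrich2019faster}, one round per value $k_j$, via a binary search over candidate item-heights using the item data structure (Lemma~\ref{lem:data-structure}). In each round $\ell+1$, for a candidate cut $s$ I would set up an integer program $\mathrm{IP}(s)$ analogous to the one for hypercubes: variables $x_{t,t',B}$ counting how many items of size-class $\Q_t$ and profit-class $\P_{t'}$ go into box $B$, a covering constraint per group, a capacity constraint per $\S$-box (volume/area), per $\V$-box (number of shelves of the guessed width), and---this is the new ingredient---per $\H$-box a constraint bounding the number of stacked horizontal items whose total height fits, while $\L$-boxes simply hold one item. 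I would guess the $O_\epsilon(1)$ most profitable items, solve the LP relaxation of the residual IP in $(\log n)^{O_\epsilon(1)}$ time, round down, and thereby obtain a $(1+\epsilon)$-approximate monotone estimate $q(s)$ exactly as in Lemma~\ref{lem:cubes_IP_sol}. Defining $\tilde k_{\ell+1}$ as the smallest $s$ with $q(s)\ge(1-\epsilon)\hat p(\ell+1)$ gives $\tilde k_{\ell+1}\le k_{\ell+1}$ as in Lemma~\ref{lem:cubes_induc_kr}, so all guessed boxes fit in the knapsack; combining the $r$ IP-solutions and packing each box explicitly ($\S$-boxes by NFDH via Lemma~\ref{lem:NFDH}, $\H$/$\V$-boxes by stacking, $\L$-boxes trivially) yields a solution of profit $(9/17-O(\epsilon))\OPT$, and rescaling $\epsilon$ gives the claimed ratio. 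The dynamic algorithm follows verbatim as in Theorem~\ref{thm:dyn_alg_cubes}/\ref{thm:rec_nr}: operation (ii) returns the IP-profits without building the packing, operations (iii)--(iv) resolve items consistently by selecting, within each $\tilde\I_j\cap\Q_t\cap\P_{t'}$, the $z_{j,t,t'}$ items of smallest relevant dimension; the extra $(\log n)^3$ factor in operation (iii) comes from the cost of placing each rectangle.

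The main obstacle I expect is precisely the point flagged in the excerpt: because an item that is ``horizontal'' for one box orientation is ``vertical'' for another, a single input rectangle can legitimately be assigned either to an $\H$-box or, after rotation, to a $\V$-box, and these two box types impose structurally different constraints (a bound on stacked height versus a fixed number of unit-width shelves). This breaks the clean property used in the hypercube case that each $\N$-box sees items from only one height-class---now the relevant ``class'' of an item depends on which orientation/box we commit it to. The fix I would pursue is to incorporate the orientation choice into the IP: give each ambiguous item two copies (one per orientation) with the appropriate size- and profit-class, linked so that at most one copy is used, and verify that this enlarged IP still has only $O_\epsilon(1)$ variables so the guess-and-round argument goes through with the same running time, and that the indirect-guessing monotonicity $q(s)\le q(s')$ is preserved. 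A secondary subtlety is ensuring that the height-class partition $\I_1,\dots,\I_r$ is compatible with both roles of an item simultaneously; I would handle this by defining the classes with respect to the smaller dimension and arguing that the rotations performed when building the structured packing only ever move an item to a box of no larger height-class, so the inductive bound $\tilde k_{\ell+1}\le k_{\ell+1}$ is maintained.
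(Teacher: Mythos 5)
There is a genuine gap at the heart of your structural step. You propose to obtain the $(9/17-\epsilon)\OPT$ easily guessable packing by converting the containers of the known polynomial-time $(17/9+\epsilon)$-approximation of~\cite{galvez2021approximating} into $\L$-, $\H$-, $\V$- and $\S$-boxes. That result, however, is for the setting \emph{without} rotations and its structure is not a packing into $O_\epsilon(1)$ simple containers: it relies on an L-shaped region packed by a dedicated DP, and the same reference shows that no $O_\epsilon(1)$-box structured packing with ratio better than $2$ exists without rotations. So there is nothing to ``convert'', and rotating items inside an already-fixed container decomposition does not create the missing structure. The paper instead proves Lemma~\ref{lem:struc_rectangles-rotation} from scratch: it distinguishes whether $\OPT$ contains a massive item, builds several candidate packings (a ring/inner decomposition combined with the resource contraction lemma, a strip/corner decomposition, a corner-only packing, and two candidate packings for the massive-item case using Steinberg's theorem), and takes the best of them; the $17/9$ ratio with rotations comes precisely from this case analysis. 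Your sketch also omits two features that this lemma forces on the algorithm and that the paper treats explicitly: the possible \emph{special} $\L$-box $B^{*}$ of width $N$ whose height cannot be read off the universal set $U(\epsilon)$ (the paper first fixes all other boxes, places $B^{*}$ into the leftover space, and has a final routine to guess the item $i^{*}$ inside it), and the degenerate case in which the structured packing consists of at most three items, which gets its own subroutine.

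Your treatment of the $\H$/$\V$ ambiguity also misses the actual obstacle. Adding two linked copies per ambiguous item is stated at the per-item level, which is incompatible with the aggregated, polylog-size class LP (the paper avoids this by rotating so that w.l.o.g.\ there are no vertical items and by letting one shared constraint $\sum_B x_{t,t',t'',B}\le n_{t,t',t''}$ range over all box types); that part is repairable. What your fix does not address is that the capacity of a $\V$-box is its \emph{width}, which is unknown and not covered by your claim that $\V$-box widths can be guessed ``mirroring'' the non-rotational lemmas — in the rotational setting the items filling $\V$-boxes are exactly the (rotated) horizontal items that also compete for $\H$-boxes, so the non-rotational width-guessing argument breaks. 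The paper resolves this with property v) of Lemma~\ref{lem:struc_rectangles-rotation} (processing the indirect guessing separately per height interval $(h^{(j-1)},h^{(j)}]$ of the $\V$-boxes), a pre-step guessing the most profitable items placed in $\V$-boxes, and a case distinction: if the widest relevant $\V$-box has width $\Omega(\epsilon h_B^{\min}/C_{\mathrm{boxes}}(\epsilon))$ its width (and then the others) can be guessed directly, and otherwise the $\V$-boxes are excluded from the IP and filled \emph{after} the indirect guessing using the pre-guessed profit targets $\hat p_{\V}(j)$ together with a density-threshold computation on the rectangle data structure. Without some substitute for this mechanism, your IP has right-hand sides you cannot produce in polylogarithmic time, so the indirect guessing step as you describe it cannot be executed.
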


\bibliography{arxiv-dynamic-knapsack}

\appendix

\section{Details of Section~\ref{sec:hypercubes}}\label{app:cubes}
In the following we present the technical details underyling our algorithms presented in Section~\ref{sec:hypercubes}.
\subsection{Preliminaries}
Before going into more details, we need to introduce some important notation.
As defined earlier, for each item $i \in \I$, the side length rounded to the next larger power of $1+\epsilon$ is denoted by $\lceil s_i \rceil_{1+\epsilon} := (1+\epsilon)^{\lceil \log_{1+\epsilon}(s_i)\rceil}$.
% Let $\I$ be a set of items.
We denote by $s_{\max}(\I) := \max_{i \in \I}s_i$, $s_{\min}(\I) := \min_{i \in \I}s_i$ the largest and smallest side lengths of the items in $\I$, respectively. Furthermore, the total volume of the items in $\I$ is denoted by $\mathrm{VOL}_d(\I) := \sum_{i \in \I}s_i^d$ and the total volume of the rounded items by  $\lceil \mathrm{VOL}_d(\I)\rceil_{1+\epsilon} := \sum_{i \in \I}\lceil s_i\rceil^d_{1+\epsilon}$. For a $d$-dimensional hypercuboid $B$, we denote its side lengths by $\ell_{d'}(B)$ for $d'=1,\dots,d$ and the smallest side length is $\ell_{\min}(B) := \min_{d'=1,\dots,d}\ell_{d'}(B)$. The volume and surface of $B$ are defined as  $\mathrm{VOL}_d(B) := \prod_{d'=1}^d \ell_{d'}(B)$ and $\mathrm{SURF}_d(B) := 2\sum_{d'=1}^d\mathrm{VOL}_d(B)/\ell_{d'}(B)$. We refer to a $d$-dimensional hypercuboid as a \emph{box}.

Furthermore, we make use of the Next-Fit-Decreasing-Height (NFDH) algorithm when we show the existence of a
% in the derivation of a
structured near-optimal solution, as well as in our algorithm itself. Following {the} existing literature (see e.g.~\cite{bansal2006bin, galvez2021approximating,harren2009approximation,jansen2022ptas}), we describe NFDH in $d$-dimensions in an inductive manner. Suppose that we defined already how to
% know how to
pack items in $d-1$ dimensions using NFDH. Let $B$ be a $d$-dimensional hypercuboid and $\I$ be a set of items sorted in non-increasing order of heights (in the case of hypercubes this is equivalent to the side length). Consider the largest item in $\I$, with side length $s_{\max}(\I)$; this item defines the length of the first shelf of our packing in dimension~$d$, i.e., the first shelf is a sub-box of $B$ with side lengths $\ell_1(B),\dots,\ell_{d-1}(B), s_{\max}(\I)$. We then apply the $d-1$-dimensional NFDH algorithm using the set $\I$ and the first shelf. Afterwards, let $P$ be the set of items packed into the first shelf, then we run $d$-dimensional NFDH with $B'$ being a box of side lengths $\ell_1(B),\dots,\ell_{d-1}(B),\ell_d(B)-s_{\max}(\I)$ and $\I' = \I \setminus P$. We repeat this procedure until no more items can be packed.
Finally, consider the (base) case that we run NFDH for $d=1$.
% To ensure that $d$-dimensional NFDH is well-defined consider the setting in $1$-dimension.
Here, we add items until the next item does not fit. This item will then define the length of the new shelf in the second dimension. In this paper, we assume w.l.o.g. that before packing items into a $d$-dimensional hypercuboid we sort the dimensions such that $\ell_d(B) \leq \dots \leq \ell_1(B)$. The running time of NFDH in $d$-dimensions is in $O(n \log n)$~\cite{bansal2006bin}. See Figure~\ref{fig:origpack_S} for a visualization of a packing constructed by NFDH for $d=2$.

Since the structured packing in~\cite{jansen2022ptas} is the starting point of our new structured packing, we include the result for completeness. The structured packing used in~\cite{jansen2022ptas} is based on $\V$- and $\N$-boxes as defined in Definition~\ref{def:jansen_boxes}. Based on these two types of boxes, the following structural result is given in~\cite{jansen2022ptas}.
\begin{theorem}[Theorem 7 of Jansen et al.~\cite{jansen2022ptas}]\label{thm:dD_cubes_jansen}
	For any instance $\I$ of the $d$-dimensional hypercube knapsack problem and any $\epsilon < 1/2^{d+2}$, there exists a packing with the following properties:
	\begin{enumerate}
		\item It consists of $\mathcal{N}$- and $\mathcal{V}$-boxes whose total number is bounded by a constant $C_{boxes}(d,\epsilon)$, which depends only on $\epsilon$ and $d$. 
		\item The number of items in the packing that are not packed in these boxes is bounded by a constant $C_{large}(d,\epsilon)$, which depends only on $\epsilon$ and $d$.
		\item The total profit of the packing is at least $(1-2^{d+2}\epsilon)OPT(\I)$, where $OPT(\I)$ is the profit of an optimal packing for instance $\I$.
	\end{enumerate}
\end{theorem}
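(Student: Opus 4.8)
Since this statement is quoted verbatim as Theorem~7 of Jansen, Khan, Lira and Sreenivas~\cite{jansen2022ptas}, the honest plan is to invoke their result directly; what follows is the sketch of the argument one would reproduce if a self-contained proof were wanted. The plan starts from an optimal packing and first creates a large \emph{size gap} between the items that will be handled individually and those that will go into boxes. I would fix a sequence of thresholds $1=\delta_0>\delta_1>\dots>\delta_T$ with $T=\Theta(1/\epsilon)$, where each $\delta_k$ is obtained from $\delta_{k-1}$ by a rapidly decreasing function of $\epsilon$ and $d$ (so that the ratio $\delta_k/\delta_{k-1}$ can be made as small as desired). The size bands $(\delta_k N,\delta_{k-1}N]$ partition $\I$, so by averaging there is an index $k$ whose band carries profit at most $\epsilon\cdot\OPT$; discarding it and setting $\el:=\delta_{k-1}$, $\es:=\delta_k$ leaves every remaining item either \emph{large} ($s_i>\el N$) or \emph{small} ($s_i\le \es N$) with $\es/\el$ negligible. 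Iterating this idea once per dimension (and combining with the later rounding losses) accounts for the $2^{d+2}\epsilon$ in property~iii).

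Next I would deal with the large items and set up the grid. A disjoint packing of hypercubes of side more than $\el N$ into an $N^d$ cube contains at most $(1/\el)^d=O_{\epsilon,d}(1)$ of them, so these are exactly the $\le C_{large}(d,\epsilon)$ exceptional items, kept in their optimal positions (property~ii)). Extending the $2d$ facets of each large item to coordinate hyperplanes yields $O_{\epsilon,d}(1)$ hyperplanes per dimension, hence a partition of the knapsack into $O_{\epsilon,d}(1)$ axis-parallel cells, each either occupied by (part of) a large item or \emph{free} of large items. After a preliminary snapping step that perturbs the large items so that no free cell is too thin (each side length $\Omega(\mathrm{poly}(\el)\cdot N)$), I round every free cell's side lengths down to integer multiples of $\hat s:=\es N$; the discarded boundary strips cost only an $O(\es/\el)$-fraction of the cell's volume, and this is precisely the $\hat s$-grid structure required of $\mathcal{N}$- and $\mathcal{V}$-boxes.

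The heart of the argument is then repacking the small items into the free cells. For each free cell $C$: if the number of small items that $\OPT$ places inside $C$ is at most the number $\prod_i n_i$ of $\hat s$-cells of $C$, I make $C$ an $\mathcal{N}$-box and put one item per grid cell; otherwise those items fill a large fraction of $\mathrm{VOL}_d(C)$, I make $C$ a $\mathcal{V}$-box and repack a $(1-\epsilon)$-profit subset greedily in order of decreasing profit density via NFDH. The NFDH waste is dominated by the $\hat s\cdot\mathrm{SURF}_d(C)/2$ slack in the $\mathcal{V}$-box definition, which — because $\hat s=\es N\ll\ell_{\min}(C)$ — is an $O(\es/\el)$-fraction of $\mathrm{VOL}_d(C)$. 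Globally, since the total (rounded) volume of the free cells is at least the volume that $\OPT$'s small items occupy minus an $\epsilon$-fraction, a standard greedy/LP exchange argument shows the free cells absorb all but an $\epsilon$-fraction of the profit $\OPT$ gets from small items. Counting $O_{\epsilon,d}(1)$ cells gives property~i), and summing the losses (medium band, volume roundings, NFDH slack, greedy rounding) gives property~iii).

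The step I expect to be the main obstacle is exactly the one glossed over above: verifying that \emph{all} the ``waste'' terms together — boundary strips from the $\hat s$-rounding, NFDH losses, and the last-item loss of the density-greedy repacking — sum to at most $\epsilon\cdot\OPT$. Making this work forces (a) the gap $\es/\el$ to be chosen sufficiently small as a function of $\epsilon$ and $d$, which is why $\Theta(1/\epsilon)$ nested thresholds are needed; (b) the snapping step guaranteeing that free cells are not arbitrarily thin, so that $\hat s$ is genuinely small relative to each cell side; and (c) a careful profit-versus-volume comparison showing that greedily filling a bounded volume with the highest-density small items recovers a $(1-\epsilon)$ fraction of $\OPT$'s small-item profit. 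These are precisely the technical contributions of~\cite{jansen2022ptas}, so in the paper I would simply cite their Theorem~7 rather than reprove it.
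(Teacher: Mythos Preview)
Your proposal is correct and matches the paper's treatment: the paper does not prove this theorem at all but simply states it as Theorem~7 of Jansen et al.~\cite{jansen2022ptas} and uses it as a black box starting point for Lemma~\ref{lem:struc_hypercubes}. Your recognition that the right move is to cite the result directly is exactly what the paper does; the sketch you add is extra material that the paper does not include, though it is a reasonable high-level outline of the shifting/grid/NFDH machinery underlying such structural results.
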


\subsection{Structured packing}

We now derive the structural packing of Lemma~\ref{lem:struc_hypercubes}. Our structured packing uses $\N$- and $\S$-boxes as defined in Section~\ref{sec:hypercubes}. See Figure~\ref{fig:origpack} for a visualization of these boxes and packings.
\begin{figure}[h!]
	\begin{minipage}{.4\textwidth}
		\centering
		\includegraphics[width=.4\linewidth, page = 1]{figures/Squares2D.pdf}
		\subcaption{Sorted packing of $\N$-box}
		\label{fig:origpack_N}
	\end{minipage}%
	\begin{minipage}{.4\textwidth}
		\centering
		\includegraphics[width=.4\linewidth,page = 2]{figures/Squares2D.pdf}
 		\subcaption{\textit{NFDH} packing of $\S$-box} 
		\label{fig:origpack_S}
	\end{minipage}
	\caption{Visualization of $\N$- and $\S$-box for $d=2$}
	\label{fig:origpack}
\end{figure}

We first prove a few auxilliary results. We start by showing that if all items packed into a box $B$ are small compared the box itself, then using an argumentation via a linear program (LP) we can select a subset of the items packed into $B$, such that for this selected subset of items the box $B$ is an $\S$-box.

% if we keep only this subset as a packing in $B$, then $B$ is a $\S$-box.

\begin{lemma}\label{lem:vstarbox_selection}
	Let $B$ be a $d$-dimensional hypercuboid and $\I$ be the set of items packed into $B$ such that $s_i \leq \epsilon \ell_{\min}(B)$ for all items $i\in \I$. Then, there exists a subset of items $\I'\subseteq \I$ such that $p(\I') \geq (1-O(\epsilon))p(\I)$ and $\sum_{i \in \I'} \lceil s_i \rceil^d_{1+\epsilon} \leq \mathrm{VOL}_d(B)-2\cdot d \cdot \epsilon \mathrm{VOL}_d(B)$ with $\epsilon \in (0,1/2^{d+2})$.
\end{lemma}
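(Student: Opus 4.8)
The plan is to use a linear-programming relaxation to discard the cheapest items per ``type'' so that the remaining rounded volume is small enough. Since each item $i\in\I$ satisfies $s_i\le\epsilon\,\ell_{\min}(B)$, we have $\lceil s_i\rceil_{1+\epsilon}\le (1+\epsilon)s_i\le 2\epsilon\,\ell_{\min}(B)$, hence each rounded volume $\lceil s_i\rceil_{1+\epsilon}^d$ is at most $(2\epsilon)^d\mathrm{VOL}_d(B)$, which is tiny compared with $\mathrm{VOL}_d(B)$. First I would set up the fractional selection problem: introduce a variable $y_i\in[0,1]$ for each $i\in\I$, maximize $\sum_{i\in\I}y_i p_i$ subject to the single knapsack-type constraint $\sum_{i\in\I}y_i\lceil s_i\rceil_{1+\epsilon}^d\le (1-2d\epsilon)\mathrm{VOL}_d(B)$. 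The point is to argue that this LP has a fractional optimum of value at least $(1-O(\epsilon))p(\I)$, and then to round it to an integral subset $\I'$ losing only another $O(\epsilon)$ factor.

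The key step is to show the LP optimum is large. The natural certificate is the uniform fractional solution $y_i := \alpha$ for all $i$, where $\alpha := (1-2d\epsilon)\mathrm{VOL}_d(B)\big/\sum_{i\in\I}\lceil s_i\rceil_{1+\epsilon}^d$; this is feasible by construction, and if $\alpha\ge 1$ (i.e.\ all items already fit), we are done with $\I'=\I$ and a cleaner bound. Otherwise $\alpha<1$, and this scaled solution has value $\alpha\,p(\I)$, which is not immediately $(1-O(\epsilon))p(\I)$ — so instead I would take an \emph{optimal} basic feasible solution $y^*$ of the LP: it has at most one fractional coordinate, and dropping that single fractional item and rounding the rest down yields an integral $\I'$ with $\sum_{i\in\I'}\lceil s_i\rceil_{1+\epsilon}^d\le(1-2d\epsilon)\mathrm{VOL}_d(B)$ automatically. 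The remaining task is to bound $p^* := \sum y^*_i p_i$ from below by $(1-O(\epsilon))p(\I)$, and then to bound the profit of the single dropped fractional item.

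To handle both of those I would first do a standard ``grouping by profit classes'' reduction as in the paper: partition $\I$ into classes where profits agree up to a factor $1+\epsilon$, and within each class one only ever wants to keep the items of smallest rounded size (this is exactly the ordering trick used for the $\N$-boxes in the main text). Restricting the LP to such grouped types makes the number of variables/constraints $O(\log N)$-bounded, the basic solution has $O(1)$ fractional coordinates, and each dropped item carries profit at most (roughly) $\epsilon\,\mathrm{VOL}_d(B)/(2\epsilon\ell_{\min}(B))^d\cdot$(its profit) — more simply, since each item is so small, either $p(\I)$ is dominated by many small-profit items (and dropping $O(1)$ of them costs $O(\epsilon)p(\I)$ after also discarding items of profit $<\epsilon p_{\max}/n$, as the algorithm does globally) or one argues directly that the unconstrained fractional LP can pack a $(1-2d\epsilon)/1 = 1-O(\epsilon)$ fraction of the total rounded volume and hence, by a knapsack-greedy exchange argument sorting items by profit-to-volume ratio, obtains value $\ge(1-O(\epsilon))p(\I)$ because the ``marginal'' items near the volume threshold each have negligible volume.

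The main obstacle I expect is the bookkeeping to guarantee that the loss is genuinely $O(\epsilon)p(\I)$ rather than $O(\epsilon)p_{\max}$: one needs the global preprocessing assumption (already invoked in Section~\ref{subsec:Computing-packing}, that items of profit below $\epsilon p_{\max}/n$ are discarded, so $p(\I)\ge p_{\max}\ge \OPT/n$ is not too small relative to any single item) together with the smallness condition $s_i\le\epsilon\ell_{\min}(B)$ to show that the fractional volume we must sacrifice, namely $2d\epsilon\,\mathrm{VOL}_d(B)$, can always be ``paid for'' by a $(1+O(\epsilon))$-factor of profit. Concretely, sorting by profit density and removing a $2d\epsilon$-fraction of the least-dense volume removes at most a $2d\epsilon/(1-2d\epsilon) = O(\epsilon)$ fraction of the volume that was actually packed, and since all rounded volumes are within a factor $1+\epsilon$ of true volumes and items are tiny, this translates to an $O(\epsilon)$ fraction of profit; then the LP rounding removes only $O(1)$ further items, each of negligible profit by the grouping argument. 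Assembling these estimates, with the constant absorbed into $O(\epsilon)$ and the hypothesis $\epsilon\in(0,1/2^{d+2})$ ensuring $1-2d\epsilon$ stays bounded away from $0$, gives the claim.
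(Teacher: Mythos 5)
There is a genuine gap, and it sits exactly where you abandon the uniform fractional solution. The one hypothesis you never actually exploit is that $\I$ is \emph{packed inside} $B$, so $\sum_{i\in\I}s_i^d\le \mathrm{VOL}_d(B)$ and hence $\sum_{i\in\I}\lceil s_i\rceil_{1+\epsilon}^d\le(1+\epsilon)^d\mathrm{VOL}_d(B)$. With this, your own scaling factor satisfies $\alpha\ge(1-2d\epsilon)(1+\epsilon)^{-d}=1-O(\epsilon)$, i.e.\ the uniform solution you dismiss as ``not immediately $(1-O(\epsilon))p(\I)$'' already certifies that the LP optimum is $(1-O(\epsilon))p(\I)$; this is precisely how the paper argues (it sets $\hat{x}_i:=(1-(2d+1)\epsilon)(1+\epsilon)^{-d}$, with capacity $(1-(2d+1)\epsilon)\mathrm{VOL}_d(B)$, and feasibility follows because the rounding factor $(1+\epsilon)^d$ is cancelled exactly by the packing bound $\sum_i s_i^d\le\mathrm{VOL}_d(B)$). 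Without invoking this hypothesis the statement is simply false (take many equal-profit tiny items of total volume $10\,\mathrm{VOL}_d(B)$), so any correct proof must use it explicitly; your density-exchange sketch gestures at ``the volume that was actually packed'' but never pins this down, and the alternative branch (``$p(\I)$ is dominated by many small-profit items \dots after also discarding items of profit $<\epsilon p_{\max}/n$'') cannot work: the lemma is a per-box structural statement whose guarantee is relative to $p(\I)$, not to $\OPT$ or $p_{\max}$, and no global preprocessing can bound a dropped item's profit against $p(\I)$ of this particular box.

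The second problem is your treatment of the single fractional coordinate. You propose to \emph{drop} it and then bound its profit via profit-class grouping plus the global preprocessing, which, as noted, does not give a bound of the form $O(\epsilon)p(\I)$. (A correct version of dropping is possible --- the fractional item's density is at most the LP value divided by the capacity, and its rounded volume is at most $(1+\epsilon)^d\epsilon^d\mathrm{VOL}_d(B)$, so its profit is $O(\epsilon)$ times the LP value --- but you do not make this argument.) The paper sidesteps the issue entirely: it \emph{keeps} the fractional item and pays in volume rather than profit, which is why the LP capacity is taken as $(1-(2d+1)\epsilon)\mathrm{VOL}_d(B)$ instead of $(1-2d\epsilon)\mathrm{VOL}_d(B)$; adding one item of rounded volume at most $(1+\epsilon)^d\epsilon^d\mathrm{VOL}_d(B)<\epsilon\,\mathrm{VOL}_d(B)$ (using $\epsilon<1/2^{d+2}$) still leaves the required $2d\epsilon\,\mathrm{VOL}_d(B)$ slack, and the profit of $\I'$ is then at least the full LP optimum. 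To repair your proof: use the packing hypothesis to validate the uniform solution (so the LP optimum is $(1-O(\epsilon))p(\I)$), shrink the capacity by one extra $\epsilon\,\mathrm{VOL}_d(B)$, take an optimal extreme point, apply the rank lemma to get at most one fractional variable, and include that item; the profit-class grouping and the appeal to the $\epsilon p_{\max}/n$ threshold can be deleted.
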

\begin{proof}
	To prove this statement, we consider the one dimensional knapsack problem with capacity $\mathrm{VOL}_d(B)-(2d+1)\cdot\epsilon \mathrm{VOL}_d(B)$ and item set $\I$ where we define the item size as $\lceil s_i \rceil_{1+\epsilon}^d$. For this problem, we consider the following LP relaxation:

	\begin{alignat*}{3}
		& \text{minimize} & \sum_{j=1}^{m} &x_ip_i& \\
		& \text{subject to} \quad& \sum_{i \in \I}&x_{i}\lceil s_i \rceil_{1+\epsilon}^d& \leq \mathrm{VOL}_d(B)-(2d+1)\cdot\epsilon \mathrm{VOL}_d(B) \\
		&&& x_{i} \geq 0, &  \forall i \in \I
		\\
		&&& x_{i} \leq 1, &  \forall i \in \I
	\end{alignat*}

	Consider the solution $\hat{x}_i := (1-(2d+1)\cdot\epsilon)(1+\epsilon)^{-d}$ for all $i\in \I(B)$. Clearly, $\hat{x_i} \leq 1$ and $\hat{x_i} \geq 0$ for each $i \in \I$. Furthermore, we have
	\begin{align*}
		\sum_{i \in \I}\hat{x}_{i}\lceil s_i \rceil_{1+\epsilon}^d & = (1-(2d+1)\cdot\epsilon)(1+\epsilon)^{-d}\sum_{i \in \I}\lceil s_i \rceil_{1+\epsilon}^d \\
		& \leq (1-(2d+1)\cdot\epsilon)(1+\epsilon)^{-d}\sum_{i \in \I} (1+\epsilon)^ds_i^d \\
		& = (1-(2d+1)\cdot\epsilon)\sum_{i \in \I} s_i^d \\
		& \leq  (1-(2d+1)\cdot\epsilon)\mathrm{VOL}_d(B).
	\end{align*}
	Here, the last inequality follows from the fact that the total volume of items packed into $B$ is at most the volume of $B$ itself. Thus, $\hat{x}$ is feasible and yields a profit of at least $(1-(2d+1)\cdot\epsilon)(1+\epsilon)^{-d}p(B) \geq (1-O(\epsilon))p(B)$. Let $x^*$ be an optimal extreme point solution. Then, $x^*$ yields a profit of at least $(1-O(\epsilon))p(B)$. Furthermore, due to the rank lemma (see e.g.~\cite{lau2011iterative}), we know that there is at most one fractional variable in the support of $x^*$. We now obtain $\I'$ by taking all items
	$i \in \I$ for which $x^*_i = 1$ and, additionally, the unique item $i' \in \I(B)$ for which $0< x^*_{i'} < 1$ if such an item exists. Using the argumentation above, we know that $p(\I') \geq (1-O_\epsilon(1))p(B)$. Furthermore, as item $i'$ has side length at most $\epsilon \ell_{\min}(B)$, if we round up its side length to the next larger power of $1+\epsilon$, its volume is still bounded by
	% 	its volume after rounding up its side lengths is at most
	\[(1+\epsilon)^d\epsilon^d\mathrm{VOL}_d(B).\]
	Since $\epsilon < 1/2^{d+2}$, we have that 
	\[(1+\epsilon)^d\epsilon^d < \epsilon.\]
	Therefore, the total volume of the rounded items in $\I'$ is at most \[\mathrm{VOL}_d(B)-2d \cdot \epsilon \mathrm{VOL}_d(B). \]
	This concludes the proof. 
\end{proof}

The next two lemmas show that
if a box $B$ is packed using NFDH, then we can split it
% we can split a box $B$ which is packed using NFDH
into $O_{\epsilon,d}(1)$ many $\N$-boxes if the side lengths of the items originally packed into $B$ fall within a certain range.

\begin{lemma}\label{lem:aux_rangebox_single}
	Let $B$ be a $d$-dimensional hypercuboid and $\hat{\I}$ be a set of items such that
	\begin{itemize}
		\item all items in $\hat{\I}$ can be packed into $B$ using $d$-dimensional NFDH and
		\item for all $i \in \hat{\I}$ it holds that $s_i  \in ((1+\epsilon)^\alpha,(1+\epsilon)^{\alpha+1}]$ for a fixed $\epsilon >0$ and $\alpha > 0$.
	\end{itemize}
	Then, $B$ can be transformed into at most $1/\epsilon^d$ many $\N$-boxes such that a subset of items $\hat{\I}' \subseteq \hat{\I}$ can be packed into these $\N$-boxes with $p(\hat{\I}') \geq (1-2\epsilon)^d p (\hat{\I})$.
\end{lemma}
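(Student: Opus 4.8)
The plan is to exploit the fact that all items in $\hat{\I}$ have side lengths within a single multiplicative window $((1+\epsilon)^\alpha, (1+\epsilon)^{\alpha+1}]$, so that they behave almost like identical hypercubes of side length $\hat{s} := (1+\epsilon)^{\alpha+1}$. First I would fix $\hat{s}$ as this common upper bound and observe that the NFDH packing of $\hat{\I}$ into $B$ organizes items into shelves (recursively in each dimension), where every shelf in dimension $d'$ has thickness between $(1+\epsilon)^\alpha$ and $\hat{s}$. The key geometric observation is that NFDH "wastes" at most one item-width of space along each dimension per shelf, so relative to a perfect grid packing of cells of side $\hat{s}$, the loss is a factor of roughly $(1 - \hat{s}/\ell_{d'}(B))$ in each dimension; but we are not promised that $\hat{s}$ is small compared to $\ell_{d'}(B)$, so this direct estimate is not available and instead we should carve $B$ up.

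The main step: partition $B$ into sub-boxes whose side lengths in each dimension are integer multiples of $\hat{s}$. Concretely, for each dimension $d'$, write $\ell_{d'}(B) = m_{d'}\hat{s} + \rho_{d'}$ with $m_{d'}\in\mathbb{N}_0$ and $0\le \rho_{d'} < \hat{s}$; we discard the "remainder slab" of thickness $\rho_{d'}$ in each dimension. This leaves a central box $B^*$ with side lengths $m_{d'}\hat{s}$, which we further cut into a grid of $\prod_{d'} m_{d'}$ cells each of side exactly $\hat{s}$. Each such cell can hold exactly one item of $\hat{\I}$ (since all such items have side $\le \hat{s}$), so $B^*$ together with this grid structure is an $\N$-box with $s_{\min} = (1+\epsilon)^\alpha$, $s_{\max} = \hat{s}$, and $n_{d'} = m_{d'}$. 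To control the profit lost in discarding the remainder slabs, I would argue that the ratio of usable volume is at least $\prod_{d'} (m_{d'}\hat{s}/\ell_{d'}(B))$; the subtlety is translating a volume ratio into a profit ratio, which works because in an NFDH packing of near-uniform items the profit is (up to the boundary waste already baked into NFDH) proportional to the volume used — so one re-runs NFDH inside $B^*$ on $\hat{\I}$ and keeps the packed subset $\hat{\I}'$. However, this single-box construction only gives $(1-2\epsilon)^d$ if each $m_{d'}\hat{s}$ is within a $(1-2\epsilon)$ factor of $\ell_{d'}(B)$, i.e. if $\hat{s} \le 2\epsilon\,\ell_{d'}(B)$ for all $d'$; when $\hat{s}$ is comparable to some $\ell_{d'}(B)$, we must instead split $B$ into up to $1/\epsilon$ slabs along that dimension (and recurse over all $d$ dimensions, giving the claimed $1/\epsilon^d$ bound), so that within each slab the "large" dimension is handled cleanly by an exact grid while the genuinely small dimensions absorb only a $(1-2\epsilon)$ loss each.

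I expect the main obstacle to be the bookkeeping that reconciles the two regimes — a dimension where $\hat{s}$ is tiny relative to $\ell_{d'}(B)$ (handled by one grid, $(1-2\epsilon)$ loss) versus one where $\hat{s}$ is not small (handled by splitting into $O(1/\epsilon)$ grid-aligned slabs, no further loss in that dimension) — and showing that the product of per-dimension losses is at most $(1-2\epsilon)^d$ while the total box count is at most $\prod_{d'=1}^d (1/\epsilon) = 1/\epsilon^d$. The other delicate point is justifying that a volume-fraction guarantee on the retained sub-box yields the same fraction of \emph{profit}; here I would invoke the standard NFDH analysis (as in Lemma~\ref{lem:NFDH} and \cite{harren2009approximation}) applied to the near-uniform item set, noting that since all items have nearly equal side lengths, any packing into a grid of unit cells of side $\hat{s}$ that fills a $(1-2\epsilon)$-fraction of the cells automatically captures a $(1-2\epsilon)$-fraction of the most profitable items when we greedily select by profit, which is at least a $(1-2\epsilon)$-fraction of $p(\hat{\I})$.
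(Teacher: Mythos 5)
Your plan follows essentially the same route as the paper's proof: the paper argues by induction on $d$ (with the dimensions sorted so that $\ell_d(B)$ is smallest) and in each step distinguishes whether the box is at least $1/\epsilon$ item-widths long -- in which case it becomes a single grid-structured $\N$-box with $n_{d'}=\lfloor \ell_{d'}(B)/(1+\epsilon)^{\alpha+1}\rfloor$ and one keeps the most profitable items, losing a factor $(1-2\epsilon)$ per dimension -- or shorter than $1/\epsilon$ item-widths, in which case it recurses on the fewer than $1/\epsilon$ NFDH shelves, each becoming a $(d-1)$-dimensional instance with $n_d=1$. Your wide/narrow dichotomy per dimension is exactly this recursion, unrolled.

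Two details in your write-up would not survive as stated, though both are repairable. First, the profit accounting: ``usable volume ratio'' and ``profit proportional to volume'' do not yield the claimed bound -- profits are arbitrary, and comparing cell volume to box volume costs an extra $(1+\epsilon)^{-d}$ because the grid cells have side $\hat{s}=(1+\epsilon)^{\alpha+1}$ while items are only guaranteed side $>(1+\epsilon)^{\alpha}$. The correct argument is a counting argument: since each item has side $>(1+\epsilon)^{\alpha}$, at most $\prod_{d'}\ell_{d'}(B)/(1+\epsilon)^{\alpha}$ items fit in $B$, while the grid offers $\prod_{d'}\lfloor\ell_{d'}(B)/(1+\epsilon)^{\alpha+1}\rfloor$ cells; under the wideness condition $\ell_{d'}(B)\ge(1+\epsilon)^{\alpha}/\epsilon$ each factor satisfies $\lfloor\ell_{d'}(B)/(1+\epsilon)^{\alpha+1}\rfloor\ge(1-2\epsilon)\,\ell_{d'}(B)/(1+\epsilon)^{\alpha}$, so selecting the most profitable $(1-2\epsilon)^d$-fraction of the items (one per cell) retains profit $(1-2\epsilon)^d p(\hat{\I})$. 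With your threshold $\hat{s}\le 2\epsilon\,\ell_{d'}(B)$ and volume ratios you only reach roughly $\bigl((1-2\epsilon)/(1+\epsilon)\bigr)^d$, which is weaker than the stated bound. Second, in a narrow dimension the slabs must be the NFDH shelves (one layer of items per slab, $n_{d'}=1$), not cuts at integer multiples of $\hat{s}$: shelf heights lie anywhere in $((1+\epsilon)^{\alpha},(1+\epsilon)^{\alpha+1}]$, so ``grid-aligned'' cuts would slice through items. Since a narrow dimension has fewer than $1/\epsilon$ shelves, the box count of at most $1/\epsilon^d$ is unaffected by this fix.
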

\begin{proof}
	We prove this statement in an inductive manner. First, let $d=1$ and consider the following cases:
	\begin{itemize}
		\item \textbf{Case 1:} Suppose $\ell_1(B)/(1+\epsilon)^\alpha \geq 1/\epsilon$. The number of items packed into $B$ is at most 
		\[\frac{\ell_1(B)}{(1+\epsilon)^\alpha}.\]
		We transform $B$ into a $\N$-box $B_1$ with $s_{\max}(B)= (1+\epsilon)^{\alpha + 1}$ and $n_1= \left\lfloor \frac{\ell_1(B)}{s_{\max}(\hat{\I})} \right\rfloor$. The number of items we can pack into $B_1$ is
		\[n_1 \geq \frac{\ell_1(B)}{(1+\epsilon)^{\alpha+1}} -1 \geq \frac{\ell_1(B)}{(1+\epsilon)^{\alpha}}\left(1-\frac{\epsilon}{1+\epsilon}-\epsilon\right) \geq (1-2\epsilon)\frac{\ell_1(B)}{(1+\epsilon)^{\alpha}}.\]
		Thus, we can pack all but a $2\epsilon$-fraction of the items in $\hat{\I}$ into $B_1$ which means that by packing the most profitable items into $B_1$ we lose a profit of at most $2\epsilon p(\hat{\I})$.
		\item \textbf{Case 2:} Suppose $\ell_1(B)/(1+\epsilon)^\alpha < 1/\epsilon$. Then, the number of items packed into $B$ is $|\hat{\I}| < 1/\epsilon$. We transform $B$ into $|\hat{\I}|$ many $\N$-boxes where each item is packed into its own $\N$-box. Since, we do not lose any items this way we also do not lose any profit.
	\end{itemize}
	Now, assume the statement is true for $d-1$ and let $B$ be a $d$-dimensional hypercuboid. Again, we may distinguish two cases.
	\begin{itemize}
		\item \textbf{Case 1:} Suppose $\ell_d(B)/(1+\epsilon)^\alpha \geq 1/\epsilon$. As we use the sorted variant of NFDH this implies that $\ell_{d'}(B)/(1+\epsilon)^\alpha \geq 1/\epsilon$ for all $d'=1,\dots,d$. We now transform $B$ into a single $\N$-box $B'$ with $s_{\max}(B') = s_{\max}(\hat{\I})$ and $n_{d'}= \lfloor \ell_{d'}(B)/s_{\max}(B') \rfloor$. Observe that for $d'=1,\dots,d$ we have
		\[n_{d'} \geq \frac{\ell_{d'}(B)}{(1+\epsilon)^{\alpha+1}} -1 \geq \frac{\ell_{d'}(B)}{(1+\epsilon)^{\alpha}}\left(1-\frac{\epsilon}{1+\epsilon}-\epsilon\right) \geq (1-2\epsilon)\frac{\ell_{d'}(B)}{(1+\epsilon)^{\alpha}}.\]
		Therefore, the number of items we can pack into $B'$ is at least 
		\[\prod_{d'=1}^d n_{d'} \geq (1-2\epsilon)^d\prod_{d'=1}^d \frac{\ell_{d'}(B)}{(1+\epsilon)^{\alpha}}.\]
		The number of items packed into $B$ is at most
		\[\prod_{d'=1}^d \frac{\ell_{d'}(B)}{(1+\epsilon)^{\alpha}}.\]
		Therefore, we keep a $(1-2\epsilon)^d$-fraction of the items and by packing the items into $B'$ in non-increasing order of profits, we have $p(\hat{\I}') \geq (1-2\epsilon)^dp(\hat{\I})$.
		\item \textbf{Case 2:} Suppose $\ell_d(B)/(1+\epsilon)^\alpha < 1/\epsilon$. Then, we know that NFDH uses less than $1/\epsilon$ shelves in the $d$-th dimension of the packing of $B$. We may consider each of these shelves as a $d-1$ dimensional box, which, by induction can be transformed into at most $1/\epsilon^{d-1}$ many $d-1$-dimensional $\N$-boxes. By setting $n_d = 1$ for each of these, we create at most $1/\epsilon^d$ many $d$-dimensional $\N$-boxes. The guarantee of the remaining profit follows from the fact that the guarantee holds for each of the $\N$-boxes individually.
	\end{itemize}
	This concludes the proof.
\end{proof}
We now prove a more general statement. Consider a box that is packed using NFDH such that the side length of the \emph{smallest} item packed into it is at least an $\epsilon$-fraction of the side length of the \emph{largest} item packed into it. We show that it can be transformed into $O_{\epsilon,d}(1)$ many $\N$-boxes.
%
% In particular, we prove that a box which is packed using NFDH such that the side lengths of the smallest item packed into it is at least an $\epsilon$ of the side length of the largest item packed into it can be transformed into $O_{\epsilon,d}(1)$ many $\N$-boxes.
\begin{lemma}\label{lem:aux_rangebox_multi}
	Let $B$ be a $d$-dimensional hypercuboid and $\hat{\I}$ a set of items such that $\hat{\I}$ can be packed into $B$ using NFDH and $s_{\min}(\hat{\I}) \geq \epsilon s_{\max}(\hat{\I})$. Then, $B$ can be transformed into at most $\left(1/\epsilon \right)^{d+2}$ many $\N$-boxes such that a subset $\hat{\I}' \subseteq \hat{\I}$ of items can be packed into them with $p(\hat{\I}') \geq (1-O(\epsilon))p(\hat{\I})$.
\end{lemma}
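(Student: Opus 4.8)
The plan is to reduce the statement to the single size-class case that is already handled by Lemma~\ref{lem:aux_rangebox_single}. First I would partition $\hat{\I}$ into size classes $\Q_\alpha := \{i\in\hat{\I}: s_i\in((1+\epsilon)^\alpha,(1+\epsilon)^{\alpha+1}]\}$. Since every item satisfies $\epsilon\, s_{\max}(\hat{\I})\le s_i\le s_{\max}(\hat{\I})$, only $K:=O(\log_{1+\epsilon}(1/\epsilon))$ of these classes are non-empty, and one checks that $K\le 1/\epsilon^2$ in the relevant range $\epsilon<1/2^{d+2}$. The classes are intervals of sizes, and the $d$-dimensional NFDH procedure that packs $\hat{\I}$ into $B$ processes them in decreasing order of size, because every item of $\Q_\alpha$ has side $>(1+\epsilon)^\alpha\ge$ the side of every item of $\Q_{\alpha'}$ whenever $\alpha>\alpha'$.

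Next I would carve $B$, together with its NFDH packing of $\hat{\I}$, into pairwise disjoint sub-boxes $B_1,\dots,B_K$ that all lie non-overlappingly inside $B$, one per non-empty class, such that $B_\alpha$ holds (via NFDH) a subset $\Q_\alpha'\subseteq\Q_\alpha$ with $p(\Q_\alpha')\ge(1-O(\epsilon))\,p(\Q_\alpha)$. I would obtain this carving from the recursive ``guillotine'' structure of the NFDH packing: its dimension-$d$ shelves have non-increasing heights, so grouping the shelves by the power-of-$(1+\epsilon)$ window containing their height yields at most $K+1$ contiguous slabs of $B$, and repeating this inside each shelf along dimensions $d-1,d-2,\dots,1$ isolates each size class into a box-shaped region $B_\alpha\subseteq B$. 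Finally I would apply Lemma~\ref{lem:aux_rangebox_single} to each pair $(B_\alpha,\Q_\alpha')$, which transforms $B_\alpha$ into at most $1/\epsilon^d$ many $\N$-boxes while keeping a $(1-2\epsilon)^d$ fraction of $p(\Q_\alpha')$.

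For the bookkeeping: the total number of $\N$-boxes is at most $K\cdot(1/\epsilon)^d\le(1/\epsilon)^2\cdot(1/\epsilon)^d=(1/\epsilon)^{d+2}$, as required, and the retained profit is at least $(1-2\epsilon)^d\sum_\alpha p(\Q_\alpha')\ge(1-2\epsilon)^d(1-O(\epsilon))\sum_\alpha p(\Q_\alpha)=(1-O(\epsilon))\,p(\hat{\I})$, since the class partition is lossless and $d$ is a constant.

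The step I expect to be the main obstacle is the carving of $B$ into the per-class sub-boxes $B_\alpha$ subject to the three simultaneous requirements: (a) the $B_\alpha$ fit non-overlappingly inside $B$, (b) their number stays bounded by $O_{\epsilon,d}(1)$, and (c) only an $O(\epsilon)$-fraction of the profit is lost. The difficulty is that a single NFDH shelf can contain items of many distinct size classes — its first item fixes the shelf height, but the shelf may then be filled with much smaller items — so the slabs obtained by grouping shelves by height are not single-class and the class-$\alpha$ region carved out of a slab may be geometrically awkward; moreover, forcing each class-$\alpha$ region to have integral grid dimensions introduces an additive error of order $(1+\epsilon)^{\alpha+1}$ per class, whose total can exceed the available room if handled naively. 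I would overcome both issues by first discarding, in each class and at each of the $d$ levels of the guillotine recursion, a $\Theta(d\epsilon)$-fraction of the least profitable items (equivalently, shelves): this creates just enough volume slack to absorb the rounding losses, lets one route the items of $\Q_\alpha$ that are ``oversized for their slab'' into the region reserved for class $\alpha$, and costs only an $O(\epsilon)$-fraction of $p(\hat{\I})$ overall, leaving each resulting single-class region in a form to which Lemma~\ref{lem:aux_rangebox_single} applies.
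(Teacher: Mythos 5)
Your plan follows essentially the same route as the paper: partition $\hat{\I}$ into $O_\epsilon(1)$ size classes $((1+\epsilon)^\alpha,(1+\epsilon)^{\alpha+1}]$, exploit that NFDH places items in non-increasing size order so that each class occupies a contiguous portion of the shelf structure, carve out single-class regions, and invoke Lemma~\ref{lem:aux_rangebox_single} on each. The paper does exactly this, by induction on $d$: shelves whose items all lie in one class form a sub-box handled by Lemma~\ref{lem:aux_rangebox_single}, and each ``mixed'' shelf (of which there is at most one per class, by sortedness) is treated as a $(d-1)$-dimensional box to which the multi-class lemma is applied inductively, then extended with $n_d=1$; your ``repeat inside each shelf along dimensions $d-1,\dots,1$'' is an unrolled version of the same induction.

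Where you diverge is in the paragraph on the ``main obstacle,'' and there your proposed fix is both unnecessary and unjustified. The worry that forcing integral grid dimensions per class produces additive errors that ``can exceed the available room'' is a non-issue: Lemma~\ref{lem:aux_rangebox_single} already absorbs the rounding internally (it takes $n_{d'}=\lfloor \ell_{d'}/s_{\max}\rfloor$, losing only a $(1-2\epsilon)^d$ fraction of the items when the region is large relative to the item size, and giving each item its own box otherwise), so the $\N$-boxes it produces fit inside the carved region and no extra space is needed. Consequently there is nothing to ``route'' elsewhere, and the discard-a-$\Theta(d\epsilon)$-fraction-to-create-volume-slack mechanism you sketch is neither needed nor proved to work; the profit accounting is simply multiplicative per single-class region and sums to a $1-O(\epsilon)$ factor, as in the paper. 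A second, minor point: your count of ``one box-shaped region per class, hence $K\cdot(1/\epsilon)^d$'' is optimistic, since a class can straddle its pure slab plus the mixed shelf at each level of the dimension recursion, so each class contributes $O(d)$ (or $2^{O(d)}$) regions rather than one; this only inflates the constant and stays within $O_{\epsilon,d}(1)$ (the paper's own accounting is similarly loose), but as stated your arithmetic does not literally give the $(1/\epsilon)^{d+2}$ bound.
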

\begin{proof}
	Again we prove this statement by induction. First, consider the one-dimensional setting ($d=1$). We group the item sizes into intervals $((1+\epsilon)^\alpha,(1+\epsilon)^{\alpha+1}]$ with $\alpha = \alpha_1,\dots, \alpha_2$. Observe that $\alpha_1 =\lfloor \log_{1+\epsilon}(s_{\min}(\I))\rfloor$ and $\alpha_2 = \lceil \log_{1+\epsilon}(s_{\min}(\I))\rceil$ and, therefore, there are at most $O(1/\epsilon)$ such intervals. Next, we split $B$ into sub-boxes such that each sub-box contains only items from one of these intervals. For each of these sub-boxes we can apply Lemma~\ref{lem:aux_rangebox_single}. This gives a total of at most $O(1/\epsilon^2)$ many $\N$-boxes. Since in the transformation underlying Lemma~\ref{lem:aux_rangebox_single} we lose at most an $\epsilon$-fraction of the profit in each sub-box of $B$, in total we lose at most an $\epsilon$-fraction of the profit packed into $B$.
	
	Suppose now that the statement is true for $d-1$-dimensional hypercuboids and consider a $d$-dimensional hypercuboid $B$. We proceed in a similar fashion as above. Again, we group item sizes into intervals $((1+\epsilon)^\alpha,(1+\epsilon)^{\alpha+1}]$ with $\alpha = \alpha_1,\dots, \alpha_2$. We split $B$ into several sub-boxes as follows. For each $\alpha \in \{\alpha_1 ,...,\alpha_2 \}$, we consider all shelves of $B$ in which every item size is in the interval $((1+\epsilon)^\alpha,(1+\epsilon)^{\alpha+1}]$ and call the sub-box formed by these shelves $B(\alpha)$. Observe that there might be shelves of $B$ which contain items from multiple intervals. Each such shelf forms its own sub-box, we denote by $C(\alpha)$ the sub-box of this type in which the smallest item contained in it has a side length in $((1+\epsilon)^\alpha,(1+\epsilon)^{\alpha+1}]$. The number of sub-boxes we find in this way is at most $O(1/\epsilon)$. We now argue how to treat each of these sub-boxes.
	
	First, consider a sub-box $B(\alpha)$ and let $\I(\alpha)$ be the set of items packed into $B(\alpha)$. By Lemma~\ref{lem:aux_rangebox_single} we know that can be split into at most $O(1/\epsilon^d)$ many $\N$-boxes containing a subset $\I'(\alpha) \subseteq \I(\alpha)$ of items with a total profit of at least $(1-2\epsilon)^dp(\I(\alpha))$. 
	
	Next, consider a sub-box $C(\alpha)$. Since $C(\alpha)$ contains only one shelf in dimension $d$, we can use an inductive argument here by applying the fact that $C(\alpha)$ in $d-1$ dimensions can be split into $O(1/\epsilon^{d+1})$ many $\N$-boxes such that a subset $\I'(C(\alpha)) \subseteq \I(C(\alpha))$ of items can be packed into them with $p(\I'(C(\alpha))) \geq (1-O(\epsilon))p(\I(C(\alpha)))$. These boxes can be transformed into $\N$-boxes in $d$-dimensions by setting $n_d=1$ for each of them.
	
	Finally, since the profit guarantee holds for each sub-box separately, we know that the total profit packed into these sub-boxes is at least $(1-O(\epsilon))p(\I)$ and the total number of $\N$-boxes is at most $O(1/\epsilon^{d+2})$.
\end{proof}

Next, we show that if $d=1$, any box which is packed using NFDH can be transformed into at most $O_\epsilon(1)$ many $\N$- and $V$- boxes.
\begin{lemma}\label{lem:1DCubes_transform}
	Let $B_0$ be a $1$-dimensional box and let $\hat{\I}$ be the set of items packed into $B_0$ using NFDH and $\epsilon < 1/2^{3}$ be given. Then, $B_0$ can be split into $O(1/\epsilon)$ many $\S$- and $\N$-boxes such that a subset $\hat{\I}' \subseteq \hat{\I}$ can be packed into them with $p(\hat{\I}') \geq (1-O(\epsilon))p(\hat{\I})$.
\end{lemma}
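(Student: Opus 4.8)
The plan is to separate $\hat{\I}$ into a bounded number of \emph{relatively large} items, each of which gets its own $\N$-box, and a remainder of \emph{relatively small} items, all of which are placed into a single $\S$-box obtained by invoking Lemma~\ref{lem:vstarbox_selection}. Write $L:=\ell_1(B_0)$ and recall that the $1$-dimensional NFDH packing places the items of $\hat{\I}$ in non-increasing order of side length along the interval $[0,L]$, so in particular $\sum_{i\in\hat{\I}}s_i\le L$.

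First I would let $\hat{\I}_{\mathrm{big}}$ consist of the $\lceil 1/\epsilon\rceil$ items of largest side length in $\hat{\I}$ (or all of $\hat{\I}$, if $|\hat{\I}|\le\lceil 1/\epsilon\rceil$). For each $i\in\hat{\I}_{\mathrm{big}}$ I reserve a sub-interval of $B_0$ of length exactly $s_i$; this is trivially an $\N$-box (set $n_1=1$ and $s_{\max}=s_{\min}=s_i$, so $|\{i\}|=1\le n_1$). This uses $O(1/\epsilon)$ boxes and total length $\Sigma_{\mathrm{big}}:=\sum_{i\in\hat{\I}_{\mathrm{big}}}s_i\le L$. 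The key consequence of taking the $\lceil 1/\epsilon\rceil$ largest items is a pigeonhole bound: every remaining item $i\in\hat{\I}\setminus\hat{\I}_{\mathrm{big}}$ satisfies $s_i\le \Sigma_{\mathrm{big}}/\lceil 1/\epsilon\rceil\le \epsilon\,\Sigma_{\mathrm{big}}$, and in particular $s_i\le\epsilon L$.

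Next I would place all remaining items into a single $\S$-box $B'$, namely the leftover interval of length $L-\Sigma_{\mathrm{big}}$; this interval accommodates all remaining items since their total size is at most $L-\Sigma_{\mathrm{big}}$. To make $B'$ a valid $\S$-box I must ensure that each remaining item has side length at most $\epsilon\,\ell_{\min}(B')=\epsilon(L-\Sigma_{\mathrm{big}})$ and that the rounded volume fits with the required slack. The volume-slack condition $\sum\lceil s_i\rceil_{1+\epsilon}\le(1-2\epsilon)(L-\Sigma_{\mathrm{big}})$ is exactly what Lemma~\ref{lem:vstarbox_selection} provides, at the cost of only a $(1-O(\epsilon))$ factor in the profit of the remaining items, so I would apply it once to $B'$. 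The smallness condition is handled by a short case distinction on $\Sigma_{\mathrm{big}}$: if $\Sigma_{\mathrm{big}}\le L/2$, the pigeonhole bound gives $s_i\le\epsilon\Sigma_{\mathrm{big}}\le\epsilon(L-\Sigma_{\mathrm{big}})$ and we are done; if $\Sigma_{\mathrm{big}}>L/2$, the leftover interval is too short, and I would instead recurse on the sub-box of length $L-\Sigma_{\mathrm{big}}$ containing only the remaining (all-small) items, noting that each such recursion step at least halves the current box length while contributing only $O(1/\epsilon)$ further boxes and losing only a $(1-O(\epsilon))$ profit factor.

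Finally I would combine all boxes: the total profit kept is $(1-O(\epsilon))p(\hat{\I})$, since the only losses come from the applications of Lemma~\ref{lem:vstarbox_selection} and the at most one fractional item discarded in each such application, and the number of $\S$- and $\N$-boxes is $O(1/\epsilon)$. The main obstacle is precisely the bookkeeping in the third paragraph: forcing ``each item is small relative to its box'' to hold simultaneously with ``all boxes fit inside $B_0$'' creates a circular dependency between the large/small threshold and the leftover length, which is what necessitates the case split (and, in the recursive branch, a careful argument that the recursion stays shallow enough — using that every extracted item has size at most $\epsilon$ times the current box length — so that the overall box count remains $O(1/\epsilon)$). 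The pigeonhole bound $s_i\le\epsilon\Sigma_{\mathrm{big}}$ is the crucial ingredient that breaks the circularity.
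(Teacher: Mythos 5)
There is a genuine gap in the branch $\Sigma_{\mathrm{big}}>L/2$: your recursion is not shallow, and the justification you sketch for its shallowness does not hold. The pigeonhole bound only gives that the leftover items are small relative to the \emph{previous} box (size at most $\epsilon\Sigma_{\mathrm{big}}\le\epsilon L$), not relative to the new box of length $L-\Sigma_{\mathrm{big}}$ — that is exactly why you recurse — so "every extracted item has size at most $\epsilon$ times the current box length" is false from the second level on and yields no depth bound. Concretely, with $\epsilon=1/10$ take $10$ items of size $0.06L$ (total $0.6L$), then $10$ items of size $0.024L$ (total $0.24L>0.2L$), then $10$ items of size $0.0096L$, and so on geometrically: at every level the top $\lceil 1/\epsilon\rceil$ items occupy more than half of the current box, the total size stays below $L$, and the recursion runs for $\Theta(\log N)$ levels. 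Each level contributes $\lceil 1/\epsilon\rceil$ $\N$-boxes, so you end up with $\Theta(\log N/\epsilon)$ boxes rather than $O(1/\epsilon)$ (and, if you really lose a $(1-O(\epsilon))$ factor per level as written, the profit guarantee degrades to $(1-O(\epsilon))^{\Theta(\log N)}$ as well). A box count depending on $N$ breaks not just the lemma statement but the downstream use in Lemma~\ref{lem:struc_hypercubes}, where the total number of boxes must be a constant $C_{\mathrm{boxes}}(\epsilon,d)$ so that all box parameters can be guessed.

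The missing idea is a \emph{profit-based} termination rule rather than a purely geometric one. The paper peels off, from the current box of length $\ell$, the group of all items of size more than $\epsilon\ell$; if this group has profit at most $\epsilon\, p(\hat{\I})$ it is simply discarded and the process stops, and otherwise the group is kept and the process continues on the sub-box holding the remaining (now relatively small) items. Since every kept group carries profit strictly more than $\epsilon\, p(\hat{\I})$, there can be at most $1/\epsilon$ iterations regardless of $N$; each kept group has item sizes within a factor $1/\epsilon$ of each other and is turned into $O_\epsilon(1)$ $\N$-boxes via Lemma~\ref{lem:aux_rangebox_multi}, and the final box, in which all items are small relative to its length, is turned into an $\S$-box via Lemma~\ref{lem:vstarbox_selection}, exactly as in your final step. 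If you add such a low-profit stopping rule to your recursion (discard the current batch of large items and stop as soon as its profit is at most $\epsilon\, p(\hat{\I})$), your construction can be repaired; without it, the box bound cannot be salvaged.
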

\begin{proof}
	Let $\ell_1(B_0)$ be the side length of $B_0$. % (or in the 1-dimensional case the capacity of the knapsack).
	Let $G_1$ be the group of items of size strictly larger than $\epsilon \ell_1(B_0)$. If $p(G_1) \leq \epsilon p(\hat{\I})$, let $\hat{\I}' := \hat{\I} \setminus G_1$. Otherwise, let $B_1$ be the sub-box of $B_0$ containing all items of size at most $\epsilon \ell_1(B_0)$. We now repeat the procedure with $G_2$ being all items packed into $B_1$ of size strictly larger than $\epsilon \ell_1(B_1)$. Again, if $p(G_2) \leq \epsilon p(\hat{\I})$, we are done. Otherwise, we continue. This gives us boxes $B_0,\dots,B_g$ and $G_1,\dots G_{g},G_{g+1}$ such that:
	\begin{itemize}
		\item All items packed into $B_g$ have side lengths at most $\epsilon \ell_{\min}(B_g)$.
		\item For each $G_\gamma$ with $\gamma = 1,\dots,g$, we have that the profit of items packed into $G_\gamma$ is strictly more than $\epsilon p(B_0)$ and the side lengths of items packed into $G_\gamma$ are at most $\ell_{\min}(B_{\gamma-1})$ and at least $\epsilon \ell_{\min}(B_{\gamma-1})$.
		\item For $G_{g+1}$ we have that $p(G_{g+1})\leq \epsilon p(\hat{\I})$.
	\end{itemize}
	The second property implies that $g\leq 1/\epsilon$. We discard all items contained in $G_{g+1}$.To transform $B_g$ into a $\S$-box we apply Lemma~\ref{lem:vstarbox_selection}. For each $\gamma = 1,\dots,g$, we apply Lemma~\ref{lem:aux_rangebox_multi} to transform $G_\gamma$ into $1/\epsilon$ many $\N$-boxes.
	%	\awr{Can we say what $\hat{\I}}$ finally is? We omit one of the constructed sets, no?}
\end{proof}

We now use Lemmas~\ref{lem:vstarbox_selection}~-~\ref{lem:1DCubes_transform}, to show that for any dimension $d$, a $d$-dimensional box which is packed using NFDH can be split into at most $O_{\epsilon,d}(1)$ many $\N$- and $\S$-boxes.
\begin{lemma}\label{lem:Cubes_transform}
Let $B_0$ be a $d$-dimensional box and let $\hat{\I}$ be the set of items packed into $B_0$ using NFDH and $\epsilon < 1/2^{d+2}$ be given. Then, $B_0$ can be split into $O(1/\epsilon^{d+2})$ many $\S$- and $\N$-boxes such that a subset $\hat{\I}' \subseteq \hat{\I}$ can be packed into them with $p(\hat{\I} ') \geq (1-O(\epsilon))p(\hat{\I})$.
\end{lemma}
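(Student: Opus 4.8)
The plan is to prove Lemma~\ref{lem:Cubes_transform} by induction on $d$, mirroring the structure of the auxiliary lemmas and using Lemma~\ref{lem:1DCubes_transform} as the base case ($d=1$). So suppose $d \geq 2$ and assume the statement holds for dimension $d-1$. Given a $d$-dimensional box $B_0$ packed by NFDH with item set $\hat{\I}$, I would first carry out exactly the same ``peeling'' argument as in Lemma~\ref{lem:1DCubes_transform}: let $G_1$ be the items of side length strictly larger than $\epsilon\ell_{\min}(B_0)$; if $p(G_1)\le\epsilon p(\hat{\I})$ discard $G_1$ and stop, otherwise let $B_1$ be the sub-box of $B_0$ containing the items of side length at most $\epsilon\ell_{\min}(B_0)$, and repeat with $B_1$ in place of $B_0$. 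Since each group $G_\gamma$ that we keep carries profit strictly more than $\epsilon p(\hat{\I})$, this terminates after $g\le 1/\epsilon$ steps, producing boxes $B_0,\dots,B_g$ and groups $G_1,\dots,G_g,G_{g+1}$ with: all items in $B_g$ have side length at most $\epsilon\ell_{\min}(B_g)$; for each $\gamma\le g$, the items of $G_\gamma$ have side lengths in $(\epsilon\ell_{\min}(B_{\gamma-1}),\ell_{\min}(B_{\gamma-1})]$; and $p(G_{g+1})\le\epsilon p(\hat{\I})$.

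Next I would handle each piece. We discard $G_{g+1}$ (losing at most $\epsilon p(\hat\I)$). The innermost box $B_g$ contains only items with $s_i\le\epsilon\ell_{\min}(B_g)$, so by Lemma~\ref{lem:vstarbox_selection} we can select a subset of profit at least $(1-O(\epsilon))p(B_g)$ for which $B_g$ is a valid $\S$-box; this is a single $\S$-box. For each group $G_\gamma$ with $\gamma\le g$, the ratio between the smallest and largest item side length is at least $\epsilon$ (both lie in $(\epsilon\ell_{\min}(B_{\gamma-1}),\ell_{\min}(B_{\gamma-1})]$), and the items of $G_\gamma$ sit inside some sub-box of $B_{\gamma-1}$ that is packed by NFDH, so Lemma~\ref{lem:aux_rangebox_multi} transforms this sub-box into at most $(1/\epsilon)^{d+2}$ many $\N$-boxes holding a subset of profit at least $(1-O(\epsilon))p(G_\gamma)$. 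Since $g\le 1/\epsilon$, the total number of boxes produced is $1 + g\cdot(1/\epsilon)^{d+2} \le 1 + (1/\epsilon)^{d+3} = O(1/\epsilon^{d+2})$ up to the implied constant, which matches the claimed bound $O(1/\epsilon^{d+2})$ after adjusting constants.

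For the profit guarantee, I would observe that the profit losses are incurred disjointly: we lose $O(\epsilon)p(\hat\I)$ from discarding $G_{g+1}$, and within each retained piece ($B_g$ and each $G_\gamma$) we lose at most an $O(\epsilon)$-fraction of the profit packed into that piece. Since $\{G_1,\dots,G_g,G_{g+1}\}$ together with the items of $B_g$ partitions $\hat\I$, summing gives a total retained profit of at least $(1-O(\epsilon))p(\hat\I)$; collecting the selected subsets across all pieces gives the desired $\hat\I'$.

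The main obstacle I anticipate is bookkeeping rather than conceptual: making sure the peeling argument is applied with $\ell_{\min}$ (not $\ell_1$) throughout so that the NFDH-packed sub-boxes genuinely satisfy the hypotheses of Lemmas~\ref{lem:vstarbox_selection} and~\ref{lem:aux_rangebox_multi}, and verifying that when we carve $B_{\gamma-1}$ into the sub-box containing $G_\gamma$ and the sub-box containing the smaller items, each resulting sub-box is still packed by NFDH in its own right (this is where the ``sorted'' convention on the NFDH shelves, $\ell_d(B)\le\dots\le\ell_1(B)$, is used, exactly as in the proof of Lemma~\ref{lem:1DCubes_transform}). A minor point is that Lemma~\ref{lem:aux_rangebox_multi} is stated with the slightly weaker smallness condition $s_i\le 2\epsilon\ell_{\min}$ implicitly in its NFDH input, but here we only need the ratio bound $s_{\min}(G_\gamma)\ge\epsilon\, s_{\max}(G_\gamma)$, which the peeling construction supplies directly, so no rescaling of $\epsilon$ is needed beyond absorbing constants into the $O(\cdot)$.
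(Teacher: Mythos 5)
There is a genuine gap, and it sits exactly at the point you dismiss as ``bookkeeping.'' Your peeling step partitions the \emph{items} of $B_0$ by side length ($G_1$ = items with $s_i>\epsilon\ell_{\min}(B_0)$, $B_1$ = the rest) and then treats each part as a hypercuboid sub-box that is itself NFDH-packed, so that Lemma~\ref{lem:aux_rangebox_multi} applies to $G_\gamma$ and Lemma~\ref{lem:vstarbox_selection} to $B_g$. For $d=1$ this is fine, because the non-increasing packing order makes the large items a contiguous prefix of the box. For $d\ge 2$ it fails: the only cuts that respect the geometry of an NFDH packing are cuts along shelf boundaries, and a shelf whose height exceeds $\epsilon\ell_{\min}(B_{\gamma-1})$ (its height is set by its \emph{largest} item) generally also contains items far below that threshold, since the $(d-1)$-dimensional NFDH inside the shelf keeps packing smaller and smaller items. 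So either your $G_\gamma$ is not a sub-box at all (item-based split), or, if you split by shelves, the size-range guarantee $s_i\in(\epsilon\ell_{\min}(B_{\gamma-1}),\ell_{\min}(B_{\gamma-1})]$ needed for Lemma~\ref{lem:aux_rangebox_multi} is violated by the items in the top shelf of $G_\gamma$. Those small items may carry almost all of the profit, so they can be neither discarded nor absorbed by ``adjusting constants,'' and the sorted-dimension convention does not help.

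This mixed top shelf is precisely what the paper's proof is built around: each $G_\gamma$ is further decomposed, dimension by dimension, into at most $d+1$ sub-boxes, Lemma~\ref{lem:aux_rangebox_multi} is applied only to the pieces with a bounded size range, and the final piece (items of side length below $\epsilon^2\ell_{\min}(B_{\gamma-1})$ in the last shelf) is handled by the induction hypothesis in $d-1$ dimensions, which produces both $\N$- and $\S$-boxes; the resulting $(d-1)$-dimensional $\S$-boxes then require a separate volume argument to verify they remain valid $\S$-boxes after re-attaching dimension~$d$. Tellingly, your proposal never actually uses the induction hypothesis (only Lemmas~\ref{lem:vstarbox_selection} and~\ref{lem:aux_rangebox_multi}), which signals that the argument has collapsed to the one-dimensional case rather than generalized it. A secondary, more minor issue: applying Lemma~\ref{lem:aux_rangebox_multi} once per group already gives $g\cdot(1/\epsilon)^{d+2}\le(1/\epsilon)^{d+3}$ boxes, which is not $O(1/\epsilon^{d+2})$ ``after adjusting constants''; the count needs a more careful accounting, not a constant-factor fix.
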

\begin{proof}
We prove this statement by induction. The base case of $d=1$ is given by Lemma~\ref{lem:1DCubes_transform}. Now, suppose that the statement is true for $d-1$ dimensions. To prove it for $d$ dimensions, we start by splitting $B_0$ into a constant number of smaller boxes such that these can later be transformed into $\S$-boxes and $\N$-boxes. This splitting procedure is visualized in Figure~\ref{fig:sq2D_splitgamma} (for $d=2$) and works as follows. Let $L_1,L_2,\dots,L_m$ be the shelves of the $d$-dimensional NFDH packing of $B_0$ and denote by $h(L_j)$ the height of shelf $L_j$ defined by the side length of the largest item packed into shelf $L_j$. By definition of NFDH, the shelves are sorted in non-increasing order of heights. Let $B_1$ be the collection of shelves of $B_0$ of height at most $\epsilon \ell_{\min}(B_0)$ and $G_1$ be the shelves of height strictly larger than $\epsilon \ell_{\min}(B_0)$. If the profit of items packed into $G_1$ is at most $\epsilon p(B_0)$, we delete all items packed into $G_1$ such that all remaining items packed into $B_0$ have side lengths at most $\epsilon \ell_{\min}(B_0)$ and proceed by transforming $B_0$ into a $\S$-box (see below). If, however, the profit of items packed into $G_1$ is more than $\epsilon p(B_0)$, we split $B_0$ into $B_1$ and $G_1$ and repeat the procedure with $B_1$. We continue the procedure (including the final deletion step) until we have a collection of boxes $B_0,B_1,\dots,B_g, G_1,\dots,G_g, G_{g+1}$ with the following properties:
\begin{itemize}
	\item All items packed into $B_g$ have side lengths at most $\epsilon \ell_{\min}(B_g)$.
	\item For each $G_\gamma$ with $\gamma = 1,\dots,g$, we have that the profit of items packed into $G_\gamma$ is strictly more than $\epsilon p(B_0)$, the height of every shelf of $G_\gamma$ is at least $\epsilon \ell_{\min}(B_{\gamma-1})$ and the side lengths of items packed into $G_\gamma$ are at most $\ell_{\min}(B_{\gamma-1})$ and at least $\epsilon \ell_{\min}(B_{\gamma-1})$ except for some items packed into the top shelf.
	\item For $G_{g+1}$ we have that $p(G_{g+1})\leq \epsilon p(\hat{\I})$.
\end{itemize}
In the following, we use the boxes $B_g, G_1,\dots,G_g$ to construct $\N$- and $\S$-boxes, while boxes $B_1,\dots,B_{g-1}$ are crucial for the analysis and all items in $G_{g+1}$ are discarded. The second property above implies that $g \leq 1/\epsilon$ and since we delete only items in the final iteration of this procedure we lose a total profit of at most $\epsilon p(B_0)$.
% 	\begin{equation}\label{eq:cubes_pl1}
	% 	\end{equation}
Therefore, it remains to show how to transform $B_g$ into a $\S$-box and how to transform each $G_\gamma$ into a constant number of $\N$- and $\S$-boxes. First, consider $B_g$ and let $\I(B_g)$ be the remaining items packed into $B_g$ after deleting all items in $G_{g+1}$.
To finalize the transformation of $B_g$ into a $\S$-box, we use Lemma~\ref{lem:vstarbox_selection} to find a subset of items $\I'(B_g) \subseteq \I(B_g)$ such that $\sum_{i \in \I'(B_g)} \lceil s_i \rceil^d_{1+\epsilon} \leq \mathrm{VOL}_d(B_g)-2\cdot d \cdot \epsilon \mathrm{VOL}_d(B_g)$ and the lost profit is at most $O(\epsilon)p(B_g)$.
% 	\begin{equation}\label{eq:cubes_pl2}
	%
	% 	\end{equation}
\begin{figure}[h!]
	\centering
	\includegraphics[scale = 0.425,page = 3]{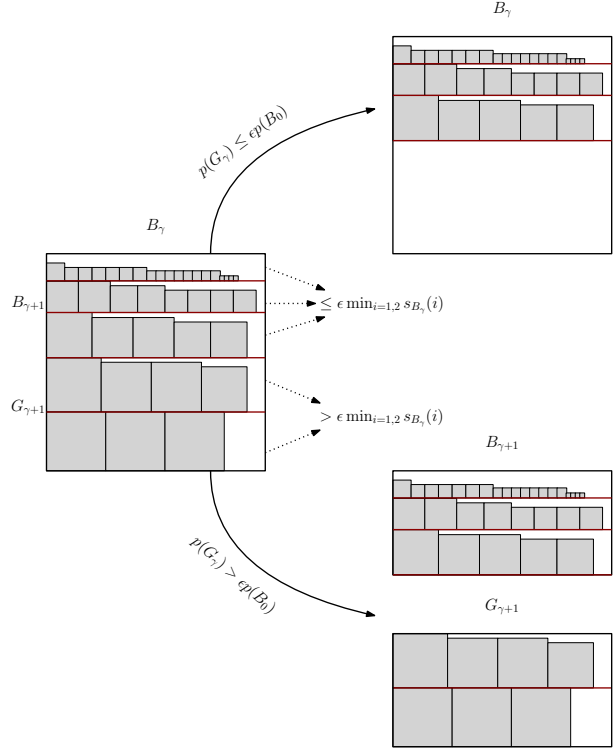}
	\caption{Iteration $\gamma$ of the splitting procedure}
	\label{fig:sq2D_splitgamma}
\end{figure}
% 	\\
% 	\\
Next, consider some box $G_\gamma$ with $\gamma =1,\dots,g$. We transform $G_\gamma$ into a constant number of $\S$- and $\N$-boxes. To do so, we first split $G_\gamma$ into at most $d$ sub-boxes as follows: The sub-box $G_\gamma(1)$ contains all shelves in dimension $d$ of $G_\gamma$ such that for every item packed into these shelves has side length at least $\epsilon \ell_{\min}(B_{\gamma-1})$ and at most $\ell_{\min}(B_{\gamma-1})$. If
$G_\gamma(1)$ contains all shelves in dimension $d$ of $G_\gamma$ we are done. Otherwise the last shelf of $G_\gamma$ is not contained in $G_\gamma(1)$. So we define $G_\gamma(2)$  as the sub-box containing all shelves in dimension $d-1$ of the last shelf in dimension $d$ of $G_\gamma$ such that every item packed into these shelves has side length at least $\epsilon^2 \ell_{\min}(B_{\gamma-1})$ and at most $\ell_{\min}(B_{\gamma-1})$.
If all other shelves in dimension $d-1$ of the last shelf of $G_\gamma$ in dimension $d$ contain items of size at most $\epsilon^2\ell_{\min}(B_{\gamma-1})$ we define an additional sub-box containing these items and are done. Otherwise, we proceed in a similar fashion in dimension $d-2$. Repeating this procedure leads to at most $r\leq d+1$ sub-boxes which we now transform into $\N$- and $\S$-boxes as follows.

We first consider box $G_\gamma(1)$ which contains items with side lengths of at least $\epsilon \ell_{\min}(B_{\gamma-1})$ and at most $\ell_{\min}(B_{\gamma-1})$. We split $G_\gamma(1)$ into at most $O_{\epsilon,1}(1)$ many $\N$-boxes losing a total profit of at most
$O(\epsilon)p(G_\gamma(1)).$
Using Lemma~\ref{lem:aux_rangebox_multi} we do the same for all sub-boxes  $G_\gamma(2),\dots,G_\gamma(r-1)$ we created for which the items have side lengths at least $\epsilon^2  \ell_{min}(B_{\gamma-1})$. For each $r' =2,\dots,r-1$, we lose a total profit of at most 
$O(\epsilon)p(G_\gamma(r')).$

Finally, consider $G_\gamma(r)$ which contains items of side lengths less than $\epsilon^2 \ell_{min}(B_{\gamma-1})$. Observe that since $G_\gamma(r)$ is a sub-box of the last shelf (in dimension $d$) packed into $G_\gamma$, we know that $\ell_d(G_\gamma(r)) \geq \epsilon \ell_{\min}(B_{\gamma-1})$. We now use an inductive argument to show that $G_\gamma(r)$ can be split into $O_{\epsilon,d}$ many $\N$- and $\S$-boxes. To this end, we ignore dimension $d$ and apply the inductive step to $G_\gamma(r)$ in dimensions $1$ to $d-1$. Let $B'$ be a resulting $\S$-box with the properties that
\begin{itemize}
	\item for every item $i \in \I(B')$, $s_i \leq \epsilon \ell_{\min}(B')$ {and}
	\item the total volume of the rounded up items packed into $B'$ is
	\[\sum_{i \in \I} \lceil s_i \rceil^{d-1}_{1+\epsilon} \leq \mathrm{VOL}_{d-1}(B')-2\cdot (d-1)\cdot \epsilon \mathrm{VOL}_{d-1}(B').\]
\end{itemize}
We now consider what happens when we take into account the $d$-th dimension of $B'$. Since $\ell_d(B') = \ell_d(G_\gamma(r)) \geq \epsilon \ell_{\min}(B_{\gamma-1})$ and every item packed into $B'$ has side length at most $\epsilon^2 \ell_{\min}(B_{\gamma-1})$, and $B'$ still satisfies the first property in $d$ dimensions. Furthermore, the total volume of the items packed into $B'$ rounded up in $d$ dimensions is at most
\[\sum_{i \in \I} \lceil s_i \rceil^{d}_{1+\epsilon} \leq (1+\epsilon)\epsilon^2 \ell_{\min}(B_{\gamma-1}) \mathrm{VOL_{d-1}(B')} \leq (1+\epsilon)\epsilon \mathrm{VOL_{d}(B')} < 2\epsilon\mathrm{VOL_{d}(B')}.\]
Since $\epsilon < 1/2^{d+2}$, this implies that 
\[\sum_{i \in \I} \lceil s_i \rceil^{d}_{1+\epsilon} < \mathrm{VOL}_d(B')-2\cdot d \cdot \epsilon \mathrm{VOL}_d(B').\]
Hence, $B'$ is a $d$-dimensional $\S$-box. 

Next, let $B''$ be a $d-1$-dimensional $\N$-box resulting from the transformation of $G_\gamma(r)$ with $s_{\max}(B'')$ being the side length of the largest item packed into $B''$. If we take $\ell_d(B'') = s_{\max}(B'')$ and $n_d(B'')= 1$, then $B''$ is a $\N$-box in $d$-dimensions. Observe that $s_{\max}(B'') < \ell_d(G_\gamma(r))$ which implies that its feasible to extend $B''$ into the $d$-th dimension this way.

Therefore, by induction we know that each of these sub-boxes can be transformed into $O(1/\epsilon^{d+1})$ many $\N$- and $\S$-boxes. In the end, this whole procedure leads to $O(1/\epsilon^{d+2})$ many $\N$- and $\S$-boxes constructed out of the original box $B_0$ with a profit of at least $(1-\epsilon)p(B_0)$.
\end{proof}

We will now prove Lemma~\ref{lem:struc_hypercubes} using Lemma~\ref{lem:Cubes_transform}.
\begin{proof}[Proof of Lemma~\ref{lem:struc_hypercubes}]
Consider an instance $\I$ and let $\epsilon < 1/2^{d+2}$. We start with the structural packing proven by Jansen et al.~\cite{jansen2022ptas} (see Theorem~\ref{thm:dD_cubes_jansen}). Let $n_b = C_{boxes}(d,\epsilon) + C_{large}(d,\epsilon)$ be the total number of boxes and large items (which we will treat as ${\mathcal{N}}$-boxes) used in this packing and denote these boxes by $B_1,B_2,\dots,B_{n_b}$. By Lemma~\ref{lem:Cubes_transform} we can transform any ${\V}$-box $B_h$ into $O_{\epsilon,d}(1)$ many $\S$- and $\N$-boxes
{and pack items from $B_h$ with a}
profit of at least $(1-O(\epsilon))p(B_h)$. Thus, the total profit is at least $(1-O(\epsilon))\sum_{i=1}^{n_b} p(B_i) \geq (1-O(\epsilon))(1-2^{d+2}\epsilon)OPT(\I) \geq (1-{2^{O(d)}}\epsilon)OPT(\I)$. {Denote by $n'_{b}$ the number of boxes after these transformations.}

To finish the proof, we still need to show that there exist values $k_1,k_2,\dots, k_r \in \mathbb{Z}_{\geq 0}$ with $r \in O_{\epsilon,d}(1)$ such that if $B$ is a $\N$-box there exists a $j \in \{1,2,\dots,r\}$ such that $s_{\max}(B) \leq k_{j}$ and $s_{\min}(B) \geq k_{j-1}$.

To this end, we are interested in the distinct values of $s_{\max}(B_i)$. Let $r \leq n'_b$ be the number of $\N$-boxes with distinct values of $s_{\max}(B_i)$. Clearly, $r \in O_{\epsilon,d}(1)$. Let $B_1,\dots,B_{n_b'}$ be $\N$-boxes with distinct $s_{\max}$-values such that $s_{\max}(B_i) > s_{\max}(B_{i+1})$ for every $i=1,\dots,r-1$. Let $k_i = s_{\max}(B_i)$ for $i=1,\dots,r$ and $k_0 = 0$.
%	\awr{maybe we want to omit $n'_b$ here and replace it with $r$?}

We can now split any $\N$-box $B$ into a constant number of $\N$-boxes such that each of these $\N$-boxes contains items of size $s_i \in (k_{j-1},k_j]$ for some $j=1,\dots, r$. To do so, let $G_j(B)$ be all shelves in dimension $d$ of $B$ containing items of size $s_i \in (k_{j-1},k_j]$ for each $j=1,\dots, r$. For each $j = 1,\dots,r$, $G_j(B)$ is a $\N$-box. Now, consider shelves in dimension $d$ of $B$ which contain items from more than one of the intervals. For such a shelf, we repeat the procedure above in dimension $d-1$ and set $n_d = 1$ for each of the resulting $\N$-boxes. Note that since we may have to repeat this procedure in each dimension and these shelves contain at most items in at most $r$ different intervals, the total number of $\N$-boxes formed from $B$ is at most $r^d$. We repeat this procedure for each original $\N$-box and end up with $O_{\epsilon,d}(1)$ many new $\N$-boxes since $r\in O_{\epsilon,d}(1)$. Hence, we have constructed a packing satisfying properties i), ii) and v). It remains to modify the packing such that it also satisfies properties iii) and iv) while maintaining the other properties. First, consider a $\V$-box $B$. We now round down $\ell_{d'}(B)$ to $\lfloor \ell_{d'}(B) \rfloor_{1+\epsilon}$. Observe, that in this way we lose at most a factor of $(1+\epsilon)^d$ of the volume of $B$.  Applying a similar LP-argument as used in the proof of Lemma~\ref{lem:vstarbox_selection}, we can find a subset of the items packed into $B$ that still can be packed into $B$ with the new side lengths such that the remaining profit is at least a factor of $(1+\epsilon)^d$ of the original profit in $B$. Applying this to all $\V$-boxes leads to a packing satisfying property iii). Next, observe that for an $\N$-box $B$ for which $n_{d'}(B) > 1/\epsilon$, rounding  $n_{d'}(B)$ to an integer power of $(1+\epsilon)$ implies that the number of items that we can pack into $B$ decreases by at most a factor of $(1+\epsilon)$. Thus, doing this in all dimensions leads to a packing satisfying property iv) while only losing a factor of  $(1+O(\epsilon))$ of the profit.
\end{proof}

\subsection{Computing a packing}
In the following, we {describe} details of Section~\ref{subsec:Computing-packing} {that were omitted above}. First, we prove that we can guess the side lengths of all $\S$-boxes in time $(\log_{1+\epsilon} n)^{O_{\epsilon,d}(1)}$. For the remainder of this subsection,  we disregard all items with profit less than $\epsilon/n \cdot p_{\max}$ while losing only an $\epsilon$-fraction of the optimal profit as $p_{\max} \leq \OPT$. We next show how to strengthen the statement of Lemma~\ref{lem:hypcub_guessing_basic} by reducing the total number of guesses to $\left( \log_{1+\epsilon}(n)\right)^{O_{\epsilon,d}(1)}$ such that all basic quantities can be guessed in time $(\log n)^{O_{\epsilon,d}(1)}$.

\begin{lemma}
Let $\B_\S$ be the set of $\S$-boxes in the packing due to Lemma~\ref{lem:struc_hypercubes}. By losing a factor $(1+\epsilon)^d$ in the profit of this packing, we can guess values $ \ell_{d'}(B) $ for all boxes $B \in \B_{{S}}$ and for all $d'=1,\dots, d$ in time $(\log_{1+\epsilon} n)^{O_{\epsilon,d}(1)}$,
	such that we can reduce the side lengths of each box $B \in \B_{{S}}$ to $ \ell_{1}(B) ,...,
	 \ell_{d}(B) $.
\end{lemma}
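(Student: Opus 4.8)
The plan is to \emph{canonicalize} each $\S$-box of the packing from Lemma~\ref{lem:struc_hypercubes} so that all of its side lengths are fixed by a small number of discrete choices. Fix an $\S$-box $B$ of that packing, let $\I'(B)\subseteq\I$ be its items, write $s^{*}(B):=\max_{i\in\I'(B)}s_{i}$ and $V(B):=\sum_{i\in\I'(B)}\lceil s_{i}\rceil_{1+\epsilon}^{d}$, and recall from the definition of an $\S$-box that $s^{*}(B)\le\epsilon\ell_{\min}(B)$ and $V(B)\le(1-2d\epsilon)\mathrm{VOL}_{d}(B)$. First I would arrange that $s^{*}(B)$ is one of the breakpoints $k_{1},\dots,k_{r}$: we add the (at most $C_{\mathrm{boxes}}(\epsilon,d)$) values $s^{*}(B)$ to that set and re-split the $\N$-boxes accordingly, exactly as in the proof of Lemma~\ref{lem:struc_hypercubes}; this increases $r$ and the number of boxes only by an $O_{\epsilon,d}(1)$ factor and keeps all listed properties. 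Let $j_{B}$ be the index with $s^{*}(B)=k_{j_{B}}$ and put $L:=\lceil k_{j_{B}}/\epsilon\rceil_{1+\epsilon}$; since $\ell_{\min}(B)$ is a power of $1+\epsilon$ that is at least $s^{*}(B)/\epsilon$, we have $L\le\ell_{\min}(B)$.

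Next I would replace $B$ by a box $B'$ whose $d-1$ smallest side lengths all equal $L$ and whose largest side length is $\min\bigl\{\ell_{1}(B),\ \max\{L,\ \lceil V(B)/((1-2d\epsilon)L^{d-1})\rceil_{1+\epsilon}\}\bigr\}$; in the ``bulky'' case where the inner value would exceed $\ell_{1}(B)$, we additionally cap \emph{every} side length of $B'$ at $U:=C(\epsilon,d)\cdot n\cdot L$ for a suitable constant $C(\epsilon,d)$. All side lengths of $B'$ are powers of $1+\epsilon$ and $B'$ fits inside $B$ (each side length only shrinks), and $s^{*}(B)\le\epsilon\ell_{\min}(B')$ holds because $\ell_{\min}(B')\ge L\ge s^{*}(B)/\epsilon$. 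To bound the profit loss I would invoke the LP-rounding argument of Lemma~\ref{lem:vstarbox_selection}: using $V(B)\le|\I'(B)|(1+\epsilon)^{d}s^{*}(B)^{d}\le n(1+\epsilon)^{d}k_{j_{B}}^{d}$ one checks (this is where a short case analysis between the bulky and non-bulky cases is needed, as a nearly-full box cannot have a side length much larger than $L$, while a box with a very large side length must be so empty that capping it costs almost no profit) that the volume available in $B'$ shrinks by at most a factor $(1+\epsilon)^{d}$ relative to what is needed for $\I'(B)$; hence the rank lemma yields $\I''\subseteq\I'(B)$ with $\sum_{i\in\I''}\lceil s_{i}\rceil_{1+\epsilon}^{d}\le(1-2d\epsilon)\mathrm{VOL}_{d}(B')$ and $p(\I'')\ge(1+\epsilon)^{-d}p(\I'(B))$, and by Lemma~\ref{lem:NFDH} this $\I''$ can in fact be NFDH-packed into $B'$. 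Performing this for every $\S$-box gives the claimed $(1+\epsilon)^{d}$ loss.

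It remains to count the candidates. By construction every side length of $B'$ lies in $[L,\ \max\{U,\ \lceil V(B)/((1-2d\epsilon)L^{d-1})\rceil_{1+\epsilon}\}]$, and $V(B)\le n(1+\epsilon)^{d}k_{j_{B}}^{d}$ together with $L\ge k_{j_{B}}/\epsilon$ shows this interval is contained in $[L,\ \gamma_{\epsilon,d}\cdot n\cdot L]$ for a constant $\gamma_{\epsilon,d}$. Thus, \emph{once $j_{B}$ is known}, every side length of $B'$ has the form $(1+\epsilon)^{t}\lceil k_{j_{B}}/\epsilon\rceil_{1+\epsilon}$ with $t\in\{0,1,\dots,\lceil\log_{1+\epsilon}(\gamma_{\epsilon,d}\,n)\rceil\}$, i.e.\ one of $O_{\epsilon,d}(\log n)$ values. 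So for each of the $O_{\epsilon,d}(1)$ $\S$-boxes we guess the index $j_{B}$ ($O_{\epsilon,d}(1)$ choices) and the $d$ exponents $t$ ($O_{\epsilon,d}(\log n)$ choices each), i.e.\ $(\log_{1+\epsilon}n)^{O_{\epsilon,d}(1)}$ guesses in total. Crucially, the numerical value of $k_{j_{B}}$ is not needed for this enumeration: side lengths are kept symbolically as multiples of $\lceil k_{j_{B}}/\epsilon\rceil_{1+\epsilon}$, and the actual number is substituted once the indirect guessing framework of Section~\ref{subsec:Computing-packing} produces $\tilde k_{j_{B}}\le k_{j_{B}}$; since $\tilde k_{j_{B}}\le k_{j_{B}}$ the box only becomes smaller, so it still fits and still obeys the smallness condition, and the integer program $(\mathrm{IP}(s))$ re-certifies that it accommodates enough profit.

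The step I expect to be the main obstacle is exactly this decoupling of the \emph{shape} of the $\S$-boxes (polylogarithmically many options) from their \emph{absolute scale} (a priori ranging over $[1,N]$): one must verify that pinning the scale to the breakpoints $k_{j}$ — whose numerical values are themselves only determined later, and approximately — does not destroy feasibility or the profit bound, which forces the case analysis described above and the re-splitting of the $\N$-boxes to make the $s^{*}(B)$ into breakpoints. The remaining checks (harmlessness of the re-splitting, and that the capped boxes remain NFDH-packable) are routine given Lemmas~\ref{lem:vstarbox_selection} and~\ref{lem:NFDH}.
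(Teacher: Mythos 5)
Your canonicalization of each $\S$-box (pinning $d-1$ sides to $L\approx s^{*}(B)/\epsilon$, capping the last side via the rounded item volume, and recovering the $(1+\epsilon)^{d}$ profit loss with the LP/rank-lemma argument plus Lemma~\ref{lem:NFDH}) is sound and close in spirit to the second half of the paper's proof, as is the observation that once an anchor is fixed, all candidate side lengths lie in a window of width $\mathrm{poly}(n)$, giving $O_{\epsilon,d}(\log n)$ choices per side. The gap is in how you fix the anchor. You tie the absolute scale of every $\S$-box to a breakpoint $k_{j_{B}}$, keep the side lengths symbolic as multiples of $\lceil k_{j_{B}}/\epsilon\rceil_{1+\epsilon}$, and plan to substitute $\tilde k_{j_{B}}\le k_{j_{B}}$ only once the indirect guessing framework has produced it. This is circular: an $\S$-box $B$ anchored at $k_{j_{B}}=s^{*}(B)$ contains items of size at most $k_{j_{B}}$, i.e.\ items from classes $\I_{j}$ with $j\le j_{B}$, so $B$ appears (with $\hat a_{B,j}>0$) in iterations $j<j_{B}$ of the framework — \emph{before} its scale is known. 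In those iterations the constraint $\sum x_{t,t',B}\,s(t)^{d}\le a_{B,j}\,\mathrm{VOL}_{d}(B)$ of $(\mathrm{IP}(s))$ and the smallness test $s_{i}\le\epsilon\,\ell_{\min}(B)$ cannot be evaluated numerically, so $q(s)$ cannot be computed and the binary search cannot run. Moreover, even if one could defer this, substituting $\tilde k_{j_{B}}$, which may be much smaller than $k_{j_{B}}$, shrinks $\mathrm{VOL}_{d}(B)$ by an unbounded factor; the volumes $\hat a_{B,j}\mathrm{VOL}_{d}(B)$ reserved for the earlier classes shrink with it, so the profit guaranteed for those classes by the structured packing need not be attainable, and monotonicity of $q$ does not rescue the $(1+\epsilon)^{d}$ bound. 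Note also that the lemma as stated asks for concrete numerical values $\ell_{d'}(B)$ guessable in $(\log_{1+\epsilon}n)^{O_{\epsilon,d}(1)}$ time, which the later steps of the algorithm (guessing the $\hat a_{B,j}$, formulating the IPs) consume as numbers before the framework starts; a symbolic guess does not deliver this.

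The paper resolves exactly this point differently: it computes a concrete numerical anchor \emph{before} the indirect guessing begins. It takes $B^{*}$, the $\S$-box whose largest packed item is largest among all $\S$-boxes, and determines $s_{\max}(B^{*})$ explicitly — guess its profit class, guess (as powers of $1+\epsilon$) how many items of that class each box receives, locate the corresponding $n_{t}$-th smallest item of that class via the item data structure and binary search trees, and iterate over the $O_{\epsilon,d}(1)$ boxes it could sit in (it can only be in an $\N$-box or in $B^{*}$). Then each $\ell_{d'}(B^{*})\in[\tfrac{1}{\epsilon}s_{\max}(B^{*}),\,n\,s_{\max}(B^{*})]$, and every other $\S$-box either has $s_{\max}(B)\ge s_{\max}(B^{*})/n^{1/d}$ or all its items can be absorbed into an $\epsilon^{d}$-fraction of $B^{*}$'s volume, so all remaining side lengths also lie in a $\mathrm{poly}(n)$ window around the computed anchor. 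If you want to keep your symbolic-anchor idea, you would have to show how the framework's iterations $j<j_{B}$ can be carried out without the numeric volume of $B$ and how the profit reserved there survives the later substitution; as written, this is the missing step.
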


\begin{proof}
{It is straightforward to guess values the $ \ell_{1}(B) ,..., \ell_{d}(B) $ in time
	$(\log_{1+\epsilon} N)^{O_{\epsilon,d}(1)}$. To decrease the number of guesses to $(\log_{1+\epsilon} n)^{O_{\epsilon,d}(1)}$}, we will make use of the following useful fact{:} let $B$ be a $\S$-box, then we know that $\ell_{\min}(B) \geq \frac{1}{\epsilon}s_{\max}(\I(B))$.

{Now, let $B^*$ be an $\S$-box in $\B_{\S}$ such that $s_{\max}(\I(B^*)) = \max_{B \in \B_\S}\{s_{\max}(\I(B))\}$. We will now find the side length of the largest item packed into $B^*$ which will give us an estimate on the side lengths of $B^*$. We do this by first guessing the profit class of this item. By our preprocessing step of the profits, this can be done in time $O(\log_{1+\epsilon}n)$. Once we have the profit class of this item, we do the following. Losing only a factor of $1+\epsilon$, we can guess the number of items of this profit class in our structured packing in time $O(\log_{1+\epsilon}n)$. To be more precise, we guess how many items of this profit class are packed into each of the boxes. Let $t$ be the guessed profit class and let $n_{t,B}$ denote the guessed number of items of class $t$ packed into box $B$. We can find all values $n_{t,B}$ in time {$(\log_{1+\epsilon} n)^{O_{\epsilon,d}(1)}$}. We define $n_t := \sum_{B \in \B}n_{t,B}$ and using our item data structure and a binary search over all possible side lengths using the balanced binary search tree data structure which stores the side lengths, we find the $n_t$-th smallest item of profit class $t$ in time $O(\log^3 n)$. For this item, we guess the box in which it must be packed in time $O_{\epsilon,d}(1)$. We know that it can only be packed into an $\N$-box or $B^*$ since using that $\ell_{min}(B) \geq \frac{1}{\epsilon}s_{\max}(B)$ we have a contradiction for the definition of $B^*$. Hence, if the guess is $B^*$ {we are done since we obtain a value} $\hat{s}$ giving the size of the largest item packed into $B^*$. If the guessed box $B$, however, is an $\N$-box, then w.l.o.g. we may assume that this item and {also} the $n_{t,B}-1$ smaller items {of the same profit class} are all packed into $B$ and we update the value of $n_t$ to $n_t-n_{t,B}$. Repeating this procedure leads to a total amount of $O_{\epsilon,d}(1)$ iterations since there are at most $O_{\epsilon,d}(1)$ many $\N$-boxes. Thus, we can find the value $s_{\max}(B^*)$ in time {$(\log_{1+\epsilon} n)^{O_{\epsilon,d}(1)}$}. We know that $\ell_{d'}(B^*) \in [\frac{1}{\epsilon}s_{\max}(B^*),ns_{\max}(B^*)]$ for all $d' = 1,\dots,d$. Thus, we can guess all values $ \ell_{d'}(B^*)$ in time {$(\log_{1+\epsilon} n)^{O_{\epsilon,d}(1)}$}.
}

Next, we argue that for any other $\S$-box $B$ we either have $s_{\max}(B) \geq \frac{s_{\max}(B^*)}{n^{1/d}}$ or all items inside $B$ can be moved to $B^*$,
{while losing only a factor of $1+\epsilon$ in the profit of the items packed inside $B^*$}.
Consider all $\S$-boxes in $\B$ such that for each of them the largest item packed into them has size strictly less than
\[
\frac{s_{\max}(B^*)}{n^{1/d}}.
\]
Since there are at most $n$ such items and by definition of $s_{\max}(B^*)$, we know that their total volume is at most
\[
n \cdot \left(\frac{s_{\max}(B^*)}{n^{1/d}}\right)^d \leq n \cdot \left(\frac{\epsilon \ell_{\min}(B^*)}{n^{1/d}}\right)^d \leq \epsilon^d \mathrm{VOL}_d(B^*).
\]
Thus, by {reserving this volume in $B^*$ to pack all these items, we may assume that there are no such boxes.}
%allowing $B^*$ to contain an additive volume of $\epsilon^d \mathrm{VOL}_d(B^*)$, we may assume that there are no such boxes. 
From this it follows that for any other $\S$-box $B$, we have
\[
s_{\max}(B^*) \geq s_{\max}(B) \geq \frac{s_{\max}(B^*)}{n^{1/d}}.
\]
%We now prove the following upper bound for $s_{\max}(B)$
%\[
%s_{\max}(B) \leq \epsilon \cdot n \cdot s_{\max}(B^*).
%\]
%Suppose, towards contradiction that this upper bound does not hold. Then, we know
%\[
%\ell_{\min}(B) \geq \frac{s_{\max}(B)}{\epsilon} > \frac{\epsilon \cdot n \cdot s_{\max}(B^*)}{\epsilon} = n \cdot s_{\max}(B^*) \geq \ell_{\min}(B*).
%\]
%By choice of $B^*$, this yields a contradiction. Therefore, we have
%\[
%\frac{s_{\max}(B^*)}{n^{1/d}} \leq s_{\max}(B) \leq  \epsilon \cdot n \cdot s_{\max}(B^*).
%\]
This implies that for each $d'=1,\dots,d$ we have
\[
\frac{s_{\max}(B^*)}{\epsilon \cdot n^{1/d}}\leq \ell_{d'}(B) \leq \epsilon \cdot n^2\cdot s_{\max}(B*).
\]
Therefore, we can guess the value $ \ell_{d'}(B)$ using at most $O(\log_{1+\epsilon}n)$ guesses. Thus, we can guess the side lengths of all $\S$-boxes in time $\log_{1+\epsilon}^{O_{\epsilon,d}(1)}n$ losing only a factor of $(1-O(\epsilon))$ of the profit.

It remains to be shown that by reducing the side lengths of each box $B \in \B_{\S}$ to $ \ell_{1}(B) ,..., \ell_{d}(B) $ we only lose a factor of $(1+\epsilon)^d$. Hereto, consider some box $B \in \B_\S$ with the set of items $\I(B)$ packed into it. We know that by reducing the side lengths as described above we reduce the volume by at most $(1+\epsilon)^{-d}$. We now show that there exists a subset of $\I(B)$ such that the total profit of the items is at least
	\[
	(1+\epsilon)^{-d}p(\I(B)).
	\]
	We know that for $B$ and $\I(B)$ we have that $s_i \leq \epsilon \ell_{\min}(B)$ for all $i \in \I(B)$ and furthermore
	\[\sum_{i\in\I(B)}\lceil s_{i}\rceil_{1+\epsilon}^{d}\leq(1-2d\cdot\epsilon)\mathrm{VOL}(B).\]
	We will now use a similar argument as in the proof of Lemma~\ref{lem:vstarbox_selection} considering the following LP:

	\begin{alignat*}{3}
		& \text{minimize} & \sum_{j=1}^{m} &x_ip_i& \\
		& \text{subject to} \quad& \sum_{i \in \Q}&x_{i}\lceil s_i \rceil_{1+\epsilon}^d& \leq (1+\epsilon)^{-d}(1-2d\cdot \epsilon) \mathrm{VOL}_d(B) \\
		&&& x_{i} \geq 0, &  \forall i \in \I(B)
		\\
		&&& x_{i} \leq 1, &  \forall i \in \I(B)
	\end{alignat*}

	Consider the following solution $\hat{x}_i := (1+\epsilon)^{-d}$. Clearly, this solution is feasible. Furthermore, observe that $s_i \leq \epsilon  \ell_{\min}(B) $ and $\lceil s_i \rceil_{1+\epsilon}^d \leq 2\epsilon \prod_{d'=1}^d  \ell_{\min}(B) $. Let $x^*$ be the optimal solution to the LP. Then, by the rank lemma~\cite{lau2011iterative}, we know that there is at most one item $i' \in \I(B)$ for which $0 < x^*_i < 1$. We now define the set of items packed into $B$ with reduced side lengths as $\I'(B) :=\{i\in \I(B): x^*_i > 0\}$. By the candidate solution $\hat{x}$ we know that the obtained profit is at least $(1+\epsilon)^{-d}p(\I(B))$. Furthermore, we have the following
	\[\sum_{i\in\I(B')}\lceil s_{i}\rceil_{1+\epsilon}^{d}\leq(1+\epsilon)^{-d}(1-2d\cdot \epsilon) \mathrm{VOL}_d(B) + 2\epsilon\prod_{d'=1}^d  \ell_{\min}(B) .\]
	By Lemma~\ref{lem:NFDH}, we know that $\I'(B)$ can be packed into $B$ with reduced side lengths using NFDH.
\end{proof}

The proofs of Lemmas~\ref{lem:hypcub_guessing_pjs} and~\ref{lem:hypcub_guessing_abjs} follow from the arguments in Section~\ref{sec:hypercubes}. Therefore, all basic quantities can be guessed in time $(\log n)^{O_{\epsilon,d}(1)}$.

We will now move on to the details of our indirect guessing framework. Recall that for each $\ell \in \{1,...,r\}$, we are {concerned} with finding an estimate of $k_{\ell+1}$, denoted by $\tilde{k}_{\ell+1}$, assuming that we have already found {estimates} $\tilde{k}_{1},\dots, \tilde{k}_{\ell}$.
Consider a candidate $s$ for $\tilde{k}_{\ell+1}$.
We will first show how to reduce the number of variables considered in $(\mathrm{IP}(s))$ by only losing a factor of $1+\epsilon$ of the profit. To this end, we denote by $(\mathrm{LP}(s))$ the linear relaxation of $(\mathrm{IP}(s))$.

\begin{lemma}\label{lem:hypcub_reducevar}
Given a set of guessed boxes $\B$ and the set of input items $\tilde{\I}_{\ell+1}(s)$, there exists a subset of items of $\tilde{\I}_{\ell+1}(s)$ such that there are at most $O_\epsilon(\log n)$ many size and profit classes. The optimal solution to $(\mathrm{LP}(s))$ considering only this subset yields a profit of at least {a $(1-\epsilon)$-fraction} of the profit of the optimal solution to $(\mathrm{LP}(s))$ considering all items in $\tilde{\I}_{\ell+1}(s)$.
\end{lemma}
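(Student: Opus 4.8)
After the preprocessing step that discards all items of profit below $\epsilon p_{\max}/n$, every remaining item of $\tilde\I_{\ell+1}(s)$ has profit in $[\epsilon p_{\max}/n,\,p_{\max}]$ and hence lies in one of $O(\log_{1+\epsilon}(n/\epsilon))=O_\epsilon(\log n)$ profit classes; so the profit classes are already as required and only the number of \emph{size} classes has to be reduced. If $\B_\S(\ell+1)=\emptyset$ this is immediate: then the only constraints of $(\mathrm{LP}(s))$ that refer to item sizes are vacuous, and placing an item into an $\N$-box only requires $s_i\le s_{\max}(B)=s$, which holds for every item of $\tilde\I_{\ell+1}(s)$; so we may collapse all size classes into one and keep, for each profit class, the $K:=\sum_{B\in\B_\N(\ell+1)}n(B)$ items of that class of largest side length, losing nothing. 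Assume therefore that $v^*:=\max_{B\in\B_\S(\ell+1)}\hat a_{B,\ell+1}\mathrm{VOL}_d(B)>0$, and note $\hat a_{B,\ell+1}\ge\epsilon/r$ for every $B\in\B_\S(\ell+1)$ by definition of that set.

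The plan is to classify $\tilde\I_{\ell+1}(s)$ by a single size threshold $\tau:=\bigl(\epsilon v^*/(n\,|\B(\ell+1)|)\bigr)^{1/d}$. Call $i$ \emph{eligible} if $s_i\le\epsilon\ell_{\min}(B)$ for some $B\in\B_\S(\ell+1)$ (so it may be placed into at least one $\S$-box) and \emph{non-eligible} otherwise; a non-eligible item can only go into an $\N$-box. If $i$ is eligible, say for $B\in\B_\S(\ell+1)$, then $s_i^d\le\epsilon^d\ell_{\min}(B)^d\le\epsilon^d\mathrm{VOL}_d(B)\le\epsilon^{d-1}r\,v^*\le r\,v^*$ since $\hat a_{B,\ell+1}\ge\epsilon/r$. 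Hence every eligible item with $s_i>\tau$ has $\hat s(t_i)\in\bigl(\tau,\,(1+\epsilon)(r v^*)^{1/d}\bigr]$, an interval whose endpoints have ratio $O_\epsilon(n^{1/d})$, so such items occupy only $O_\epsilon(\log n)$ size classes. Call $i$ \emph{negligible} if it is eligible and $s_i\le\tau$. I would take
\[
\I'':=\{\text{eligible items with }s_i>\tau\}\ \cup\ \bigcup_{t'\in\mathcal{T}_P}\{\text{the }K\text{ items of }\P_{t'}\cap\tilde\I_{\ell+1}(s)\text{ of largest side length}\}.
\]
The first set lives in the $O_\epsilon(\log n)$ size classes just identified; the items of the second set (the $\N$-box fodder) may a priori span many classes, but since they will only ever be placed into $\N$-boxes — where the side length is irrelevant for feasibility — I would merge them into one extra ``catch-all'' size class whose rounded size is so large that no $\S$-box volume constraint can accommodate it. Together with $O_\epsilon(1)$ further merged classes introduced below for the negligible items, this yields $O_\epsilon(\log n)$ size classes and $O_\epsilon(\log n)$ profit classes, hence $O_\epsilon(\mathrm{poly}\log n)$ groups, and with $|\B(\ell+1)|=O_{\epsilon,d}(1)$ only $O_\epsilon(\mathrm{poly}\log n)$ variables.

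It remains to argue that $(\mathrm{LP}(s))$ restricted to $\I''$ (together with the merged classes) still has value at least $(1-\epsilon)\,\OPT_{\mathrm{LP}}$, where $\OPT_{\mathrm{LP}}$ is the optimum of $(\mathrm{LP}(s))$ on all of $\tilde\I_{\ell+1}(s)$. Starting from an optimal fractional solution $x^*$, I would reroute it in two steps. First, within each profit class, reassign the mass that $x^*$ puts into $\N$-boxes so that it uses the $K$ (or fewer) items of that class of largest side length; this is possible because $\N$-boxes do not ``see'' side lengths, it changes neither the objective nor any per-group cap nor any $\N$-box capacity, and afterwards all $\N$-box mass lies in $\I''$. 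Second, handle the items that $x^*$ places into $\S$-boxes but that are not in $\I''$: these are exactly the negligible items (non-eligible items never sit in an $\S$-box, and eligible items with $s_i>\tau$ are already in $\I''$). There are at most $n$ negligible items and each has volume at most $\tau^d=\epsilon v^*/(n\,|\B(\ell+1)|)$, so their total volume is at most $\epsilon v^*$. I would re-route each of them into an $\S$-box it is eligible for, grouping them into $O_\epsilon(1)$ merged ``negligible'' size classes (each with a suitably rounded-up common size), and charging their volume to an $\Theta(\epsilon)$-fraction of the $\S$-boxes' volumes that NFDH does not need and that the volume constraints of $(\mathrm{LP}(s))$, which only use the designated fractions $\hat a_{B,\ell+1}$, do not use up. This rerouting preserves the profit of $x^*$, so after absorbing the $O(\epsilon)$ losses into a rescaling of $\epsilon$ the restricted LP is $(1-\epsilon)$-optimal.

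The main obstacle is this second rerouting step: one has to pick, for each negligible item, an $\S$-box into which it is eligible \emph{and} which still has room, and bound the extra volume this causes in every single $\S$-box separately — not merely in total — against the $\Theta(\epsilon)$-slack that box carries, the delicate case being ``thin'' $\S$-boxes (large volume but small $\ell_{\min}$), into which few items are eligible. The facts that make this go through are that there are at most $n$ negligible items each of volume at most $\tau^d$, that an $\S$-box eligible for a negligible item necessarily has $\ell_{\min}(B)\ge s_i/\epsilon$ (hence in fact large volume), and that $\tau$ is tied, through $v^*$, to the largest $\S$-box; combining these carefully shows that the negligible items fit into the reserved slack of eligible $\S$-boxes. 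Once this volume bookkeeping is in place, the lemma follows.
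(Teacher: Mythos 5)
Your plan follows the same skeleton as the paper's proof: the profit classes are already down to $O_\epsilon(\log n)$ after the preprocessing step, and the size classes are reduced by discarding all items whose side length lies below a threshold of order $(\epsilon v/n)^{1/d}$ tied to the largest relevant $\S$-box volume, noting that the surviving $\S$-relevant sizes then span only $O_{\epsilon,d}(\log n)$ classes, and finally re-inserting the discarded tiny items by exploiting that their total volume is at most an $\epsilon$-fraction of that largest volume. The place where you diverge is the re-insertion step, and that is exactly where your argument has a genuine gap — one you flag yourself as ``the main obstacle'' but do not close, and the facts you list do not close it. You want to route each negligible item into \emph{some} $\S$-box it is eligible for and charge its (rounded-up) volume against that particular box's slack. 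Eligibility only gives $\ell_{\min}(B)\ge s_i/\epsilon$, which for $s_i\le\tau$ is an extremely weak bound and does not imply that $B$ has large volume; and your global bound $n\tau^d\le \epsilon v^*/|\B(\ell+1)|$ controls only the \emph{total} negligible volume, not the amount any single box must absorb. If the box attaining $v^*$ is thin (say $\ell_{\min}<\tau/\epsilon$), the negligible items need not be eligible for it, and the boxes they are eligible for may have volume far below $v^*$, in which case $\epsilon v^*/|\B(\ell+1)|$ can exceed their entire capacity and the per-box charging fails as stated.

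The paper sidesteps this issue rather than solving it per box: it removes the small items from the LP altogether, scales the right-hand sides of the $\S$-box constraints by $(1-\epsilon)$ so that $(1-\epsilon)x^*$ remains feasible (losing only a factor $1-\epsilon$ in profit), and then adds \emph{all} discarded small items, outside the LP, into the single $\S$-box of maximum volume, whose reserved $\epsilon$-fraction suffices because their total volume is at most $\epsilon v$. Since these items are only added, the profit can only increase, so no profit-preserving rerouting or per-box volume bookkeeping is needed; the fact that the tiny items fit dimension-wise into that one box is provided by the earlier guessing step for the $\S$-box side lengths, i.e., the same kind of geometric fact you would need, but invoked once for one designated box instead of item by item. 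If you want to complete your version, the cleanest fix is to adopt this device: reserve an $\epsilon$-fraction of every $\S$-constraint and dump all negligible items into one designated large box, rather than distributing them. Your extra machinery (keeping the $K$ widest items per profit class and a catch-all size class for the $\N$-box assignments) is a reasonable way to make explicit why item sizes are irrelevant for $\N$-boxes, but it is not needed in the paper's argument and you would additionally have to keep the catch-all class disjoint from the retained size classes to avoid double counting in the group constraints.
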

\begin{proof}%[Proof of Lemma~\ref{lem:proft_rounding}]
Recall that we disregarded all items with profit less than $\epsilon/n \cdot p_{\max}$.
% while losing only an $\epsilon$-fraction of the profit of the optimal solution to LP(s).
This implies that for all $i \in \tilde{\I}_{\ell+1}(s)$, we have that $p_i \in [\epsilon/n \cdot p_{\max}, p_{\max}]$ and {there are only} $O_\epsilon(\log n)$ profit classes.
In order to restrict the number of size classes consider the following two linear programs. The first one is $(\mathrm{LP}(s))$ and the second one is $(\mathrm{LP}(s))$ after disregarding every item $i \in \tilde{\I}_{\ell+1}(s)$
{that is \emph{small}, i.e., for which}
$s_i \leq \frac{\epsilon^{1/d} \mathrm{v}^{1/d}}{n^{1/d}}$ where $\mathrm{v}$ is the maximum volume of the guessed $\S$-boxes. We denote the latter linear program by $(\mathrm{LP}'(s))$. We will argue later that we can select all small items.

\begin{alignat*}{3}
	(\mathrm{LP}(s))\quad& \text{max} 	& \displaystyle \sum_{(t,t') \in \mathcal{T}}\sum_{B \in \B(\ell+1)} x_{t,t',B} p(t')			& 			& \quad & \\
	& \text{s.t.} & \displaystyle\sum_{(t,t') \in \mathcal{T}} x_{t,t',B} 	& \leq n(B)	& 		& \forall B \in \B_{\N}(\ell+1) \\
	& & \displaystyle\sum_{(t,t') \in \mathcal{T}} x_{t,t',B}s(t)^d	& \leq a_{B,\ell+1}\mathrm{VOL}_d(B)	& 		& \forall B \in \B_{\S}(\ell+1) \\
	&				& \displaystyle\sum_{B \in \B(\ell+1)} x_{t,t',B}								& \leq n_{t,t'}	& 		& \forall (t,t') \in \mathcal{T} \\
	&				& x_{t,t',B}								& \geq 0& 		&\forall (t,t') \in \mathcal{T}, B \in \B(\ell+1)
\end{alignat*}

\begin{alignat*}{3}
	(\mathrm{LP}'(s))\quad& \text{max} 	& \displaystyle \sum_{(t,t') \in \mathcal{T}}\sum_{B \in \B(\ell+1)} x_{t,t',B} p(t')			& 			& \quad & \\
	& \text{s.t.} & \displaystyle\sum_{(t,t') \in \mathcal{T}} x_{t,t',B} 	& \leq n(B)	& 		& \forall B \in \B_{\N}(\ell+1) \\
	& & \displaystyle\sum_{(t,t') \in \mathcal{T}} x_{t,t',B}s(t)^d	& \leq a_{B,\ell+1}(1-\epsilon)\mathrm{VOL}_d(B)	& 		& \forall B \in \B_{\S}(\ell+1) \\
	&				& \displaystyle\sum_{B \in \B(\ell+1)} x_{t,t',B}								& \leq n_{t,t'}	& 		& \forall (t,t') \in \mathcal{T} \\
	&				& x_{t,t',B}								& \geq 0& 		&\forall (t,t') \in \mathcal{T}, B \in \B(\ell+1)
\end{alignat*}

Let $x^*$ be an optimal solution to $(\mathrm{LP}(s))$, then $\hat{x}:= (1-\epsilon)x^*$ is a feasible solution to $(\mathrm{LP}'(s))$ with a profit of $(1-\epsilon)$ times the profit of $x^*$. Adding the small items will only increase the profit. We can add all small items since their total volume will be at most $\epsilon \mathrm{v}$ such that they can be packed into the $\S$-box with largest volume.
\end{proof}
We continue by presenting the proof of Lemma~\ref{lem:cubes_IP_sol}.
\begin{proof}[Proof of Lemma~\ref{lem:cubes_IP_sol}]
We first show how to find a $(1-\epsilon)$-approximate integral solution to $(\mathrm{IP}(s))$. Recall that $(\mathrm{IP}(s))$ is defined as follows.

\begin{alignat*}{3}
	(\mathrm{IP}(s))\quad& \text{max} 	& \displaystyle \sum_{(t,t') \in \mathcal{T}}\sum_{B \in \B(\ell+1)} x_{t,t',B} p(t')			& 			& \quad & \\
	& \text{s.t.} & \displaystyle\sum_{(t,t') \in \mathcal{T}} x_{t,t',B} 	& \leq n(B)	& 		& \forall B \in \B_{\N}(\ell+1) \\
	& & \displaystyle\sum_{(t,t') \in \mathcal{T}} x_{t,t',B}s(t)^d	& \leq a_{B,\ell+1}\mathrm{VOL}_d(B)	& 		& \forall B \in \B_{\S}(\ell+1) \\
	&				& \displaystyle\sum_{B \in \B(\ell+1)} x_{t,t',B}								& \leq n_{t,t'}	& 		& \forall (t,t') \in \mathcal{T} \\
	&				& x_{t,t',B}								& \in \mathbb{N}_{0}& 		&\forall (t,t') \in \mathcal{T}, B \in \B(\ell+1)
\end{alignat*}

We will now compute an approximate solution to $(\mathrm{IP}(s))$. We start by guessing the $2C_{boxes}(\epsilon,d)/\epsilon$ most profitable items in the solution to $(\mathrm{LP}'(s))$. Denote by $S_g$ the set of these items.
	% this solution by $S_g$.
	We obtain $S_g$ by guessing for each of the most profitable items the profit class it belongs to and then choose an item from the smallest size class for which at least one item of this profit class exists. Repeating this for each of the most profitable items gives a total of
	{$(\log n)^{O_{\epsilon,d}(1)}$}
	% $O_{\epsilon,d}(\log n)$
	guesses. Furthermore, we need to guess the correct box for each of these items which can be done in time $O_{\epsilon,d}(1)$. After obtaining $S_g$ in this way, we update the values of the right-hand side of the constraints in $(\mathrm{IP}(s))$. Hereto, consider a box $B \in \B$. If $B$ is a $\N$-box, we denote by $n^g(B)$ the number of guessed items packed into $B$. If $B$ is a $\S$-box, we denote by $\mathrm{VOL}_d^g(B)$ the total volume of the guessed items packed into $B$. Furthermore, for each pair $t,t'$, denote by $n_{t,t'}^g$ the number of items of size class $\S_t$ and profit class $\P_{t'}$ in $S_g$. Based on this we consider the following variant of $(\mathrm{IP}(s))$ which we denote by $(\mathrm{IP}_g(s))$.
	
	{
	\small
	\begin{alignat*}{3}
		(\mathrm{IP}_g(s))\quad& \text{max} 	& \displaystyle \sum_{(t,t') \in \mathcal{T}}\sum_{B \in \B(\ell+1)} x_{t,t',B} p(t')			& 			& \quad & \\
		& \text{s.t.} & \displaystyle\sum_{(t,t') \in \mathcal{T}} x_{t,t',B} 	& \leq n(B)-n^g(B)	& 		& \forall B \in \B_{\N}(\ell+1) \\
		& & \displaystyle\sum_{(t,t') \in \mathcal{T}} x_{t,t',B}s(t)^d	& \leq a_{B,\ell+1}\mathrm{VOL}_d(B) - \mathrm{VOL}_d^g(B)	& 		& \forall B \in \B_{\S}(\ell+1) \\
		&				& \displaystyle\sum_{B \in \B(\ell+1)} x_{t,t',B}								& \leq n_{t,t'} - n_{t,t'} ^g	& 		& \forall (t,t') \in \mathcal{T} \\
		&				& x_{t,t',B}								& \in \mathbb{N}_{0}& 		&\forall (t,t') \in \mathcal{T}, B \in \B(\ell+1)
	\end{alignat*}}

	The objective is now to find an approximate solution to $(\mathrm{IP}_g(s))$ via its LP-relaxation. 
	
	{
		\small
	\begin{alignat*}{3}
	(\mathrm{LP}_g(s))\quad& \text{max} 	& \displaystyle \sum_{(t,t') \in \mathcal{T}}\sum_{B \in \B(\ell+1)} x_{t,t',B} p(t')			& 			& \quad & \\
	& \text{s.t.} & \displaystyle\sum_{(t,t') \in \mathcal{T}} x_{t,t',B} 	& \leq n(B)-n^g(B)	& 		& \forall B \in \B_{\N}(\ell+1) \\
	& & \displaystyle\sum_{(t,t') \in \mathcal{T}} x_{t,t',B}s(t)^d	& \leq a_{B,\ell+1}\mathrm{VOL}_d(B) - \mathrm{VOL}_d^g(B)	& 		& \forall B \in \B_{\S}(\ell+1) \\
	&				& \displaystyle\sum_{B \in \B(\ell+1)} x_{t,t',B}								& \leq n_{t,t'} - n_{t,t'} ^g	& 		& \forall (t,t') \in \mathcal{T} \\
	&				& x_{t,t',B}								& \geq 0& 		&\forall (t,t') \in \mathcal{T}, B \in \B(\ell+1)
\end{alignat*}
}

	Observe, that we cannot solve $(\mathrm{LP}_g(s))$ in time $\mathrm{poly}(\log n)$. However, we can apply Lemma~\ref{lem:hypcub_reducevar} to $(\mathrm{LP}_g(s))$ and restrict the number of size classes to be considered. This gives the following LP denoted by $(\mathrm{LP}'_g(s))$. Here, $\mathcal{T}$ is the set of pairs $(t,t')$ considering only the size classes that are left after applying Lemma~\ref{lem:hypcub_reducevar}. By our assumption that we we only consider items with profit at least $\frac{\epsilon p_{\max}}{n}$, we have that $|\mathcal{T}| \in O(\log^2_{1+\epsilon}n)$. Observe that an optimal fractional solution to $(\mathrm{LP}'_g(s))$ yields at least $(1-\epsilon)$ times an optimal fractional solution to $(\mathrm{LP}_g(s))$.
	
	\begin{alignat*}{3}
	(\mathrm{LP}'_g(s))\quad& \text{max} 	& \displaystyle \sum_{(t,t') \in \mathcal{T}'}\sum_{B \in \B(\ell+1)} x_{t,t',B} p(t')			& 			& \quad & \\
	& \text{s.t.} & \displaystyle\sum_{(t,t') \in \mathcal{T}'} x_{t,t',B} 	& \leq \leq n(B)-n^g(B)	& 		& \forall B \in \B_{\N}(\ell+1) \\
	& & \displaystyle\sum_{(t,t') \in \mathcal{T}'} x_{t,t',B}s(t)^d	& \leq a_{B,\ell+1}\mathrm{VOL}_d(B) - \mathrm{VOL}_d^g(B)	& 		& \forall B \in \B_{\S}(\ell+1) \\
	&				& \displaystyle\sum_{B \in \B(\ell+1)} x_{t,t',B}								& \leq n_{t,t'} - n_{t,t'} ^g	& 		& \forall (t,t') \in \mathcal{T}' \\
	&				& x_{t,t',B}								& \geq 0& 		&\forall (t,t') \in \mathcal{T}', B \in \B(\ell+1)
\end{alignat*}

	Now, we obtain an optimal fractional solution $S_f$ to $(\mathrm{LP}'_g(s))$ in time $(\log_{1+\epsilon}(n))^{O(1)}$ (using the ellipsoid method or for example~\cite{cohen2021solving}). Next, we argue that there are at most $2C_{boxes}(\epsilon,d)$ many fractional non-zero variables.  By the rank lemma (see e.g.~\cite{lau2011iterative}) we know for an optimal vertex solution $\tilde{x}$ of $(\mathrm{LP}'(s))$, the number of fractional non-zeros in $\tilde{x}$ is upper bounded by the number of linearly independent tight constraints of $(\mathrm{LP}'(s))$. We now argue that there can be at most $2C_{boxes}(\epsilon,d)$ such constraints out of the   $O_{\epsilon,d}(\log^2 n)$ many. To this end, consider the third group of constraints
	\begin{align*}
		\displaystyle\sum_{B \in \B} x_{t,t',B}								& \leq n_{t,t'}	& 		& \forall  t\in \{0,\dots,\lceil \log_{1+\epsilon}(s)\rceil\}, t'\in \{0,\dots, \log n\}
	\end{align*}
	Suppose there is at least one tight constraint of the third type. Observe, it holds that $n_{t,t'} \in \mathbb{N}$ and, therefore, if there is a fractional variable in a {tight} constraint of this type in an optimal vertex solution, there has to be at least one other fractional variable in this constraint such that their sum is an integer. Assume there are $k$ fractional variables in this constraint. Then, there must be $k-1$ constraints of either the first or second type that are tight. However, the number of constraints of the first two types is at most $C_{boxes}(\epsilon,d)$. Therefore, the number of fractional variables is at most $2C_{boxes}(\epsilon,d)$. 
	We now obtain a $(1+\epsilon)$-approximate fractional solution to $(\mathrm{IP}(s))$ by combining $S_g$ and all integer variables of $S_f$ together with all small items. Since $S_g$ contains $2C_{boxes}(\epsilon,d)/\epsilon$ many integer variables and $S_f$ contains at most $2C_{boxes}(\epsilon,d)$ many fractional variables each having a smaller profit than the ones of $S_g$, we lose a total profit of at most $\epsilon p(S_{g})$.

	Finally, we need to know the total profit of the small items with $s_i \leq \frac{\epsilon^{1/d} \mathrm{v}^{1/d}}{n^{1/d}}$
	{where $\mathrm{v}$ is the maximum volume of the guessed $\S$-boxes}.
	By Lemma~\ref{lem:data-structure}, this can be done in time $O_{\epsilon}(\log_{1+\epsilon}^2 n)$ with one query for each profit class which counts the number of small items of this profit class. We can use this and the rounded profit of each profit class to compute a $(1+\epsilon)$-estimate of the total profit of these items.
	Finally, consider two values $s\leq s'$. Then, the approximate solution obtained to $(\mathrm{IP}(s))$ is also feasible for $(\mathrm{IP}(s'))$. Therefore, $q(s) \leq q(s')$.
\end{proof}
Next, we will prove Lemma~\ref{lem:cubes_induc_kr}.
\begin{proof}[Proof of Lemma~\ref{lem:cubes_induc_kr}]
Since we solve $(\mathrm{IP}(s))$ up to a factor of $1-\epsilon$,
we have that $q(k_{\ell+1})\ge (1-\epsilon)\hat{p}(\ell+1)$
since $\tilde{k}_{\ell}\le k_{\ell}$ and, therefore, $\tilde{\I}_{\ell+1}(k_{\ell})=\left\{ i\in\I:s_{i}\in[\tilde{k}_{\ell},k_{\ell})\right\} \subseteq\left\{ i\in\I:s_{i}\in[k_{\ell},k_{\ell})\right\} =\I_{\ell+1}$. Recall that we define $\tilde{k}_{\ell+1}$ to be the smallest value $s\in S$ for which $q(s)\ge (1-\epsilon)\hat{p}(\ell+1)$. Since $q(k_{\ell+1})\ge (1-\epsilon)\hat{p}(\ell+1)$ it must hold that  $\tilde{k}_{\ell+1}\le k_{\ell+1}$.
\end{proof}
Finally, we present the proof of Theorem~\ref{thm:cubes_ptas}.
\begin{proof}[Proof of Theorem~\ref{thm:cubes_ptas}]
% We set our accuracy to $(O_{\epsilon,d}(1))^{-1}\epsilon$ such that the structured packing achieves a profit of at least $(1-\epsilon)OPT(\I)$.
%r{commented out sentence (I did not know which ``accuracy'' it refers to}
Our approximation scheme proceeds in three stages:
\begin{enumerate}[(A)]
	\item  \textit{Guessing basic quantities:} The total number of guesses is $(\log n)^{O_{\epsilon,d}(1)}$. {Observe that due to our method of guessing the basic quantitities for each $\N$-box, the number of items we can pack into it is at least $(1+\epsilon)^{-d}$ times the number of items packed into it if these quantities are guessed exactly. Thus, our packing of $\N$-boxes is a $(1+O(\epsilon))$-approximate packing.}
	\item \textit{Indirect guesing framework:} For each guess we need $r \in O_{\epsilon,d}(1)$ iterations of the indirect guessing framework, leading to solutions $x^*(1),\dots,x^*(r)$ to the LP-relaxations of $(\mathrm{IP}(\tilde{k}_{1})),...,(\mathrm{IP}(\tilde{k}_{r}))$ {after guessing the mentioned most profitable items}. This takes time
	% $O_{\epsilon,d}(\log n)$.
	{$(\log n)^{O_{\epsilon,d}(1)}$}.
	Let $\hat{x}(1),\dots,\hat{x}(r)$ be the rounded solutions to $(\mathrm{IP}(\tilde{k}_{1})),...,(\mathrm{IP}(\tilde{k}_{r}))$
	\item \textit{Constructing final solution:} Observe that it is not relevant which item is assigned to which box but rather that each box is assigned the correct number of items of each combination of size class $\Q_t$ and profit class $\P_{t'}$ and we {obtain} at least the total profit of the combined solutions $\hat{x}(1),\dots,\hat{x}(r)$. For each combination of size class $\Q_t$ and profit class $\P_{t'}$ we use Lemma~\ref{lem:data-structure} to find the set of items with sizes of class $\Q_t$ and profits of class $\P_{t'}$ in time $O(n)$. Assigning the items one-by-one to boxes in $\B$ in non-decreasing order of side lengths can be done in time $O(n\log n)$. By making sure that each box $B$ receives the correct number of items for each pair  size class $\Q_t$ and profit class $\P_{t'}$ this guarantees a profit of at least {an $(1+\epsilon)^{-1}$-fraction} of the profit due {to} $\hat{x}(1),\dots,\hat{x}(r)$. Finally, it remains to find a packing of {the} items into boxes. Each $\N$-box $B$ can be filled one-by-one in time $O(n)$ placing each item into one of the cells of the grid of $B$. For the $\S$-boxes, we use NFDH which packs all of them in time $O(n\log n)$~\cite{bansal2006bin}. Due to the way we guess the basic quantities of our $\S$-boxes we know that for any $\S$-box $B$ we have that all items packed into it have side length at most $2\epsilon \ell_{\min}(B)$ and the total volume of the items rounded up is at most $(1-2d\epsilon)\mathrm{VOL}_d(B)+\epsilon \mathrm{VOL}_d(B)$. Hence, we know that by Lemma~\ref{lem:NFDH} all selected items can be packed into $B$
	{by NFDH}.
\end{enumerate}

This yields a total running time of $O(n \log^2 n) + (\log n)^{O_{\epsilon,d}(1)}$, where the first term is due to insertion of all items into the item data structure and the balanced binary search trees. By the nature of our guessing and Lemma~\ref{lem:struc_rectangles}, we know that the guessed boxes guarantee a profit of at least $(1-2^{O(d)}\epsilon)OPT$. The indirect guessing scheme guarantees that for each set $\I_j$ (with $j=1,\dots,r$) we find a set of items yielding a profit of at least $(1-\epsilon)\hat{p}_j$ and by definition of $\N$- and $\S$-boxes all of these items can be packed.
\end{proof}
\subsection{Dynamic Algorithm}\label{apx:cubes_dynamic}
In this section we present the proof of Theorem~\ref{thm:dyn_alg_cubes}.
\begin{proof}[Proof of Theorem~\ref{thm:dyn_alg_cubes}]
If an item $i$
is added to $\I$ or removed from $\I$, then we update our item data
structure in time $O(\log^2 n)$ (see Lemma~\ref{lem:data-structure}). Additionally, insertion or deletion of the corresponding side length from the balanced binary search tree takes $O(\log n)$~\cite{guibas1978dichromatic}.
If an $(1+\epsilon)$-estimate for $\OPT$ is queried, we execute
the indirect guessing framework above \emph{without }computing the
precise set of tasks in the computed solution. Instead, we compute
only the solutions $x^*(1),\dots,x^*(r)$ to the respective instance of $(\mathrm{LP}(s))$ {(after guessing
	the most profitable items of $(\mathrm{IP}(s))$)}.
We can do this in time $(\log n)^{O_{\epsilon,d}(1)}$. This suffices
to estimate the profit of the computed solutions, and hence the profit
of the optimal solution, up to a factor $(1+\epsilon)$. 
Suppose that for an item $i\in\I$ it is queried whether $i$ is contained
in the current solution. We compute the values $\tilde{k}_{1},\tilde{k}_{2},\dots,\tilde{k}_{r}$
and the corresponding solutions to $(\mathrm{IP}(\tilde{k}_{1})),...,(\mathrm{IP}(\tilde{k}_{r}))$
in time $(\log n)^{O_{\epsilon,d}(1)}$. Based on them, we specify implicitly
a fixed solution $\ALG$ for which we answer whether $i\in\ALG$ or
$i\notin\ALG$. We do this as follows. For each combination of a set
$\tilde{\I}_{j}$, a size class $\Q_{t}$, and a profit class $\P_{t'}$
we consider the set $\tilde{\I}_{j}\cap \Q_{t}\cap\P_{t'}$. Based
on the rounded LP-solutions, we define a value $z_{j,t,t'}\in\mathbb{N}_{0}$
that defines how many items from $\tilde{\I}_{j}\cap\Q_{t}\cap\P_{t'}$
are contained in $\ALG$. Using the way we construct our packing, we know that we picked the items in non-decreasing order of side lengths. For an item $i\in\tilde{\I}_{j}\cap\Q_{t}\cap\P_{t'}$,
we output ``$i\in\ALG$'' if $i$ is among the first $z_{j,t,t'}$
items, and ``$i\notin\ALG$'' otherwise. This can be done using a single query in time $O(\log^2 n)$ to count the number of items in $\tilde{\I}_{j}\cap \Q_{t}\cap\P_{t'}$ with side length smaller than $s_i$. This ensures that we give
consistent answers between two consecutive updates of the set $\I$.
Finally, once all values $z_{j,t,t'}$ are computed as described above,
we can easily output the whole solution $\ALG$ in time $|\ALG|\cdot(\log n) +(\log n)^{O_{\epsilon}(1)}$
if this is desired.
\end{proof}

\section{Details of Section~\ref{sec:rec_main}}\label{app:rec}
In this section we present the details of our (dynamic) algorithms for the two-dimensional geometric knapsack problem with rectangles. Recall that we are given a set $\I$.  For an item $i\in\I$, denote by $h_{i}$ its height, by $w_{i}$ is width, by $p_i$ its profit and by $d_i := p_i/w_i$ its density. Our knapsack $K$ has side length $N$ in each dimension for a given integer $N$. We assume for the remainder that we are given a fixed constant $\epsilon > 0$. Again, we
store our items $\I$ in a suitable data structure based on range counting/reporting data structures for points in three dimensions (corresponding to item heights, widths, profits and densities)~\cite{lee1984computational}. We refer to it
% call this
as our \emph{rectangle data structure}. See Appendix~\ref{app:data_struc} for details.
% which we will refer
%to as our \emph{rectangle data structure} based on range counting\aw{/reporting} data structures
%for points in three dimensions (\aw{corresponding to} item heights, widths and profits)~\cite{lee1984computational}.
% This data structure allows the following operations based on common techniques (see e.g.~\cite{lee1984computational}).
\begin{lemma}
	\label{lem:data-structure-rec}There is a data structure for the items
	$\I$ that allows the following operations:
	\begin{itemize}
		\item Insertion and deletion of an item in time $O(\log^4 n)$.
		\item Given eight values $a,b,c,d,e,f,g,h \in \mathbb{N}$, return the cardinality of the set $\I(a,b,c,d,e,f,g, h):= \{i \in I: a\leq h_i \leq b, c\leq w_i \leq d, e\leq p_i \leq f,g\leq d_i \leq h\}$ in time $O(\log^{3}n)$.
		\item Given eight values $a,b,c,d,e,f,g,h \in \mathbb{N}$, return the total profit of items in the set $\I(a,b,c,d,e,f,g, h):= \{i \in I: a\leq h_i \leq b, c\leq w_i \leq d, e\leq p_i \leq f,g\leq d_i \leq h\}$ in time $O(\log^{3}n)$.
		\item Given eight values $a,b,c,d,e,f,g,h \in \mathbb{N}$, return the total width of items in the set $\I(a,b,c,d,e,f,g, h):= \{i \in I: a\leq h_i \leq b, c\leq w_i \leq d, e\leq p_i \leq f,g\leq d_i \leq h\}$ in time $O(\log^{3}n)$.
		\item Given eight values $a,b,c,d,e,f,g,h \in \mathbb{N}$,, return the set $\I(a,b,c,d,e,f,g, h):= \{i \in I: a\leq h_i \leq b, c\leq w_i \leq d, e\leq p_i \leq f,g\leq d_i \leq h\}$ in time $O(\log^3 n +|\I(a,b,c,d,e,f,g,h)|)$. 
		%			\item Given four values $a,b,c,d \in \mathbb{N}$, the total profit of all items in the set $\I(a,b,c,d):= \{i \in I: a\leq s_i \leq b, c\leq p_i \leq d\}$ can be reported in time $O(\log^{3}n)$.
	\end{itemize}
\end{lemma}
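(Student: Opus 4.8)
The plan is to realize all the operations with a four‑level dynamic range tree over the attributes $(h_i,w_i,p_i,d_i)$ of the items in $\I$, where every node is augmented with three aggregates of the items stored in its subtree: their number, their total width, and their total profit. Concretely, I would maintain a balanced search tree $T$ keyed on the heights $h_i$; each node $v$ of $T$ carries a secondary balanced search tree $T_v$ keyed on the widths of the items in the subtree of $v$; each node of $T_v$ carries a tertiary tree keyed on profits; and each node of a tertiary tree carries a quaternary tree keyed on densities. Two observations make the density coordinate harmless: first, densities $p_i/w_i$ are only ever compared, and every comparison — as well as a query constraint $g\le d_i\le h$ with $g,h\in\mathbb{N}$ — reduces to integer arithmetic via cross‑multiplication ($p_iw_j$ versus $p_jw_i$, and $g w_i\le p_i\le h w_i$); second, although $d_i$ is determined by $w_i$ and $p_i$, the constraint on $d_i$ is a genuine extra interval constraint (not implied by the others), so it simply gets its own tree level. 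Since cardinality, total width, and total profit are decomposable group aggregates, storing and maintaining them at every node is free up to constants.

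For an insertion or deletion of an item $i$ I would insert/delete $i$ in $T$, which propagates to the secondary trees of the $O(\log n)$ ancestors of $i$'s leaf, thence to the $O(\log^2 n)$ relevant tertiary trees, thence to the $O(\log^3 n)$ relevant quaternary trees — each elementary tree operation costing $O(\log n)$ — updating the three aggregates along every search path. To keep all four levels balanced in the worst case I would use a weight‑balanced scheme with partial rebuilding: a rebalancing operation at a node whose subtree holds $m$ items rebuilds the substructures hanging off it, of total size $O(m\cdot\mathrm{poly}(\log m))$, but is triggered only once per $\Theta(m)$ updates to that subtree, so a standard amortization argument (which can be deamortized in the usual way) keeps the cost per update at $O(\log^4 n)$.

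For a counting, total‑width, or total‑profit query with box $[a,b]\times[c,d]\times[e,f]\times[g,h]$, I would decompose $[a,b]$ into the $O(\log n)$ canonical subtrees of $T$, inside each decompose $[c,d]$ into $O(\log n)$ canonical subtrees of the secondary tree, and inside each of those decompose $[e,f]$ into $O(\log n)$ canonical subtrees. To avoid paying a fourth logarithmic factor for searching $g,h$ in the $O(\log^3 n)$ resulting innermost (quaternary) structures, I would equip these structures with fractional cascading along the parent–child links of the third‑level trees, so that after one initial $O(\log n)$ binary search, each subsequent innermost structure reached is searched in $O(1)$ — $O(\log^3 n)$ in total — and the required aggregate for each innermost canonical node is read off from prefix‑aggregate information stored alongside the cascaded lists. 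Summing the stored aggregates over all $O(\log^3 n)$ selected canonical nodes gives the answer in $O(\log^3 n)$. For the reporting variant I would instead enumerate the items of each selected canonical subtree, adding the output‑sensitive term $|\I(a,b,c,d,e,f,g,h)|$.

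The only genuinely delicate point is carrying the aggregate (prefix‑sum) information through fractional cascading in the \emph{dynamic} setting, since one must maintain the cascaded lists and their partial sums under insertions and deletions; but this is exactly the standard dynamic range‑tree‑with‑fractional‑cascading machinery \cite{lee1984computational}, and the catalog updates it triggers fit within the $O(\log^4 n)$ update budget. Everything else — the augmentation, the weight‑balanced rebuilding, and the propagation counts — is routine bookkeeping. If one preferred to avoid fractional cascading entirely, dropping it makes the construction completely elementary at the cost of raising the query bounds to $O(\log^4 n)$, which would still suffice for near‑linear running time.
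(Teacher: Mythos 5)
Your construction is essentially the paper's: the paper simply instantiates known dynamic multi-dimensional range counting/reporting structures (citing the Lee--Preparata survey) on the four coordinates height, width, profit, density, using three copies with weights $1$, $p_i$ and $w_i$, which is exactly the four-level augmented dynamic range tree with count/total-width/total-profit aggregates that you build explicitly. The one genuinely delicate step you flag---carrying aggregate information through dynamic fractional cascading to get the $O(\log^3 n)$ query bound alongside $O(\log^4 n)$ updates---is precisely the part the paper also delegates to that same reference, so your proof is correct and follows the same route.
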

Furthermore, we use balanced binary search trees (see e.g.~\cite{guibas1978dichromatic}) to store the set of item densities, profits, widths and heights such that an item can be inserted, deleted, and queried in time $O(\log n)$ in each of these additional data structures.% allowing for $O(\log n)$ insertion, deletion and lookups.}

We distinguish our items $\I$ into four types, depending on constants
$\el$ and $\es$ with $1\ge\epsilon\ge\el\ge\es>0$
such that $c(\epsilon)\es<\el$
for a value $c(\epsilon)$ (which we will define later).
% , depending only on $\epsilon$, that we
% will define later.
We say that an item $i\in\I$ is
\begin{itemize}
\item \emph{large }if $h(i)>\el N$ and $w(i)>\el N$,
\item \emph{horizontal} if $w(i)>\el N$ and $h(i)\le\es N$,
\item \emph{vertical} if $h(i)>\el N$ and $w(i)\le\es N$,
\item \emph{small} if $h(i)<\es N$ and $w(i)<\es N$,
\item \emph{intermediate }otherwise, i.e., $h(i)\in(\es N,\el N]$ or $w(i)\in(\es N,\el N]$. 
\end{itemize}
Following the ideas of~\cite{grandoni2021improved}, we can guess $\el$ and $\es$ in $O_\epsilon(1)$ time such that we may disregard all intermediate items and $\el$ and $\es$ are constants depending only on $\epsilon$. We state the lemma here for completeness.
\begin{lemma}[\cite{galvez2021approximating}]\label{lem:rec_noint}
	For any $\epsilon \geq 0$ and a positive increasing function $f(\cdot)$ (with $f(x) > x$), there exist constant values $\el$,~$\es$ with $\epsilon \geq \el \geq f(\es) \in \Omega_{\epsilon}(1)$ and $\es \in \Omega_\epsilon(1)$ such that the total profit of intermediate rectangles is bounded by $\epsilon p(\OPT)$. Furthermore, the pair $(\el,\es)$ is one pair from a set of $O_\epsilon(1)$ pairs and this set can be computed in polynomial time.
\end{lemma}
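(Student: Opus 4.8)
The intended argument is a \emph{shifting argument} over a geometric family of scales, of the kind used by Bansal, Caprara, Sviridenko and by Gálvez et al. Fix $K:=\lceil 4/\epsilon\rceil = O_\epsilon(1)$ and build a strictly decreasing sequence of scales $\epsilon=\mu_0>\mu_1>\dots>\mu_K>0$ where, for each $j$, $\mu_{j+1}$ is chosen inside $(0,\mu_j)$ small enough that $f(\mu_{j+1})\le\mu_j$ (this is possible because $f$ is increasing; in our applications $f(x)=c(\epsilon)\cdot x$, so one simply takes $\mu_{j+1}:=\mu_j/c(\epsilon)$, giving $\mu_j=\epsilon/c(\epsilon)^j$). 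Since the sequence has only $K+1=O_\epsilon(1)$ entries, each $\mu_j$ is a constant depending only on $\epsilon$ (and $f$), and in particular $\mu_j\ge\mu_K=\Omega_\epsilon(1)$ for all $j$. For $j\in\{0,\dots,K-1\}$ let $I_j:=\{\,i\in\I:\ h_i/N\in[\mu_{j+1},\mu_j]\ \text{or}\ w_i/N\in[\mu_{j+1},\mu_j]\,\}$; by a short case analysis over the definitions of large/horizontal/vertical/small, the set of items that are \emph{intermediate} with respect to the pair $(\el,\es):=(\mu_j,\mu_{j+1})$ is contained in $I_j$.

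The key combinatorial observation is that the windows $[\mu_{j+1},\mu_j]$ are essentially a partition of $[\mu_K,\epsilon]$: they are pairwise disjoint except that consecutive windows share one endpoint. Hence for a fixed rectangle $i$, the value $w_i/N$ lies in at most two of these windows, and likewise $h_i/N$ lies in at most two of them, so $i$ belongs to $I_j$ for at most four indices $j$. This gives $\sum_{j=0}^{K-1} p(I_j\cap\OPT)\le 4\, p(\OPT)$, and by averaging there is an index $j^\star$ with $p(I_{j^\star}\cap\OPT)\le \tfrac{4}{K}\,p(\OPT)\le\epsilon\, p(\OPT)$.

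Setting $\el:=\mu_{j^\star}$ and $\es:=\mu_{j^\star+1}$ finishes the argument: we have $\epsilon=\mu_0\ge\mu_{j^\star}=\el\ge f(\mu_{j^\star+1})=f(\es)$ by construction, both $\el$ and $\es$ equal $\Omega_\epsilon(1)$ fixed scales, and the intermediate items for this pair — being a subset of $I_{j^\star}$ — contribute total profit at most $\epsilon\, p(\OPT)$ to $\OPT$, as required. For the ``furthermore'' part, the candidate set is exactly $\{(\mu_j,\mu_{j+1}):j=0,\dots,K-1\}$, which has cardinality $K=O_\epsilon(1)$ and whose entries are determined by $\epsilon$ and $f$ alone (computed in $O_\epsilon(1)$ time when $f$ is an explicit function such as $f(x)=c(\epsilon)x$); the later algorithm enumerates all of them and uses the guarantee for the ``good'' pair $(\mu_{j^\star},\mu_{j^\star+1})$.

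\textbf{Main obstacle.} There is no deep obstacle here — the lemma is quoted for completeness and its proof is routine shifting. The one point requiring care is matching the exact (slightly asymmetric) inequalities in the definitions of large, horizontal, vertical, and small: one must verify that every item that is intermediate for a chosen pair $(\el,\es)$ really lies in the corresponding set $I_{j^\star}$, which is why I use \emph{closed} windows $[\mu_{j+1},\mu_j]$ (at the cost of a harmless factor $4$ instead of $2$ in the averaging). A secondary point is that the scale sequence must stay positive and bounded away from $0$; this is automatic for the functions $f$ that arise in our applications, where the geometric choice $\mu_j=\epsilon/c(\epsilon)^j$ keeps all scales $\ge \epsilon/c(\epsilon)^{K}=\Omega_\epsilon(1)$.
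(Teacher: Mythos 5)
Your proposal is correct and is essentially the same argument as the one behind this lemma: the paper itself only quotes the statement from~\cite{galvez2021approximating} without reproving it, and the cited proof is exactly this shifting argument over a constant-length geometric sequence of scales, with each item's width and height hitting $O(1)$ windows and an averaging step selecting a low-profit window. The only imprecision is your remark that the sequence with $f(\mu_{j+1})\le\mu_j$ exists ``because $f$ is increasing'' — for an arbitrary increasing $f$ with $f(x)>x$ (e.g., $f(x)=x+1$) no such $\mu_{j+1}>0$ need exist, so strictly one needs $f$ to take values below $\epsilon$; since you note this and the paper's application uses $f(x)=c(\epsilon)\cdot x$ (for which your explicit choice $\mu_j=\epsilon/c(\epsilon)^j$ works), this does not affect correctness in context.
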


\begin{figure}[h!]
	\centering
	\includegraphics[width=.6\linewidth,page = 3]{figures/figures_rectangles.pdf}
	\caption{Visualization of $\L$-, $\S$-,$\V$- and $\H$-boxes}
	\label{fig:pack_LSVH}
\end{figure}
In the remainder of this section, we use the following notions. Consider some set of items $\I$. Denote by $h_{\min}(\I)$ ($w_{\min}(\I)$) and $h_{\max}(\I)$ ($w_{\max}(\I)$) the minimum and maximum height (width) among all items in $\I$, respectively.

\subsection{No rotations}\label{app:rec_nr}
In this subsection, we present the details for the setting without rotations leading to the proof of Theorem~\ref{thm:rec_nr}. We first give a high level overview. Similarly as before, let $\epsilon>0$ and we assume w.l.o.g.~that $1/\epsilon\in\mathbb{N}$ and that $\epsilon$ is sufficiently small.

\subsubsection{Structured packing of rectangles}
We use a packing based on rectangular boxes, similar to our packing
in Section~\ref{sec:hypercubes}. However, we distinguish the boxes
now into four different types (Figure~\ref{fig:pack_LSVH} shows a visualization of these types). The first type are $\L$-boxes which
contain only one large item each.
\begin{definition}[$\L$-boxes]
Let $B\subseteq K$ be an axis-parallel rectangle and suppose we
are given a packing of items inside $B$. We say that $B$ is an $\L$\emph{-box
}if $B$ contains exactly one large item and no other items.
\end{definition}
The next types are $\H$- and $\V$-boxes inside which the items are
intuitively horizontally and vertically stacked, respectively. See Figure~\ref{fig:pack_VH} for a visualization of these boxes.
\begin{definition}[$\H$- and $\V$-boxes]
Let $B$ be an axis-parallel rectangle and suppose we are given
a packing of items $\I'\subseteq\I$ inside $B$. We say that $B$
is a
\begin{itemize}
	\item $\H$\emph{-box} if $\I'$ contains only horizontal items which are
	stacked on top of each other inside $B$ and there are given values $w_{\min}(B),w_{\max}(B)$
	such that $w_{\min}(B)\le w_{i}\le w_{\max}(B)$ for each
	$i\in\I'$, 
	\item $\V$\emph{-box} if $\I'$ contains only vertical items which are
	stacked one next to the other inside $B$
	and there are given values
	$h_{\min}(B),h_{\max}(B)$
	such that $h_{\min}(B)\le h_{i}\le h_{\max}(B)$ for each
	$i\in\I'$.
\end{itemize}
\end{definition}
The third type are $\S$-boxes for which 
% which work similarly as the $\S$-boxes
%earlier: 
each item is small compared to the size of the box, and thus
NFDH can pack these items, unless they use almost the entire volume
of the box.
\begin{definition}[$\S$-boxes]
Let $B$ be an axis-parallel rectangle with height $h_{B}\in\mathbb{Z}_{\ge0}$
and width $w_{B}\in\mathbb{Z}_{\ge0}$ and let $\I'\subseteq\I$ be
a set of items packed into $B$. We say that $B$ is an $\S$\emph{-box}
if for each item $i\in\I'$ we have $h_{i}\le\epsilon\cdot h_{B}$
and $w_{i}\le\epsilon\cdot w_{B}$ and additionally $\sum_{i\in\I'}\lceil h_{i} \rceil_{1+\epsilon}\cdot \lceil w_{i}\rceil_{1+\epsilon}\leq(1-3\epsilon)h_{B}w_{B}$.
\end{definition}
We argue now that there always exists a $(2+\epsilon)$-approximate
\emph{easily guessable packing} with $\L$-, $\H$-, $\V$-, and $\S$-boxes that is suitable
for our indirect guessing framework.
% \newpage
\begin{lemma}[Near-optimal packing of rectangles]
\label{lem:struc_rectangles} There exists a packing with the following
properties: 

\begin{enumerate}[i)]
	\item it consists of $\L$-, $\H$-, $\V$- and $\S$-boxes whose total
	number is bounded by a value $C_{\mathrm{boxes}}(\epsilon)$ depending
	only on $\epsilon$. 
	\item there is a universal set of values $U(\epsilon$) with $|U(\epsilon)|=O_{\epsilon}(1)$
	such that for each box $B$ of the packing $h_{B}=\alpha_{B}\cdot N$
	for some $\alpha_{B}\in U(\epsilon)$,
	\item there exist values $k_{0},k_{1},k_{2},\dots,k_{r}\in\mathbb{Z}_{\geq0}$
	with $k_{0}=0$ and $r\in O_{\epsilon}(1)$ and a value $j_{B}\in\{1,2,\dots,r\}$
	for each $\H$- or $\V$-box $B$ such that
	\begin{enumerate}
		\item if $B$ is a $\H$-box, then $w_{\min}(B)=k_{j_{B}-1}$ and $w_{\max}(B)=k_{j_{B}}$, 
		\item if $B$ is a $\L$-box, then the unique item $i\in\I$ packed inside
		$B$ satisfies that $w_{i}=k_{j_{B}}$,
	\end{enumerate}
	\item let $h^{(1)},\dots, h^{(c)}$ be the distinct heights of $\V$-boxes in non-decreasing order such that for each $j = 1,\dots,c$ we have that
	\begin{enumerate}
		\item each $\V$-box of height $h^{(1)}$ only contains items of height at most $h^{(1)}$ and
		\item each $\V$-box of height $h^{(j)}$ only contains items of height at most $h^{(j)}$ and strictly larger than $h^{(j-1)}$ for $j=2,\dots,c$
	\end{enumerate}
	\item the total profit of the packing is at least $(\frac{1}{2}-O(\epsilon))\OPT$.
	% ,
	% where $\OPT$ is the profit of an optimal packing for instance
	% $\I$.
\end{enumerate}
\end{lemma}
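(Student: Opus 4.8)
The plan is to mirror the proof of Lemma~\ref{lem:struc_hypercubes}: start from a known structured packing and refine it until it becomes easily guessable.

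\textbf{Reduction and case split.} First apply Lemma~\ref{lem:rec_noint} with a sufficiently fast-growing function $f$ (chosen so that the resulting $\es$ is small enough for all later steps) to fix $\es,\el$ and discard the intermediate items, losing $\epsilon\cdot\OPT$; write the remaining profit as $P_L+P_H+P_V+P_S$ (large, horizontal, vertical, small items). The factor $2$ comes from producing a packing of \emph{either} large${}+{}$horizontal${}+{}$small items \emph{or} large${}+{}$vertical${}+{}$small items: by averaging, one of these two subsets has profit at least $\tfrac12\big((P_L+P_H+P_S)+(P_L+P_V+P_S)\big)\ge\tfrac12(P_L+P_H+P_V+P_S)\ge(\tfrac12-\epsilon)\OPT$. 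The two cases are symmetric under exchanging the coordinate axes (swapping $\H$- with $\V$-boxes, width with height, and properties iii) and iv)), so I describe only the horizontal case.

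\textbf{From containers to $\L$-, $\H$-, $\S$-boxes.} Restrict $\OPT$ to its large, horizontal and small items and feed this feasible packing into the container/corridor-decomposition theorem of Gálvez et al.~\cite{galvez2021approximating}: it produces $O_\epsilon(1)$ non-overlapping containers packing a $(1-\epsilon)$-fraction of the profit, each being a single-item container (here a large item), a horizontal container (horizontal items stacked), or an area container (small items); since there are no vertical items, no vertical container arises, so property iv) is vacuous in this case. Using the discretization/shifting machinery behind that theorem, together with a small choice of $\es$, we may additionally assume every container has height $\alpha N$ with $\alpha$ in a fixed finite set $U(\epsilon)$ of rationals, rounding the height (and width) of each large item to such a value while keeping the whole packing inside the knapsack; this yields properties i) and ii). Each single-item container becomes an $\L$-box and each horizontal container a provisional $\H$-box. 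For each area container we select, via the LP/rank-lemma argument of Lemma~\ref{lem:vstarbox_selection} applied with the rounded area $\lceil h_i\rceil_{1+\epsilon}\lceil w_i\rceil_{1+\epsilon}$ in place of $\lceil s_i\rceil_{1+\epsilon}^{d}$, a subset of its small items of rounded area at most $(1-3\epsilon)h_Bw_B$; after a minor subdivision ensuring $h_i\le\epsilon h_B$ and $w_i\le\epsilon w_B$ it becomes a valid $\S$-box, losing only a factor $1+\epsilon$.

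\textbf{The $k_j$-structure, the vertical analogue, and profit.} To obtain property iii), round every horizontal item's width down to the nearest multiple of $\epsilon\el N$ (which loses at most an $\epsilon$-fraction of width, hence of profit, as horizontal items have width ${}>\el N$) and let $k_1<\dots<k_r$ be these multiples together with the rounded widths of the $O(1/\el^2)$ large items, with $k_0:=0$; thus $r=O_\epsilon(1)$. Split each provisional $\H$-box into at most $r$ boxes, one per interval $(k_{j-1},k_j]$ meeting its width range: $B_j$ has width $k_j$, holds exactly the items of width in $(k_{j-1},k_j]$ stacked in their original order, is labelled $w_{\min}(B_j)=k_{j-1}$, $w_{\max}(B_j)=k_j$, and (by the argument above) keeps its height in $U(\epsilon)N$; each large item's $\L$-box has $w_i=k_{j_B}$ for some $j_B$ by construction, and the box count stays $O_\epsilon(1)$. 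In the vertical case the identical splitting is applied to the $\V$-boxes along the intervals $(h^{(j-1)},h^{(j)}]$, where $h^{(1)}<\dots<h^{(c)}$ are the distinct $\V$-box heights; these lie in $U(\epsilon)N$, so $c=O_\epsilon(1)$, which is property iv). Multiplying the losses (factor $1-\epsilon$ for intermediate items, $\tfrac12$ for the case split, $1-O(\epsilon)$ for the container theorem and the width-rounding, $1-O(\epsilon)$ for the $\S$-box selection, nothing for the splitting) gives total profit at least $(\tfrac12-O(\epsilon))\OPT$, which is property v). I expect the main obstacle to be the discretization step: extracting from~\cite{galvez2021approximating} (and, if necessary, re-proving) an $O_\epsilon(1)$-container packing in which each container's stacking dimension is free while the orthogonal dimension is a value in the fixed set $U(\epsilon)N$, and arranging that this survives the later splitting of $\H$-/$\V$-boxes by the $k_j$/$h^{(j)}$ intervals — in particular absorbing, via resource augmentation or a shifting argument, the slack created by rounding up box heights and rounding the few large items. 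Everything else is a direct adaptation of the hypercube proof.
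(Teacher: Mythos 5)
There is a genuine gap at the heart of your argument: the step ``feed this feasible packing into the container/corridor-decomposition theorem of G\'alvez et al.: it produces $O_\epsilon(1)$ non-overlapping containers packing a $(1-\epsilon)$-fraction of the profit.'' No such theorem is available without resource augmentation. The statement you would need is Lemma~\ref{lem:rec_rs_augment}, which repacks into a box of height $(1+\epsilon_{ra})h(B)$, i.e.\ it requires an empty horizontal strip of height $\Omega(\epsilon N)$ to already exist; merely deleting all vertical (or all horizontal) items leaves scattered free area, not such a strip, and constant-container packings of a general sub-instance can lose up to a factor $2$ (as the paper itself notes when discussing why $2-\delta$ seems out of reach). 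This is precisely why the paper's proof of Lemma~\ref{lem:struc_rectangles} takes a different route to the factor $2$: a shifting argument frees a strip $S$ of height $\epsilon N$ after discarding the cheap items touching it, and the dichotomy is between the large/vertical items \emph{crossing} $S$ (Case 1: they behave like a one-dimensional knapsack and become one $\V$-box plus $O_\epsilon(1)$ $\L$-boxes of height $N$) and all remaining items, which avoid $S$ entirely so that the resource augmentation packing lemma applies (Case 2). Your averaging trick (keep large${}+{}$small and the better of horizontal/vertical) is arithmetically fine, but it does not create the free strip, so the subsequent restructuring, the rounding of box heights to the universal set $U(\epsilon)$, and the absorption of small $\S$-boxes all remain unproved; you flag this yourself as ``the main obstacle,'' but it is exactly the content of the lemma, not a detail.

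A second, concrete error: you define the $k_j$ as multiples of $\epsilon\el N$ together with \emph{rounded} widths of the large items. An item's width cannot be rounded, so the unique item $i$ in an $\L$-box then satisfies $w_i\neq k_{j_B}$ in general, violating property~iii)(b); moreover, if the $k_j$ could be drawn from such a small explicit set, the indirect guessing framework that this lemma is built to support would be unnecessary. The paper instead takes the $k_j$ to be the actual (distinct) widths of the $\H$- and $\L$-boxes, with each $\L$-box shrunk to exactly the width of its item, and only afterwards rounds box \emph{heights} up to $U(\epsilon)\cdot N$ using the free strip reserved at the top of the knapsack --- which also repairs the issue you note about heights surviving the splitting of $\H$-/$\V$-boxes into the $k_j$- and $h^{(j)}$-intervals.
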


To prove Lemma~\ref{lem:struc_rectangles} we need the \emph{resource augmentation packing lemma} introduced by Galvez et al.~\cite{galvez2021approximating}. In~\cite{galvez2021approximating}, three types of boxes are used. Let $B$ be a box with height $h(B)$ and width $w(B)$. Then, $B$ is a \emph{type 1} box if all items in $B$ are stacked on top of each other in non-increasing order of widths. Analogously, $B$ is a \emph{type 2} box if all items in $B$ are placed next to each other in non-increasing order of heights. Finally, we say $B$ is a \emph{type 3} box if for every item $i \in \I$ packed into $B$ we have $h_i  \leq \epsilon h(B)$ and $w_i \leq \epsilon w(B)$. The resource augmentation packing lemma states the following.
\begin{lemma}[\cite{galvez2021approximating}]\label{lem:rec_rs_augment}
	Let ${\hat{\I}}$ be a collection of items that can be packed into a box $B$ with height $h(B)$ and width $w(B)$, and let $\epsilon_{ra} > 0$ be a given constant. Then there exists a packing into sub-boxes of type 1, 2 and 3 of ${\hat{\I}}' \subseteq {\hat{\I}}$ inside a box $B'$ with height $(1+\epsilon_{ra})h(B)$ and width $w(B)$ such that:
	\begin{itemize}
		\item $p({\hat{\I}}') \geq (1-O(\epsilon_{ra}))p({\hat{\I}})$
		\item The number of sub-boxes used in the packing is $O_{\epsilon_{ra}}(1)$.
	\end{itemize}
\end{lemma}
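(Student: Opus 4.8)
The plan is to apply a container decomposition to a fixed feasible packing of $\hat{\I}$ inside $B$. For convenience I would first scale the two coordinate axes independently so that $w(B)=h(B)=1$; this preserves each of the three box types and all profits, and a factor $(1+\epsilon_{ra})$ of height augmentation scales along. Then I classify each item $i$ packed in $B$ by its side lengths: call $i$ \emph{fat} if $w_i>\epsilon_{ra}$ and $h_i>\epsilon_{ra}$, \emph{wide} if $w_i>\epsilon_{ra}\ge h_i$, \emph{tall} if $h_i>\epsilon_{ra}\ge w_i$, and \emph{tiny} if $w_i\le\epsilon_{ra}$ and $h_i\le\epsilon_{ra}$. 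Since the fat items have total area at most $1$, there are at most $\epsilon_{ra}^{-2}$ of them; I assign each fat item its own box, which is trivially a type-1 box (a single stacked item). This produces $O(\epsilon_{ra}^{-2})$ boxes, and it remains to repack a $\bigl(1-O(\epsilon_{ra})\bigr)$-profit subset of the wide, tall and tiny items into $O_{\epsilon_{ra}}(1)$ further boxes of types 1, 2 and 3, inside a region of width $1$ and height $1+\epsilon_{ra}$.

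The core step is to build an $O_{\epsilon_{ra}}(1)$-cell rectangular decomposition of $B$ that is ``aligned'' to the packing and in which each cell is, after a small profit sacrifice, monochromatic. First I extend the four edges of every fat item to full chords of $B$; together with $\partial B$ this gives a coarse grid with $O(\epsilon_{ra}^{-4})$ cells, none of which has a fat-item edge in its interior, so that every cell not covered by a fat item contains only wide, tall, tiny items (and fragments of them crossing cell boundaries). I then refine this grid by inserting $\Theta(\epsilon_{ra}^{-1})$ equally spaced lines per coarse cell in each direction and perform a shifting/averaging argument over the offset of this fine sub-grid: because a wide item crosses only vertical fine lines (and a tall item only horizontal ones), a suitable choice of offset lets me delete all items that straddle a fine grid line while paying only $O(\epsilon_{ra})\,p(\hat{\I})$ in profit; tiny items contribute a negligible amount. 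After this deletion every surviving item lies in a single cell of the refined grid, the number of cells is still $O_{\epsilon_{ra}}(1)$, and each cell contains only items of one of the three kinds.

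Finally I repack each monochromatic cell into a single box of the matching type, spending the $(1+\epsilon_{ra})$ height augmentation here. A wide-only cell becomes a type-1 box: restack its items on top of each other in non-increasing order of width inside a box of the cell's width, and by a Next-Fit-Decreasing-Height estimate, after discarding a further $O(\epsilon_{ra})$-profit subset their total height is at most $(1+\epsilon_{ra})$ times the cell's height. A tiny-only cell becomes a type-3 box packed by NFDH, since every item there is at most $\epsilon_{ra}$ times each side length of the cell (the NFDH area slack is again absorbed by discarding an $O(\epsilon_{ra})$-profit subset together with the height augmentation). The tall-only cells are the delicate case and are handled by the transposed argument together with a bounded number of type-2 sub-boxes stacked within the augmented height. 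Summing the losses across all steps gives total discarded profit $O(\epsilon_{ra})\,p(\hat{\I})$, a total of $O_{\epsilon_{ra}}(1)$ boxes, and a packing inside width $w(B)$ and height $(1+\epsilon_{ra})h(B)$, as required. I expect the main obstacle to be making the middle step fully rigorous — simultaneously certifying that the number of retained cells stays $O_{\epsilon_{ra}}(1)$, that the crossing-item loss is $O(\epsilon_{ra})\,p(\hat{\I})$, and, most delicately, that the single height augmentation suffices for the tall-item cells, where it is the total \emph{width} of the items (not their height) that must be reconciled with the fixed width $w(B)$.
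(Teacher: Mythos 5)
You should first note that the paper does not prove this statement at all: it is imported verbatim from G\'alvez et al.~\cite{galvez2021approximating}, so your proposal has to stand on its own, and as written it has genuine gaps. The central one is the fine-grid shifting step. After normalizing $w(B)=h(B)=1$, a wide item has width greater than $\epsilon_{ra}$, while the fine vertical lines you insert inside a coarse cell have spacing at most $\epsilon_{ra}$ times the cell width, hence at most $\epsilon_{ra}$. Such an item therefore straddles a vertical fine line for \emph{every} offset, and symmetrically every tall item straddles a horizontal fine line for every offset. So ``delete all items that straddle a fine grid line'' is not an $O(\epsilon_{ra})\,p(\hat{\I})$ loss obtained by averaging over offsets; it can delete essentially all wide and tall items, which may carry almost all of the profit. (Shifting arguments of this kind only work when each item can be cut by a small fraction of the candidate lines, which is exactly what fails here.) A second problem is the repacking of a ``wide-only'' cell into a single type-1 box: a coarse cell can have width up to $1$, so several wide items of width barely above $\epsilon_{ra}$ can sit side by side in it; restacking them on top of each other multiplies the required height by up to roughly $1/\epsilon_{ra}$, which the single $(1+\epsilon_{ra})$ height augmentation cannot absorb.

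Finally, the case you yourself flag as delicate --- the tall-item cells --- is not a loose end but the heart of the lemma. The width of the tall items must be reconciled with the fixed width $w(B)$, and extra \emph{height} cannot pay for a \emph{width} overflow; this is precisely why the actual proof in~\cite{galvez2021approximating} does not use a grid-plus-shifting decomposition but a container decomposition combined with fractional (strip-)packing and linear grouping of the horizontal/vertical items, where the rounding slack is what the $(1+\epsilon_{ra})$ augmentation and the $O(\epsilon_{ra})$ profit loss absorb. Without an argument of that type (or an equivalent mechanism for rearranging wide/tall items crossing cell boundaries instead of deleting them), the proposal does not establish the lemma.
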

Observe that the three types of boxes described above are different from our $\H$-, $\V$-, $\L$- and $\S$-boxes. But as it turns out in the upcoming proof of Lemma~\ref{lem:struc_rectangles}, we can modify the packing into type 1,2 and 3 boxes into a packing using $\H$-, $\V$-, $\L$- and $\S$-boxe while keeping a $1-O(\epsilon)$ fraction of the profit. 
\begin{proof}[Proof of Lemma~\ref{lem:struc_rectangles}]
	Let $\OPT$ be the optimal packing of $\I$ into $K$. We first apply a shifting argument to identify a horizontal strip
	$S:=[0,N)\times[a,a+\epsilon N)$ for
	some $a\in[0,(1-\epsilon)N)$ inside the knapsack $K$ such that
	\begin{itemize}
		\item the items that are horizontal or small and that intersect with $S$,
		together with
		\item the items that are large or vertical and for which at least one corner
		is contained in $S$
	\end{itemize}
	have a total profit of at most $2\epsilon\cdot\OPT$.
	To {show} this we can consider values $a_1,\dots, a_{1/\epsilon}$ where $a_j = (j-1)\epsilon$ as candidates for $a$ and pick one of them uniformly at random.
	% with probability $\epsilon$.
	Since $\epsilon \geq \el > \es$ we know that each horizontal and small item will intersect with at most two of the candidate strips. Hence, for each of them the probability of intersecting with the chosen strip is at most $2\epsilon$. For large and vertical items each corner is contained in at most one of the candidate strips. Therefore, the probability
	{that a given large or vertical item is contained in the set of items defined above is at most}
	$\epsilon$. Hence, the expected total profit of the two groups of items is at most $2\epsilon OPT$ which implies that there must also exist one particular strip such that the total profit of the two groups of items is at most $2\epsilon OPT$. We choose this strip and remove the items described above from $\OPT$, losing in total a profit of
	at most $O(\epsilon)\OPT$.
	% \\
	
	Now, consider the remaining packing. We partition the remaining items into
	two sets. The first set consists of all remaining items that are large
	or vertical and that intersect with $S$; we denote them by $\I_{1}$.
	Let $\I_{2}$ denote all other remaining items. 
	
	If $p(\I_{1})\ge p(\I_{2})$ then $p(\I_{1})\ge(\frac{1}{2}-O(\epsilon))\OPT$
	and we construct a packing for $\I_{1}$ as follows. Note that each
	item $i\in\I_{1}$ must ``cross'' $S$ since no corner of $i$ is
	contained in $S$. This implies that no two items are packed on top of each other and we can
	{imagine that the items are packed as in a one-dimensional knapsack.}
	% consider the problem as a $1D$-knapsack.
	By re-arranging the items from left to right, first starting with the vertical items and then {continuing} with the large items, we create a single $\V$-box and at most $O_\epsilon(1)$ many $\L$-boxes, each of height $N$, i.e., $\alpha_B=1$ for each of them. This packing satisfies properties i), ii) and v). We will now show that it also satisfies property iii). To do this, let $B_1,\dots, B_r$ be the $\L$-boxes with distinct values of $w_{\max}(\cdot)$ such that $w_{\max}(B_1) \leq \dots \leq w_{\max}(B_r)$ and {define} $k_j = w_{\max}(B_j)$ for each $j=1,\dots,r$. As the number of $\L$-boxes in the packing is at most $O_\epsilon(1)$, we also know that $r \in O_\epsilon(1)$. Finally, observe that we have only a single $\V$-box which automatically satisfies property iv).
	
	If $p(\I_{1})<p(\I_{2})$ then $p(\I_{2})\ge(\frac{1}{2}-O(\epsilon))\OPT$
	and we construct a packing for $\I_{2}$. Note that no item $i\in\I_{2}$
	intersects with $S$. Using the empty space $S$, we invoke the \emph{resource augmentation packing lemma} (see Lemma~\ref{lem:rec_rs_augment}) with $\epsilon_{ra} = \frac{\epsilon-\epsilon^2}{2-2\epsilon} \leq \epsilon$ which allows us to find a packing of a subset of items $\I'_{2}$ into $O_{\epsilon}(1)$ boxes,
	using only the area $[0,N]\times[0,(1-\epsilon/2-\epsilon^2)N]$ inside $K$ such that $p(\I'_2) \geq (1-O(\epsilon))p(\I_2)$. The boxes used in this packing, however, are boxes of type 1, 2 or 3. We will now show how to change the packing into a packing using $\H, \V, \L$ and $\S$-boxes. Let $B$ be a box of the current packing. We will now make a case distinction.
	
	\textbf{Case 1:} $B$ is a box of type 1 (all items in $B$ are stacked on top of each other in non-increasing order of heights). We first re-order the packing from bottom to top. First, we stack all horizontal items on top of each other. Then, we stack all large items on top of each other, followed by the vertical and, finally, the small items. Let $B_h$ be the sub-box of $B$ filled with horizontal items.
	{We define a $\H$-box as a sub-box of $B$ for all these horizontal items.}
	% By definition $B$ is a $\H$-box.
	Furthermore, since the height of $B$ is at most $(1-\epsilon/2)N$, there can be at most $1/\el$ many large and vertical items, respectively. For each large item,
	{we define an $\L$-box as a sub-box of $B$ that contains only this large item.}
	% the sub-box of $B$ containing only this item is a $\L$-box and,
	Similarly, {for each vertical item, we define a $\V$-box that contains only this vertical item.}
	% each sub-box containing a vertical item is a $\V$-box.
	Finally, consider the area of $B$ containing all small items. We will take care of these items later.
	
	\textbf{Case 2:} $B$ is a box of type 2 (all items in $B$ are stacked on next to each other in non-increasing order of widths). We first re-order the packing from left to right. First, we stack all vertical items on next to each other. Then, we stack all large items on next to each other, followed by the horizontal and, finally, the small items.
	{We define a $\V$-box $B_v$ as a sub-box of $B$ that contains all vertical items in $B$.}
	% Let $B_v$ be the sub-box of $B$ filled with vertical items. By definition $B$ is a $\V$-box.
	Furthermore, since the width of $B$ is at most $N$, there can be at most $1/\el$ many large and horizontal items, respectively. For each large item, {we define an $\L$-box as a} the sub-box of $B$ containing only this item. Similarly,
	{for each horizontal item we define a $\H$-box containing only this item.}
	% each sub-box containing a horizontal item is a $\H$-box.
	Finally, consider the area of $B$ containing all small items. We will take care of these items later.
	
	\textbf{Case 3:} $B$ is a box of type 3 (for every item $i \in \I(B)$ packed into $B$ we know that $h_i \leq \epsilon h_B$ and $w_i \leq \epsilon w_B$). We will now find a subset $\I'(B) \subseteq \I(B)$ such that
	\begin{itemize}
		\item $p(\I'(B)) \geq (1-O(\epsilon))p(\I(B))$ and
		\item $\sum_{i\in\I'(B)}\lceil h_{i} \rceil_{1+\epsilon}\cdot \lceil w_{i}\rceil_{1+\epsilon}\leq(1-3\epsilon)h_{B}w_{B}$.
	\end{itemize}
	To do this, consider the following LP-relaxation of the 1-D Knapsack problem with capacity $(1-4\epsilon)h_{B}w_{B}$ and items $\I(B)$, where the heigth and width are rounded up {to the next larger power of $1+\epsilon$} and the item size is given by the area of the items.
	
	\begin{alignat*}{3}
		& \text{minimize} & \sum_{i \in \I(B)} &x_ip_i& \\
		& \text{subject to} \quad& \sum_{i\in\I(B)} &x_i \lceil h_{i} \rceil_{1+\epsilon}\cdot \lceil w_{i}\rceil_{1+\epsilon}&\leq(1-4\epsilon)h_{B}w_{B} \\
		&&& x_{i} \geq 0, &  \forall i \in \I(B)
		\\
		&&& x_{i} \leq 1, &  \forall i \in \I(B)
	\end{alignat*}

	Observe that $\sum_{i\in\I(B)}\lceil h_{i} \rceil_{1+\epsilon}\cdot \lceil w_{i}\rceil_{1+\epsilon}\leq(1+\epsilon)^2h_{B}w_{B}$. Consider the following LP-solution $\hat{x}$ with $\hat{x}_i := (1+\epsilon)^{-2}(1-4\epsilon)$
	{for each item $i \in \I(B)$}. This solution yields a profit of at least $(1+\epsilon)^{-2}(1-4\epsilon)p(\I(B)) \geq (1-O(\epsilon))p(\I(B))$ and is feasible due to the observation above. Let $x^*$ be the optimal {extreme point} solution to the LP. By the rank lemma~\cite{lau2011iterative}, we know that there is at most one item $i' \in \I(B)$ such that $0< x^*_i < 1$. Let $\I'(B):= \{i \in \I(B): x^*_i > 0\}$. Then, we have that $p(\I'(B)) \geq (1-O(\epsilon))p(\I(B))$ and since we only pick one item for which $x^*_i < 1$ and this item has height $h_i \leq \epsilon h_B$ and width $w_i \leq \epsilon w_B$, we have that $\sum_{i\in\I'(B)}\lceil h_{i}\cdot w_{i}\rceil_{1+\epsilon}\leq(1-3\epsilon)h_{B}w_{B}$. Thus, $B$ with the set $\I'(B)$ packed into $B$ is an $\S$-box.
	
	Lastly, consider all small items which were packed into boxes of type 1 or type 2 or that are now packed into $\S$-boxes of width less than $\es N$. We will argue that these items can be packed into a box of height $\epsilon^2 N$ and width $N$ which can be placed at the top the free strip $S$. Observe that for each item $i\in \I$ in this group we have $h_i \leq \es N \leq \epsilon^3 N$ and $w_i \leq \es N \leq \epsilon N$ (by adequate choice of $f(\cdot)$ when applying Lemma~\ref{lem:rec_noint}).
	% Therefore, each individual item is small compared to the assigned box.
	Therefore, the total area of the small items packed into a box of type 1 (or type 2) is at most $(1-\epsilon)\es N^2$. Since there are at most $C_{boxes}(\epsilon)$ of these groups, the total area of all of them is at most $C_{boxes}(\epsilon)(1-\epsilon)\es N^2 \leq (1-\epsilon)\epsilon^2N^2$. Here, again we use that we choose an appropriate $f(\cdot)$ when removing intermediate items. Observe that this is possible since $f(\cdot)$ will only depend on $\epsilon$. In fact, we choose $f(x):= \frac{x}{c(\epsilon)}$ for some constant $c(\epsilon)\geq C_{boxes}(\epsilon)$. Using an LP-argument similar to the one used for the boxes of type 3, we can now transform this box into a single $\S$-box. Note that if one of the $\S$-boxes of width less than $\es N$ also contained vertical items we may now treat this box as a $\V$-box.
	
	Repeating these operations for each box of type 1, 2 or 3 results in a packing using $O_\epsilon(1)$ many $\H$-, $\V$-, $\L$- and $\S$-boxes containing a total profit of at least $(1/2-O(\epsilon))OPT(\I)$. Hence, this packing satisfies properties i), ii) and v) of the lemma. 
	
	We now proceed, to make sure that the packing also satisfies property iv). Consider a $\V$-box $B$ such that the property does not hold. Then, we can split $B$ into at most $C_{boxes}(\epsilon)$ many $\V$-boxes for which the property holds. To do this let $h^{(1)},\dots,h^{(c)}$ be the distinct heigths of $\V$-boxes in non-decreasing order.
	Then, let $B(h^{(1)})$ be the sub-area of $B$ containing items of height at most $h^{(1)}$, we now turn this sub-area into its own $\V$-box. Similarly, for each $j=2,\dots,c$ we look at the sub-area of $B$ containing items with heigth at most $h^{(j)}$ and strictly larger than $h^{(j-1)}$ and turn this sub-area into its own $\V$-box. Since $c \in O_\epsilon(1)$ this
	{increases the number of boxes by at most a factor $O_\epsilon(1)$.}
	
	Finally, we adapt the packing such that it also satisfies property iii). To this end, let $k_{1},\dots,k_r$ with $r \leq C_{boxes}(\epsilon)$ be the distinct widths of $\H$- and $\L$-boxes in non-decreasing order and set $k_0 := 0$. Now, for each $\H$-box $B$, we do the following. Let $B_j$ be the sub-box of $B$ containing only items with $w_i \in (k_{j-1},k_j]$. Then, $B_j$ is a $\H$-box as well. Repeating this for all horizontal boxes leads to at most $O_\epsilon(1)$ many $\H$-boxes satisfying property iii). For $\L$-boxes we make sure that each $\L$-box is exactly as wide as the item placed inside of it. Lastly, define $U(\epsilon):=\{i\cdot \frac{\epsilon}{2(C_{boxes}(\epsilon))}: i=1,\dots, \frac{2(C_{boxes}(\epsilon))}{\epsilon}\}$, where $C_{boxes}(\epsilon)$ is the number of boxes. For each box $B$, we round its height up to $\lceil h_B \cdot \frac{\epsilon}{2(C_{boxes}(\epsilon)+1)} \rceil N$. As there are at most $C_{boxes}(\epsilon)$ boxes on top of each other and we still have a free space with height $\epsilon/2 N$ at the top of $K$ the packing remains feasible after this rounding.
	
	Thus, we derived two packings satisfying properties i), ii), iii) and iv) such that one of them also satisfies property v).
\end{proof}

\subsubsection{Computing a packing}
We now describe how to compute a packing based on Lemma~\ref{lem:struc_rectangles}. We follow a similar structure as our algorithm for hypercubes. First, we guess some basic quantities of the packing.
Then, we find an implicit packing of vertical items into $\V$- and $\S$-boxes as well an implicit packing of small items into $\S$-boxes. Finally, we use {the} indirect guessing framework to find a packing of horizontal and large items into $\S$-,$\H$- and $\L$-boxes. Let $\B$ denote the set of boxes of the structured packing and $\B_\H$, $\B_\V$, $\B_\L$ and $\B_\S$ denote the set of $\H$-, $\V$-, $\L$- and $\S$-boxes in $\B$, respectively. {In the following, we assume that
	$\min\{\alpha_B: \alpha_B \in U(\epsilon)\} \ge \epsilon\cdot\epsilon_{\mathrm{small}}$. Since $U(\epsilon)$ is a universal set depending only on $\epsilon$, we can ensure this by choosing $c(\epsilon)$ accordingly.}

For the remainder of this section, we partition the set of items into profit classes, where a profit class is defined as $\P_{t}:= \{i \in \I: p_i \in [(1+\epsilon)^{t},(1+\epsilon)^{t+1})]\}$. Let $p_{\min}$ and $p_{\max}$ denote the smallest and largest profit of items in $\I$. We may disregard all items with profit less than $\epsilon \frac{p_{\max}}{n}$ (while only losing a profit of at most $\epsilon p_{\max} \leq \epsilon OPT$) such that $p_{\min} \geq \epsilon \frac{p_{\max}}{n}$. Hence, $t \in \mathcal{T}_P:=\{\lfloor p_{\min}\rfloor_{1+\epsilon}, \dots,\lceil p_{\max}\rceil_{1+\epsilon}\}$ with $|\mathcal{T}_P| \in O_\epsilon(\log n)$. Furthermore, we define $\widehat{p}(t)  = (1+\epsilon)^{t+1}$ {for each $t$ which is hence} the rounded profit of items in profit class $\P_{t}$.

\paragraph{Guessing basic quantities.}
First, we guess how many boxes of each type there are in $\B$. This amounts to a total of $O_{\epsilon}(1)$ many possibilities. For each of the boxes $B$ we guess its height. Since $|U(\epsilon)|\le O_\epsilon(1)$ there are only $O_\epsilon(1)$ possibilities for each of these heights. Note that each box has a height of at least $\frac{\el}{\epsilon}N$.

Since for a rectangle the height and the width can be different, guessing the widths of the $\S$-boxes is more difficult than it was in the setting of hypercubes. We will explain later how we do this.
% {Observe that in contrast to $\S$-boxes in the hypercube setting, $\S$-boxes in the rectangle setting may behave differently. Therefore, guessing their widths is not trivial and we will explain how to do this later.}

We do not know the values $k_{1},k_{2},\dots,k_{r}$; however, we
know that they yield a partition of $\I$ into sets $\I_{j}:=\{i\in\I:w_{i}\in(k_{j-1},k_{j}]\}$
for each $j\in[r]$ where for convenience we define $k_{0}:=0$. We
guess approximately the profit that each set $\I_{j}$ contributes
to $\OPT$. Formally, for each $j\in[r]$ we guess $\hat{p}(j):=\left\lfloor p(\I_{j}\cap\OPT)\right\rfloor _{1+\epsilon}$
if $p(\I_{j}\cap\OPT)\ge\frac{\epsilon}{r}\OPT$ and $\hat{p}(j):=0$
otherwise. Observe that for $\hat{p}(j)$ there are at most $O_{\epsilon,d}(\log n)$
possibilities since $\OPT\in[p_{\max},n\cdot p_{\max})$ and hence
$\hat{p}_{j}\in\{0\}\cup[\frac{\epsilon}{r}p_{\max},n\cdot p_{\max})$.
Also, one can show that $\sum_{j=1}^{r}\hat{p}(j)\ge(1-O(\epsilon))\OPT$.

Now, for each $\H$- or $\L$-box $B$, we guess the value $j_B$. This needs a total amount of $O_\epsilon(1)$ many guesses. 
Observe that each $\H$- or $\L$-box $B\in\B$ can contain only items from
$\I_{j_{B}}$. However, each $\S$-box $B\in\B$ might contain items
from more than one set $\I_{j}$. For each $\S$-box $B\in\B$ and
each set $\I_{j}$, we guess approximately the fraction of the area of
$B$ that is occupied by items from $\I_{j}$. Formally, for each
such pair we define the value $a_{B,j}:=\frac{\sum_{i\in\I(B)\cap\I_{j}}\lceil h_{i}\rceil_{1+\epsilon}\lceil w_{i}\rceil_{1+\epsilon}}{h_Bw_B}$
and guess the value $\hat{a}_{B,j}:=\left\lceil \frac{a_{B,j}}{\epsilon/(r+2)}\right\rceil \epsilon/(r+2)$,
i.e., the value $a_{B,j}$ rounded up to the next larger integral
multiple of $\epsilon/r$. Note that for each value $\hat{a}_{B,j}$
there are only $O_{\epsilon}(1)$ possibilities, and that there
are only $O_{\epsilon}(1)$ such values. Additionally, $\S$-boxes may contain small items and vertical items (which by definition are not part of any $\I_j)$. Therefore, we also guess values $\hat{a}^{\mathrm{small}}_{B}$ and $\hat{a}^{\mathrm{vert}}_{B}$ defined similarly as above. For the correct guesses we have that $\left(\sum_{j=1}^r \hat{a}_{B,j} + \hat{a}^{\mathrm{small}}_{B} + \hat{a}^{\mathrm{vert}}_{B} \right)h_Bw_B   \leq (1-3\epsilon)h_Bw_B$ for each $\S$-box $B$ to account for possibly needing more space later. Overall, the rounded values may lead to less profit than in the structured packing. However, we can still guarantee a profit of at least a $(1+O(\epsilon))$ ratio of the profit achieved by the structured packing. 

We now distinguish between two cases in order to {guess} the widths of $\S$-boxes. We first consider the case that there is at least one $\S$-box containing a horizontal or large item, i.e. $\hat{a}_{B,j} > 0$ for some $B \in \B_\S$ and some $j$. In this case we guess the width of this box {up to a factor of $1+\epsilon$} in time $O_{\epsilon}(1)$ since it is at most $N$ and at least $\frac{\el}{\epsilon} N$. Let $w^*$ denote this width and consider all small items
% and vertical items
% r{commented out ``and vertical items'' since they might not fit into $B$}
of width less than $\frac{\epsilon}{n}w^*$. Their total width is at most $\epsilon w^*$ {and hence they} can be placed next to each other in a shelf of height $\el N$ using a shifting argument we can replace other items in this box while losing only an $\epsilon$-fraction of the profit. Thus, all remaining {small} items have width at least $\frac{\epsilon}{n}w^*$ implying that {for} each remaining $\S$-box {its} width {is} in the interval $[\frac{\epsilon}{n}w^*,w^*]$; {hence, it } can be guessed in time $O(\log_{1+\epsilon} n)$ {up to a power of} $1+\epsilon$. We will explain later how to handle the other case.

Next, we will describe our indirect guessing framework which is used to pack horizontal and large items.

\paragraph{Indirect guessing framework: Packing horizontal and large items into $\H$-, $\L$- and $\S$-boxes.}
Next, we explain how to compute a packing of large and horizontal items into $\H$-, $\L$- and $\S$-boxes. We proceed similarly to the indirect guessing framework used in the hypercube setting based on the framework introduced in~\cite{heydrich2019faster}. We will describe the indirect guessing framework for the case that we guessed the approximate width of $\S$-boxes already using the trick explained above. If this is not the case, we can ignore the $\S$-boxes in the indirect guessing framework since they only contain small and vertical items; we will pack the latter items later.

The goal is to determine the values $k_{1},k_{2},\dots,k_{r}$.
Unfortunately, we cannot guess them directly in polylogarithmic time,
since there are $N$ options for each of them. Instead, we define $\tilde{k}_{0}:=0$ and compute values $\tilde{k}_{1},\tilde{k}_{2},\dots,\tilde{k}_{r}$
that we use instead of the values $k_{0},k_{1},k_{2},\dots,k_{r}$.
This yields a partition of $\I$ into sets $\tilde{\I}_{j}:=\{i\in\I:w_{i}\in(\tilde{k}_{j-1},\tilde{k}_{j}]\}$.
Intuitively, for each $j$ we want to pack items from $\tilde{\I}_{j}$
into the space that is used by items in $\I_{j}$ in the packing from
Lemma~\ref{lem:struc_rectangles}. We will choose the values $\tilde{k}_{1},\tilde{k}_{2},\dots,\tilde{k}_{r}$
such that in this way, we obtain almost the same profit. On the other
hand, we will ensure that $\tilde{k}_{j}\le k_{j}$ for each $j\in[r]$.

We work in $r$ iterations. We define $\tilde{k}_{0}:=0$. Suppose
inductively that we have determined $\ell$ values $\tilde{k}_{1},\tilde{k}_{2},\dots,\tilde{k}_{\ell}$
already for some $\ell\in\{0,1,...,r-1\}$ such that $\tilde{k}_{\ell}\le k_{\ell}$.
We want to compute $\tilde{k}_{\ell+1}$. We can assume w.l.o.g.~that
$k_{\ell+1}$ equals $w_{i}$ for some item $i\in\I$. We do binary
search on the set $W(\ell):=\{w_{i}:i\in\I\wedge w_{i}>\tilde{k}_{\ell}\}$,
using our rectangle data structure. For each candidate value $w \in W(\ell)$,
we estimate the possible profit due to items in $\tilde{\I}_{\ell+1}$
if we define $\tilde{k}_{\ell+1}:=w$. We want to find such a value
$w$ such that the obtained profit from the set $\tilde{\I}_{\ell+1}$
equals essentially $\hat{p}(\ell+1)$. In the following, we denote by $\B_{\H}$, $\B_{\L}$ and $\B_{\S}$, the set of $\H$-, $\L$- and $\S$-boxes in $\B$, respectively.

We describe now how we estimate the obtained profit for one specific
choice of $w\in W(\ell)$. We try to pack items from $\tilde{\I}_{\ell+1}(w):=\left\{ i\in\I:w_{i}\in(\tilde{k}_{\ell},w]\right\} $
into
\begin{itemize}
\item the $\H$-boxes $B\in\B_{\H}$ for which $j_{B}=\ell+1$,
\item the $\L$-boxes $B\in\B_{\L}$ for which $j_{B}=\ell+1$ and
\item the $\S$-boxes, where for each $\S$-box $B\in\B_{\S}$, we use an area
of $\widehat{a}_{B,\ell+1}\cdot h_Bw_B$ and ensure that we pack
only items $i\in\tilde{\I}_{\ell+1}(w)$ for which $w_{i}\leq 2\epsilon \widehat{w}_B$ and $h_i \leq \epsilon h_B$.
\end{itemize}
We solve this subproblem approximately via the following integer program
$(\mathrm{IP}(w))$. Intuitively, we treat items equally if they have
almost the same profit and almost the same height and width, up to a factor $1+\epsilon$,
respectively. To this end, we use the notion of height and width classes.  Formally, we define a height class $H_{t'}=\{i\in \tilde{\I}_{\ell+1}(w):h_{i}\in[(1+\epsilon)^{t'},(1+\epsilon)^{t'+1})\}$
for each $t'\in \mathcal{T}_H:= \{\lfloor \log_{1+\epsilon}(h_{\min})\rfloor,\dots,\lceil \log_{1+\epsilon}(h_{\max})\rceil\}$. Furthermore, we define a width class $W_{t^{''}}=\{i\in \tilde{\I}_{\ell+1}(w):w_{i}\in[(1+\epsilon)^{t^{''}},(1+\epsilon)^{t^{''}+1})\}$ for each  $t^{''} \in \mathcal{T}_W:= \{\lfloor \log_{1+\epsilon}(\tilde{k}_\ell)\rfloor,\dots,\lceil \log_{1+\epsilon}(w)\rceil\}$.
We denote by $\hat{h}(t'):=(1+\epsilon)^{t'+1}$ and $\hat{w}(t^{''}):=(1+\epsilon)^{t^{''}+1}$ the rounded height and width, respectively. Furthermore, we define a set of triplets $\mathcal{T}:=\{(t,t',t''): t \in \mathcal{T}_P \wedge t' \in \mathcal{T}_H \wedge t^{''} \in \mathcal{T}_W\}$. For each triplet $(t,t',t^{''}) \in \mathcal{T}$, $n_{t,t',t^{''}}$ denotes the number of items of profit class $\P_t$, height class $H_{t'}$ and width class $W_{t^{''}}$. Losing only a factor of $(1+O(\epsilon))$, the subproblem is equivalent to selecting how many items of each combination of profit, height and width class will be packed into each box which we can formulate as the following IP. We denote by $\B_\H(\ell+1)$ the $\H$-boxes for which $j_B = \ell+1$ and by $\B_\L(\ell+1)$ the $\L$-boxes for which $j_B = \ell+1$. Furthermore, by $\B(\ell+1)$ we denote the set of all relevant boxes for this iteration of the indirect guessing framework.

\begin{alignat*}{3}
	(\mathrm{IP}(w))\quad& \text{max} 	& \displaystyle \sum_{(t,t',t^{''}) \in \mathcal{T}}\sum_{B \in \B(\ell+1)} x_{t,t',t''',B} p(t)			& 			& \quad & \\
	& \text{s.t.} & \displaystyle  \sum_{(t,t',t^{''}) \in \mathcal{T}} x_{t,t',t'',B} \hat{h}(t')	& \leq h_B& 		& \forall B \in \B_{\H}(\ell+1) \\
	&  & \displaystyle  \sum_{(t,t',t^{''}) \in \mathcal{T}} x_{t,t',t'',B} 	& \leq 1& 		& \forall B \in \B_{\L}(\ell+1) \\
	& & \displaystyle \sum_{(t,t',t^{''}) \in \mathcal{T}} x_{t,t',t'',B} \hat{h}(t')\hat{w}(t'')	& \leq a_{B,\ell+1}h_B\widehat{w}_B	& 		& \forall B \in \B_{\S}\\
	&				& \displaystyle\sum_{B \in \B(\ell+1)} x_{t,t',t'',B}								& \leq n_{t,t',t''}	& 		& \forall  (t,t',t^{''}) \in \mathcal{T} \\
	&				& x_{t,t',t'',B}								& \in \mathbb{N}_{0}& 		&\forall (t,t',t^{''}) \in \mathcal{T}, B \in \B(\ell+1)\\
\end{alignat*}
We are now faced with the same challenges as in the case of hypercubes. We cannot solve $(\mathrm{IP}(w))$ or its LP-relaxation directly since we might have $\mathrm{poly}(\log N)$ many variables. However, we now show how to solve it approximately losing only a factor of $1+\epsilon$. We remark that the computed solution is not an explicit packing since for each triplet $(t,t',t'')$ we compute how many items of profit class $\P_t$, height class $H_{t'}$ and width class $W_{t^{''}}$ we want to select for each box but not the identities of the selected items. This would not be possible in time $O_{\epsilon}(\log(n))$ since the solution might consist of $\Omega(n)$ items

\begin{lemma}\label{lem:rec_IP_sol}
There is an algorithm with a running time of {$(\log_{1+\epsilon}(n))^{O(1)}$}
that computes an $(1{+}O(\epsilon))$-approximate solution for \textup{$(\mathrm{IP}(w))$};
we denote by $q(w)$ the value of this solution. For two values $w,w'$
with $w\le w'$ we have that $q(w)\le q(w')$.
\end{lemma}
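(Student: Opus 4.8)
The plan is to mirror the proof of Lemma~\ref{lem:cubes_IP_sol}, adapting it to the three box types in $\B_\H(\ell+1)$, $\B_\L(\ell+1)$, $\B_\S$ and to items that now carry a height, a width, and a profit. First I would reduce $(\mathrm{IP}(w))$ to one with only $\mathrm{poly}(\log_{1+\epsilon} n)$ variables, losing only a factor $1+\epsilon$. Since $(\mathrm{IP}(w))$ only ever assigns horizontal and large items (vertical and small items are packed by the other parts of the algorithm, whose space is already reserved inside $\S$-boxes through the guesses $\hat a^{\mathrm{small}}_B,\hat a^{\mathrm{vert}}_B$ and the $3\epsilon$ area slack), we may restrict to width classes $W_{t''}$ with $\hat w(t'')>\el N$, of which there are only $O_\epsilon(1)$, and to height classes $H_{t'}$ with $\hat h(t')>\el N$ for large items, again $O_\epsilon(1)$ many. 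For the heights of horizontal items I would argue exactly as in Lemma~\ref{lem:hypcub_reducevar}: we may discard every horizontal item of height at most $\tfrac{\epsilon}{n}\cdot\tfrac{\el}{\epsilon}N$, since the total height of all of them is at most $\epsilon\,\tfrac{\el}{\epsilon}N\le\epsilon h_B$ for every relevant box $B$; after scaling an optimal fractional solution by $1-\epsilon$ and shrinking the $\H$- and $\S$-box budgets by an $\epsilon$-fraction we can re-insert all the discarded items into the freed height of some $\H$-box (or the reserved slack of an $\S$-box). Together with the preprocessing that removed items of profit below $\epsilon p_{\max}/n$ (so $O_\epsilon(\log n)$ profit classes), this bounds the number of triplets in $\mathcal{T}$ by $(\log_{1+\epsilon} n)^{O_\epsilon(1)}$.

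Next I would guess the $\Theta(C_{\mathrm{boxes}}(\epsilon)/\epsilon)$ most profitable items of a near-optimal solution of the reduced IP, guessing for each one its profit, height and width class and the box it is placed in; with the reduced class counts this gives $(\log_{1+\epsilon} n)^{O_\epsilon(1)}$ possibilities in total. Subtracting the contribution of the guessed items from the right-hand sides of $(\mathrm{IP}(w))$ yields a residual integer program $(\mathrm{IP}_g(w))$ whose LP relaxation has $\mathrm{poly}(\log_{1+\epsilon} n)$ variables and can be solved exactly in time $(\log_{1+\epsilon} n)^{O(1)}$ by a polynomial-time LP solver, producing an optimal vertex solution $x^*$. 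I would round it down to $\lfloor x^*\rfloor$; feasibility for every box constraint is immediate because each one is a ``$\le$'' constraint with nonnegative coefficients, so flooring only decreases the left-hand side (this covers the weighted $\H$-box height constraint, the $\L$-box cardinality constraint, and the $\S$-box area constraint alike). To bound the loss I would apply the rank lemma: the number of fractional coordinates of $x^*$ is at most the number of linearly independent tight constraints, of which at most $|\B(\ell+1)|\le C_{\mathrm{boxes}}(\epsilon)$ are of box type, while every tight triplet-count constraint $\sum_{B}x_{t,t',t'',B}\le n_{t,t',t''}$ that contains a fractional variable must contain a second one (their sum being the integer $n_{t,t',t''}$); hence at most $2C_{\mathrm{boxes}}(\epsilon)$ items are lost by the flooring.

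Combining the guessed items, $\lfloor x^*\rfloor$, and all the horizontal items of negligible height that were set aside (placed in the reserved space as above) produces a feasible assignment for $(\mathrm{IP}(w))$. Since we guessed at least $2C_{\mathrm{boxes}}(\epsilon)/\epsilon$ items, each at least as profitable as any of the at most $2C_{\mathrm{boxes}}(\epsilon)$ discarded fractional ones, the lost profit is at most an $\epsilon$-fraction of the guessed profit, so the value of the assignment is within a factor $1+O(\epsilon)$ of the optimum of $(\mathrm{IP}(w))$. I would compute the total profit of the negligible-height horizontal items with the rectangle data structure of Lemma~\ref{lem:data-structure-rec} (one range query per profit class, total time $(\log n)^{O_\epsilon(1)}$) up to a factor $1+\epsilon$; added to the profit of the chosen guessed items and of $\lfloor x^*\rfloor$ this defines $q(w)$. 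For the monotonicity, note that $\tilde{\I}_{\ell+1}(w)\subseteq\tilde{\I}_{\ell+1}(w')$ whenever $w\le w'$, so every $n_{t,t',t''}$ can only grow and the boxes and their budgets are unchanged; thus the solution computed for $(\mathrm{IP}(w))$ is feasible for $(\mathrm{IP}(w'))$ and occurs among the candidate solutions enumerated when the algorithm is run on $w'$, which gives $q(w)\le q(w')$.

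The step I expect to be the main obstacle is the reduction of the number of height classes of horizontal items to $O_\epsilon(\log n)$, together with the accompanying bookkeeping: one has to make sure that the negligible-height horizontal items, the small items, and the vertical items that coexist with horizontal items inside $\S$-boxes all have genuinely reserved space (through the $\hat a^{\mathrm{small}}_B$, $\hat a^{\mathrm{vert}}_B$ guesses and the area slack of $\S$-boxes, and through the $\epsilon h_B$ headroom of $\H$-boxes), so that re-inserting them after solving the shrunken LP never violates a box constraint. The asymmetry of the three box types in the rank-lemma count, and the fact that the $\H$-box constraint is a knapsack-type rather than a cardinality constraint, are minor complications that do not affect the rounding argument.
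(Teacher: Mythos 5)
Your proposal follows essentially the same route as the paper's proof: guess the $O_\epsilon(1/\epsilon)$ most profitable items by class and box, exploit that only horizontal and large items appear so that width classes (and large-item height classes) number $O_\epsilon(1)$, discard horizontal items of height below $\tfrac{\el}{n}N$ and reserve an $\epsilon$-fraction of each box's budget to re-insert them, solve the residual LP relaxation, round down, and bound the loss via the rank lemma by at most $2C_{\mathrm{boxes}}(\epsilon)$ fractional variables each cheaper than the guessed items, with monotonicity following from feasibility of the $w$-solution for $(\mathrm{IP}(w'))$. The only deviations (guessing the class triple of each profitable item rather than only its profit class, and swapping the order of the class-reduction and guessing steps) are immaterial, so the argument is correct and matches the paper.
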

\begin{proof}
We will compute an approximate solution $(\mathrm{IP}(w))$, by first guessing the $2C_{boxes}(\epsilon)/\epsilon$ most profitable items of set $\tilde{\I}_{\ell+1}(w)$ in the solution.To do this, we again first guess the profit class of each of these items with a total number of $\log_{1+\epsilon}^{O_{\epsilon}(1)}n$ many guesses. For each profit class $t$ this gives value $n_t^g \in O_{\epsilon}(1)$ indicating the number of items guessed from this profit class. Next, for each profit class we find the $n_t$ items of smallest height. We can do this in time $O(\log^4 n)$ using our rectangle data structure (see Lemma~\ref{lem:data-structure-rec}) and the balanced binary search tree used to store the distinct item heights. Observe that, we only lose a factor of $(1+\epsilon)$ of the profit by taking these items. Now it remains to guess for each of these items which box it must be packed in which can be done in time $O_\epsilon(1)$. We denote this guessed solution by $S^g$. Next, we need to update the right hand-sides of $(\mathrm{IP}(w))$.  For an $\H$-box $B$, let $h^g_B$ be the total height of all guessed items. Similarly, for an $\S$-box $B$ let $\mathrm{area}^g_B$ be the total area of the guessed items. For $\L$-boxes, we denote by $n^g_B$ the number of items guessed for $B$ (which is either $0$ or $1$). Then, the following IP remains.
\begin{alignat*}{3}
	(\mathrm{IP}'(w))\quad& \text{max} 	& \displaystyle \sum_{(t,t',t^{''}) \in \mathcal{T}}\sum_{B \in \B(\ell+1)} x_{t,t',t''',B} p(t)			& 			& \quad & \\
	& \text{s.t.} & \displaystyle  \sum_{(t,t',t^{''}) \in \mathcal{T}} x_{t,t',t'',B} \hat{h}(t')	& \leq h_B-h^{g}_B& 		& \forall B \in \B_{\H}(\ell+1) \\
	&  & \displaystyle  \sum_{(t,t',t^{''}) \in \mathcal{T}} x_{t,t',t'',B} 	& \leq 1-n^{g}_B& 		& \forall B \in \B_{\L}(\ell+1) \\
	& & \displaystyle \sum_{(t,t',t^{''}) \in \mathcal{T}} x_{t,t',t'',B} \hat{h}(t')\hat{w}(t'')	& \leq a_{B,\ell+1}h_B\widehat{w}_B	 - \mathrm{area}^{g}_B& 		& \forall B \in \B_{\S}\\
	&				& \displaystyle\sum_{B \in \B(\ell+1)} x_{t,t',t'',B}								& \leq n_{t,t',t''}	& 		& \forall  (t,t',t^{''}) \in \mathcal{T} \\
	&				& x_{t,t',t'',B}								& \in \mathbb{N}_{0}& 		&\forall (t,t',t^{''}) \in \mathcal{T}, B \in \B(\ell+1)\\
\end{alignat*}
Next, we restrict the number of variables by restricting the number of height and width classes. {To this end, observe that since we only consider horizontal and large items, we know that there are $O_{\epsilon}(1)$ many relevant width classes and for large items we also only have $O_{\epsilon}(1)$ many height classes. Thus, we need to restrict the number of height classes for for horizontal items. Therefore, let $B \in \B_\H$ be the horizontal box of maximum height, denoted by $h^*$.
	{Since the set $U(\epsilon)$ is a universal set that depends only on $\epsilon$ and we can require that $\el$ is small
		compared to the values in $U(\epsilon)$, we can ensure that $B$ has a height
		of}
	% 	We know that this box has height
	at least $\frac{\el}{\epsilon}N$. Thus, consider all horizontal items of height less than $\frac{\el}{n}N$. By reserving a space of of $\el N$ in each of the horizontal boxes we may pack all these very thin items while losing only a factor of $1+\epsilon$ in the profit.}

If we consider the appropriate height and width classes in $(\mathrm{IP'}(w))$, we now have an IP with $\log_{1+\epsilon}^{O(1)} n$ many variables. Let $S^f(w)$ be an optimal solution to the LP-relaxation of this IP. By the rank lemma~\cite{lau2011iterative} $S^f(w)$ has at most $2C_{boxes}(\epsilon)$ many fractional non-zero variables and the profit corresponding to each of them is smaller than the profit of any item in the guessed solution $S^g(w)$. Thus, by rounding down these values we lose at most a profit of $\epsilon(p(S^g(w)))$. Therefore, combining this rounded solution with $S^g(w)$ and all tiny items we discarded to reduce the number of variables, we can find an approximate solution to $(\mathrm{IP}(w))$. Finally, consider two values $w\leq w'$. Then, the approximate solution obtained to $(\mathrm{IP}(w))$ is also feasible for $(\mathrm{IP}(w'))$. Therefore, $q(w) \leq q(w')$.
\end{proof}

We define $\tilde{k}_{\ell+1}$ to be the smallest value $w\in W(\ell)$
for which $q(w)\ge (1-\epsilon)\hat{p}(\ell+1)$. Based on this and the inductive assumption that $\tilde{k}_{\ell}\le k_{\ell}$ we can prove the following statement.
\begin{lemma}\label{lem:rec_induc_kr}
We have that $\tilde{k}_{\ell+1}\le k_{\ell+1}$.
\end{lemma}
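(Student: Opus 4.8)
The plan is to mirror the argument used for the hypercube case (Lemma~\ref{lem:cubes_induc_kr}): it suffices to show $q(k_{\ell+1})\ge(1-\epsilon)\hat p(\ell+1)$, because $\tilde k_{\ell+1}$ is defined as the \emph{smallest} $w\in W(\ell)$ with $q(w)\ge(1-\epsilon)\hat p(\ell+1)$, and $k_{\ell+1}$ itself is a legal candidate: we may assume w.l.o.g.\ that $k_{\ell+1}=w_i$ for some $i\in\I$, and since $k_{\ell+1}>k_\ell\ge\tilde k_\ell$ by the inductive hypothesis, we have $k_{\ell+1}\in W(\ell)=\{w_i:i\in\I\wedge w_i>\tilde k_\ell\}$. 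Hence once $q(k_{\ell+1})\ge(1-\epsilon)\hat p(\ell+1)$ is established, minimality of $\tilde k_{\ell+1}$ yields $\tilde k_{\ell+1}\le k_{\ell+1}$.

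To establish $q(k_{\ell+1})\ge(1-\epsilon)\hat p(\ell+1)$, the idea is to exhibit a feasible solution of $(\mathrm{IP}(k_{\ell+1}))$ of value at least $(1-\epsilon)\hat p(\ell+1)$ and then invoke the approximation guarantee of Lemma~\ref{lem:rec_IP_sol}. First note that the inductive hypothesis $\tilde k_\ell\le k_\ell$ gives $\tilde\I_{\ell+1}(k_{\ell+1})=\{i:w_i\in(\tilde k_\ell,k_{\ell+1}]\}\supseteq\{i:w_i\in(k_\ell,k_{\ell+1}]\}=\I_{\ell+1}$, so every item that the structured packing of Lemma~\ref{lem:struc_rectangles} assigns to an $\H$- or $\L$-box $B$ with $j_B=\ell+1$, or to the portion of an $\S$-box $B$ reserved for $\I_{\ell+1}$ (of area $a_{B,\ell+1}h_Bw_B$), is available as input to $(\mathrm{IP}(k_{\ell+1}))$. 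I would take exactly that assignment, round each selected item's profit, height and width to its respective class, and verify it yields a feasible integral solution of $(\mathrm{IP}(k_{\ell+1}))$: the $\H$-box height budgets, $\L$-box count budgets and $\S$-box area budgets are met up to a factor $1+O(\epsilon)$, which is absorbed by the rounding-up slack in $\hat a_{B,\ell+1}$ (and in $\hat a_B^{\mathrm{small}},\hat a_B^{\mathrm{vert}}$) together with the $\epsilon$-slack built into the definition of $\S$-boxes and, for $\H$-boxes, by discarding the topmost $O(\epsilon)$-fraction of the stack. The profit lost in this rounding is a factor $1-O(\epsilon)$, and by the definition of $\hat p(\ell+1)$ the profit that the structured packing obtains from $\I_{\ell+1}$ in these boxes is at least $\hat p(\ell+1)$ up to a factor $1-O(\epsilon)$. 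Rescaling $\epsilon$ makes this at least $(1-\epsilon)\hat p(\ell+1)$; since Lemma~\ref{lem:rec_IP_sol} returns a $(1-O(\epsilon))$-approximate solution to $(\mathrm{IP}(k_{\ell+1}))$, we get $q(k_{\ell+1})\ge(1-\epsilon)\hat p(\ell+1)$, again after rescaling $\epsilon$.

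The main obstacle is precisely this feasibility-after-rounding step, which behaves less uniformly than for hypercubes because of the four box types. The delicate point is the $\H$-boxes: the constraint $\sum_{(t,t',t'')}x_{t,t',t'',B}\hat h(t')\le h_B$ can be violated by the structured packing once heights are rounded up to class representatives (as $\hat h(t')$ can exceed $h_i$ by a factor up to $1+\epsilon$), so one must argue the overflow amounts to only an $O(\epsilon)$-fraction of the stack and can be removed by dropping items from the top of the box without harming the $\I_{\ell+1}$-profit estimate. One must similarly check that the $\S$-box area budget $a_{B,\ell+1}h_B\widehat w_B$ appearing in $(\mathrm{IP}(k_{\ell+1}))$ — which uses the \emph{guessed} width $\widehat w_B$ rather than the true width — still accommodates the rounded-up item areas, using the slack reserved when guessing the $\hat a$-values and the guessed widths. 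Once these bookkeeping points are carried out, the conclusion $\tilde k_{\ell+1}\le k_{\ell+1}$ follows exactly as in the hypercube case.
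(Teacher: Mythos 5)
Your proposal is correct and follows essentially the same route as the paper: the paper proves the rectangle version exactly as the hypercube analogue (Lemma~\ref{lem:cubes_induc_kr}), namely by observing that $\I_{\ell+1}\subseteq\tilde\I_{\ell+1}(k_{\ell+1})$ (since $\tilde k_\ell\le k_\ell$), so the structured packing witnesses $q(k_{\ell+1})\ge(1-\epsilon)\hat p(\ell+1)$, and minimality of $\tilde k_{\ell+1}$ then gives $\tilde k_{\ell+1}\le k_{\ell+1}$. Your additional bookkeeping about rounded heights in $\H$-boxes and the $\S$-box area budgets is exactly the $1+O(\epsilon)$ loss the paper absorbs when it declares the subproblem equivalent to $(\mathrm{IP}(w))$ "up to a factor $1+O(\epsilon)$", so it is consistent with, and somewhat more explicit than, the paper's treatment.
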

{We run our indirect guessing framework for $O_{\epsilon}(1)$ iterations, and hence obtain a packing of horizontal and large items into $\H$-, $\L$- (and possibly $\S$-boxes).}

\paragraph{Packing vertical items into $\V$- and $\S$-boxes.}
{Before explaining how to pack vertical and small items, we now show how to guess the width of $\S$-boxes in case this was not done before the indirect guessing framework. To this end, we first guess the widths of $\V$-boxes as follows. We start by guessing the $2C_{boxes}(\epsilon)/\epsilon$ most profitable vertical items in the packing (this will also be useful for charging profit of items we discard later). We do this following the same technique as before. First, we guess the profit type of each of these items in time $O(\log_{1+\epsilon}n)$ and {also the box that} it must be assigned to (this gives an indication on the height {of the item}). Then, for each profit class and each interval of heights $(h^{(j-1)},h^{(j)}]$ defined by the heights of $\V$-boxes, we now know how many items to pick. Losing only a factor of $(1+\epsilon)$ of the profit we may choose the items in non-decreasing order of widths. We next guess the total width and how it is split among the $\V$-boxes. To do this, we again guess the profit type of the item packed into the remaining space of $\V$-boxes whose width is maximal in time $O(\log_{1+\epsilon}n)$. Let $t$ be the this profit type and let $n_t^g(\I_{\V}(j))$ be the number of items from $\I_{\V}(j)$ of this profit type which should be in our final packing. We can guess this value as a power of $(1+\epsilon)$ in time $O(\log_{1+\epsilon}n)$. By losing only a profit of $(1+\epsilon)$ we can assume that we may choose the  $n_t^g(\I_{\V}(j))$ narrowest (lowest width) items of this profit class. Therefore, we find the width of the $n_t^g(\I_{\V}(j))$-th narrowest of these items in time $O(\log^4 n)$;
we denote by $w^*$ its width.
%\awr{please double-check this, I was not sure}
% of width $w^*$.
If this item is packed into a $\V$-box then we know that the total width of the remaining $\V$-boxes is within the range $[w^*,nw^*]$ and can be guessed up to a factor of $(1+\epsilon)$ in time $O(\log_{1+\epsilon}n)$. If, however, this item is packed into an $\S$-box we may assume that all items of width {at most} $\epsilon \frac{w^*}{n}$ can also be packed into this $\S$-box. Therefore, the total remaining width of the $\V$-boxes must be in the range $[\frac{\epsilon}{n} w^*,nw^*]$ and can be guessed up to a factor of $(1+\epsilon)$ in time $O(\log_{1+\epsilon}n)$. Now, using this guessed value of the total remaining width of the $\V$-boxes, we can guess the {approximately the individual width $\hat{w}_B$ of each $\V$-box $B$
	in time $(\log_{1+\epsilon}n)^{O_\epsilon(1)}$}. We are now ready to guess the widths of the $\S$-boxes as powers of $(1+\epsilon)$. Consider the $\S$-box $B^*$ of maximum width.
{Imagine that we make $B^*$ greedily wider as much as possible such that we possibly push other boxes on the left of $B^*$ to the left towards the left edge of the knapsack, and similarly we push boxes on the right of $B^*$ to the right towards the right edge of the knapsack. Once we cannot make $B^*$ wider anymore, there is a set of boxes
	$\B'$ with $B^* \in \B'$ such that $\sum_{B\in \B'}w_B =N$.
	Let $w(\mathrm{fixed})$ be the total width of $\L$-,$\H$- and $\V$-boxes in $\B'$ for which we already know the widths and let $n_\S$ the number of $\S$-boxes in $\B'$}. Then, we know that the width of $B^*$ is in the interval $[\frac{N-w(\mathrm{fixed})}{n_\S},N-w(\mathrm{fixed})]$ and hence it can be guessed in time $O(\log_{1+\epsilon}n)$ {up to a factor of $1+\epsilon$}. Let $\widehat{w}_{B^*}$ denote {the guessed width}. Then, by our assumption on the minimum height of the boxes, we know that all items of width less than $\frac{\epsilon}{n}\widehat{w}_{B^*}$ have a total area of at most $\epsilon$ times the area of $B^*$. Therefore, by reserving space for these items we can pack them all in $B^*$, {by removing some other items from $B^*$ such that we lose only a factor of $1+\epsilon$ in the profit}. From this it follows that for all other $\S$-boxes {$B'$, each item packed into $B'$} has width at least $\frac{\epsilon}{n}\widehat{w}_{B^*}$ and at most $\epsilon \widehat{w}_{B^*}$ and, thus, the widths {of the other $\S$-boxes} can be guessed in time $O(\log_{1+\epsilon}n)$.
}

Let $\I_{\V}$ denote the set of vertical items. It is important to observe that vertical items are only contained in $\V$-boxes and $\S$-boxes. We will now prove the following statement. Furthermore, observe that due to the structured packing and since we guessed the height of each $\V$-box we know a specific range for the height of items which we may pack into each $\V$-box. More specifically, let $\B_{\V}$ denote the set of $\V$-boxes in $\B$ and let $h^{(1)},\dots,h^{(c)}$ be the set of distinct heights of boxes in $\B_{\V}$ in non-decreasing order and set $h_0:=0$. Then, we know that each $\V$-box of heigth $h^{(j)}$ contains items with height in the interval $(h^{(j-1)},h^{(j)}]$ only. Denote by $\B_{\V}(j)$ the set of $\V$-boxes of height $h_j$ and let $\I_{\V}(j):= \{i \in \I_{\V}: h_i \in (h^{(j-1)},h^{(j)}]\}$. For each $\S$-box we can now guess a value $\hat{a}^{\mathrm{vert}}_{B,,j} := \left\lceil \frac{a_{B,\V,j}}{\epsilon/(c)}\right\rceil \epsilon/(c)$ where $a^{\mathrm{vert}}_{B,j}:=\frac{\sum_{i\in\I(B)\cap\I_{\V}(j)}\lceil h_{i}\rceil_{1+\epsilon}\lceil w_{i}\rceil_{1+\epsilon}}{h_Bw_B}$. We only consider guesses satisfying $\sum_{j=1}^c \hat{a}^{\mathrm{vert}}_{B,j} = \hat{a}^{\mathrm{vert}}_{B}$.
We will now show how to find a packing of vertical items for some range $(h^{j-1},h^j]$. We then apply this procedure to each range individually.
\begin{lemma}\label{lem:rec_lp_vert}
There is an algorithm with a running time of {$(\log_{1+\epsilon}n)^{O_{\epsilon}(1)}$}
that computes an $(1{+}O(\epsilon))$-approximate packing of items $\I_{\V}(j)$.
\end{lemma}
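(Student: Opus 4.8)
The plan is to follow the blueprint of Lemma~\ref{lem:cubes_IP_sol} and Lemma~\ref{lem:rec_IP_sol}, but \emph{without} an indirect guessing loop: at this point the width $\hat w_B$ of every box is already known, so it suffices to solve a single integer program. Fix the index $j$ and let $\B_{\V}(j)$ be the $\V$-boxes of height $h^{(j)}$, and let $\B'_{\S}$ be the set of $\S$-boxes $B$ with $h_B\ge h^{(j)}/\epsilon$ (the only $\S$-boxes into which an item of height in $(h^{(j-1)},h^{(j)}]$ can fit); every other box is irrelevant for $\I_{\V}(j)$. Group the items of $\I_{\V}(j)$ into profit classes $\P_t$, height classes $H_{t'}$, and width classes $W_{t''}$ exactly as in the main text, and introduce a variable $x_{t,t',t'',B}$ counting how many items of this class are assigned to box $B$. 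The constraints are: for each $\V$-box $B\in\B_{\V}(j)$ we require $\sum x_{t,t',t'',B}\,\hat w(t'')\le \hat w_B$ (the items are stacked side by side and all have height at most $h^{(j)}$, so no further constraint is needed); for each $\S$-box $B\in\B'_{\S}$ we require $\sum x_{t,t',t'',B}\,\hat h(t')\hat w(t'')\le \hat a^{\mathrm{vert}}_{B,j}h_Bw_B$; and for each triplet $(t,t',t'')$ we require $\sum_B x_{t,t',t'',B}\le n_{t,t',t''}$. Losing only a factor $1+O(\epsilon)$ from the rounding of item dimensions, an optimal solution to this IP has profit at least that of the vertical items of $\I_{\V}(j)$ in the structured packing of Lemma~\ref{lem:struc_rectangles}.

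I would then solve this IP approximately in polylogarithmic time by the same three moves as in the proofs of Lemmas~\ref{lem:cubes_IP_sol} and~\ref{lem:rec_IP_sol}. First, discard every vertical item of width less than $\frac{\epsilon}{n}w^*$, where $w^*$ is the largest $\V$-box width: their total width is at most $\epsilon w^*$, so they can be absorbed into a reserved strip of width $\epsilon w^*$ and height $h^{(j)}$ in that box while, via a shifting argument, losing only a $(1+\epsilon)$-factor of profit; afterwards all relevant width classes, height classes and profit classes together number $(\log_{1+\epsilon} n)^{O_\epsilon(1)}$ (in particular, vertical items satisfy $\el N<h_i\le N$, so only $O_\epsilon(1)$ height classes matter), so the IP has polylogarithmically many variables. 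Second, guess the $2C_{\mathrm{boxes}}(\epsilon)/\epsilon$ most profitable items of an optimal solution by guessing, for each, its profit class and the box it is packed into; choosing within each profit class the narrowest items assigned to each box costs only a $(1+\epsilon)$-factor, and the number of guesses is $(\log_{1+\epsilon} n)^{O_\epsilon(1)}$. Third, reduce the right-hand sides of the IP accordingly, solve the LP relaxation of the residual IP in time $(\log_{1+\epsilon} n)^{O(1)}$, and round every fractional variable down; by the rank lemma (see e.g.~\cite{lau2011iterative}) an optimal extreme point has at most $2C_{\mathrm{boxes}}(\epsilon)$ fractional variables, each of profit below the profit of every guessed item, so the rounding costs at most $\epsilon\cdot p(\text{guessed items})$. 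Combining the guessed items, the rounded LP solution, and the absorbed thin items yields a $(1+O(\epsilon))$-approximate solution to the IP; locating the guessed and selected items with Lemma~\ref{lem:data-structure-rec} and the balanced binary search trees keeps the running time at $(\log_{1+\epsilon} n)^{O_\epsilon(1)}$.

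Finally I would turn this abstract solution (the counts $x_{t,t',t'',B}$) into a feasible packing. For a $\V$-box $B\in\B_{\V}(j)$ I select, for each triplet, the requested number of not-yet-used items of that class of smallest width and place them side by side; since the IP enforces $\sum x_{t,t',t'',B}\hat w(t'')\le\hat w_B$ and $\hat w(t'')\ge w_i$ for every selected item $i$ in width class $t''$, they fit, and each selected item has height at most $h^{(j)}$, the box height. For an $\S$-box $B\in\B'_{\S}$ I select items similarly and pack them with NFDH: every selected vertical item satisfies $h_i\le\epsilon h_B$ and $w_i\le\epsilon w_B$, the rounded total area is at most $\hat a^{\mathrm{vert}}_{B,j}h_Bw_B$, and since $\sum_j\hat a^{\mathrm{vert}}_{B,j}=\hat a^{\mathrm{vert}}_B$ and the guessed quantities satisfy $(\sum_j\hat a_{B,j}+\hat a^{\mathrm{small}}_B+\hat a^{\mathrm{vert}}_B)h_Bw_B\le(1-3\epsilon)h_Bw_B$, Lemma~\ref{lem:NFDH} applies and NFDH succeeds. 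The main obstacle I expect is exactly this last bookkeeping step: one must verify that, after all the roundings (the $\lceil\cdot\rceil_{1+\epsilon}$ rounding of item dimensions, the rounding of each $\hat a$ up to a multiple of $\epsilon/(r+2)$ resp.\ $\epsilon/c$, and the $(1-3\epsilon)$ area budget of an $\S$-box), every $\S$-box still has enough slack to host, simultaneously, its horizontal/large items, its small items, and the vertical items of all ranges $j$ assigned to it --- i.e., that all these additive losses are absorbed by the constants fixed in advance (this is precisely what the definition of $\S$-box and the choice of $c(\epsilon)$ were set up to allow).
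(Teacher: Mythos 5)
Your proposal matches the paper's proof in all essentials: the paper likewise assumes that the box widths and the $2C_{\mathrm{boxes}}(\epsilon)/\epsilon$ most profitable vertical items have already been guessed, groups the remaining items of $\I_{\V}(j)$ into profit and width classes, reserves an $\epsilon$-fraction of the $\V$-box widths (resp.\ of the $\S$-box areas) to absorb the very thin items so that only $O(\log_{1+\epsilon} n)$ width classes remain, solves the resulting LP relaxation in polylogarithmic time, and rounds down the at most $2C_{\mathrm{boxes}}(\epsilon)$ fractional variables guaranteed by the rank lemma, charging the loss to the guessed items. The only cosmetic differences are that you carry explicit height classes (harmless, since vertical items occupy only $O_{\epsilon}(1)$ of them) and that you restrict the eligible $\S$-boxes to those with $h_B \ge h^{(j)}/\epsilon$; it is safer to keep, as the paper does, the per-box compatibility condition $h_i \le \epsilon h_B$ for the assigned items, so that no $\S$-box that the structured packing uses for range $j$ (possibly of smaller height) is excluded from the LP.
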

\begin{proof}
We have already guessed the widths of every box and the $2C_{boxes}(\epsilon)/\epsilon$ most profitable vertical items in the packing. We will proceed by finding an approximate solution for the remaining packing using similar ideas as before. First, we define a width class $W_{t'}:=\{i\in \I_{\V(j)}:w_{i}\in[(1+\epsilon)^{t'},(1+\epsilon)^{t'+1})\}$ with rounded-up width $\widehat{w}(t') =(1+\epsilon)^{t'+1}$. We denote by $\mathrm{W}$ the set of all width classes and define $\mathcal{T}_W :=\{t': \mathcal{W}_{t'} \in \mathrm{W}\}$. For each pair $(t,t') \in \mathcal{T}:=\{(t,t'):t \in \mathcal{T}_P \wedge t' \in \mathcal{T}_W\}$ we treat items of profit class $\P_t$ and width class $W_{t'}$. Using similar arguments as before we can show that we can restrict ourselves to $O(\log_{1+\epsilon}n)$ many width classes by either reserving an $\epsilon$-fraction of the width of the $\V$-boxes or an $\epsilon$-fraction of the area of $\S$-boxes. Thus, to find the remaining packing of vertical items, we can consider the following LP which finds a fractional implicit packing of the remaining items from. An optimal solution to this LP again loses a factor of $1+\epsilon$ of the profit and it leaves space to add all omitted items.
{
	\small	
	\begin{alignat*}{3}
		(\mathrm{LP}(j))\quad& \text{max} 	& \displaystyle \sum_{(t,t') \in \mathcal{T}}\sum_{B \in  \B_{\V}(j)} x_{t,t',B} \widehat{p}(t)			& 			& \quad & \\
		& \text{s.t.} & \displaystyle \sum_{(t,t') \in \mathcal{T}} x_{t,t',B} \widehat{w}(t')	& \leq (1-\epsilon)\widehat{w}_B& 		& \forall B \in \B_{\V}(j) \\
		& \text{s.t.} & \displaystyle \sum_{(t,t') \in \mathcal{T}} x_{t,t',B} \widehat{w}(t')	& \leq \widehat{a}_{B,\V,j}(1-\epsilon)h_Bw_B - \mathrm{area}(S^g(j))& 		& \forall B \in \B_{\S} \\
		&				& \displaystyle\sum_{B \in  \B_{\V}(j)} x_{t,t',B}								& \leq n_{t,t'}	& 		& \forall  (t,t') \in \mathcal{T}\\
		&				& x_{t,t',B}								& \geq 0& 		&\forall (t,t') \in \mathcal{T}, B \in \B_{\V}(j)\\
	\end{alignat*}
}
Let $S^f(j)$ be the optimal fractional solution to this remaining LP which we can find in time
$(\log_{1+\epsilon}n)^{O_\epsilon(1)}$~\cite{cohen2021solving}. We now combine $S^g(j)$ and $S^f(j)$ to yield a feasible fractional solution to $\mathrm{LP}(j))$.
% Observe that
% we are now guaranteed to have at least $2C_{boxes}(\epsilon)/\epsilon$ many integral variables and
Due to the rank lemma~\cite{lau2011iterative} {we have} at most $2C_{boxes}(\epsilon)$ fractional non-zero variables; {each item corresponding to such a fractional variable has less profit than any previously guessed item.}
% where each of them is less profitable than the integral variables.
Therefore, by rounding down these variables we only lose a profit of at most $\epsilon p(S^g(j))$ (where $p(S^g(j))$ is the profit due to the guessed items). Finally, combining the integral solution with all tiny items that were discarded to restrict our necessary number of guesses, we find an implicit $(1+O(\epsilon)$-{approximate}
packing of vertical items of the set $\I_{\V}(j)$. It is important to note that using our rectangle data structure (see Lemma~\ref{lem:data-structure-rec}.) we can find the number of tiny items as well as an $(1+\epsilon)$-approximation of the total profit of tiny items in time $O(\mathrm{poly}(\log n))$.
\end{proof}

As there are at most $O_{\epsilon}(1)$ many different values of $j$, we have the following result.
\begin{lemma}\label{lem:rec_lp_vert_all}
There is an algorithm with a running time of {$(\log_{1+\epsilon}(n))^{O_{\epsilon}(1)}$}
that computes an $(1{+}O(\epsilon))$-approximate packing of the items $\I_{\V}$ into the guessed $\V$-boxes $\B_\V$. 
\end{lemma}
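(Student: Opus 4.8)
The plan is to reduce this lemma to $O_\epsilon(1)$ essentially independent applications of Lemma~\ref{lem:rec_lp_vert}. Recall that the $\V$-boxes $\B_\V$ have at most $c\in O_\epsilon(1)$ distinct heights $h^{(1)}<\dots<h^{(c)}$, and by property iv) of Lemma~\ref{lem:struc_rectangles} together with our guesses, every $\V$-box of height $h^{(j)}$ receives only items of height in $(h^{(j-1)},h^{(j)}]$, i.e.~items of $\I_{\V}(j)$. Hence the vertical items split into the sets $\I_{\V}(1),\dots,\I_{\V}(c)$, where $\I_{\V}(j)$ interacts only with $\B_{\V}(j)$ and with the reserved volume $\hat{a}^{\mathrm{vert}}_{B,j}h_Bw_B$ inside each $\S$-box $B$, and by the restriction on our guesses $\sum_{j=1}^{c}\hat{a}^{\mathrm{vert}}_{B,j}=\hat{a}^{\mathrm{vert}}_{B}$ for every $\S$-box $B$. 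So I would simply run the algorithm of Lemma~\ref{lem:rec_lp_vert} once for each $j\in\{1,\dots,c\}$ and output the union of the $c$ resulting implicit packings.

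For correctness, I would first argue that the union is feasible: the per-$j$ packings use disjoint $\V$-boxes (the sets $\B_{\V}(j)$ partition $\B_{\V}$) and, inside each $\S$-box $B$, disjoint reserved volumes whose total equals $\hat{a}^{\mathrm{vert}}_{B}h_Bw_B$, which together with $\hat{a}^{\mathrm{small}}_{B}h_Bw_B$ and the volumes $\hat{a}_{B,j}h_Bw_B$ used for horizontal and large items was guessed so that the grand total is at most $(1-3\epsilon)h_Bw_B$; hence all these parts genuinely coexist inside $B$, and the $\S$-box definition together with Lemma~\ref{lem:NFDH} shows that NFDH actually packs everything. For the profit bound, I would observe that the optimal structured packing of Lemma~\ref{lem:struc_rectangles} also partitions its vertical items by the ranges $(h^{(j-1)},h^{(j)}]$, so its vertical profit is $\sum_{j=1}^{c}p(\I_{\V}(j)\cap\text{structured packing})$; applying the $(1+O(\epsilon))$-guarantee of Lemma~\ref{lem:rec_lp_vert} to each term and summing yields a $(1+O(\epsilon))$-approximation overall, using that each $\hat{a}^{\mathrm{vert}}_{B,j}$ is within an additive $\epsilon/c$ of the true fraction so that only an $O(\epsilon)$-fraction of area, and hence of profit, is lost across all $O_\epsilon(1)$ of them.

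For the running time, each of the $c=O_\epsilon(1)$ calls costs $(\log_{1+\epsilon}n)^{O_\epsilon(1)}$ by Lemma~\ref{lem:rec_lp_vert}, while guessing the values $\hat{a}^{\mathrm{vert}}_{B,j}$ and the associated bookkeeping costs only $(\log_{1+\epsilon}n)^{O_\epsilon(1)}$ additional time; since $c\in O_\epsilon(1)$, the total is $(\log_{1+\epsilon}n)^{O_\epsilon(1)}$.

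The main obstacle I expect is not the aggregation itself but making precise that the independently computed reserved volumes inside each $\S$-box can truly be realized simultaneously by a single NFDH packing — that is, that slicing an $\S$-box into the pieces of area $\hat{a}_{B,j}h_Bw_B$, $\hat{a}^{\mathrm{small}}_{B}h_Bw_B$ and $\hat{a}^{\mathrm{vert}}_{B,j}h_Bw_B$ and then NFDH-packing each piece still respects the smallness conditions $h_i\le\epsilon h_B$, $w_i\le\epsilon w_B$ and the rounded-area slack required by Lemma~\ref{lem:NFDH}. This needs the rounding of the guesses (to multiples of $\epsilon/(r+2)$ and $\epsilon/c$) to be combined with the $(1-3\epsilon)$ slack exactly as in the hypercube case.
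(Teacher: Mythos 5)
Your proposal is correct and follows essentially the same route as the paper, which obtains this lemma directly by invoking Lemma~\ref{lem:rec_lp_vert} once for each of the $c\in O_\epsilon(1)$ height ranges $(h^{(j-1)},h^{(j)}]$ and combining the resulting implicit packings, using the disjointness of the sets $\B_\V(j)$ and the guessed area shares $\hat{a}^{\mathrm{vert}}_{B,j}$ (with $\sum_j\hat{a}^{\mathrm{vert}}_{B,j}=\hat{a}^{\mathrm{vert}}_{B}$ and the $(1-3\epsilon)$ slack in each $\S$-box) to ensure the union is feasible. Your write-up in fact makes explicit the feasibility and profit-aggregation details that the paper leaves as a one-line remark.
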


\paragraph{Packing small items into $\S$-boxes.}
As small items are only packed into $\S$-boxes, we may also pack them separately. Hereto, we use the same ideas as for the vertical items. Let $\I_{s}$ denote the set of small items. {At this point, we already know the guessed width of the $\S$-boxes as well as that all remaining small items have width at least $\frac{\epsilon}{n}w^*_{\S}$ and at most $\epsilon w^*_{\S}$, where $w^*_{\S}$ is the maximum width of all $\S$-boxes. Thus, we can show the following result using the same techniques as before.}
\begin{lemma}\label{lem:rec_lp_small}
There is an algorithm with a running time of {$(\log_{1+\epsilon}(n))^{O_{\epsilon}(1)}$}
that computes an $(1{+}O(\epsilon))$-approximate packing of the items $\I_{s}$ into the guessed $\S$-boxes $\B_{\S}$.
\end{lemma}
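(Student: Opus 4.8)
The plan is to mirror the argument used for vertical items in Lemma~\ref{lem:rec_lp_vert}, exploiting the two facts that every small item can only be packed into an $\S$-box and that inside an $\S$-box $B$ only a fraction $\hat a^{\mathrm{small}}_B$ of the area $h_B w_B$ is earmarked for small items. So I would set up an integer program whose variables $x_{t,t',t'',B}$ count how many items of profit class $\P_t$, width class $W_{t'}$ and height class $H_{t''}$ are placed into the $\S$-box $B\in\B_\S$, subject to a reserved-area constraint $\sum_{t,t',t''} x_{t,t',t'',B}\,\hat w(t')\hat h(t'') \le \hat a^{\mathrm{small}}_B h_B w_B$ for each $B$ and availability constraints $\sum_{B} x_{t,t',t'',B}\le n_{t,t',t''}$. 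By the definition of an $\S$-box, any assignment satisfying these constraints (together with the slack left in the guessed quantities) can be realised by NFDH.

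Next I would bound the number of variables. Since items of profit below $\epsilon p_{\max}/n$ were already discarded, there are $O_\epsilon(\log n)$ profit classes. Using the minimum-height/width guarantee on the boxes, the small items of width below $\tfrac{\epsilon}{n}w^*_{\S}$ (with $w^*_{\S}$ the largest guessed $\S$-box width) have total area at most $\epsilon$ times that box's area and can be reserved there; the same argument applied to heights lets me assume all remaining small items have width and height in ranges of multiplicative length $n/\epsilon$, i.e.\ $O_\epsilon(\log n)$ width and height classes. Hence the integer program has $(\log_{1+\epsilon} n)^{O(1)}$ variables and constraints. I would then solve it approximately exactly as in Lemma~\ref{lem:rec_lp_vert}: guess the $2C_{\mathrm{boxes}}(\epsilon)/\epsilon$ most profitable small items of an optimal solution (guessing each one's profit class and target box, which costs $(\log_{1+\epsilon} n)^{O_\epsilon(1)}$ guesses, then picking the smallest-area such items via Lemma~\ref{lem:data-structure-rec}), reduce the right-hand sides, solve the LP relaxation of the residual program in time $(\log_{1+\epsilon} n)^{O(1)}$~\cite{cohen2021solving}, and round every variable down. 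By the rank lemma~\cite{lau2011iterative} the residual LP optimum has at most $2C_{\mathrm{boxes}}(\epsilon)$ fractional non-zeros, each of smaller profit than every guessed item, so the rounding costs only a factor $1+\epsilon$; finally I add back the reserved tiny items, whose count and total profit up to $1+\epsilon$ are read off from the rectangle data structure. This yields an implicit $(1+O(\epsilon))$-approximate assignment in total time $(\log_{1+\epsilon} n)^{O_\epsilon(1)}$.

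The last step is to turn the assignment into a genuine packing: for each $\S$-box $B$, all items committed to it — the horizontal, large and vertical items placed in earlier phases, the small items placed now, and the tiny reserved items — have height at most $\epsilon h_B$ and width at most $\epsilon w_B$, and by the way the guessed quantities $\hat a_{B,j}, \hat a^{\mathrm{vert}}_B, \hat a^{\mathrm{small}}_B$ were defined their total rounded area is at most $(1-3\epsilon)h_B w_B + \epsilon h_B w_B$, so the NFDH guarantee (Lemma~\ref{lem:NFDH}, used in two dimensions with item size $\lceil h_i\rceil_{1+\epsilon}\lceil w_i\rceil_{1+\epsilon}$) packs all of them. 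The main obstacle I expect is precisely this bookkeeping: making sure that across all phases the several reserved area budgets inside each $\S$-box really sum to at most $1-O(\epsilon)$ of its area and that enough room is left for the small items after the horizontal and vertical assignments are committed. But this is exactly what rounding the values $\hat a_{B,\cdot}$ to multiples of $\epsilon/(r+2)$ (resp.\ $\epsilon/c$) and the $(1-3\epsilon)$ slack in the $\S$-box definition were set up to guarantee, so it should go through with only the routine accounting already carried out in the vertical case.
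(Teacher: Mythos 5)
Your proposal follows essentially the same route as the paper's proof: guess the $2C_{\mathrm{boxes}}(\epsilon)/\epsilon$ most profitable small items via their profit classes and target boxes, reduce the right-hand sides, restrict to $O_\epsilon(\log n)$ width and height classes by discarding tiny items and reserving an $\epsilon$-fraction of space for them, solve the residual LP relaxation, round down using the rank lemma, and add the tiny items back, exactly as in the paper's treatment (mirroring Lemma~\ref{lem:rec_lp_vert}). The only cosmetic deviation is that you select the guessed representatives as the smallest-\emph{area} items of each profit class, whereas the paper takes the smallest-\emph{width} ones (a query directly supported by the rectangle data structure); this does not change the argument.
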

\begin{proof}
Again, we start by guessing the $2C_{boxes}(\epsilon)/\epsilon$ most profitable items from 
$\I_{s}$ in our packing. We do this as follows. For each of these items we first guess the profit class which we can do in time $(\log_{1+\epsilon}n)^{O_\epsilon(1)}$. For each profit class $\P_t$ this gives us a value $n_t \in O_{\epsilon}(1)$. Now for each profit class we can find the $n_t$ items of lowest width in this profit class in time $O(\log^4 n)$ using our rectangle data structure (see~Lemma~\ref{lem:data-structure-rec}) in combination with the balanced binary search tree storing all possible width values. Note that we may assume that the structured packing also uses this set of items by only losing a factor of $(1+\epsilon)^{-1}$. Finally, for each of these items we guess which box it must be assigned to which can be done in time $O_{\epsilon}(1)$. This results in a preliminary packing $S^g$ of $C_{boxes}/\epsilon$ items of the set $\I_{\S}$. For each $\S$-box we also obtain a value $\mathrm{area}(S^g)$ which is now already occupied.

We will now show how to compute the remaining packing. To this end, we again use the notion of width and height classes (of the small items now). Formally, we define a height class $H_{t'}=\{i\in \I_s:h_{i}\in[(1+\epsilon)^{t'},(1+\epsilon)^{t'+1})\}$
for each $t'\in \mathcal{T}_H= \{\lfloor \log_{1+\epsilon}(h_{\min}(\I_s))\rfloor,\dots,\lceil \log_{1+\epsilon}(h_{\max}(\I_s))\rceil\}$. Furthermore, we define a width class $W_{t^{''}}=\{i\in \I_s:w_{i}\in[(1+\epsilon)^{t^{''}},(1+\epsilon)^{t^{''}+1})\}$ for each  $t^{''} \in \mathcal{T}_W = \{\lfloor \log_{1+\epsilon}(w_{\min}(\I_s))\rfloor,\dots,\lceil \log_{1+\epsilon}(w_{\max}(\I_s))\rceil\}$.
We denote by $\hat{h}(t'):=(1+\epsilon)^{t'+1}$ and $\hat{w}(t^{''}):=(1+\epsilon)^{t^{''}+1}$ the rounded height and width, respectively.  For each triplet $(t,t',t^{''}) \in \mathcal{T}:=\{(t,t',t^{''}):t \in \mathcal{T}_P \wedge t' \in \mathcal{T}_H\wedge t^{''} \in \mathcal{T}_W\}$ we treat items of profit class $\P_t$, height class $H_{t'}$ and width class $W_{t^{''}}$ as identical.
Lastly, we let $n_{t,t',t^{''}}$ denote the number of items of profit class $\P_t$, height class $H_{t'}$ and width class $W_{t^{''}}$ for each triplet $(t,t',t^{''}) \in \mathcal{T}$. Consider $B \in \B_\S$ such that $a^{\mathrm{small}}_Bh_Bw_B$ is maximal. We already know that the number of width classes is $O(\log_{1+\epsilon}n)$. Hence, we need to restrict the number of height classes. Which can be done using the same techniques used to restrict the number of width classes by discarding some tiny items and reserving space for them in each of the $\S$-boxes. For the remaining items we consider the following LP which gives a fractional $(1+O(\epsilon))$-approximation for the remaining packing.

{
\footnotesize
	\begin{alignat*}{3}
		(\mathrm{LP}(\S))\quad& \text{max} 	& \displaystyle \sum_{(t,t',t^{''}) \in \mathcal{T}}\sum_{B \in \B_\S} x_{t,t',t''',B} p(t)			& 			& \quad & \\
		& &\displaystyle\sum_{(t,t',t^{''}) \in \mathcal{T}} x_{t,t',t'',B} \hat{h}(t')\hat{w}(t'')	& \leq (1-\epsilon)a^{\mathrm{small}}_{B}h_Bw_B - \mathrm{area}(S^g)	& 		& \forall B \in \B_\S \\
		&				& \displaystyle\sum_{B \in  \B_\S} x_{t,t',t'',B}								& \leq n_{t,t',t''}	& 		& \forall (t,t',t^{''}) \in \mathcal{T} \\
		&				& x_{t,t',t'',B}								& \geq 0& 		&\forall (t,t',t^{''}) \in \mathcal{T}, B \in  \B_\S\\
	\end{alignat*}
}

We can now argue in the same direction as we did for the packing of vertical items. Let $S^f$ be the optimal fractional solution to this remaining LP which we can find in time $\log_{1+\epsilon}^{O_\epsilon(1)}n$~\cite{cohen2021solving}. We now combine $S^g$ and $S^f$ by rounding down all fractional non-zero variables which by the rank lemma~\cite{lau2011iterative} are at {most} $C_{boxes}(\epsilon)$ {many}.
Since we have at least $2C_{boxes}(\epsilon)/\epsilon$ integral variables, due to our guessing scheme and the fact that each of the fractional variables
{corresponds to a set of items for which each item is at most as profitable as each previously guessed item,}
% the total profit of the fractional variables is at most $\epsilon p(S^g)$ ($p(S^g)$ being the profit due to the guessed items.
the total profit of the fractional variables is at most $\epsilon p(S^g)$ ({where $p(S^g)$ is the profit due to the guessed items)}.
Finally, combining the integral solution with all tiny items that were discarded to restrict our necessary number of guesses, we find an implicit $(1+O(\epsilon)$-{approximate} packing of small items.
Again, by using our rectangle data structure (see Lemma~\ref{lem:data-structure-rec}) we can find the number of tiny items as well as an $(1+\epsilon)$-approximate guess of the total profit of tiny items in time $\mathrm{poly}(\log n)$. 
\end{proof}

Combining the algorithms to compute implicit packings of small and vertical items with the indirect guessing framework, we can prove Theorem~\ref{thm:rec_nr}.

\begin{proof}[Proof of Theorem~\ref{thm:rec_nr}]
We first prove the first part of the theorem.
% First, we set our accuracy $(O_{\epsilon}(1))^{-1}\epsilon$ such that the computed solution has profit at least $(1/2-\epsilon)OPT(\I)$.
Observe that since there is only a constant number of boxes, we can guess their packing into the knapsack in constant time, {once we have determined their heights and widths}.
Our approximation algorithm proceeds in two stages.
\begin{enumerate}[(A)]
\item  \textit{Guessing basic quantities:} The total number of guesses is $(\log n )^{O_{\epsilon}(1)}$. Additionally, since the number of boxes is a constant, we can guess {their relative arrangement in the knapsack}
% 	the packing of these boxes into our knapsack
in constant time.
\item \textit{Indirect guesing framework and construction of packing:} 
\begin{enumerate}[(i)]
	\item \textit{Packing of horizontal and large items into $\H$, $\L$- and $\S$-boxes}: We need $r \in O_{\epsilon,d}(1)$ iterations of the indirect guessing framework, leading to solutions $x^*(1),\dots,x^*(r)$ to the LP-relaxations of $(\mathrm{IP}(\tilde{k}_{1})),...,(\mathrm{IP}(\tilde{k}_{r}))$. This takes time $(\log_{1+\epsilon}n)^{O_\epsilon(1)}$.  Let $\hat{x}(1),\dots,\hat{x}(r)$ be the rounded solutions to $(\mathrm{IP}(\tilde{k}_{1})),...,(\mathrm{IP}(\tilde{k}_{r}))$. These can be found in time $(\log_{1+\epsilon}(n))^{O_{\epsilon}(1)}$ due to Lemma~\ref{lem:rec_IP_sol}. In order to compute an explicit packing of $\L$-boxes, observe that for each triplet $(t,t',t'')$ the preliminary packing only indicates the number of large items chosen of of profit class $\P_t$, height class $H_{t'}$ and width class $W_{t''}$. Again, we may choose these items in non-increasing order of profits in time  $O_{\epsilon}(n+\log^3 n)$ and assign them to the correct $\L$-boxes in time $O_\epsilon(n)$. For $\H$-boxes, we proceed similarly and stack items on top of each other in non-increasing order of widths. This can be done in time $O(n\log n)$ for each box. Finally, for $\S$-boxes, we need to remark that again we may choose any set of items such that for each triplet $(t,t',t'')$ the correct number of items is chosen. Thus, we again may choose the items in non-increasing order of profits and assign the correct number of items for each triplet to each box. This way we only lose a factor of $(1+\epsilon)$ of the profit for each $\S$-box and the selected items still satisfy the conditions necessary to pack them into $\S$ using NFDH. 
	\item \textit{Packing of vertical items:} 	Here, we compute packings of vertical items into $\V$- and $\S$-boxes using Lemma~\ref{lem:rec_lp_vert_all}. For each combination of profit, width and height class, indicated by {a tuple} $(t,t',t'')$, this gives us a value $z_{t,t',t''}$ indicating the number of vertical items of profit class $\P_t$, height class $H_{t'}$ and width class $W_{t''}$ we must choose for our packing. We may choose them in non-increasing order of profits. Afterwards, we must assign the correct number of items to each of the boxes. We may now construct the packing of vertical boxes by packing items from left to right in non-increasing order of heights. The correct set of items can be found in time $O(n+\log^3 n)$ using our rectangle data structure. The packing of $\V$-boxes can be done in time~$O(n)$.
	\item \textit{Packing of small items:} 	Here, we compute the packing of small items into $\S$-boxes using Lemma~\ref{lem:rec_lp_small}. Similar as before, all that is left is to choose the correct number of items from each profit class. Again we select the in non-increasing order of profits and assign the correct number to each box. This process can be done in time $O(n +\log^3(n))$ using our rectangle data structure. Finally, we can pack each $\S$-box in time $O(n \log n)$ using NFDH.
\end{enumerate}
\end{enumerate}

This yields a total running time of $O(n\cdot(\log^4 n))+(\log n)^{O_{\epsilon}(1)}$ since all items can be inserted in time $O(n \log^4 n)$ to initialize the data structure.

Next, we will prove the second statement of the theorem regarding the dynamic algorithm. The insertion and deletion of items in time $O(\log^4 n)$ is due to our rectangle data structure. To output a $(2+\epsilon)$-approximate solution $|ALG|$ in time $O(|ALG|\cdot(\log n))+(\log n)^{O_{\epsilon}(1)}$, we use the algorithm described above with a refinement of the running time since we will choose at most $|ALG|$ items. If one queries for a $(2+\epsilon)$-estimate of the optimal value, we invoke the algorithm above without the construction of the actual packings, i.e., we only compute the preliminary packings using profit, height and width classes, which can be done in time $(\log n)^{O_{\epsilon}(1)}$. Observe that for the tiny items we now need to compute an estimate of the total profit, which we can do in time $O_\epsilon(\log^3n)$ using our rectangle data structure
by querying for the number of items and then multiplying this with the rounded profit. Finally, if one wishes to query whether an item $i \in \I$ is contained in the solution $ALG$, we compute the preliminary packings in time $(\log n)^{O_{\epsilon}(1)}$. Let $(t,t',t'')$ be such that $i$ is of profit type $\P_t$, height type $H_{t'}$ and width type $W_{t''}$ and let $z_{t,t',t''}$ be the total number of items of these types in the preliminary packing. Note that the choice of the triplet $(t,t',t'')$ may depend on which type of item $i$ is.
Then by the construction of our packing we only must check whether $i$ is among the first $z_{t,t',t''}$ when items are ordered in non-increasing order of profits. This can be done using our rectangle data structure by counting the number of items of classes $(t,t',t'')$ that have smaller profit than $i$ which we can do with a single range counting query in time $O(\log n)$ If this value is at least $z_{t,t',t''}$, we answer the query with ``no" and otherwise with ``yes''. This ensures that we give consistent answers between two consecutive updates of the set $\I$.
\end{proof}

\subsection{Rotations allowed}\label{app:rec_r}
We consider now the case where it is allowed to rotate the input rectangles
by 90 degrees and first give a high-level overview of our results. For this case, we present even a $(17/9+\epsilon)$-approximation.
As for the case without rotations, we argue that there is an \emph{easily guessable packing}
based on $\L$-, $\H$-, $\V$- and $\S$-boxes.
Since the items can be rotated, we allow now both horizontal and vertical items
to be assigned to $\H$- and $\V$-boxes.
There might be
be a special (intuitively very large) $\L$-box $B^{*}$ of width $N$.
Also, a special case arises when the packing consists of at most three items.
\begin{lemma}[Near-optimal packing of rectangles with rotations]
	\label{lem:struc_rectangles-rotation} There exists a packing
	with a total profit of at least $(9/17-O(\epsilon))\OPT$,
	which consists of at most three items or which satisfies all of
	the following properties: 
	\begin{enumerate}[i)]
		\item it consists of a set of boxes $\B$ where $\B$ is bounded
		by a value $C_{\mathrm{boxes}}(\epsilon)$ depending only on~$\epsilon$,
		\item each box in $\B$ is a $\L$-, $\H$-, $\V$- or $\S$-box,
		\item possibly, there is an $\L$-box $B^{*}$ that is declared as \emph{special} and that
		satisfies
		%$h_{B^{*}}\ge(1-\epsilon)N$
		% and $w_{B^{*}}\ge(1-\epsilon)N$,
		$w_{B^{*}}=N$; if there is no special box we define for convenience $h_B^* :=0$,
		\item there is a universal set of values $U(\epsilon)$ with $|U(\epsilon)|=O_{\epsilon}(1)$,
		values $k_{0},k_{1},k_{2},\dots,k_{r}\in\mathbb{Z}_{\geq0}$ with
		$k_{0}=0$ and $r\in O_{\epsilon}(1)$, and a value $j_{B}\in\{1,2,\dots,r\}$
		for each box $B\in\B\setminus\{B^{*}\}$ such that
		\begin{enumerate}
			\item if $B$ is a $\H$-box, then 
			% \begin{enumerate}
				% % \item
				$h_{B}=\alpha_{B}\cdot (N-h_{B^*}) $ for some $\alpha_{B}\in U(\epsilon)$,
				$w_{\min}(B)=k_{j_{B}-1}$, and \mbox{$w_{\max}(B)=k_{j_{B}}$},
				% \item $w_{B}=\alpha_{B}\cdot N$ for some $\alpha_{B}\in U(\epsilon)$  \awr{can we omit this case since we can rotate everything anyway?}
				% \end{enumerate}
			\item if $B$ is a $\V$-box, then 
			% \begin{enumerate}
				% \item $w_{B}=\alpha_{B}\cdot N$ for some $\alpha_{B}\in U(\epsilon)$,
				% $h_{\min}(B)=k_{j_{B}-1}$, and $h_{\max}(B)=k_{j_{B}}$ or
				% \item
				$h_{B}=\alpha_{B}\cdot (N-h_{B^*}) $ for some $\alpha_{B}\in U(\epsilon)$, 
				
				%$h_{\min}(B)=k_{j_{B}-1}$, and \mbox{$h_{\max}(B)=k_{j_{B}}$},
				% \end{enumerate}
			\item if $B$ is an $\L$-box and $B \ne B^*$, then $h_{B}=\alpha_{B}\cdot (N-h_{B^*}) $ %or $w_{B}=\alpha_{B}\cdot N$
			for some $\alpha_{B}\in U(\epsilon)$ and the unique item packed inside $B$ satisfies \mbox{$w_{i}=k_{j_{B}}$}.
		\end{enumerate}
		\item let $h^{(1)},\dots, h^{(c)}$ be the distinct heigths of $\V$-boxes in non-decreasing order such that for each $j = 1,\dots,c$ we have that
		\begin{enumerate}
			\item each $\V$-box of height $h^{(1)}$ only contains items of heigth at most $h^{(1)}$ and
			\item each $\V$-box of height $h^{(j)}$ only contains items of height at most $h^{(j)}$ and strictly larger than $h^{(j-1)}$ for $j=2,\dots,c$
			\item for each $\H$-box, $k_{j_B} \leq h^{(j)}$ and $k_{j_B-1} \geq h^{(j-1)}$ for some $j = 1,\dots,c$,
		\end{enumerate}
		% \item the total profit of the packing is at least $(???-O(\epsilon))\OPT$,
		% where $\OPT$ is the profit of an optimal packing for instance
		% $\I$.
		\item if $B$ is an $\S$-box, then $w_B \in \Omega(\es(N-h_{B^*}))$.
	\end{enumerate}
\end{lemma}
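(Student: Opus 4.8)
The plan is to derive this structured packing in three stages: start from the structured packing underlying the (no‑rotation) polynomial‑time $(17/9+\epsilon)$‑approximation of Gálvez et al.~\cite{galvez2021approximating}, convert its ordinary boxes into $\L$‑, $\H$‑, $\V$‑ and $\S$‑boxes exactly as in the proof of Lemma~\ref{lem:struc_rectangles}, and then reshape the one remaining $L$‑shaped part of that packing using the extra freedom that rotations provide. At the very start I would invoke Lemma~\ref{lem:rec_noint} with a function $f(\cdot)$ chosen large enough, in terms of $\epsilon$ and the number of boxes produced below, so that discarding all intermediate items costs only $O(\epsilon)\OPT$ and so that, later on, the total area of the items contained in very thin $\S$‑boxes can be absorbed elsewhere.

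In the first stage I would handle the ordinary boxes of the packing of~\cite{galvez2021approximating}, which are $O_\epsilon(1)$ many boxes of the three types in Lemma~\ref{lem:rec_rs_augment}. For each type‑$1$ and type‑$2$ box I would reorder its stack so that its horizontal, large, vertical and small items occur in contiguous blocks, carve the horizontal block into one $\H$‑box, each large item into its own $\L$‑box, the vertical block into one $\V$‑box, and set the small items aside; collecting all set‑aside small items together with the items of the type‑$3$ boxes, and applying the LP / rank‑lemma argument of Lemma~\ref{lem:vstarbox_selection} together with Lemma~\ref{lem:NFDH}, I would repack a $(1-O(\epsilon))$‑profit subset of them, with side lengths rounded up to powers of $1+\epsilon$, into $O_\epsilon(1)$ valid $\S$‑boxes. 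This simply reproduces the transformation already carried out in the no‑rotation proof.

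The second stage — reshaping the $L$‑shaped region — is where rotations matter and is the step I expect to be the main obstacle. The $L$‑shaped region is a horizontal sub‑box glued to a vertical sub‑box along a common corner, so it cannot be split into two disjoint boxes without double counting. The plan is to rotate every long item so that its width is its longer side (hence $>N/2$), single out the long item $i^{*}$ of largest width, declare the full‑width $\L$‑box $B^{*}$ containing only $i^{*}$ (so $w_{B^{*}}=N$ and $h_{B^{*}}=h_{i^{*}}$), and then argue that the other long items together with the rest of the packing can be rearranged inside the residual $N\times(N-h_{B^{*}})$ region into $O_\epsilon(1)$ many $\H$‑ and $\V$‑boxes at a cost of an $O(\epsilon)$‑fraction of the profit, using that $i^{*}$ was the worst case for the height budget. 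When this rearrangement fails, the only obstruction will be that $i^{*}$ together with at most two further items already carry a $(9/17-O(\epsilon))$‑fraction of $\OPT$, and then I would simply output those at most three items — this is the escape clause in the statement. Combining the two stages and invoking the $17/9$ structural bound of~\cite{galvez2021approximating} gives a packing of profit at least $(9/17-O(\epsilon))\OPT$ satisfying properties i)--iii); the delicate point is to do all of this while respecting the height budget $N-h_{B^{*}}$ and keeping the number of boxes below $C_{\mathrm{boxes}}(\epsilon)$.

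Finally I would enforce the guessability properties iv)--vi). I would round the height of every $\H$‑, $\V$‑ and non‑special $\L$‑box down to the nearest multiple of $(N-h_{B^{*}})\cdot\epsilon/C_{\mathrm{boxes}}(\epsilon)$, which defines $U(\epsilon)$ with $|U(\epsilon)|\in O_\epsilon(1)$, and recover the profit up to a factor $1+O(\epsilon)$ in each $\H$‑ and $\V$‑box by the same LP argument as before; then subdivide the $\H$‑boxes along the distinct widths of all $\H$‑ and $\L$‑boxes to obtain the thresholds $k_0<k_1<\dots<k_r$ with $r\in O_\epsilon(1)$ (making each non‑special $\L$‑box exactly as wide as its item), subdivide the $\V$‑boxes along their distinct heights $h^{(1)}<\dots<h^{(c)}$ to obtain the nesting v)(a)--(b), and further subdivide each $\H$‑box so that its width interval $(k_{j_B-1},k_{j_B}]$ lies inside one interval $(h^{(j-1)},h^{(j)}]$, which is the compatibility condition v)(c) and is needed precisely because under rotations a horizontal item of width $w$ may instead be packed as a vertical item of height $w$. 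For vi), by the choice of $f(\cdot)$ in Lemma~\ref{lem:rec_noint} and of a constant $c(\epsilon)\ge C_{\mathrm{boxes}}(\epsilon)$, any $\S$‑box of width below $\Omega(\es(N-h_{B^{*}}))$ has total item area at most $\epsilon$ times the area of the widest $\S$‑box, so I would reserve room there and discard the thin boxes. The hard part throughout is the second stage: simultaneously guaranteeing the $9/17$ bound, fitting everything into $N\times(N-h_{B^{*}})$ together with $B^{*}$, and keeping all these discretizations mutually consistent without blowing up the box count.
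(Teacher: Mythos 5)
Your stage two is where the proof actually has to happen, and it does not work as described. You start from the no-rotation $(17/9+\epsilon)$ structural packing of~\cite{galvez2021approximating}, whose profit guarantee hinges on an L-shaped region, and claim that after rotating every long item so that its longer side is its width, you can extract the widest such item $i^{*}$ into a full-width box $B^{*}$ and repack the remaining long items into $\H$- and $\V$-boxes inside the residual $N\times(N-h_{B^{*}})$ region at an $O(\epsilon)$ loss, with a three-item escape when this fails. No argument is given for this, and the claim is false in general: the L-region can contain long horizontal items of total height close to $N$ \emph{and} long vertical items of total width close to $N$; rotating them all to horizontal produces a stack of total height close to $2N$, so any repacking into the knapsack must drop up to half of their profit, not an $O(\epsilon)$-fraction, and this situation is not covered by "$i^{*}$ plus two items carry a $(9/17-O(\epsilon))$-fraction". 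Indeed, the paper's own discussion of the no-rotation case stresses that the L-shaped container provably cannot be replaced by $O_\epsilon(1)$ boxes at ratio better than $2$, so "convert the L-shape into boxes losing $O(\epsilon)$" is exactly the step one cannot take for free; rotations help, but only through a quantitative trade-off that your sketch never carries out.

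The paper's proof is structured entirely differently and you would need its ingredients (or equivalents). It never touches the L-shaped packing. Instead it distinguishes whether $\OPT$ contains a \emph{massive} item (both sides at least $(1-\epsilon)N$). Without a massive item it builds three candidate packings — a ring decomposition combined with the resource contraction lemma (Lemma~\ref{lem:rec_resource_contr}) and Steinberg's theorem (rotating ring items so they are short), a four-strips decomposition where crossing items go into a full-width $\H$-box and one strip's items are sacrificed, and a corner-items packing that keeps three of four corner items — and the three-item escape clause of the statement comes from this last candidate, not from your widest long item. With a massive item, that item itself becomes the special box $B^{*}$ of width $N$, and the most profitable of the four residual strips is restructured via Lemma~\ref{lem:struc_rectangles} inside the $N\times(N-h_{B^{*}})$ region; the final constant $9/17$ emerges from balancing the profits ($\tfrac12\OPT_{inner}+\OPT_{ring}$, $\tfrac34\OPT_{rest}+\tfrac12\OPT_{corner}$, $\tfrac34\OPT_{corner}$, and $\OPT_{massive}+\tfrac18\OPT_{rest}$ versus $\OPT_{rest}$) across these candidates, a balancing argument your proposal has no counterpart for. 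Your stages one and three (converting type-1/2/3 boxes via the LP/rank-lemma argument, rounding heights to a universal set $U(\epsilon)$, defining the thresholds $k_j$, the $\V$-height nesting and the width bound for $\S$-boxes) do match the paper's finishing steps, but without a correct replacement for stage two the claimed bound $(9/17-O(\epsilon))\OPT$ is unsupported.
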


Using Lemma~\ref{lem:struc_rectangles-rotation},
we compute an $(17/9+O(\epsilon))$-approximate packing using our indirect
guessing framework. Our algorithm is technically more involved
than our algorithm in the setting without rotations. For example,
to the  $\H$- and $\V$-boxes we can assign both horizontal and vertical items;
for such
% an
input items, we do not know into which box type we need to assign it.
% since
% possibly for some $\H$-box $B$, case (a)i. of Lemma~\ref{lem:struc_rectangles-rotation}
% applies and for some other $\H$-box $B'$ case (a)ii. applies.
Also,
there is possibly the special $\L$-box~$B^{*}$. For $B^{*}$ we
cannot estimate its height via the
% can estimate neither the height nor the width for $B^{*}$ via the
values in $U(\epsilon)$.
% or the values $k_{0},k_{1},k_{2},\dots,k_{r}$.
Instead, we first estimate the sizes of all other boxes, guess their
arrangement inside of the knapsack, and then place $B^{*}$ into the
remaining space. Finally, we devise an extra routine for the special case that the packing consists of at most three items.

\theoremrecrot*

In the following, we present the detailed proofs of Lemma~\ref{lem:struc_rectangles-rotation} and Theorem~\ref{thm:rectangles_rot}. To do this, we need some useful results from the literature. One of these results is the resource augmentation packing lemma (see Lemma~\ref{lem:rec_rs_augment}) which we also used when rotations are not allowed. The next result we need is the so-called resource contraction lemma introduced by Galvez et al.~\cite{galvez2021approximating}. We say that an item $i \in \I$ is \emph{massive} if $h_i \geq (1-\epsilon)N$ and $w_i \geq (1-\epsilon N)$. The resource contraction lemma states the following in the absence of such a massive item.

\begin{lemma}[\cite{galvez2021approximating}]\label{lem:rec_resource_contr}
If a set of items $\I$ does not contain a massive item and can be packed in a box of size $N \times N$, then it is possible to pack a set $\I'$ of profit at least $\frac{1}{2}p(\I)$ into a box of size $N \times (1-\frac{\epsilon}{2})N$ (or into a box of size $(1-\frac{\epsilon}{2})N \times N$) if rotations are allowed.
\end{lemma}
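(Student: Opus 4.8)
The plan is to follow the argument of Gálvez et al.~\cite{galvez2021approximating}. Fix a packing of $\I$ into the $N\times N$ knapsack and write $N':=(1-\tfrac{\epsilon}{2})N$. I will exhibit two sets $\I_{1},\I_{2}\subseteq\I$ together with packings of (possibly rotated copies of) $\I_{1}$ into a box of size $N'\times N$ and of $\I_{2}$ into a box of size $N\times N'$, such that $p(\I_{1})+p(\I_{2})\ge p(\I)$. Taking whichever of $\I_{1},\I_{2}$ has larger profit then gives a set of profit at least $\tfrac12 p(\I)$, which is exactly the statement.

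First I would isolate a possible ``central'' item. Cut the knapsack along the lines $x=N/2$ and $y=N/2$; any two items whose cells both contain the centre $(N/2,N/2)$ would overlap, so there is at most one such item $i_{0}$. Since $i_{0}$ is not massive, at least one of its side lengths is smaller than $(1-\epsilon)N<N'$, so by orienting $i_{0}$ suitably we may place it on its own into one of the two contracted boxes; if moreover $p(i_{0})\ge\tfrac12 p(\I)$ we are already done. Every other item lies entirely in the bottom half, or entirely in the top half, or entirely in the left half, or entirely in the right half of the knapsack (otherwise it would cross both lines and hence contain the centre). This is precisely where the two hypotheses are used: a massive item could not be placed into either contracted box, and ``rotations allowed'' is what lets us choose the orientation of $i_{0}$, and below of the ``wide'' and ``tall'' items, so as to match the shortened side.

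Next I would distribute the non‑central items between the two contracted boxes. Items with $w_{i}>N'$ (``wide'') are, by the non‑massive hypothesis, short ($h_{i}<(1-\epsilon)N$), and in the original packing their $y$‑projections pairwise overlap, so they appear stacked and their total width is at most $N$; rotating a wide item turns it into a short, tall item. The symmetric statement holds for ``tall'' items. Using a shifting/averaging step over $O(1/\epsilon)$ candidate cut positions one selects, for each contracted box, a union of the families above (half‑confined items together with suitably rotated wide/tall items) that can be co‑packed after translation without overfilling the box, in such a way that every non‑central item is charged to $\I_{1}$ or to $\I_{2}$; then $p(\I_{1})+p(\I_{2})\ge p(\I)$ and re‑inserting $i_{0}$ on the appropriate side finishes the proof. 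The main obstacle is exactly this last geometric bookkeeping: one must check that each contracted box, whose relevant side is only $N'=(1-\tfrac{\epsilon}{2})N$ rather than $N$, can really absorb the two families assigned to it — in particular that the wide (resp.\ tall) items split so that the widths (resp.\ heights) placed side by side sum to at most $N'$ — while still covering all of $\I\setminus\{i_{0}\}$; absorbing the ``lost'' $\tfrac{\epsilon}{2}N$ of width/height is what forces a factor $2$ rather than $1+\epsilon$. The detailed case analysis is carried out in~\cite{galvez2021approximating}.
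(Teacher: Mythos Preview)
The paper does not prove this lemma at all: it is stated with the citation \cite{galvez2021approximating} and used as a black box in the proof of Lemma~\ref{lem:struc_rectangles-rotation}, with no argument given. So there is no ``paper's own proof'' to compare your sketch against.

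Your sketch is a reasonable outline of the argument in the cited reference, and the two key observations (at most one item contains the center, and every non-central item lies entirely in one of the four axis-aligned half-knapsacks) are correct and are indeed where the hypotheses ``no massive item'' and ``rotations allowed'' enter. That said, the final paragraph is where the real content of the resource contraction lemma lives, and you have essentially deferred it back to \cite{galvez2021approximating}: the claim that one can partition $\I\setminus\{i_0\}$ into $\I_1,\I_2$ each packable into a contracted box, via a ``shifting/averaging step over $O(1/\epsilon)$ candidate cut positions,'' is asserted rather than argued, and the specific mechanism (which in the original involves a more delicate case analysis on how the wide/tall items interact with the half-regions) is not visible in your sketch. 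As a summary pointer to the literature this is fine; as a self-contained proof it has a gap precisely at the step you flag as the ``main obstacle.''
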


The third result we make use of later is due to Steinberg~\cite{steinberg1997strip} and gives an area based packing guarantee of rectangles.
\begin{theorem}
We are given a set of rectangles $\I$ and a box of size $w \times h$. Let $h_{\max}\leq h$ and $w_{\max}\leq w$ denote be the maximum width and height among the items in $\I$, respectively. Then all items in $\I$ can be packed into the box if
\[
2\sum_{i \in \I}h_iw_i \leq hw - \max\{2h_{\max}-h,0\}\max\{2w_{\max}-w,0\}.
\]
\end{theorem}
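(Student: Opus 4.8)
The plan is to prove this packing guarantee, which is Steinberg's theorem, by strong induction on $n := |\I|$ using a recursive cutting argument. The base cases $n \le 1$ are immediate: the hypothesis forces $w_1 = w_{\max} \le w$ and $h_1 = h_{\max} \le h$ for the unique rectangle, so it fits in a corner. For the inductive step, the goal is to produce an axis-parallel straight cut splitting the box $w \times h$ into two sub-boxes $B_1, B_2$, together with a partition $\I = \I_1 \cup \I_2$ (disjoint), such that each pair $(B_j, \I_j)$ again satisfies the hypothesis of the theorem. Then, by the induction hypothesis, $\I_j$ packs into $B_j$, and placing the two sub-packings into the two parts of the original box yields a packing of $\I$ into $w \times h$.

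By swapping the two coordinate axes globally --- which exchanges $w \leftrightarrow h$ and $w_{\max} \leftrightarrow h_{\max}$ and rotates the box together with every rectangle, without rotating any rectangle relative to the box --- we may assume $w \ge h$. I would then distinguish cases according to whether some rectangle is \emph{wide}, i.e.\ has $w_i > w/2$. If there is no wide rectangle, then $(2w_{\max}-w)_+ = 0$, the correction term vanishes, and the hypothesis reads $2\sum_i h_i w_i \le wh$; here I would order the rectangles by non-increasing width and place a vertical cut at the first position where the accumulated rectangle area reaches roughly half the box, arguing that both sides then inherit $2\sum A_i \le (\text{side width})\cdot h$ and that the $w_{\max}$ of each side is still at most half that side's width, so neither side acquires a new correction term. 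If there is a wide rectangle, the wide rectangles cannot lie side by side, so they must be stacked; an area estimate (each wide rectangle has $A_i > (w/2)\,h_i$, combined with the hypothesis) shows their total height is strictly less than $h$, so they can be stacked flush to one vertical edge, leaving a horizontal strip of height $h - \sum_{\text{wide}} h_i$ into which the remaining rectangles must be routed. I would cut off that strip and recurse inside it, checking the transferred area bound and correction term.

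The main obstacle is exactly the choice of cut so that \emph{both} sub-instances satisfy the hypothesis \emph{including} the subtracted quantity $\max\{2w_{\max}-w,0\}\max\{2h_{\max}-h,0\}$. The area budget has a factor $2$ but no further slack, so a naive split (e.g.\ bisecting the available area) is not enough; one must show a cut can always be placed so that the rectangles that are ``large'' in the cut direction all end up on one side, so that the other side pays no correction penalty, while the penalty incurred on the heavy side is exactly offset by the extra area it receives. Handling this cleanly needs a careful sub-case analysis --- wide versus tall rectangles, whether $2h_{\max} > h$, whether the wide (or tall) rectangles already overshoot various thresholds --- organized as two mutually recursive procedures, one performing a vertical cut and one a horizontal cut, plus a small number of explicitly handled configurations of three or four rectangles (placed directly into corners) where no valid cut exists; this is where Steinberg's argument is intricate. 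The remaining work --- verifying the transferred inequalities and that the two sub-packings combine into a feasible packing of the original box --- is routine bookkeeping.
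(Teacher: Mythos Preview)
The paper does not prove this statement at all: it is quoted verbatim as Steinberg's theorem and attributed to \cite{steinberg1997strip}, with no proof or proof sketch given. There is therefore nothing in the paper to compare your proposal against.

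For what it is worth, your outline does track the structure of Steinberg's original argument (a recursive partition of the box with a case analysis on which rectangles exceed half the side lengths, terminating in a handful of explicit small configurations), and you correctly identify that the delicate part is choosing cuts so that the correction term $\max\{2h_{\max}-h,0\}\max\{2w_{\max}-w,0\}$ is inherited on at most one side. But what you have written is a plan, not a proof: the actual case analysis in Steinberg's paper runs to several pages and around seven distinct reduction procedures, and phrases like ``roughly half the box'' and ``routine bookkeeping'' hide exactly the places where the inequalities must be checked carefully. If you were expected to supply a self-contained proof here, the present sketch would not suffice; if the intent was merely to cite the result, as the paper does, then no proof is needed.
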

We will now prove Lemma~\ref{lem:struc_rectangles-rotation}. Again, we use Lemma~\ref{lem:rec_noint} such that $c(\epsilon)\cdot \es \leq \epsilon^2$ for a large constant $c(\epsilon) \geq 1/\epsilon$.

\begin{proof}[Proof of Lemma~\ref{lem:struc_rectangles-rotation}]
Let $\OPT$ be an optimal packing. Assume first that there is \emph{no massive item} in $\OPT$. In the following, we will construct three candidate packings and later argue that one of them contains a profit of at least $(\frac{7}{19}-O(\epsilon))\OPT$.

\textbf{Candidate packing $\mathrm{A}$:} Consider a ring of width $\frac{\epsilon}{32}N-\frac{\epsilon^2}{64}N$ in the knapsack. Let $\OPT_{ring}$ be the items contained in this ring and $\OPT_{inner}$ be all other items. We now apply the resource contraction lemma to $\OPT_{inner}$. More precisely, we use it to find a subset of items of $\OPT_{inner}$ which can be packed into box of height $(1-\epsilon/2)N$ and width $N$ such that their profit is at least $1/2p(\OPT_{inner})$. Similar as in the case without rotation, we now use the resource augmentation lemma (Lemma~\ref{lem:rec_rs_augment}) with $\epsilon_{ra} = \frac{\epsilon - 2\epsilon^2}{2-\epsilon} \leq \epsilon$. This, allows us to find a packing with profit $(1/2-O(\epsilon))\OPT_{inner}$ inside a box of height $(1-\epsilon/4-2\epsilon^2)$ and width $N$. {The additional space will help us later to round up the heights of boxes and pack all items packed into $\S$-boxes of width at most $\es N$ into a single $\S$-box at the top of the knapsack.} Using the same arguments as in the proof of Lemma~\ref{lem:struc_rectangles} we can transform this packing into one using a constant number of $\V$-, $\L$-, $\H$ and $\S$-boxes satisfying the properties of our structured packing inside the area $[0,N]\times [0,(1-\epsilon/4)N]$. Now, we consider $\OPT_{ring}$. Observe that the total area of the items packed into the outer ring is at most the area of the ring which is
$\frac{\epsilon}{8}N^2 - \frac{\epsilon^2}{256}N^2.$ Furthermore, we can rotate all items such that $h_{\max} \leq \frac{\epsilon}{32}N-\frac{\epsilon^2}{64}N < 
\frac{\epsilon}{4}N-\frac{\epsilon^2}{128}N$ and $w_{\max} \leq N$. Therefore, by Steinberg's theorem all items which were packed into the outer ring can be packed into the area $[0,N]\times [(1-\epsilon/4)N,(1-\epsilon^2/128)N]$. Finally, using the resource augmentation packing lemma with an appropriate choice of $\epsilon_{ra}$ and our transformation described in the proof of Lemma~\ref{lem:struc_rectangles}, we can find a packing of profit $(1-O(\epsilon))\OPT_{ring}$ using a constant number of $\V$-, $\L$-,$\H$- and $\S$-boxes satisfying properties i), ii), iii) and iv) of our structured packing. Observe that the $\V$-boxes already satisfy property v). Thus, we only need to modify the $\H$-boxes such that they satisfy property v) as well. To this end, we split each $\H$-box into a constant number of $\H$-boxes and update the $k_r$ values. To ensure property vi), as described in the proof of Lemma~\ref{lem:struc_rectangles}, we consider all $\S$-boxes of width less than $\es N$. The total volume of these items is at most $C_{boxes}(\epsilon) \es N^2$ and, hence, by an appropriate choice of $\es$ we know that these can be packed into the area at the top of the knapsack of height $\epsilon^2 N$ such that this gives one additional $\S$-box. In this way, we ensure that for each $\S$-box its height and width are at least $\es N$.
Thus, we find a packing with profit at least
\[
\left(1-O(\epsilon)\right)\left(\frac{1}{2}\OPT_{inner}+\OPT_{ring}\right).
\]

\textbf{Candidate packing $\mathrm{B}$:} {The reader may imagine that} $\OPT_{ring} = 0$, implying that there are no items contained inside the outer ring of the knapsack. Let $S_{left}$, $S_{right}$, $S_{top}$ and $S_{bottom}$ be strips of width or height $\frac{\epsilon}{32}-\frac{\epsilon^2}{64}$ on the left, right, top and bottom of the knapsack, respectively. We first define crossing items which are items that touch both $S_{left}$ and $S_{right}$ or $S_{top}$ and $S_{bottom}$ {but which do not intersect with more than two of the strips}. We may assume w.l.o.g. that all crossing items are packed horizontally and we may move these items to the bottom of the knapsack. They can be packed into an $\H$-box of width $N$ since there are no items packed to the left and right of them (recall that $\OPT_{ring} = 0$).
{Thus, the remaining items are packed into the area $[0,N] \times [N',N]$ where $N'$ is the total height of the moved crossing items.
Accordingly, the strip $S_{bottom}$ corresponds now to the area $[0,N] \times [N',N' + (\frac{\epsilon}{32}-\frac{\epsilon^2}{64})\cdot N]$, but it is still disjoint from $S_{top}$.}

Now, we consider the remaining items such that each of them intersects at most two of the four strips and there at most {four} items that intersect two of the strips (at the corners defined by the intersections of the strip boundaries). Let $\OPT_{corner}$ denote these corner items and $\OPT_{rest}$ denote the items which intersect with at most one strip. We now choose one of the strips uniformly at random and delete all items intersecting it. The remaining items can now be packed into the knapsack with a free space of width (height) $\epsilon/32 \cdot N$
{since we deleted the items intersecting one of the strips. This guarantees a}
% and height (width) $N$ and guarantee a
profit of
\[
\frac{3}{4} \OPT_{rest} + \frac{1}{2}\OPT_{corner}.
\]
Using the resource augmentation packing lemma (with an appropriate choice of $\epsilon_{ra}$ {to enable us to round up the heights of the boxes
{and still keep a strip of width $N$ and height $\Omega(\epsilon N)$ empty)}}
we transform
% and form a single $\S$-box containing items from all very small $\S$-boxes}) and transforming
the resulting containers into $\V$-, $\L$-,$\H$- and $\S$-boxes results in a packing satisfying all properties of our structured packing except for property vi.). To ensure that property v) holds for all $\H$-boxes we again split them according to the intervals given by the distinct heights of $\V$-boxes. Our obtained profit is at least
\[
\left(1-O(\epsilon)\right) \left(  \frac{3}{4} \OPT_{rest} + \frac{1}{2}\OPT_{corner} \right).
\]
Again, in a final step, similar to what we described in the proof of Lemma~\ref{lem:struc_rectangles}, we remove all $\S$-boxes of height or width less than $\es N$ and pack their items into a single $\S$-box at the top of the knapsack of height $\Omega(\epsilon N)$.

\textbf{Candidate packing $\mathrm{C}$:} {The reader may imagine that $\OPT_{rest} = 0$ and that hence there are only the corner items}. Then, we either keep all of them or three out of four. This yields a profit of at least
\[
\frac{3}{4}\OPT_{corner}.
\]
% {Possibly rotating the items, we can show that we can }
% This way we can round up the height of at least one of the items to be $\alpha_B N$ for some $\alpha_B \in U(\epsilon)$.
%\awr{Omitted argumentation here: not necessary since we have at most three items}

We now choose the best out of the three candidate packings. Observe that $\OPT_{inner}+\OPT_{ring}=1$ and $\OPT_{rest}+\OPT_{corner} = \OPT_{inner}$.
{Therefore, in any case we obtain a
% Minimizing the possible
profit of at least $(\frac{7}{19}-O(\epsilon))\OPT$.}

Next, assume that there is a single massive item in $\OPT$. We now construct two candidate packings. In the following, we denote $\OPT_{massive}$ as the solution containing only the massive item and $\OPT_{rest}$ as the solution containing all other items.

\textbf{Candidate Packing $\mathrm{Massive A}$}: In this packing, we take out the massive item. Observe that the remaining items (up to rotation) can be packed into a knapsack of width $N$ and height at most $4\epsilon N$. Thus, we can apply the resource augmentation packing lemma to find a packing with profit at least
\[
(1-O(\epsilon))\OPT_{rest}.
\]
Accounting for the fact that we need to round up the height of boxes, we choose $\epsilon_{ra}$ such that the resulting packing (before rounding) can be packed into 
a knapsack of height $(1-(\epsilon/2-2\epsilon^2)N$ and width $N$. We can then use the same arguments as in the proof of Lemma~\ref{lem:struc_rectangles} to find the desired packing. In addition to these arguments we use the final step to remove all small $\S$-boxes similar to this procedure in \textbf{Candidate Packing $\mathrm{A}$} or  \textbf{Candidate Packing $\mathrm{B}$} and modify the $\H$-boxes such that they satisfy property v).

\textbf{Candidate Packing $\mathrm{Massive B}$}: Next, we find a packing which combines the massive item with a subset of the other items. Again let $S_{left}$, $S_{right}$, $S_{top}$ and $S_{bottom}$ be the strips of the knapsack on the left, right, top or bottom, respectively, which are not filled by the massive item. We will now construct a packing in a similar way as in~\cite{galvez2021approximating} with slight modifications necessary to have a packing using our four types of boxes as well as taking into account the rounding of the height of each box. We consider the strip with the largest amount of profit among the four strips (note that strips may share some profit). But there is one strip such that the profit of all items contained inside this strip is at least
\[
\OPT_{strip} \geq \frac{1}{4}\OPT_{rest}.
\]
We may assume w.l.o.g. that this is one of the {horizontal} strips ($S_{top}$ or $S_{bottom}$) as otherwise we may rotate the massive item and the chosen strip such that the strip is {horizontally} positioned {above or below} the massive item. In particular,
{we can assume that}
the massive item $i$ is now packed into the area $[0,N]\times [N-h_i,N]$ and the items that were previously packed into the chosen strip are packed into the area $[0,N]\times [0,N]$. We now treat this area as a single knapsack and use the proof of Lemma~\ref{lem:struc_rectangles} to find a packing into constantly many $\V$-,$\H$-,$\L$- and $\S$-boxes satisfying the properties of our structured packing such that the profit is at least
\[
\left(\frac{1}{2}-O(\epsilon)\right) \OPT_{strip} \geq  \left(\frac{1}{8}-O(\epsilon)\right)\OPT_{rest}.
\] 
{When invoking Lemma~\ref{lem:struc_rectangles}, we make sure that the height of the resulting boxes are rounded up to values {of the form} $\alpha_B (N-h_i)$; {also, we make sure that
there is enough space at the top of the knapsack to move all items in very small $\S$-boxes (i.e., boxes $B $ with width
$w_B \in O({\es(N-h_{B^*}}))$}
%  $w_B \in O_{\epsilon}(\aw{N-h_{B^*}})$})
there.}
Hence, this packing in combination with the massive item which we place in a $\L$-box of height $N$ yields a profit of at least
\[
\OPT_{massive}+\left(\frac{1}{8}-O(\epsilon)\right)\OPT_{rest}.
\]
Using the fact that $\OPT_{massive}+\OPT_{rest} = 1$, chosing the most profitable of the two candidate packings yields a guarantee of approximately $(\frac{8}{15}-O(\epsilon))\OPT$.
\end{proof}

\subsubsection{Computing a packing}
In the following, we explain how to compute a packing corresponding to Lemma~\ref{lem:struc_rectangles-rotation}, {the goal being to prove} Theorem~\ref{thm:rectangles_rot}. In the following, we assume w.l.o.g. (due to rotation) that there are no vertical items such that we only have small, horizontal and large items. A lot of the steps needed to compute our packing are similar to the ideas used in the non-rotational case. In the following, we will focus on the key differences.
First, we will explain how to compute a packing of at most three items.
\begin{lemma}
If the structured packing with profit at least $(9/17-O(\epsilon))\OPT$ consists of at most $3$ items, it can be computed in time $(\log_{1+\epsilon}n)^{{O(1)}}$.
\end{lemma}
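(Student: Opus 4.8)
The plan is to guess the profit classes of the (at most three) items of the structured packing and, for each guess, to decide feasibility of packing three \emph{actual} items with those profit classes by a constant-size geometric case analysis combined with range queries on the rectangle data structure. First I would recall that, after discarding items of profit less than $\epsilon p_{\max}/n$, every item lies in one of the $O_\epsilon(\log n)$ profit classes $\P_t$; hence the number $k\in\{1,2,3\}$ of items and the multiset of their profit classes can be guessed in $O((\log_{1+\epsilon} n)^3)$ time, and for each guess the total profit is fixed up to a factor $1+\epsilon$ by the sum $\sum_j \widehat p(t_j)$ of the rounded profits. It therefore suffices, for a fixed guess $t_1,\dots,t_k$, to decide whether there exist $k$ distinct items $i_1,\dots,i_k$ with $i_j\in\P_{t_j}$ that can be packed (rotations allowed) into the $N\times N$ knapsack, and then to output the guess of largest rounded profit among those that pass; retrieving concrete witnesses and an explicit placement afterwards costs $O(\log^3 n)$ by Lemma~\ref{lem:data-structure-rec}.

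For the feasibility test I would enumerate the constantly many combinatorial layouts of at most three axis-parallel rectangles in a square. Since a packing of three rectangles is always guillotine-separable, this list is short: for $k=1$ the single condition $w\le N$ and $h\le N$; for $k=2$ a row ($w_1+w_2\le N$, $h_1,h_2\le N$) or a column; for $k=3$ three in a row, three in a column, or a ``two-plus-one'' layout in either orientation, where one item spans one side of the knapsack and the other two are packed in the remaining strip. Together with the $O(1)$ choices of which item occupies which slot and of the orientation of each item, each sub-case imposes a conjunction of conditions of the form ``a specified dimension of $i_j$ is at most $N$'' together with at most two ``sum'' conditions $\sum_{j\in S}\mathrm{dim}(i_j)\le N$ over disjoint index sets $S$. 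Using the balanced search trees over the distinct widths and heights together with the counting queries of Lemma~\ref{lem:data-structure-rec}, a query of the form ``does $\P_t$ contain an item with $w\le x$ and $h\le y$'' (and its rotated variant, and, via a union of two such queries, the condition $\min(w_i,h_i)\le x$) is answered in $O(\log n)$ time. Since each sum condition couples at most two of the three items, I would resolve the coupling by fixing, for each coupled pair, the item that minimises the dimension appearing in the sum --- without loss of generality by a one-line monotonicity argument, since shrinking that dimension only relaxes the constraint on the partner --- after which each sub-case reduces to a constant number of such queries (with a small extra check of cardinalities to handle repeated profit classes, where two \emph{distinct} items are required). Hence each guess is tested in $(\log_{1+\epsilon} n)^{O(1)}$ time, the total running time is $(\log_{1+\epsilon} n)^{O(1)}$, and the returned packing has profit at least $(1-\epsilon)$ times that of the structured packing, i.e.\ at least $(9/17-O(\epsilon))\OPT$.

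The main obstacle is purely the bookkeeping of the feasibility test: checking that the $O(1)$ layouts genuinely exhaust all ways at most three rectangles can lie in a square, that the ``minimise the coupled dimension first'' greedy is correct in every sub-case once each item is allowed to contribute either $(w_i,h_i)$ or $(h_i,w_i)$, and that the available range queries can be composed to test each sub-case's conjunction of conditions (including the non-axis-parallel $\min(w_i,h_i)\le x$ and the distinctness requirement when two guessed classes coincide). None of this is deep; once it is carried out for all sub-cases, the running-time bound and the approximation guarantee follow immediately.
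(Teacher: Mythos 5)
Your proposal is correct and follows essentially the same route as the paper's proof: guess the profit class (and orientation) of each of the at most three items, use the fact that such a packing is guillotine/strip-structured so that only $O(1)$ layouts arise, and replace each item by the one in its profit class minimizing the relevant dimension, found via the rectangle data structure, with a monotonicity argument guaranteeing that feasibility is preserved. One small inaccuracy is your claim that the sum constraints couple at most two items over disjoint index sets (three-in-a-row couples all three, and the ``one full-height item plus two stacked items'' layout has overlapping pairs), but your sequential greedy minimization handles these cases anyway, and the paper's own sketch is no more detailed on this point or on the distinctness issue when two guessed profit classes coincide.
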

\begin{proof}
Let $i_1, i _2$ and $i_3$ denote the items in the packing. {Since we have at most three items, one can show that
there is one item that we can we can assign to an $\L$-box of height $N$ such that this box does not intersect any of the other two items.
W.l.o.g. suppose that $i_1$ is this item.} Then, we guess in time $O(\log_{1+\epsilon}n)$ the profit type of $i_1$. Losing only a factor of $1+\epsilon$ of the profit we now choose the item of this profit type with the smallest width or height (due to rotation) which we can find in time $O(\log^3 n)$ using our rectangle data structure and a binary search. This will leave enough space for the other two items. For each of these items we can again guess the profit class in time $O(\log_{1+\epsilon}n)$ and the orientation of the item which will then indicate whether we take the item with the smallest width or smallest height in this profit class. Using our rectangle data structure we can do this in time $O(\log^3 n)$.
\end{proof}
We now proceed with how to compute a packing corresponding to Lemma~\ref{lem:struc_rectangles-rotation} if there are more than three items. In this case, we use the following structure resembling our results for hypercubes and rectangles without rotation. We first guess some basic quantities. Then, we pack the small items into $\S$-boxes. Then, we use our indirect guessing framework to pack horizontal and large items into $\H$-$\V$-,$\L$- and $\S$-boxes.

\paragraph{Guessing basic quantities.} We start by guessing similar quantities as in the setting without rotation. For completeness, we give the details here. We again disregard all items with profit less than $\epsilon \frac{p_{\max}}{n}$ such that we have $O(\log_{1+\epsilon}n)$ profit classes of the form $\P_{t}:= \{i \in \I: p_i \in [(1+\epsilon)^{t},(1+\epsilon)^{t+1})]\}$.
% Losing only a factor of $(1+\epsilon)^{-1}$ of the profit, we may guess for each $t$ and $B$ the number of items used in the optimal packing as powers of $(1+\epsilon)$. Denote one such value by $n^g_{t,B}$. Guessing all these values can be done in time $O_\epsilon(\log^{C_{boxes}(\epsilon)} n)$.
% \awr{Commented out guessing $n^g_{t,B}$ that I think we cannot guess fast enough. But it is never uses.}

Next, we guess how many boxes of each type there are in $\B$. This amounts to a total of $O_{\epsilon}(1)$ many possibilities. For each of the boxes $B$ we guess its height by guessing $\alpha_B$, using that $|U(\epsilon)|\le O_{\epsilon}(1)$.
%
% for which we use $U(\epsilon):=\{i\cdot \frac{\epsilon}{2C_{boxes}(\epsilon)}: i=1,\dots, \frac{2C_{boxes}(\epsilon)}{\epsilon}\}$ such that there are $O_\epsilon(1)$ many possibilities for each box.

The remaining quantities are the same as in the non-rotational setting with a few exceptions. We do not need to guess values $\hat{a}_{B,\V}$ for an $\S$-box $B$ since we now treat all vertical items as horizontal items which will be packed in the indirect guessing framework. For each $\S$-box $B\in\B$, we guess an estimate of the width $w_B$. Formally, for each $\S$-box $B\in\B$, we guess the value $\left\lfloor w_B \right\rfloor _{1+\epsilon}$. Observe that there are $O_\epsilon(1)$ many since for each $\S$-box we know that its width is in the interval $[\es N, N]$ and $\es$ is bounded from {below by some constant depending only on $\epsilon$. Furthermore, consider the distinct heights of $\V$-boxes denoted by $h^{(1)},\dots, h^{(c)}$ and $h^{(0)}:=0$. For each range $(h^{(j-1)},h^{(j)}]$ we guess the total profit of horizontal items with width in this range used in $\V$-boxes in the structured packing. More, precisely for each $j=1,\dots,c$ we guess a value $\hat{p}_{\V}(j)$ as a power of $1+\epsilon$ which we can do in time $O_\epsilon(1)$.

\paragraph{Packing small items into $\S$-boxes.} Observe that in contrast to the non-rotational setting, we now know the guessed widths of $\S$-boxes in advance and can, therefore, start by packing small items into $\S$-boxes. For this step, we proceed in a similar fashion as in the non-rotation case with the only exception that now a small item $i$ may be placed into an $\S$-box $B$ if $w_i \leq \epsilon w_B$ and $h_i \leq \epsilon h_B$ or $h_i \leq \epsilon w_B$ and $w_i \leq \epsilon h_B$. Let $\I_s$ be the set of small items and $\B_\S$ be the set of $\S$-boxes. Again, we use the notion of height and width classes. We define a height class $H_{t'}=\{i\in \I_s:h_{i}\in[(1+\epsilon)^{t'},(1+\epsilon)^{t'+1})\}$
for each $t'\in \mathcal{T}' = \{\lfloor \log_{1+\epsilon}(h_{\min}(\I_s))\rfloor,\dots,\lceil \log_{1+\epsilon}(h_{\max}(\I_s))\rceil\}$. Furthermore, we define a width class $W_{t^{''}}=\{i\in \I_s:w_{i}\in[(1+\epsilon)^{t^{''}},(1+\epsilon)^{t^{''}+1})\}$ for each  $t^{''} \in \mathcal{T}^{''} = \{\lfloor \log_{1+\epsilon}(w_{\min}(\I_s))\rfloor,\dots,\lceil \log_{1+\epsilon}(w_{\max}(\I_s))\rceil\}$.

We denote by $\hat{h}(t'):=(1+\epsilon)^{t'+1}$ and $\hat{w}(t^{''}):=(1+\epsilon)^{t^{''}+1}$ the rounded height and width, respectively. Let $\mathcal{T}:= \{(t,t',t''): t \in \mathcal{T_P} \wedge t' \in \mathcal{T}_H \wedge t'' \in \mathcal{T}_W\}$. We denote by $n_{t,t',t^{''}}$ the number of items corresponding to the tuple $(t,t',t^{''})$. Now, we compute a packing based on the following IP. We denote by {$\I(B_\S)$} the set of small items which may be packed into box $B$. In the following, IP we only consider variables for which the height class and width class is compatible with a box $B$. In particular, we only consider a variable $x_{t,t',t'',B}$ if $\hat{h}_{t'} \leq 2\epsilon h_B$ and $\hat{w}_{t''} \leq 2\epsilon w_B$ or $\hat{h}_{t'} \leq 2\epsilon w_B$ and $\hat{w}_{t''} \leq 2\epsilon h_B$.

{
\small
\begin{alignat*}{3}
(\mathrm{IP}(\S))\quad& \text{max} 	& \displaystyle \sum_{(t,t',t'') \in \mathcal{T}}\sum_{B \in \B_\S} x_{t,t',t'',B} p(t)			& 			& \quad & \\
& &\displaystyle \sum_{(t,t',t'') \in \mathcal{T}} x_{t,t',t'',B} \hat{h}(t')\hat{w}(t'')	& \leq a^{\mathrm{small}}_{B}h_Bw_B & 		& \forall B \in \B_\S(\ell+1) \\
&				& \displaystyle\sum_{B \in  \B_\S} x_{t,t',t'',B}								& \leq n_{t,t',t''}	& 		& \forall  (t,t',t'') \in \mathcal{T} \\
&				& x_{t,t',t'',B}								& \in \mathbb{N}_{0}& 		&\forall t  (t,t',t'') \in \mathcal{T}, B \in  \B_\S
\end{alignat*}
}
We now show how to find a $(1+O(\epsilon))$-approximate implicit solution to the problem of packing small items into $\S$-boxes using the same techniques as used in the non-rotational setting. More specifically, we first guess the most profitable items to gain a partial integral solution to $(\mathrm{IP}(\S))$. Then, we show how we can restrict the number of relevant width and heigth classes to $O(\log_{1+\epsilon}n)$ each such that we can solve the remaining LP in time $(\log_{1+\epsilon}n)^{O(1)}$~\cite{cohen2021solving}. {Here, we again use the fact that all
{$\S$-boxes} are sufficiently high such that all small items whose height is much less than that of the {highest $\S$-box} can be packed into this box, while only losing a {factor of $1+\epsilon$ with respect to the
profit of a fractional packing of items into this box.}} Lastly, we take {an optimal solution to this LP} and round all fractional non-zero variables down such that in combination with the guessed solution and the discarded items we obtain a feasible solution to $(\mathrm{IP}(\S))$. Finally, we use the rank lemma~\cite{lau2011iterative} to argue that {due to this we lose only a factor of $1+\epsilon$} of the profit due the rounding. {Finally,} we add the profit of {the} tiny items {that we had} discarded to restrict the number of height and width classes.
{Overall, our solution is a $(1+O(\epsilon))$-approximate solution to $(\mathrm{IP}(\S))$.} 
% }
% %
% We now show how to find a $(1+O(\epsilon))$-approximate implicit solution to the problem of packing small items into $\S$-boxes using the same techniques as used in the non-rotational setting. More specifically, we first guess the most profitable items to gain a partial integral solution to $(\mathrm{IP}(\S))$. Then, we show how we can restrict the number of relevant width and heigth classes to $O(\log_{1+\epsilon}n)$ each such that we can solve the remaining LP in time $\log_{1+\epsilon}^{O(1)}n$~\cite{cohen2021solving}. {Here, we again use the fact that all boxes are sufficiently high such that all small items whose hight is much less than that of the largest box can be packed into this box while only losing an $\epsilon$-fraction of the profit of a fractional packing into this box.} Lastly, we take {an optimal solution to this LP} and round all fractional non-zero variables down such that in combination with the guessed solution and the discarded items we obtain a feasible solution to $(\mathrm{IP}(\S))$. Finally, we use the rank lemma~\cite{lau2011iterative} to argue that we only lose an $\epsilon$-fraction of the profit due the rounding and add the profit of tiny items discarded to restrict the number of height and width classes such that the total solution is a $(1+O(\epsilon))$-approximate solution to $(\mathrm{IP}(\S))$.
\begin{lemma}\label{lem:rec_rot_lp_small}
There is an algorithm with a running time of {$(\log_{1+\epsilon}(n))^{O_{\epsilon}(1)}$}
that computes an {implicit} $(1{+}O(\epsilon))$-approximate solution to $(\mathrm{IP}(\S))$.
\end{lemma}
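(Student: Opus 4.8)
The plan is to follow the same template as the proof of Lemma~\ref{lem:rec_lp_small} in the non-rotational setting, adapting it to the only structural difference, namely that a small item may be packed into an $\S$-box in either of its two orientations. The argument has four stages: (a) guess the $2C_{\mathrm{boxes}}(\epsilon)/\epsilon$ most profitable small items of the targeted solution; (b) reduce the number of relevant height and width classes to $O(\log_{1+\epsilon} n)$ each, discarding only items of negligible total area; (c) solve the LP-relaxation of the residual instance of $(\mathrm{IP}(\S))$ in time $(\log_{1+\epsilon} n)^{O(1)}$; (d) round the fractional solution to an integral one, losing only a $(1+\epsilon)$-factor, and add back the profit of the discarded items. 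Throughout we use that area is rotation-invariant, so all right-hand sides and estimates above are insensitive to the orientation chosen for each item.

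For stage (a) we guess, for each of the $2C_{\mathrm{boxes}}(\epsilon)/\epsilon$ most profitable small items, its profit class (giving $(\log_{1+\epsilon} n)^{O_\epsilon(1)}$ possibilities) and, in time $O_\epsilon(1)$, the $\S$-box it is assigned to together with a compatible orientation; within each profit class we then take the items of smallest width, found in time $O(\log^4 n)$ via Lemma~\ref{lem:data-structure-rec} and the balanced search tree for widths, losing only a $(1+\epsilon)$-factor since profits inside a class agree up to $1+\epsilon$. This yields a partial integral solution $S^g$ and an already-occupied area $\mathrm{area}(S^g)$ for each $\S$-box, by which we decrease the right-hand sides of $(\mathrm{IP}(\S))$. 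For stage (b), property vi) of Lemma~\ref{lem:struc_rectangles-rotation} gives $w_B \in \Omega(\es(N-h_{B^{*}}))$ and hence $h_B,w_B \in \Omega_\epsilon(N)$ for every $\S$-box $B$. Let $B_{\S}^{*}$ be the $\S$-box maximizing $a^{\mathrm{small}}_{B} h_B w_B$; every remaining small item one of whose side lengths is below $\frac{\epsilon}{n}$ times the corresponding side length of $B_{\S}^{*}$ we call tiny, and the total area of all tiny items is at most $\epsilon$ times the area of $B_{\S}^{*}$. Reserving this much area in $B_{\S}^{*}$, we may assume no tiny item is packed elsewhere, losing only a $(1+\epsilon)$-factor against a fractional packing into $B_{\S}^{*}$. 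After removing tiny items, the heights and widths of the remaining small items span only $O(\log_{1+\epsilon} n)$ classes each, so $|\mathcal{T}| \in O(\log^{3}_{1+\epsilon} n)$ and the residual LP has $(\log_{1+\epsilon} n)^{O(1)}$ variables.

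For stages (c)–(d) we compute an optimal vertex solution $S^f$ of the residual LP in time $(\log_{1+\epsilon} n)^{O(1)}$ using~\cite{cohen2021solving}. By the rank lemma~\cite{lau2011iterative}, $S^f$ has at most $C_{\mathrm{boxes}}(\epsilon)$ fractional non-zero variables, since only the per-box area constraints can force fractionality while all per-tuple supply constraints have integral right-hand sides. Rounding these variables down and combining the result with $S^g$ and the reserved tiny items gives an integral solution feasible for $(\mathrm{IP}(\S))$: the box constraints hold because the discarded area was reserved, and the supply constraints hold because $S^g$ took the narrowest items in each class. Since $S^g$ contributes $2C_{\mathrm{boxes}}(\epsilon)/\epsilon$ items, each at least as profitable as any item corresponding to a rounded-down fractional variable, the rounding loses at most a factor $1+\epsilon$ of $p(S^g)$, hence at most a factor $1+\epsilon$ overall; adding a $(1+\epsilon)$-estimate of the tiny items' profit, computed in time $O_\epsilon(\log^{3} n)$ via Lemma~\ref{lem:data-structure-rec}, yields the claimed bound. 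I expect the main obstacle — the only genuine departure from the non-rotational proof — to be keeping the two admissible orientations consistent: we must argue that declaring an item tiny (or compatible with a box) is orientation-robust, and that the selected items remain packable into their $\S$-box by NFDH. Both points are handled by the observation that "compatible with $B$" already means the item is small in both dimensions in at least one orientation, so after rotating appropriately Lemma~\ref{lem:NFDH} applies, and by the fact that every quantity driving the estimates above depends only on the (rotation-invariant) area of the items.
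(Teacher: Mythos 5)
Your proposal follows essentially the same route as the paper: guess the $O_\epsilon(1)$ most profitable small items via their profit classes and box assignments, discard tiny items by reserving an $\epsilon$-fraction of a large $\S$-box (justified via property vi) of Lemma~\ref{lem:struc_rectangles-rotation}) to cut the height and width classes down to $O(\log_{1+\epsilon} n)$ each, solve the residual LP with~\cite{cohen2021solving}, and round down the $O_\epsilon(1)$ fractional variables guaranteed by the rank lemma, charging the loss to the guessed items. Your explicit handling of the two orientations (compatibility and rotation-invariance of area) is exactly the adjustment the paper makes implicitly by restricting the IP to orientation-compatible class--box pairs, so the argument is correct and matches the paper's proof.
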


\paragraph{Indirect guessing framework to pack horizontal and large items.} In contrast to the setting without rotation, we now must consider all boxes when packing horizontal and large items since horizontal items may be packed into $\H$-,$\V$- and $\S$-boxes while large items may be packed into $\L$- and $\S$-boxes. In particular, the fact that horizontal items may be packed into $\H$- as well as $\V$-boxes requires a more involved and technical procedure to compute a packing compared to the indirect guessing framework applied in the non-rotational setting. Hereto, we make use of property v) of our structured packing (Lemma~\ref{lem:struc_rectangles-rotation}) which allows us to consider each interval $(h^{(j-1)},h^{(j)}]$, with $h^{(1)},\dots, h^{(c)}$ being the distinct heights of $\V$-boxes and $h^{(0)}:= 0$, separately in our indirect guessing framework.

We will now explain how our to use the indirect guessing framework for each of these intervals separately. To this end, consider the interval $(h^{(0)}, h^{(1)}]$. For all other intervals the procedure will be the same. Let $k_{1},k_{2},\dots,k_{r^{(1)}}$ be the $k_r$-values within the interval $(h^{(0)}, h^{(1)}]$. The objective is again to determine these values in polylogarithmic time which cannot be achieved by guessing them {exactly} since there are $N$ options for each of them. A key difference between the rotational setting and the non-rotational setting is that $\V$- and $\H$-boxes must both be considered for horizontal items and, therefore, $\V$-boxes are relevant for the indirect guessing framework. Since we do not know the widths of the $\V$-boxes this requires a step before the indirect guessing framework.

We start by guessing the $C_{boxes}(\epsilon)/\epsilon$ most profitable horizontal items packed into the $\V$-boxes of height $h^{(j)}$. To do this we guess the profit class for each of these items in time $\log_{1+\epsilon}^{O_{\epsilon}(1)}n$. Let $n^g_{t}$ be the number of guessed items for each profit class. Using our rectangle data structure we consider the $n^g_{t}$ items of smallest width of profit class $t$ which can be found in time $O(\log^4 n)$. For each of these items, we guess which box it is assigned to in the structured packing. If an item is assigned to a $\V$-box, $\H$-box or $\S$-box in the structured packing, we also assign this item to this box. If, however, an item is not used in the structured packing then we assign it to a $\V$-box losing only a factor of $(1+\epsilon)$ of the profit while making sure that the picked item is at most as wide as the item used in the structured packing.

After finding this partial packing, we distinguish between two cases based on the widths of $\V$-boxes of height $h^{(j)}$. Let $h_{B}^{min}$ be the minimum height of all $\H$-boxes relevant for the current iteration of the algorithm. Consider the $\V$-box of height $h^{(j)}$ with largest width. If this is width at least $\frac{\epsilon}{C_{boxes}(\epsilon)}h_{B}^{min}$, we can guess it as a power of $(1+\epsilon)$ in time $O_{\epsilon}(1)$ since  $h_{B}^{min} \in \Omega(\es N)$. Let $B^*$ be this box and denote by $\hat{w}_{B^*}$ the guessed width. We may assume now that all other $\V$-boxes of height $h^{(j)}$ have width at least $\frac{\epsilon}{C_{boxes}(\epsilon)}\hat{w}_{B^*}$ by reserving an $\epsilon$-fraction of the width of $B^*$ for all items packed into $\V$-boxes of width less than $\frac{\epsilon}{C_{boxes}(\epsilon)}\hat{w}_{B^*}$. Thus, we can also guess the width of the remaining boxes each in time $O_\epsilon(1)$. Now, suppose the maximum width of all $\V$-boxes of height $h^{(j)}$ is less than $\frac{\epsilon}{C_{boxes}(\epsilon)}h_{B}^{min}$. In this case, we reserve an $\epsilon$-fraction in each $\H$-box for all items packed into $\V$-boxes which can also fit into $\H$-boxes. Such that all remaining items packed into the $\V$-boxes do not fit into any of the $\H$-boxes and, therefore, we can pack the $\V$-boxes after packing the $\H$-boxes using our indirect guessing framework.

We now explain how to use the indirect guessing framework. We will assume here that we must also consider $\V$-boxes with the guessed widths following the above procedure. Note that if we are in the second case described above this is not necessary and we will later describe how to compute a packing of the $\V$-boxes in this case. We start by defining $\tilde{k}_{0}:=0$ and will compute values $\tilde{k}_{1},\tilde{k}_{2},\dots,\tilde{k}_{r^{(1)}}$ to use instead of the values $k_{0},k_{1},k_{2},\dots,k_{r^{(1)}}$. With these values we get a partition of $\I$ into sets $\tilde{\I}_{j}:=\{i\in\I:w_{i}\in(\tilde{k}_{j-1},\tilde{k}_{j}]\}$. Now, for each $j$ we want to pack items from $\tilde{\I}_{j}$ into the space reserved for these items in the packing from Lemma~\ref{lem:struc_rectangles-rotation}.We will choose the values $\tilde{k}_{1},\tilde{k}_{2},\dots,\tilde{k}_{r^{(1)}}$ such that in this way, we obtain almost the same profit. On the other hand, we will ensure that $\tilde{k}_{j}\le k_{j}$ for each $j\in[r^{(1)}]$.

We work in $r^{(1)}$ iterations. We define $\tilde{k}_{0}:=0$. Suppose
inductively that we have determined $\ell$ values $\tilde{k}_{1},\tilde{k}_{2},\dots,\tilde{k}_{\ell}$
already for some $\ell\in\{0,1,...,r^{(1)}-1\}$ such that $\tilde{k}_{\ell}\le k_{\ell}$.
We want to compute $\tilde{k}_{\ell+1}$. We can assume w.l.o.g.~that
$k_{\ell+1}$ equals $w_{i}$ for some item $i\in\I$. We do {a} binary
search on the set $W(\ell):=\{w_{i}:i\in\I\wedge w_{i}>\tilde{k}_{\ell}\}$,
using our rectangle data structure. For each candidate value $w \in W(\ell)$,
we estimate the possible profit due to items in $\tilde{\I}_{\ell+1}$
if we define $\tilde{k}_{\ell+1}:=w$. We want to find such a value
$w$ such that the obtained profit from the set $\tilde{\I}_{\ell+1}$
equals essentially $\hat{p}(\ell+1)$. In the following, we denote by $\B_{\H}(\ell+1)$ the set of $\H$-boxes for which $j_B = \ell+1$, by $\B_{\V}(\ell+1)$ the set of $\V$-boxes for which $j_B = \ell+1$, by $\B_{\V}(\ell+1)$ the set of $\V$-boxes for which $j_B = \ell+1$ and by $\B_\S$ the set of $\S$-boxes.

We describe now how we estimate the obtained profit for one specific
choice of $w\in W(\ell)$. We try to pack items from $\tilde{\I}_{\ell+1}(w):=\left\{ i\in\I:w_{i}\in(\tilde{k}_{\ell},w]\right\} $
into
\begin{itemize}
\item the $\H$-boxes $B\in\B_{\H}(\ell+1)$, 
\item the $\V$-boxes $B\in\B_{\V}(\ell+1)$, 
\item the $\L$-boxes $B\in\B_{\L}(\ell+1)$ and 
\item the $\S$-boxes, where for each $\S$-box $B\in\B_{\S}$, we use an area
of $\widehat{a}_{B,\ell+1}\cdot h_Bw_B$ and ensure that we pack
only items $i\in\tilde{\I}_{\ell+1}(w)$ for which $w_{i}\leq \epsilon w_B$ and $h_i \leq \epsilon h_B$ or $h_{i}\leq \epsilon w_B$ and $w_i \leq \epsilon h_B$ (due to rotation).
\end{itemize}
We do this via an IP where we again use the notion of width and height types. Denote by $\B(\ell+1)$. Then, losing only a factor of $(1+\epsilon)$, we can formulate the problem as the following IP. Note that for each box $B$ we must only consider height and width classes that are relevant, e.g. for large boxes only those corresponding to large items. Here, $\hat{w}_B$ denote the guessed widths from the previous step. Furthermore, we assume that the right hand-side for $\H$- and $\S$-boxes already take into account that we potentially assigned items to these boxes in our partial solution above. We again use the notation of profit, width and height class introduced in the non-rotational setting.

{
\small
\begin{alignat*}{3}
(\mathrm{IP}(w))\quad& \text{max} 	& \displaystyle  \sum_{(t,t',t'') \in \mathcal{T}}\sum_{B \in \B(\ell+1)} x_{t,t',t'',B} p(t)			& 			& \quad & \\
& \text{s.t.} & \displaystyle \sum_{(t,t',t'') \in \mathcal{T}} x_{t,t',t'',B} \hat{h}(t')	& \leq h_B& 		& \forall B \in \B_{\H}(\ell+1) \\
& & \displaystyle \sum_{(t,t',t'') \in \mathcal{T}} x_{t,t',t'',B} \hat{w}(t'')	& \leq \hat{w}_B& 		& \forall B \in \B_{\V}(\ell+1) \\
&  & \displaystyle  \sum_{(t,t',t'') \in \mathcal{T}} x_{t,t',t'',B} 	& \leq 1& 		& \forall B \in \B_{\L}(\ell+1) \\
& & \displaystyle  \sum_{(t,t',t'') \in \mathcal{T}} x_{t,t',t'',B} \hat{h}(t')\hat{w}(t'')	& \leq a_{B,\ell+1}h_Bw_B	& 		& \forall B \in \B_{\S}\\
&				& \displaystyle\sum_{B \in \B(\ell+1)} x_{t,t',t'',B}								& \leq n_{t,t',t''}	& 		& \forall (t,t',t'') \in \mathcal{T} \\
&				& x_{t,t',t'',B}								& \in \mathbb{N}_{0}& 		&\forall (t,t',t'') \in \mathcal{T}, B \in \B(\ell+1)\\
\end{alignat*}
}
We now describe how to find an implicit $(1+O(\epsilon))$-approximate solution to this IP following the same ideas as used in the setting of hypercubes and rectangles without rotations.

\begin{lemma}\label{lem:rec_IP_sol_rot}
There is an algorithm with a running time of {$(\log_{1+\epsilon}(n))^{O(1)}$}
that computes an {implicit} $(1{+}O(\epsilon))$-approximate solution for \textup{$(\mathrm{IP}(w))$};
we denote by $q(w)$ the value of this solution. For two values $w,w'$
with $w\le w'$ we have that $q(w)\le q(w')$.
\end{lemma}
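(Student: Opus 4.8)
The plan is to follow the template of the proofs of Lemmas~\ref{lem:rec_IP_sol} and~\ref{lem:rec_rot_lp_small}, adapting it to the fact that in the rotational setting several box types may receive items of more than one orientation. First I would guess the $2C_{boxes}(\epsilon)/\epsilon$ most profitable items of $\tilde{\I}_{\ell+1}(w)$ occurring in an optimal solution of $(\mathrm{IP}(w))$: for each such item I guess its profit class $t$, its height class $t'$, its width class $t''$, and the box $B\in\B(\ell+1)$ into which it is packed. Since there are $O_\epsilon(1)$ boxes and, after discarding all items of profit below $\epsilon p_{\max}/n$, only $|\mathcal{T}_P|\in O_\epsilon(\log n)$ profit classes and $|\mathcal{T}_H|,|\mathcal{T}_W|\in O(\log_{1+\epsilon} n)$ height and width classes, this amounts to $(\log_{1+\epsilon} n)^{O(1)}$ guesses in total. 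Losing a factor $1+\epsilon$, I may assume that among the items of a given profit class guessed for a given box we select those of smallest height (respectively width), which can be located in time $O(\log^3 n)$ via the rectangle data structure (Lemma~\ref{lem:data-structure-rec}) and the balanced binary search trees. Subtracting the chosen items from the right-hand sides of $(\mathrm{IP}(w))$ yields a residual integer program $(\mathrm{IP}'(w))$, and I set $S^g(w)$ to be the guessed partial solution.

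Next I would reduce the number of relevant width and height classes appearing in $(\mathrm{IP}'(w))$ to $O(\log_{1+\epsilon} n)$ each, so that its LP relaxation has only $(\log_{1+\epsilon} n)^{O(1)}$ variables. For horizontal items there are only $O_\epsilon(1)$ width classes, and large items already have only $O_\epsilon(1)$ classes in both dimensions, so the only genuine issue is the heights of the horizontal items; this is handled by a reservation argument identical to the one in the proof of Lemma~\ref{lem:rec_IP_sol}, using that the tallest relevant $\H$-box has height $\Omega(\es N)$ relative to the available vertical space, so that all horizontal items of much smaller height can be re-added afterwards at the cost of only a factor $1+\epsilon$. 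A symmetric argument, exchanging the roles of width and height, takes care of the tiny-width items that must go to $\V$- or $\S$-boxes; here I use that the widths $\hat{w}_B$ of all $\V$-boxes have already been guessed before invoking the framework, and that every $\S$-box has width $\Omega(\es N)$ by Lemma~\ref{lem:struc_rectangles-rotation}. The remaining LP has $(\log_{1+\epsilon} n)^{O(1)}$ variables, hence can be solved in time $(\log_{1+\epsilon} n)^{O(1)}$~\cite{cohen2021solving}; let $S^f(w)$ be an optimal vertex solution.

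I then round $S^f(w)$ down to an integral solution $\bar{x}$. By the rank lemma~\cite{lau2011iterative}, the number of fractional nonzeros of $S^f(w)$ is at most the number of linearly independent tight constraints; the only constraints with a non-integral right-hand side are the $O(\log_{1+\epsilon}^{3} n)$ many constraints bounding $n_{t,t',t''}$, and a tight such constraint containing a fractional variable must contain a second one so that their values sum to an integer. Hence each fractional variable can be charged to a tight constraint among the $\H$-, $\V$-, $\L$- and $\S$-box constraints, of which there are at most $C_{boxes}(\epsilon)$, so there are at most $2C_{boxes}(\epsilon)$ fractional variables. Since we guessed the $2C_{boxes}(\epsilon)/\epsilon$ most profitable items first, each item corresponding to a fractional variable is no more profitable than any guessed item, so discarding them costs at most an $\epsilon$-fraction of $p(S^g(w))$. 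Combining $S^g(w)$, $\bar{x}$, and the previously discarded tiny items — whose total profit I estimate up to a factor $1+\epsilon$ with one range query per profit class in time $\mathrm{poly}(\log n)$ — gives the desired implicit $(1+O(\epsilon))$-approximate solution, whose value I call $q(w)$. Monotonicity is immediate: for $w\le w'$ we have $\tilde{\I}_{\ell+1}(w)\subseteq\tilde{\I}_{\ell+1}(w')$ while $\B(\ell+1)$ is unchanged, so the solution computed for $(\mathrm{IP}(w))$ is feasible for $(\mathrm{IP}(w'))$ and therefore $q(w)\le q(w')$. I expect the main obstacle to be the bookkeeping in the class-reduction step, which must simultaneously respect that an item of a given dimension pair may legally be placed, in either orientation, into an $\H$-, $\V$- or $\S$-box; this is exactly where the structural guarantees of Lemma~\ref{lem:struc_rectangles-rotation} — all boxes being $\Omega(\es N)$ in every used dimension, and all $\V$-box widths having been guessed beforehand — are indispensable.
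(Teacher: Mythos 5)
Your proposal is correct and follows essentially the same route as the paper's proof: guess the most profitable items by class and target box, update the right-hand sides, shrink the number of height classes via a reservation argument, solve the reduced LP relaxation in polylogarithmic time, round down and bound the fractional variables by the rank lemma, charging their loss against the guessed items, re-add the discarded tiny items, and deduce monotonicity from the feasibility of the solution for $(\mathrm{IP}(w))$ in $(\mathrm{IP}(w'))$. The small deviations (additionally guessing height classes of the profitable items, a slightly different constant in the rank-lemma count, and the extra ``symmetric'' width-reduction remark, which is unnecessary since horizontal and large items already have $O_{\epsilon}(1)$ width classes) do not affect the argument.
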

\begin{proof}
We start by guessing the $C_{boxes}(\epsilon)/\epsilon$ most profitable large {and} horizontal items in the solution. First, consider the large items. For these, we first guess the profit, height and width class which can be done in {time} $O(\log_{1+\epsilon}n)$, $O_{\epsilon}(1)$ and $O_{\epsilon}(1)$, respectively. For each triplet $(t,t',t'')$ this gives us a value $n_{t,t',t''}^g$ indicating the number of items for this combination of classes. We now find the $n_{t,t',t''}^g$ items corresponding to $(t,t',t'')$ with lowest height. This can be done in time $O(\log^2 n)$ using our rectangle data structure. For each of these items we guess which box it must be assigned to (among the $\L$- and $\S$-boxes) which takes another $O_{\epsilon}(1)$guesses. Now, we do the same for horizontal items as follows. We guess the profit class of each of the $C_{boxes}(\epsilon)/\epsilon$ most profitable items as well as the width class which can both be done in $O(\log_{1+\epsilon}n)$. Then, for each combination of profit and width class we get a value $n_{t,t''}^g$ and look at the  $n_{t,t''}^g$ items of lowest height and for each of them guess which box it must be assigned to. Observe that in this way we make sure that we only lose a factor $1+\epsilon$ of the profit while also making sure that the picked items are at most as high (in case of $\H$-boxes) or at most as wide (in case of $\V$-boxes) as the correct item used in the structured packing. This again takes time $O(\log^2 n)$ using our rectangle data structure. After these procedures, we have a partial solution to $(\mathrm{IP}(w))$ denoted by $S^g$. For each $\L$-box, let $n^g_B$ be the number of items assigned to $B$ by $S^g$. For each $\S$-box let $\mathrm{area}^g_B$ be the total area of the items assigned to $B$ by $S^g$ (using the rounded width and height). Similarly, for $\H$- and $\V$-boxes let $h^g_B$ and $w^g_B$ be the height of the items assigned to $B$. Note that also for $\V$-boxes we are interested in the height due to rotation. We now consider an LP-relaxation of $(\mathrm{IP}(w))$ with updated right hand-sides.

{
\footnotesize
\begin{alignat*}{3}
(\mathrm{LP}(w))\quad& \text{max} 	& \displaystyle \sum_{(t,t',t'') \in \mathcal{T}}\sum_{B \in \B(\ell+1)} x_{t,t',t''',B} p(t)			& 			& \quad & \\
& \text{s.t.} & \displaystyle  \sum_{(t,t',t'') \in \mathcal{T}} x_{t,t',t'',B} \hat{h}(t')	& \leq h_B-h^g_B& 		& \forall B \in \B_{\H}(\ell+1) \\
& & \displaystyle \sum_{(t,t',t'') \in \mathcal{T}} x_{t,t',t'',B} \hat{w}(t'')	& \leq \hat{w}_B - w^g_B& 		& \forall B \in \B_{\V}(\ell+1) \\
&  & \displaystyle  \sum_{(t,t',t'') \in \mathcal{T}} x_{t,t',t'',B} 	& \leq 1-n^g_B& 		& \forall B \in \B_{\L}(\ell+1) \\
& & \displaystyle  \sum_{(t,t',t'') \in \mathcal{T}} x_{t,t',t'',B} \hat{h}(t')\hat{w}(t'')	& \leq a_{B,\ell+1}h_Bw_B - \mathrm{area}^g_B	& 		& \forall B \in \B_{\S}\\
&				& \displaystyle\sum_{B \in \B(\ell+1)} x_{t,t',t'',B}								& \leq n_{t,t',t''}-n_{t,t',t''}^g& 		& \forall   (t,t',t'') \in \mathcal{T} \\
&				& x_{t,t',t'',B}								& \geq 0& 		&\forall  (t,t',t'') \in \mathcal{T}, B \in \B(\ell+1)
\end{alignat*}
}
The goal is now to solve and round $(\mathrm{LP}(w))$ in polylogarithmic time. Observe that for large items we have at most $O_{\epsilon}(1)$ height and width classes and for horizontal items we have at most $O_{\epsilon}(1)$ width classes. Thus, we only need to restrict the number of height classes. To this end, let ${\hat{b}}_{max}$ be the maximum value among all {values} $h_B-h_B^g$ and $\widehat{w}_B$. We know that all horizontal items we consider must have height at most $\hat{b}_{max}$.
%\awr{$hw_{max}$ is not defined, no? Do you mean $\hat{b}_{max}$ (no changed to ${\hat{b}}$)?} \todo{Changed :)}
Reserving a space of all items of height less than $\frac{\epsilon \hat{b}_{\max}}{n}$ we can restrict the number of height classes to $O(\log_{1+\epsilon}n)$ many. Hereto, we reduce the available height of all $\H$-boxes and width of all $\V$-boxes by a factor $1-\epsilon$. Let $S^f$ be a fractional optimal solution to $(\mathrm{LP}(w))$ taking into account the reduction of space for height and width boxes. {We can compute this solution in time
$(\log n)^{O_{\epsilon}(1)}$~\cite{cohen2021solving}}.
{Due to the rank lemma, we know that we have at most $C_{boxes}(\epsilon)$ fractional variables in our solution.
Due to our guessed solution $S^g$, we know that}
% We now use the fact that we have $C_{boxes}(\epsilon)$ integer variables due to $S^g$ to argue (by the rank lemma~\cite{lau2011iterative}) that
rounding down all fractional non-zero variables of $S^f$ {to the next smaller integer} leads to a loss of profit of at most $\epsilon p(S^g)$. Thus, taking $S^g$ together with the integral variables of $S^f$ as well as the discarded items used to reduce the number of height classes for horizontal items, we obtain an $(1+O(\epsilon))$-approximate solution to $(\mathrm{IP}(w))$ in time $(\log_{1+\epsilon} n)^{O_\epsilon(1)}$. The second part of the lemma follows from the fact that a solution for $w$ is also feasible for $w'$.
\end{proof}
Again, we define $\tilde{k}_{\ell+1}$ as the smallest value $w\in W(\ell)$
for which $q(w)\ge (1-\epsilon)\hat{p}(\ell+1)$. From  this and the inductive assumption that $\tilde{k}_{\ell}\le k_{\ell}$ the following statement holds. 
\begin{lemma}\label{lem:rec_induc_kr_rot}
We have that $\tilde{k}_{\ell+1}\le k_{\ell+1}$.
\end{lemma}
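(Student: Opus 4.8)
The plan is to reproduce, in the rotational setting, the short monotonicity argument behind Lemma~\ref{lem:rec_induc_kr} (and its hypercube analogue Lemma~\ref{lem:cubes_induc_kr}). Fix the height slab $(h^{(j-1)},h^{(j)}]$ and the iteration index $\ell$ to which the statement refers, and recall the inductive hypothesis $\tilde{k}_{\ell}\le k_{\ell}$, the fact that $\tilde{k}_{\ell+1}$ is by definition the \emph{smallest} $w\in W(\ell):=\{w_{i}:i\in\I\wedge w_{i}>\tilde{k}_{\ell}\}$ with $q(w)\ge(1-\epsilon)\hat{p}(\ell+1)$, and the monotonicity $q(w)\le q(w')$ for $w\le w'$ from Lemma~\ref{lem:rec_IP_sol_rot}. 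Monotonicity makes $\{w\in W(\ell):q(w)\ge(1-\epsilon)\hat{p}(\ell+1)\}$ upward closed in $W(\ell)$, so it suffices to show $k_{\ell+1}$ lies in it. Membership in $W(\ell)$ is immediate: w.l.o.g.\ $k_{\ell+1}=w_{i}$ for some $i\in\I$, and $k_{\ell+1}>k_{\ell}\ge\tilde{k}_{\ell}$. If the structured packing of Lemma~\ref{lem:struc_rectangles-rotation} is the degenerate one of at most three items then no values $k_{1},\dots,k_{r}$ exist and the statement is vacuous, and if $\hat{p}(\ell+1)=0$ then $q(k_{\ell+1})\ge 0=(1-\epsilon)\hat{p}(\ell+1)$ trivially; so from now on assume the box-based packing and $\hat{p}(\ell+1)=\lfloor p(\I_{\ell+1}\cap\OPT)\rfloor_{1+\epsilon}>0$.

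The core step is to exhibit a feasible solution of $(\mathrm{IP}(k_{\ell+1}))$ of objective value at least $(1-O(\epsilon))\hat{p}(\ell+1)$. The key observation is that $\tilde{k}_{\ell}\le k_{\ell}$ gives $\I_{\ell+1}=\{i:w_{i}\in(k_{\ell},k_{\ell+1}]\}\subseteq\{i:w_{i}\in(\tilde{k}_{\ell},k_{\ell+1}]\}=\tilde{\I}_{\ell+1}(k_{\ell+1})$, so every item that the structured packing places with width in $(k_{\ell},k_{\ell+1}]$ is eligible for $(\mathrm{IP}(k_{\ell+1}))$. I would restrict the structured packing to exactly these items. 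Using property v) of Lemma~\ref{lem:struc_rectangles-rotation}, which couples the $\H$-box width intervals $(k_{\ell},k_{\ell+1}]$ with the height slabs of the $\V$-boxes, these items sit precisely inside the boxes of $\B(\ell+1)$ --- the $\H$- and $\L$-boxes $B$ with $j_{B}=\ell+1$ in the current slab, the $\V$-boxes of this slab, and the portion of each $\S$-box $B$ of area $\hat{a}_{B,\ell+1}h_{B}w_{B}$ reserved for width class $\ell+1$ --- and, because the guessed heights $\alpha_{B}$ and fractions $\hat{a}_{B,\ell+1}$ are consistent with the structured packing and the guessed $\V$-box widths dominate the true ones, this restriction is a genuine packing into $\B(\ell+1)$. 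Counting, for each box $B$ and each triplet $(t,t',t'')$, how many of these items of class $(t,t',t'')$ are placed in $B$ yields an integral vector satisfying every constraint of $(\mathrm{IP}(k_{\ell+1}))$; this is exactly the ``up to a factor $1+\epsilon$, the subproblem is the IP'' reduction under which $(\mathrm{IP}(w))$ was set up. Its objective is, up to a factor $1+O(\epsilon)$ coming from rounding sizes and profits to powers of $1+\epsilon$ and from rounding $\hat{a}_{B,\ell+1}$, at least the profit the structured packing carries in width class $\ell+1$, which by the construction of that packing from $\OPT$ and the definition of $\hat{p}(\ell+1)$ is at least $(1-O(\epsilon))\hat{p}(\ell+1)$. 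Since the routine of Lemma~\ref{lem:rec_IP_sol_rot} returns a $(1+O(\epsilon))$-approximate solution, $q(k_{\ell+1})\ge(1-O(\epsilon))\hat{p}(\ell+1)$, and after rescaling $\epsilon$ we obtain $q(k_{\ell+1})\ge(1-\epsilon)\hat{p}(\ell+1)$.

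Putting the pieces together, $k_{\ell+1}\in W(\ell)$ meets the threshold $q(k_{\ell+1})\ge(1-\epsilon)\hat{p}(\ell+1)$, so by monotonicity of $q$ the smallest element of $W(\ell)$ meeting the threshold is at most $k_{\ell+1}$, i.e.\ $\tilde{k}_{\ell+1}\le k_{\ell+1}$, which closes the induction. I expect the main obstacle to lie entirely in the bookkeeping of the middle paragraph rather than in any new idea: one must check that once property v) pins down which height slab (and hence, for a given horizontal or rotated item, whether an $\H$-box or a $\V$-box receives it) is used, the items of width in $(k_{\ell},k_{\ell+1}]$ still obey the integer constraints of $(\mathrm{IP}(k_{\ell+1}))$ after all heights, widths and profits are rounded to powers of $1+\epsilon$ and the $\hat{a}_{B,j}$ are rounded up, and that the special $\L$-box $B^{*}$ (which carries no index $j_{B}$ and whose width is not tied to any $k_{j}$) is irrelevant to this iteration. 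The remaining ingredients --- monotonicity and the $(1+O(\epsilon))$-approximation guarantee for $q$, and the fact that $\hat{p}(\ell+1)$ lower bounds the profit of $\I_{\ell+1}\cap\OPT$ --- are supplied directly by Lemma~\ref{lem:rec_IP_sol_rot} and the guessing steps preceding it.
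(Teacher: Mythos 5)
Your proposal is correct and follows essentially the same route as the paper, which proves this lemma exactly as in the hypercube case (Lemma~\ref{lem:cubes_induc_kr}): the inductive hypothesis $\tilde{k}_{\ell}\le k_{\ell}$ gives $\I_{\ell+1}\subseteq\tilde{\I}_{\ell+1}(k_{\ell+1})$, so the structured packing witnesses $q(k_{\ell+1})\ge(1-\epsilon)\hat{p}(\ell+1)$, and since $\tilde{k}_{\ell+1}$ is defined as the smallest candidate meeting this threshold, $\tilde{k}_{\ell+1}\le k_{\ell+1}$. Your additional bookkeeping (degenerate cases, membership of $k_{\ell+1}$ in $W(\ell)$, the special box $B^{*}$) only makes explicit what the paper leaves implicit.
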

We now repeat the indirect guessing framework for each $j=1,\dots, r^{(1)}$. Finally, it could be that we still need to compute the remaining implicit packing of $\V$-boxes (in case we could not guess their widths before our indirect guessing framework). Here, we will use that we guessed the profit obtained by items packed into these boxes upfront. Recall, that this profit is given by $\hat{p}_{\V}(1)$ for the current range $(h^{(0)},h^{(1)}]$. To find this implicit packing, we will find the minimum total width necessary to achieve the desired profit. To this end, we use our rectangle data structure as well as the balanced binary search tree of item densities to find the smallest density $d^*$ such that packing all horizontal items with height in $(\tilde{k}_{r^{(1)}},h^{(1)}]$ up to this density yield a total profit of at least  $\hat{p}_{\V}(1)$. This can be done in time $O(\log^4 n)$. Hence, we now know the total width $w^*$ of all considered $\V$-boxes (with height in $(h^{(0)},h^{(1)}]$) as well as an implicit packing stating that all horizontal items with height in $(\tilde{k}_{r^{(1)}},h^{(1)}]$ and density of at most $d^*$ are packed. It remains to guess the width of each individual $\V$-boxes as well as implicit packings of these boxes. In the analysis as well as the construction of the implicit packing of the individual boxes we assume w.l.o.g. that the items are packed into the auxilliary $\V$-box of width $w^*$ in non-increasing order of profits. Let $B_1,\dots,B_{\ell}$ be the considered $\V$-boxes. We start by guessing the values $\hat{w}_B := \lfloor \frac{w_B}{\frac{\epsilon}{\ell} w^*} \rfloor \frac{\epsilon}{\ell}w^*$ for each box $B$. First, observe that $\sum_{j=1}^\ell \hat{w}_{B_j} \geq (1-\epsilon)w^*$ and since the items are packed in non-increasing order of profits all items that are packed into the width of  $\sum_{j=1}^\ell \hat{w}_{B_j}$ yield a profit of at least $(1-\epsilon)$. This packing might include a single fractional item. We now split this box into $B_1,\dots,B_{\ell}$ according to the guessed widths. This way, we lose at most one item per box. However, since initially (before the indirect guessing framework) we guess the $C_{boxes}(\epsilon)/\epsilon$ most profitable items packed into $\V$-boxes and each of the items we cannot pick has a profit strictly less than these guessed items, the total profit we lose is at most an $\epsilon$-fraction of  $\hat{p}_{\V}(1)$. Thus, we can construct an implicit packing of the individual boxes as follows. For box $B_1$, we use our rectangle data structure in combination with the binary search trees to find a value $p^{(1)}$ such that all items with profit at least $p$, density at most $d^*$ and height in $(\tilde{k}_{r^{(1)}},h^{(1)}]$ have total width at most $\hat{w}_{B_1}$ while increasing the profit would violate the width of the box. Then, we repeat this to find values $p^{(2)}, \dots,p^{(\ell)}$. This total procedure can be done in time $O_{\epsilon}(\log^4 n)$.

Repeating this procedure for all intervals $(h^{(j-1)},h^{(j)}]$, gives a complete implicit packing in polylogarithmic time.

Finally, let $i^*$ be the item that is packed in the box $B^*$ in the packing due to Lemma~\ref{lem:struc_rectangles-rotation}. If there are at least $1/\epsilon$ items from the same profit class as $i^*$ in the packing due to Lemma~\ref{lem:struc_rectangles-rotation}, then we can simply omit $i^*$ while losing only a factor of $1+\epsilon$ in the profit.
Otherwise, we guess the profit class of $i^*$ and identify the $1/\epsilon$ input items of this profit class of smallest width (assuming w.l.o.g. that in the input they are rotated such that their heights are not smaller than their widths).
If all of them have already been selected previously by our routines above, then we are done, losing at most a factor of $1+\epsilon$.
Otherwise, we can assume w.l.o.g. that $i^*$ is among these items, we guess $i^*$ in time $1/\epsilon$, and place $i^*$
in our packing.

% we assign the remaining space to the box $B^*$ such that $w_{B^*}=N$. We choose the height $h_{B^*}$ of $B^*$ as high as the other items allo}
Together with the algorithm to compute implicit packings of small items and the algorithm to compute implicit packings of horizontal and large items, we can prove Theorem~\ref{thm:rectangles_rot}. Note that we again proceed in a similar fashion as in the proof of Theorem~\ref{thm:rec_nr}.

\begin{proof}[Proof of Theorem~\ref{thm:rectangles_rot}]
We first prove the first part of the theorem.
% First, we set our accuracy $(O_{\epsilon}(1))^{-1}\epsilon$ such that the computed solution has profit at least $(9/17-\epsilon)OPT(\I)$.
Observe that since there is only a constant number of boxes, we can guess their {relative arrangement inside} the knapsack in constant time.
Our approximation algorithm proceeds in three stages.
\begin{enumerate}[A]
\item  \textit{Guessing basic quantities:} The total number of guesses is $(\log n )^{O_{\epsilon}(1)}$. Additionally, since the number of boxes is a constant, we can guess their {relative arrangement inside} in our knapsack in constant time.
\item \textit{Indirect guesing framework and construction of packing:} 
\begin{enumerate}[i]
\item \textit{Packing of small items:} Here, we compute the implicit solution due to Lemma~\ref{lem:rec_rot_lp_small}. This gives us for each profit, width and height class of small items {(indicated by a tuple $(t,t',t'')$)} a value $z_{t,t',t''}$ which tells us how many items of this class combination we must pack. Assigning these items in non-incresing order of profits to the $\S$-boxes such that each $\S$-box receives the correct number of items can be done in time $O(\log^3(n) + n)$ using our rectangle data structure and the balanced binary search tree for the profits. Furthermore, we can find the set of tiny items which we discarded in time $O(\log^2(n) + n)$, these will be assigned to the largest $\S$-box.
\item \textit{Packing of horizontal and large items:}  For each guess we need $r \in O_{\epsilon,d}(1)$ iterations of the indirect guessing framework, leading to solutions $x^*(1),\dots,x^*(r)$ to the LP-relaxations of $(\mathrm{IP}(\tilde{k}_{1})),...,(\mathrm{IP}(\tilde{k}_{r}))$. This takes time $(\log_{1+\epsilon}n)^{O_\epsilon(1)}$.  Let $\hat{x}(1),\dots,\hat{x}(r)$ be the rounded solutions to $(\mathrm{IP}(\tilde{k}_{1})),...,(\mathrm{IP}(\tilde{k}_{r}))$.
In order to compute a packing of $\L$-boxes, observe that for each triplet $(t,t',t'')$ the implicit packing indicates the number of large items chosen of of profit class $\P_t$, height class $H_{t'}$ and width class $W_{t''}$. Again, we may choose these items in non-increasing order of profits in time  $O_{\epsilon}(\log^3 n + n)$ and assign them to the correct $\L$-boxes in time $O_\epsilon(n)$. For $\H$- and $\V$-boxes, we proceed similarly.  Finally, for $\S$-boxes, we need to remark that again we may choose any set of items such that for each triplet $(t,t',t'')$ the correct number of items is chosen. Thus, we again may choose the items in non-increasing order of profits and assign the correct number of items for each triplet to each box. This way we only lose a factor of $1+\epsilon$ of the profit for each $\S$-box and the selected items still satisfy the conditions necessary to pack them into $\S$ using NFDH. This takes time $O(n\log n)$ for each box.
\end{enumerate}
\end{enumerate}
This yields a total running time of $O(n\cdot(\log n))+(\log n)^{O_{\epsilon}(1)}$.

Next, we will prove the second statement of the theorem regarding the dynamic algorithm. The insertion and deletion of items in time $O(\log^3 n)$ is due to our rectangle data structure. To output a $({17/9}+\epsilon)$-approximate solution $|ALG|$ in time $O(|ALG|\cdot(\log n))+(\log n)^{O_{\epsilon}(1)}$, we use the algorithm described above with a refinement of the running time since we will choose at most $|ALG|$ items. If one queries for a $({17/9}+\epsilon)$-estimate of the optimal {objective function} value, the algorithm above without the construction of the actual packings, i.e., only the computation of the implicit packings can be {executed} in time $(\log n)^{O_{\epsilon}(1)}$. Observe that for the tiny items we now need to compute an estimate of the total profit, which we can do in time $O_{\epsilon}(\log^3n)$ using our rectangle data structure
by querying for the number of tiny items for each profit class and then multiplying this with the rounded profit.
%\awr{The tiny items do not all have the same rounded profit, no? I guess you want to make a query for the total profit of all items in a rectangular area within the item data structure?}
Finally, if one wishes to query whether an item $i \in \I$ is contained in the solution $ALG$, we compute the implicit solutions in time $(\log n)^{O_{\epsilon}(1)}$. Let $(t,t',t'')$ be such that $i$ is of profit type $\P_t$, height type $H_{t'}$ and width type $W_{t''}$ and let $z_{t,t',t''}$ be the total number of items of these types in the implicit solution. Note that the choice of the triplet $(t,t',t'')$ may depend on {the type of item~$i$}.
Then by the construction of our packing {it is sufficient to} check whether $i$ is among the first $z_{t,t',t''}$ {items} when items are ordered in non-increasing order of profits. This can be done using our rectangle data structure by counting the number of items of classes $(t,t',t'')$ that have smaller profit than $i$ which we can do with a single range counting query in time $O(\log^2 n)$. If this value is at least $z_{t,t',t''}$, we answer the query with ``no" and otherwise with ``yes''. This ensures that we give consistent answers between two consecutive updates of the set $\I$.

\end{proof}

\section{Details of Data Structures}\label{app:data_struc}
An important ingredient of our algorithms is the usage of range counting/reporting data structures. The goal is to construct a data structure whose input points are points in $\mathbb{R}^d$ characterised by points $x_1,\dots,x_n \in \mathbb{R}^d$ and weights $f_1,\dots,f_n$. As can be found in the survey by Lee and Preparata~\cite{lee1984computational} (and references therein), one can construct data structures which allow the following operations:
\begin{itemize}
	\item Insertion and deletion of a new point in time $O(\log^d n)$.
	\item Given a hyperrectangle $A= [a_1,b_1] \times \dots \times [a_d,b_d]$, all points $x_i \in A$ can be reported in time $O(\log^{d-1}n + k)$ where $k$ is the number of input points in $A$.
	\item Given a hyperrectangle $A= [a_1,b_1] \times \dots \times [a_d,b_d]$, the sum of the weights of all points $x_i \in A$ can be reported in time $O(\log^{d-1}n)$.
\end{itemize}
To construct the data structures underlying our algorithms, we proceed as follows. For the \emph{item data structure} used in Section~\ref{sec:hypercubes}, we use a $2$-dimensional data structure where each item is stored as a point $x_i \in \mathbb{R}^2$ and the first coordinate represents the side length of the item while the second coordinate represents the profit of the item. Furthermore, we set $f_i := 1$ for any item. This allows the operations described in Lemma~\ref{lem:data-structure}. 

For the \emph{rectangle data structure}, we consider points in $\mathbb{R}^4$ with a coordinate for the width, height, profit and density of each item. To allow our range counting queries, we use three data structures where one uses $f_i := 1$, one uses $f_i := p_i$ and the third uses $f_i := w_i$. This data structure allows all operations in the given query times mentioned in Lemma~\ref{lem:data-structure-rec}.

\end{document}